%% file: mythesis.tex
\documentclass[12pt,a4paper]{Thesis} 

\graphicspath{%
	{./Pictures/}%
	{./Figures/}%
}
\DeclareMathOperator{\Tr}{Tr}

\let\degree\relax

\usepackage{adjustbox}
\usepackage{xcolor}
\usepackage{amsmath}
\usepackage{dsfont}
\usepackage[amssymb]{SIunits}
\usepackage{amssymb}
\usepackage{multirow}
\usepackage{wrapfig}
\usepackage{subcaption}
\usepackage{mathtools}
\usepackage{lipsum}
\usepackage{tikz}
\newcommand{\id}{\mathds{1}}

\usepackage{glossaries}
\usepackage{amsmath}
\usepackage[ruled,vlined,noresetcount]{algorithm2e}
\usepackage[square, numbers, comma, sort&compress]{natbib} 

\title{\ttitle} 

\usepackage{float}
\usepackage{color,soul}               
\usepackage{enumerate}
\usepackage{Styles/mydefs}
\usepackage{tabularx}     
\usepackage{rotating} 
\usepackage{physics}

\newcommand{\A}{\mathcal{A}}
\newcommand{\X}{\mathcal{X}}
\newcommand{\W}{\mathcal{W}}

\newcommand{\K}{\mathcal{K}}
\newcommand{\F}{\mathcal{F}}
\DeclareMathOperator{\regret}{Rgrt}
\newcommand{\argmax}{\operatorname*{argmax}}
\DeclareMathOperator*{\argmin}{\arg\!\min}

\usepackage{upgreek}
\usepackage{amsmath,bm}
\def\gbm#1{{\let\pi\uppi \let\phi\upphi \let\lambda\uplambda \let\mu\upmu \let\rho\uprho \let\sigma\upsigma \let\tau\uptau \let\theta\uptheta \let\eta\upeta \bm{#1}}}
\newcommand{\D}{\text{D}}
\DeclareMathOperator{\polylog}{polylog}

\newcommand{\dissipation}{W_{\text{diss}}}
\newcommand{\Ex}{\mathbb{E}}
\newcommand{\hquad}{\hspace{0.5em}} 
\begin{document}

\frontmatter 

\setstretch{1.3} 

\pagestyle{fancy} 
\fancyhead{}   
\fancyhead[LO]{\sl{\leftmark}}
\fancyhead[RE]{\sl{\rightmark}}
\fancyhead[LE,RO]{\thepage}



\maketitle
\maketitleforreview


\thesisdeclareOriginality{I hereby certify that the work embodied in this thesis is the result of original research done by me except where otherwise stated in this thesis. The thesis work has not been submitted for a degree or professional qualification to any other university or institution. I declare that this thesis is written by myself and is free of plagiarism and of sufficient grammatical clarity to be examined. I confirm that the investigations were conducted in accord with the ethics policies and integrity standards of Nanyang Technological University and that the research data are presented honestly and without prejudice.}{06 Aug. 2025}{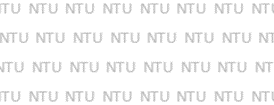}

\thesisdeclareSupervisor{I have reviewed the content and presentation style of this thesis and declare it of sufficient grammatical clarity to be examined. To the best of my knowledge, the thesis is free of plagiarism and the research and writing are those of the candidate’s except as acknowledged in the Author Attribution Statement. I confirm that the investigations were conducted in accord with the ethics policies and integrity standards of Nanyang Technological University and that the research data are presented honestly and without prejudice.}{06 Aug. 2025}{Styles/Signature_NTU.png}

\thesisdeclareAuthorship{
This thesis contains material from 2 papers published in the following peer-reviewed journals / from papers accepted at conferences in which I am listed as an author.}
{

	Chapter 3, 4 and Appendix A are published as {\color{blue}Huang, R. C., Riechers, P. M., Gu, M., Narasimhachar, V. (2023). Engines for predictive work extraction from memoryful quantum stochastic processes. Quantum, 7, 1203. DOI: 10.22331/q-2023-12-11-1203}

	The contributions of the co-authors are as follows:
	\begin{enumerate}
		\item A/Prof Mile provided the initial project direction and edited the manuscript drafts.
		\item I prepared the manuscript drafts.  The manuscript was revised by Dr Narasimhachar and Dr. Riechers.
		\item I co-designed the protocol together with Dr. Paul Riechers and Dr. Varun Narasimhachar.
        \item I coded the computer simulation for the working of the engine.
		
		\item Dr. Paul Riechers provided analytical proof for upper bounds on extraction as well as the analytical functions used.
		\item Dr. Varun Narasimhachar assisted in understanding and provided meaningful insights throughout the project.
	\end{enumerate}
    
    Chapter 3, 6 and Appendix C are published as {\color{blue}Quantum state-agnostic work extraction (almost) without dissipation. arXiv preprint arXiv:2505.09456.}

	The contributions of the co-authors are as follows:
	\begin{enumerate}
		\item A/Prof Mile and Josep provided the initial project direction.
		\item I prepared the manuscript drafts, along with Josep and Yanglin. The manuscript was revised by Assoc Prof Gu Mile and Prof Marco Tomamichel.
        \item I carried out detailed error and rate of convergence analysis together with Yanglin. 
        \item Assoc Prof Gu Mile and Prof Marco Tomamichel provided direction for tightening of bounds.
		
	\end{enumerate}

}{06 Aug. 2025}{Styles/Signature_NTU.png}


\addtotoc{Abstract} 

\abstract{\addtocontents{toc}{\vspace{0.8em}}}


This thesis investigates the fundamental question of whether temporal correlations in quantum systems can be harnessed to extract thermodynamic work. Motivated by the conceptual legacy of Maxwell’s demon, we develop an agential framework in which a classical agent—without access to quantum memory—interacts with temporally correlated quantum states to perform adaptive work extraction. Unlike traditional resource-theoretic approaches, which assume complete knowledge of the system's quantum state, the agential approach centers on belief formation, inference, and decision-making under uncertainty.

We introduce a family of \texorpdfstring{$\rho^*$}{}-ideal protocols and demonstrate that an adaptive agent can outperform non-adaptive strategies by exploiting memory effects in quantum processes. Through dynamic programming, we define the Time-Ordered Free Energy (TOFE), a new quantity that upper bounds the work extractable under causal, adaptive operations. This reveals a fundamental thermodynamic gap between idealized and memory-constrained extraction strategies, one which could potentially be quantified by a novel measure of discord, \emph{adaptive ordered discord}.

In the second part of the thesis, we explore the challenge of learning and extracting work from unknown quantum sources. Drawing inspiration from reinforcement learning, we extend multi-armed bandit algorithms to quantum thermodynamics. We show that an agent can simultaneously identify an unknown i.i.d. quantum state and extract work, with cumulative dissipation scaling only polylogarithmically in the number of observations—significantly improving over traditional tomography-based approaches.

Overall, this work lays the foundation for a predictive and decision-theoretic approach to quantum thermodynamics, opening new directions in the study of adaptive agents, temporal correlations, and learning-based work extraction in quantum systems.

\clearpage


\setstretch{1.3} 

\acknowledgements{\addtocontents{toc}{\vspace{0.8em}} 
I would like to express my deepest gratitude to my advisor, Associate Professor Gu Mile, for his unwavering support and guidance throughout the course of my doctoral research. His patience, openness to discussion, and insightful suggestions consistently challenged me to approach problems from fresh and often unexpected perspectives. His mentorship has been instrumental to the development of my work.

I am also immensely grateful to two former research fellows in our group, Dr. Paul Riechers and Dr. Varun Narasimhachar. Dr. Riechers generously shared his deep insights into quantum thermodynamics and computational mechanics, which offered me valuable new ways of thinking about fundamental problems. Dr. Narasimhachar contributed a rich understanding from the perspective of quantum resource theories; his ideas, as well as his philosophical approach to both quantum mechanics and life, have left a lasting impact on me. I am truly thankful for the time, patience, and thoughtful guidance both have extended during various stages of this journey.

I would also like to thank my collaborator, Dr. Josep Lumbreras, whose intellectual contributions and initiative in shaping our manuscripts made the collaborative process both productive and enjoyable.

Last but not least, I am grateful to my fiancée--to--be, Ms. Sim Shi Yuan, for her unwavering support and love. Her constant presence and encouragement have sustained me through the inevitable moments of solitude and doubt that accompany the pursuit of scientific research.

}


\pagestyle{empty} 
\emph{``Prediction is very difficult, especially about the future."}

\begin{flushright}
---Niels Bohr
\end{flushright}
\null\vfill 

\begin{center}
\large{To my dear family and friends}
\end{center}
\vfill\vfill\null 
\cleardoublepage 

\pagestyle{fancy} 
\fancyhead{}   
\fancyhead[LO]{\sl{\leftmark}}
\fancyhead[RE]{\sl{\rightmark}}
\fancyhead[LE,RO]{\thepage}


\pagestyle{fancy} 

\tableofcontents 

\listoffigures 

\listoftables 


%
%


%
%



\listofnomenclature{ll} 
{
\multicolumn{2}{l}{\LARGE{\textbf{Symbols}}}\\[0.618cm]
$\mathbb{Z}$                       &Integer\\
$\mathbb{R}$                       &Real Numbers\\
$\mathcal{H}$                      &Hilbert Space\\
$\norm{\cdot}$                     &the 2-norm of a vector or matrix \\
$|\cdot|$                       &absolute value\\

$\otimes$                          &tensor product\\
$\ket{\psi}$                       &ket notation of column vector\\
$\bra{\psi}$                       &bra notation of row vector\\
$\braket{\psi}{\phi}$              &inner product between bra and ket\\
$a^*$                        &complex conjugate of $a$\\
$a^T$                         &transpose of $a$\\
$a^\dagger$                    &Hermitian conjugate of $a$\\
$\id$                          &identity matrix\\
$\tr$                           & trace of a matrix\\
$O(f(x))$             &asymptotic upper bound scaling with $f(x)$\\
$\Omega(f(x))$       &asymptotic lower bound scaling with $f(x)$\\
$\Theta(f(x))$       &asymptotic growth scaling exactly as $f(x)$\\
$\Ex(W)$              &expectation value of random variable $W$ with respect to some distribution\\ 
$\rho=\sum p_i\ket{\lambda_i}\!\bra{\lambda_i}$  &Spectral decomposition of $\rho$ with eigenvalue $p_i$ and eigenvector $\ket{\lambda_i}$\\
$\ket{0},\ket{1}$ &Computational basis states\\
$\beta=(k_BT)^{-1}$ &inverse temperature, $k_B$ is the Boltzmann constant\\
$\gamma/\gamma_\beta$ &Thermal/Gibbs state at a inverse temperature $\beta$\\
$Z=\tr(e^{-\beta H})$  &Partition function of a given Hamiltonian $H$\\
$\F_{\text{eq}}$ &Equilibrium free energy, defined as $-\beta^{-1} \ln Z$\\
$\D(\rho\|\sigma)$ &Quantum relative entropy between $\rho$ and $\sigma$\\
$\Delta_{\epsilon}(\rho)=(1-\epsilon)\rho+\epsilon \mathbf{1}/2$ &Depolarization channel with strength $\epsilon$\\
$\gbm\eta$ &Agent's belief state (probability distribution over latent states $\mathcal{S}$)\\
$\xi_t$ &expected quantum state at time $t$ based on the current belief $\eta_{t-1}$\\
$\mathcal{C}_t$ &Confidence region constructed based on $t$ least square estimators\\
[0.618cm]

\multicolumn{2}{l}{\LARGE{\textbf{Acronyms}}}\\[0.618cm]
i.i.d                      & independent and identically distributed\\
$a.s.$           & almost sure convergence of a random sequence\\
HMM &Hidden Markov Model, defined by the 4-tuple $\{\mathcal{S},\{x\},\{T^{(x)}\},\gbm\pi\}$\\
CPTP &Completely Positive Trace-Preserving maps\\
CPTNI &Completely Positive Trace Non-Increasing maps\\
POVM &Positive Operator-Valued Measure\\
PVM &Projection-Valued Measure\\
RL / QMAB &Reinforcement Learning and Quantum Multi-Armed Bandit\\
}


\mainmatter       
\pagenumbering{arabic}
\setstretch{1.3}  

\fancyhead{}   
\fancyhead[LO]{\sl{\leftmark}}
\fancyhead[RE]{\sl{\rightmark}}
\fancyhead[LE,RO]{\thepage}


\input{./Chapters/Chapter1}

\input{./Chapters/Chapter2}

\input{./Chapters/Chapter3}
\input{./Chapters/Chapter4}
\input{./Chapters/Chapter5}

\input{./Chapters/Chapter6}
\input{./Chapters/Chapter7}


\addtocontents{toc}{\vspace{0.8em}} 
\label{Bibliography}
\setstretch{1}
\bibliographystyle{unsrtnat}
\bibliography{References/References}

\appendix 


\input{./Appendices/Appendix3}

\input{./Appendices/Appendix4}
\input{./Appendices/Appendix5}

\clearpage











\backmatter


\end{document}

%% file: Chapters/Chapter1.tex

\chapter{Introduction} 
\label{ch:introduction}
\vspace{-1em} 
\noindent\rule{\textwidth}{0.4pt}
\vspace{1.5em} 
\setlength{\parindent}{4ex}

\noindent \textit{This introductory chapter outlines the motivation, objectives, and structure of the thesis. The work is inspired by Maxwell’s demon—one of the first agent-based constructs in thermodynamics—and its modern reinterpretation, which underscores the thermodynamic relevance of information. We explore the role of predictive agents that interact with systems that exhibit temporal correlations and provide a concise overview of the interplay between information and thermodynamics. The central aim of this thesis is to examine the potential for extracting work from temporally correlated quantum systems through an agent-based framework.}

\newpage
An \emph{agent} is a system capable of performing feedback operations based on information acquired from its \emph{environment}, with the aim of achieving specific goals. The environment may range from stochastic noise and thermal reservoirs to quantum systems, whether static or dynamically evolving. In this thesis, we investigate whether an agent can extract work from an environment that evolves according to some underlying, possibly hidden, structure. We focus on agents equipped only with classical memory, explicitly excluding access to quantum memory. This assumption is motivated by both practical and theoretical considerations. Experimentally, preserving quantum coherence over long durations and implementing joint operations on multiple quantum systems remain significant challenges~\cite{lvovsky2009optical, heshami2016quantum}. Theoretically, quantum memory could store quantum states across time steps, mapping temporal correlations onto spatial degrees of freedom. This would enable local operations to leverage future information to manipulate past states, which would be a violation of temporal causality, though it is permissible in spatial systems, where bidirectional interactions are natural.

In this context, Maxwell’s demon can be regarded as the earliest agent-like concept in thermodynamics: it performs feedback based on information on the relative velocities of particles in a box, enabling the creation of a temperature gradient across its two halves.

\section{Maxwell's Demon}
\label{sec:demon}
The extraction of useful work from available resources has been a central pursuit of thermodynamics since the steam engine era. The advent of statistical mechanics deepened our understanding of how microscopic interactions among particles give rise to macroscopic thermodynamic phenomena such as temperature and entropy. Over time, the notion of a ``resource" has expanded—from classical fuels such as coal and oil to renewable sources such as solar energy. Of particular relevance to this thesis is the recognition, emerging from thought experiments like Maxwell’s demon, that information itself can act as a thermodynamic resource, enabling the extraction of work.

To illustrate this idea, consider a box of volume $V$, divided into two equal compartments. In equilibrium, the velocities of the particles inside are distributed according to the Maxwell-Boltzmann distribution~\cite{callen1980thermodynamics,blundell2010concepts,sethna2021statistical}.  Imagine an agent—the \emph{demon}—who possesses perfect knowledge of each particle’s position and velocity. The demon operates a massless, frictionless trapdoor in the partition and opens it selectively, allowing fast-moving particles into the left compartment and slow-moving ones into the right. Over time, this creates a temperature gradient, with high temperatures on one side and low temperatures on the other. This gradient could, in principle, be harnessed to perform work. But where does this work originate? No heat is added to the system, and no external mechanical force is applied. The emergence of usable energy seemingly violates the first law of thermodynamics.

The resolution lies in the role of information. The demon’s knowledge of the particles’ microstates constitutes a form of resource. When processed and acted upon, this information enables the extraction of free energy. This foundational idea—that information has thermodynamic value—is central to this thesis~\cite{maxwell2012theory}.

To formalize this resolution and ensure consistency with thermodynamic laws, we consider the Szilard engine~\cite{szilard1929entropieverminderung}, a simplified model that captures the essence of Maxwell’s demon. In this setup, a single particle is confined within a box. Upon observing the particle’s position, the demon inserts a frictionless partition and attaches a weight to the side containing the particle. The particle then undergoes an isothermal expansion, performing mechanical work equal to $k_BT\ln2$, where $T$ is the temperature of the environment. This work is stored as gravitational potential energy in the lifted weight. Once again, it appears as though energy is extracted without input.

The paradox is resolved through a key insight, articulated by Landauer and expanded by Bennett and Zurek~\cite{landauer1961irreversibility, bennett1982thermodynamics, zurek1989thermodynamic}. At the end of the Szilard engine’s cycle, the demon no longer retains information about the particle’s position—information that was essential for extracting work. To complete a thermodynamic cycle and return the demon to its original state, its memory must be reset. Resetting one bit of information incurs a thermodynamic cost of at least $k_BT\ln2$, thereby exactly offsetting the work extracted. This preserves the first and second laws of thermodynamics. An illustration is shown in Fig.~\ref{fig:maxwell}.
\begin{figure}
    \centering
    \includegraphics[width=0.8\linewidth]{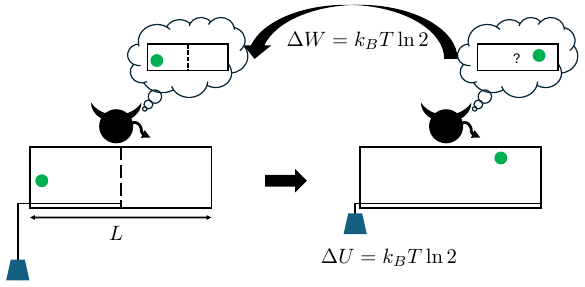}
    \caption{Illustration of Szilard’s Engine. The box has an initial volume $V=\alpha L$. After isothermal expansion of the single particle within the box, the attached weight gains energy $\Delta U=k_BT\ln2$. However, in order for the demon to acquire knowledge of the particle’s position again, a minimum energy cost of $\Delta W=k_BT\ln2$ must be expended. The net energy change is at most zero, thereby preserving the second law of thermodynamics.}
    \label{fig:maxwell}
\end{figure}
This thought experiment underscores the profound and fundamental link between information and thermodynamics. It reveals that information is not merely an abstract or mathematical construct, but a physical entity with real thermodynamic consequences.  In particular, it demonstrates that information can be harnessed to extract work from systems out of equilibrium. In this way, information itself functions as a thermodynamic resource, one that can drive physical processes when properly utilized.
Moreover, if the agent possesses incorrect information about the particle’s position, the result is not merely a failure to extract work: the system actually performs negative work. In this case, the weight attached to the partition is dropped downward during the expansion, leading to a loss of potential energy. This emphasizes that reliable information is not just beneficial but essential for thermodynamic advantage.

\section{Time-varying resources}
As the philosopher Heraclitus once said, \emph{``change is the only constant"}. The ``resources" discussed in Sec.~\ref{sec:demon} are not necessarily stationary; they may vary over time. For instance, the angle of sunlight changes, and quantum states may evolve due to decoherence or external interactions. More concretely, suppose the state of a system (e.g., the particle's location in a box) at time $t$ differs from that at time $t+1$. In such cases, real-time feedback and monitoring are typically required to adapt the work extraction protocol dynamically.
\begin{figure}[ht]
    \centering
    \includegraphics[width=0.9\linewidth]{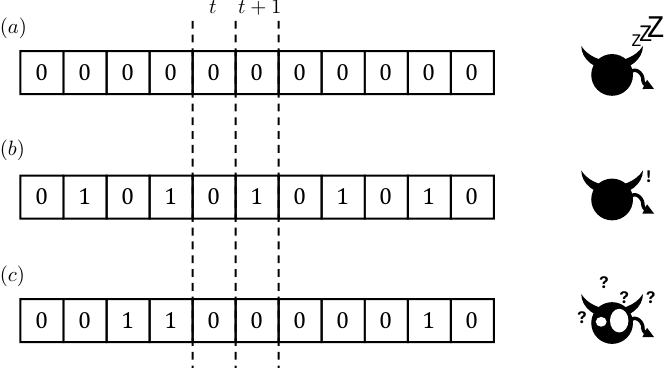}
    \caption{Depiction of temporally correlated systems. Panel (a) shows a sequence of boxes that remains invariant over time; no feedback control is required to extract work from such a sequence. Panel (b) illustrates a sequence with an alternating pattern, where the agent must retain information about the preceding box to extract work effectively. Panel (c) depicts a system with more complex temporal correlations, requiring a larger memory or inference mechanism for optimal work extraction.}
    \label{fig:agent}
\end{figure}
Consider two simple examples shown in Fig.~\ref{fig:agent}. Suppose that an agent is not interacting with a single system, but a sequence of systems (or ``boxes"), each labeled ``0" or ``1" depending on whether the particle is on the left or right. If all of the boxes are identical (e.g., each contains a particle confined to the left half as shown in Fig.~\ref{fig:agent}(a)), the same work extraction protocol can be repeated at every time step, harvesting all available free energy. In fact, an agent is not even necessary since no feedback control is needed. On the other hand,  suppose the particle positions alternate—left in the first box, right in the second, and so on as shown in Fig.~\ref{fig:agent}(b). Applying the same protocol will yield exactly 0 work extractions on average. An agent, in this case, must be present to keep track of past outcomes and tailor the protocol for the next state's configuration accordingly. 

These examples presume that the sequence of states is temporally correlated—a common feature in physical systems due to memory effects, dynamical constraints, or structured environments~\cite{breuer2016colloquium, megier2021memory, wisniewski2024memory}. Such non-i.i.d. (independent and identically distributed) behavior can be modeled using stochastic processes like Hidden Markov Models (HMMs). In these models, latent states represent hidden degrees of freedom influencing observable events—akin to environmental structure or memory. Knowing the stochastic structure allows an agent to infer future distributions and adjust its actions accordingly. 

\section{Agential approach}
This agent-based perspective contrasts sharply with conventional frameworks in quantum thermodynamics, such as resource theories or ergotropy-based methods. Traditional approaches assume complete knowledge of the quantum state and seek to maximize work extraction under specified constraints.

In contrast, the agential approach focuses on the agent’s belief or estimate of the system's state. This shift is motivated by physical realism: perfect state knowledge is rare due to preparation errors, decoherence, and measurement noise. This uncertainty becomes even more pronounced when the quantum states evolve stochastically over time.

The agential framework emphasizes protocol design based on \emph{estimated} states, necessitating an analysis of performance under uncertainty. In work extraction, this leads to deviations in both average and distribution of extractable work compared to fully known-state scenarios. These distinctions will be explored further in \cref{ch:work_extraction}.

\section{Learning and Update of Belief}
How does an agent form and update its beliefs about its environment? 
From a statistical standpoint, an agent's belief can be viewed as a prior distribution, which is iteratively updated based on the evidence gathered through interactions with its environment. In the context of work extraction from temporally correlated quantum systems, belief plays a critical role in two distinct but complementary ways.

\subsection{Belief about the future distribution}
The first concerns the agent’s belief about future outcomes. Suppose the agent has partial knowledge of the underlying generative process—for instance, the structure of a hidden Markov model (HMM). In that case, how should it update its internal memory to best predict the next quantum state? To address this, we appeal to computational mechanics, a framework designed to track the evolution of an agent’s belief conditioned on past observations. This formalism has been instrumental in formulating studies of work extraction from classical sequences in terms of information ratchets~\cite{boyd2016identifying,boyd2017leveraging,he2022information}, a topic we will review in \cref{ch:3}.

Within this framework, an agent interacts with a sequence of classical symbols generated by a stochastic process. It aims to extract work via a policy governing its decision-making, leveraging
past outcomes to influence future actions. The agent’s performance hinges on its ability to predict future states of the system, typically modeled using the $\epsilon$-machine representation of the HMM. However, these results do not translate straightforwardly to the quantum regime. In general, unlike classical symbols, quantum states cannot be measured without disturbance (in general): any observation induces irreversible state collapse, thereby altering the very statistics that guide prediction. Moreover, non-orthogonal quantum states cannot be perfectly distinguished, unless the agent has access to an infinite number of identical copies~\cite{hiai1991proper,audenaert2007discriminating}. These features violate core assumptions in computational mechanics, making prediction in non-i.i.d. quantum processes significantly more challenging and fundamentally limiting the agent's predictive capacity.

This leads us to our first central question: 
\begin{itemize}
    \item \emph{To what extent can an agent harness temporal correlations in quantum systems for sequential work extraction?}
\end{itemize}

\subsection{Belief about process structure}
The second role of belief pertains to the structure of the underlying process itself. If the agent lacks prior information about the generative dynamics, can it learn this structure over time? To approach this problem, we turn to reinforcement learning (RL), where an agent interacts with an environment, collects feedback through rewards, and iteratively learns a policy that maximizes long-term returns. This framework is particularly important in scenarios where current actions influence not only immediate outcomes but also future rewards.
At the core of reinforcement learning lies the exploration–exploitation trade-off: the agent must balance trying new actions to gather information (exploration) with leveraging existing knowledge to optimize performance (exploitation). The standard models here are the Markov Decision Process (MDP) and its generalization, the Partially Observable Markov Decision Process (POMDP), which account for fully and partially observable environments, respectively. However, both frameworks are computationally demanding due to the discretizations of state and action space. Instead, we turn our attention to the Multi-armed bandit framework used in RL to exemplify the exploration–exploitation tradeoff.

In its classical formulation, this framework describes an agent that interacts sequentially with a set of stochastic reward sources—collectively referred to as the environment—by choosing from a set of possible actions, aiming to minimize a \emph{regret} function. This function measures the cumulative loss resulting from not consistently choosing the optimal action. Only recently has the framework been extended to the quantum domain, particularly in quantum metrology~\cite{lumbreras2022multi,lumbreras24pure}, where it has been applied to derive new bounds on quantum fidelity between the estimated and true quantum states over finite time horizons.

This raises our second key question: 
\begin{itemize}
    \item \emph{Can such learning algorithms be adapted to sequential work extraction, enabling the agent to learn the unknown state or process while simultaneously extracting work?}
\end{itemize}

This thesis aims to address the overarching question of whether temporal correlations in quantum systems can be exploited for work extraction. Furthermore, we wish to investigate if an agent can utilize the framework of multi-armed bandits to learn the identity of unknown quantum states or processes while extracting work at the same time. To do so, we break it down into the following subproblems.
\begin{enumerate}
\item \textbf{Adaptivity without Quantum Memory:} Given a sequence of temporally correlated quantum states, can an adaptive agent operating without access to quantum memory extract more work than a nonadaptive strategy? Under what conditions does adaptivity provide a thermodynamic advantage?

\item \textbf{Thermodynamic Limits of Adaptive Strategies:} What is the maximum amount of work that such an adaptive agent can extract from a temporally correlated quantum sequence? Can this strategy approach or saturate the upper bounds imposed by the second law of thermodynamics?

\item \textbf{Learning from Unknown Quantum Sequences:} In the absence of any prior knowledge about the sequence of quantum states, is it possible for the agent to extract work and learn such a process simultaneously? If so, what is the operational limit?
\end{enumerate}
While work extraction is historically rooted in classical thermodynamics, it remains vital in the quantum regime as an operational metric for the thermodynamic value of information. Current hybrid quantum technologies rely on classical algorithms to measure and control quantum states. By quantifying the thermodynamic penalties incurred when relying strictly on classical memory, this research establishes a precise benchmark for evaluating future fully-quantum memories.

\section{Major Contributions}\label{sec:contribution}
To each of the questions posed, we provide affirmative and constructive responses. Our key contributions are as follows: 
\begin{enumerate}
\item We present a systematic framework for constructing autonomous, adaptive agents capable of extracting work from temporally correlated quantum states. These agents operate without requiring quantum memory or coherent control across multiple time steps. In certain regimes, they outperform agents that lack predictive capabilities, and in all other cases, they perform at least as well.

\item We establish a fundamental upper bound on the amount of work that an adaptive agent with classical memory can extract sequentially from temporally correlated quantum states. This bound, which we term the \emph{Time-Ordered Free Energy} (TOFE), upper-limits the performance of agents described in Contribution 1. We demonstrate that TOFE is generally less than or equal to the true non-equilibrium free energy, revealing a fundamental performance gap rooted in classical information processing constraints and causal ordering.

\item We show that an adaptive agent can simultaneously learn and extract work from independent and identically distributed (i.i.d.) copies of an unknown quantum state. Remarkably, the agent’s cumulative dissipation scales only polylogarithmically with the number of copies, which represents an exponential improvement over traditional approaches such as full quantum state tomography.

\end{enumerate}

\section{Outline of the Thesis}
The remainder of the thesis is structured as follows:

\cref{ch:terms_notations} introduces foundational concepts and definitions essential for this work, covering quantum information theory, quantum thermodynamics, and stochastic processes.

\cref{ch:work_extraction} defines and motivates the class of $\rho^*$-ideal work extraction protocols, which form a core component of the thesis. We provide formal definitions, discuss performance guarantees, and illustrate implementation through a concrete example.

\cref{ch:3} investigates work extraction from temporally correlated quantum states. We construct a formal framework in which an adaptive agent follows a structured algorithm to exploit temporal correlations. The agent’s performance is compared to that of both non-adaptive agents and agents restricted to thermal operations.

\cref{ch:4} expands on Chapter 4 by incorporating dynamic programming to optimize the average extracted work. This leads to the definition of the Time-Ordered Free Energy (TOFE) and its comparison to conventional free energy. We identify a fundamental gap between these quantities, hypothesized to be quantified by a new measure of quantum correlation: adaptive multipartite thermal discord. This gap reflects limits imposed by causality and classical communication.

\cref{ch:learning} proposes a novel framework that unifies work extraction and online learning of an unknown quantum state. The approach draws from quantum multi-armed bandits in reinforcement learning, enabling the agent to manage the exploration–exploitation trade-off. This framework generalizes work extraction theory to scenarios with incomplete information and can be extended to other quantum resources.

\cref{ch_conclusions} summarizes the main findings of the thesis and discusses promising directions for future research.

%% file: Chapters/Chapter2.tex

\chapter{Theoretical background} 
\chaptermark{Terms and notations}  
\label{ch:terms_notations} 
\noindent\rule{\textwidth}{0.4pt}
\vspace{1.5em} 
\setlength{\parindent}{4ex}

\noindent \textit{In this chapter, we present an overview of the foundational topics relevant to this thesis. We begin with the basics of quantum information theory and explore how certain information-theoretic quantities are intimately connected to concepts in quantum thermodynamics. We then turn to stochastic processes, focusing on how computational mechanics provides a principled framework for constructing minimal predictive models—known as $\epsilon$-machines—that capture the underlying structure of temporal correlations.}
\newpage

\section{Quantum Information}
In this section, we provide a concise overview of foundational concepts in quantum information theory, focusing on the mathematical characterization of quantum states~\cite{nielsen2010quantum,wilde2013quantum}.
\begin{definition}[Quantum State]
    A quantum state $\rho$ is a positive semi-definite operator with unit trace acting on a $d$-dimensional Hilbert space $\mathcal{H}$. It admits a spectral decomposition of the form
\begin{equation}
\rho = \sum_{i=1}^d p_i \ket{\lambda_i}\bra{\lambda_i},
\end{equation}
where $\{\ket{\lambda_i}\}_{i=1}^d$ form an orthonormal eigenbasis of $\mathcal{H}$, satisfying $\braket{\lambda_i}{\lambda_j} = \delta_{ij}$. The eigenvalues $\{p_i\}_{i=1}^d$ are non-negative and sum to one, i.e.,
\begin{equation}
p_i \geq 0,\quad \sum_{i=1}^d p_i = 1,
\end{equation}
in accordance with the positivity and unit trace constraints.
\end{definition}

A state $\rho$ is called pure if and only if it is rank-one; that is, if exactly one of the eigenvalues is equal to one and the rest are zero. In this case, $\rho$ can be written as $\rho = \ket{\psi}\!\bra{\psi}$. Otherwise, $\rho$ is referred to as a mixed state, which can be interpreted as a classical probabilistic mixture of pure states with weights ${p_i}$. A general finite $d$-dimensional \emph{pure} state can be parametrized by $2(d-1)$ real parameters, $(\theta_1\cdots\theta_{d-1},\phi_1\cdots\phi_{d-1})$ in the form of
\begin{equation}
    \ket{\varphi}= \sum_{i=0}^{d-1}\alpha_i\ket{i}~,
\end{equation}
where $\alpha_i\in\mathbb{C}$ are given by:
\begin{equation}
    \begin{split}
    \alpha_0 &= \cos\frac{\theta_1}{2}\\
    \alpha_1 &=  e^{i\phi_1}\sin\frac{\theta_1}{2}\cos\frac{\theta_2}{2}\\
    &\vdots\\
    \alpha_{d-1}&=e^{i\phi_{d-1}}\sin\frac{\theta_1}{2}\sin\frac{\theta_2}{2}\cdots\sin\frac{\theta_{d-1}}{2}~.
    \end{split}
\end{equation}
For a 2-dimensional quantum state or a ``qubit", its parametrization is given by 
\begin{equation}
    \ket{\psi} = \cos\frac{\theta}{2}\ket{0} + e^{i\phi}\sin\frac{\theta}{2}\ket{1}~.
\end{equation}
where $\theta\in[0,\pi]$ and $\phi\in[0,2\pi]$ are the angles measured from $Z$ and $X$-axis, respectively. This representation allows a qubit to be visualized as a point in a Bloch sphere shown in Fig.~\ref{fig:bloch}. 
\begin{figure}
    \centering
    \includegraphics[width=0.4\linewidth]{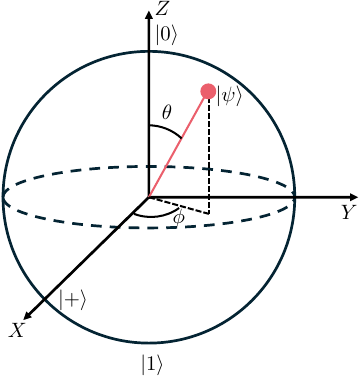}
    \caption{Diagrammatic representation of qubits on a Bloch sphere. $\theta$ is the angle from the $Z$-axis and $\phi$ is the angle measured from the $X$-axis. Pure states reside on the surface while mixed states occupy the interior of the sphere. }
    \label{fig:bloch}
\end{figure}
The basis along the $Z$-axis, i.e., $\ket{0},\ket{1}$ is commonly known as computational basis whereas the basis along $X$-axis are the $\ket{+}$ and $\ket{-}$, with a relation
\begin{equation}
    \ket{\pm} = \frac{1}{\sqrt{2}}(\ket{0}\pm\ket{1})~.
\end{equation}
The set of all pure states will occupy the surface of the Bloch sphere, whereas the set of all mixed states fills the interior of the sphere. The maximally mixed state, $\id/2$, corresponds to the center of the sphere. While this geometric representation is specific to qubits, generalizations to higher-dimensional systems, \emph{qudits}, are possible, though the visualization becomes increasingly abstract.

A commonly used set of operators in quantum information theory is the set of Pauli matrices, defined as:
\begin{equation}
    \sigma_X=\begin{pmatrix}
        0&1\\1&0
    \end{pmatrix},\quad \sigma_Y=\begin{pmatrix}
        0&-i\\
        i&0
    \end{pmatrix},\quad \sigma_Z=\begin{pmatrix}
        1&0\\0&1
    \end{pmatrix}~.
\end{equation}
These matrices play central roles in quantum state tomography, quantum error correction, quantum cryptography, and many other applications. The computational basis states, $\ket{0},\ket{1}$ are eigenvectors of $\sigma_Z$ while $\ket{+},\ket{-}$ are the eigenvectors of $\sigma_X$. Although the eigenstates of $\sigma_Y$ are less commonly used in practice, they are similarly well-defined and relevant in specific contexts.

When describing the evolution or manipulation of quantum states, it is essential to restrict attention to operations that are completely positive and trace-preserving (CPTP). The action of such operations, $\mathcal{E}$, on a quantum state $\rho$ can be expressed as 
\begin{equation}
    \mathcal{E}: \rho \mapsto \mathcal{E}(\rho)~.
\end{equation}
The condition of complete positivity requires that a quantum operation, $\mathcal{E}$, maps valid quantum states to valid quantum states even when acting as part of a larger system. Formally, $\mathcal{E} \otimes \mathcal{I}_n$ must be a positive map for all $n$, where $\mathcal{I}_n$ denotes the identity map on an auxiliary $n$-dimensional Hilbert space. This requirement is crucial for the physical consistency of quantum operations in composite systems. The condition of trace-preserving requires that $\tr[\mathcal{E}(\rho)]=1$ for all normalized input density operators $\rho$, preserving total probability.  More generally, we also consider completely positive and trace non-increasing (CPTNI) maps, which satisfy $\tr[\mathcal{E}(\rho)]\leq 1$. These maps are particularly relevant for modeling quantum processes that involve post-selection, such as quantum measurements or probabilistic quantum gates.  In this context, the trace of the output state $\tr[\mathcal{E}(\rho)]$ corresponds to the probability that a particular event (e.g., a measurement outcome) occurs.

While we are on the topic of measurement, it is important to introduce one of the most general and widely used classes of quantum measurements: the positive operator-valued measure (POVM). A POVM is specified by a set of positive semi-definite operators $\{M_i\}_i$ acting on a Hilbert space $\mathcal{H}$, where each operator $M_i$ corresponds to a possible measurement outcome labeled by $i$. These operators satisfy the completeness relation
\begin{equation}
    \sum_i M_i^\dagger M_i = \id~.
\end{equation}
POVM elements need not be orthogonal projectors, nor are they required to be of rank one. However, if each POVM element, $M_i$, is idempotent ($M_i^2=M_i$), Hermitian, and mutually orthogonal (i.e., $M_iM_j = \delta_{ij}$), then the measurement is referred to as a projective measurement, or more formally, a projection-valued measure (PVM).
From a physical perspective, any POVM can be realized as a projective measurement on a higher-dimensional Hilbert space. This insight is formalized by the Stinespring dilation theorem, which states that any POVM on a system can be implemented by coupling the system to an ancillary system, applying a unitary transformation, and then performing a projective measurement on the joint system. While this dilation viewpoint is conceptually fundamental, we do not explore it further here (see Ref.~\cite{stinespring1955positive} for further details).

The outcome probabilities for quantum measurements are computed using Born's rule. For a general POVM, $\{M_i\}_i$, the probability of obtaining outcome $i$ when measuring a quantum state $\rho$ is given by
\begin{equation}
    \Pr(O = i | \rho,\{M_j\}_j) =\tr(M_i\rho M_i^\dagger) 
\end{equation}
In the special case of PVMs, this simplifies to
\begin{equation}
    \Pr(O = i | \rho,\{M_j\}_j) =\tr(M_i\rho )~.
\end{equation}

\subsection{Multi-partite states}
\label{sec:multi-partite}
This thesis focuses on temporal correlations, which necessitate a clear specification of the subsystems between which such correlations are defined. In quantum information theory, when two well-defined quantum states reside in distinct Hilbert spaces or are ``spatially separated", their joint system residing in a composite Hilbert space can be represented by the tensor product of the local Hilbert spaces.
For instance, let $\rho_A$ be a quantum state in Hilbert space $\mathcal{H}_A$, and $\sigma_B$ be a quantum state in Hilbert space $\mathcal{H}_B$. The joint state of the composite Hilbert space is then given by
\begin{equation}
\label{eq:bipartite}
\rho_{AB} = \rho_A \otimes \sigma_B,
\end{equation}
where the subscripts denote the respective subsystems. States that can be written in the form of Eq.~\ref{eq:bipartite} are referred to as bipartite states. Whereas a classical mixture of such states in the form of 
\begin{equation}
\label{eq:separable}
\rho_{AB} = \sum_ip_i\rho^{(i)}_A \otimes \sigma^{(i)}_B,
\end{equation}
are referred to as separable states. Such states do not contain quantum entanglement between subsystems $A$ and $B$, although they may still exhibit classical correlations and other forms of non-classical forms of correlation, such as quantum discord. 

The tensor product formalism similarly extends to quantum operations. A joint or coherent operation on both subsystems can be described by a map $\mathcal{E}_{AB}$ acting on the combined Hilbert space $\mathcal{H}_A \otimes \mathcal{H}_B$. If the operations on subsystems $A$ and $B$ are independent, it can be expressed as
\begin{equation}
\mathcal{E}_{AB} = \mathcal{N}_A \otimes \mathcal{M}_B,
\end{equation}
where $\mathcal{N}_A$ and $\mathcal{M}_B$ act exclusively on $\mathcal{H}_A$ and $\mathcal{H}_B$, respectively. The subscripts here indicate the domains of action for each map.

Conversely,  consider the reverse situation in which a joint quantum state $\rho_{AB}$ is defined over two spaces, and we are interested in describing the state of just one subsystem, say $A$. To obtain this, we apply the operation known as the partial trace over the other subsystem. The reduced state of subsystem $A$ is defined as
\begin{equation}
\label{eq:partial_trace}
    \rho_A = \tr_B\rho_{AB}=\sum_i (\id_A\otimes\bra{i}_B )\rho_{AB}(\id_A\otimes\ket{i}_B)~,
\end{equation}
where $\{\ket{i}_B\}_i$ are the orthonormal basis for subsystem $B$.
Operationally, this corresponds to performing a projective measurement on subsystem $B$ in the $\{\ket{i}_B\}_i$ basis and disregarding the measurement outcomes. The result is the marginal state of the subsystem $A$.

\subsection{Information-theoretic quantities}
Next, we review several fundamental quantities commonly used in quantum information theory. 

One of the most central concepts in both classical and quantum information theory is entropy, which quantifies uncertainty or information content. In classical information theory, the uncertainty associated with a discrete random variable $X\in \{x_i\}_i^n$ with probability distribution $\Pr(X=x_i)=p_i$ is captured by Shannon entropy:
\begin{equation}
H(X) = -\sum_i p_i \log p_i.
\end{equation}
This quantity measures the average amount of information required to describe the outcome of $X$. For a pair of possibly correlated random variables $X$ and $Y$, we can also define the conditional entropy as
\begin{equation}
H(X|Y) = H(X,Y) - H(Y),
\end{equation}
where $H(X,Y)$ is the joint entropy of the pair $(X, Y)$, and $H(Y)$ is the marginal entropy of $Y$. The conditional entropy $H(X|Y)$ quantifies the residual uncertainty about $X$ given knowledge of $Y$. In the context of temporal processes, if $X$ represents the future and $Y$ the past, then $H(X|Y)$ characterizes how uncertain an agent remains about the future when the past is known.

Another key quantity is mutual information, which measures the amount of information shared between two random variables. It is defined as
\begin{equation}
\label{eq:mutual_info}
\begin{split}
I(X;Y) &= H(X) + H(Y) - H(X,Y) \\
&= H(X) - H(X|Y) = H(Y) - H(Y|X).
\end{split}
\end{equation}
Mutual information is always non-negative and equals zero if and only if $X$ and $Y$ are statistically independent. It captures the reduction in uncertainty of one variable due to the knowledge of the other. In the context of temporal correlations—where $X$ represents the future and $Y$ the past—mutual information $I(X;Y)$ quantifies the degree to which the past constrains or informs the future. Mutual information captures the minimum amount of uncertainty about the future given the past. Operationally, it captures the minimal amount of memory an agent must retain about the past in order to predict future statistics. This interpretation is especially relevant in thermodynamics, where information erasure expends energy. A helpful visualization of these relationships is shown in Fig.~\ref{fig:venn_diagram}.
\begin{figure}
    \centering
    \includegraphics[width=0.6\linewidth]{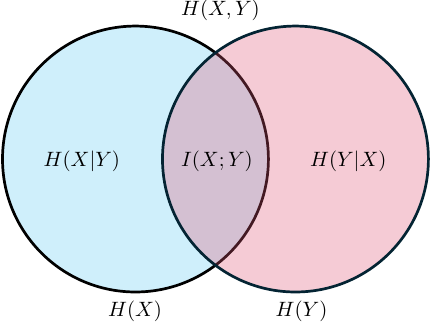}
    \caption{Venn diagram of entropic quantities. The blue and red circles represent the entropies of random variables $X$ and $Y$, respectively. Their union corresponds to the joint entropy of $H(X,Y)$, and the intersection represents the mutual information $I(X;Y)$.}
    \label{fig:venn_diagram}
\end{figure}
It is worth emphasizing that Eq.~\eqref{eq:mutual_info} admits several equivalent classical formulations using marginal and joint entropies. However, in the quantum case, these expressions are no longer equivalent, a key distinction that we explore later.

In quantum information theory, the analogue of Shannon entropy is the von Neumann entropy, defined for a quantum state $\rho$ as
\begin{equation}
    S(\rho) = -\tr(\rho\log\rho) = -\sum_i\lambda_i\log\lambda_i
\end{equation}
where $\{\lambda_i\}_i$ are the eigenvalues of $\rho$. Operationally, this corresponds to the Shannon entropy of the probability distribution obtained by performing a projective measurement in the eigenbasis of $\rho$. The classical mutual information in Eq.~\eqref{eq:mutual_info} can be extended to the quantum setting by substituting von Neumann entropies in place of Shannon entropies. Given a bipartite quantum state $\rho_{AB}$, the quantum mutual information is defined as
\begin{equation}
    I(A;B) = S(A)_{\rho}+S(B)_{\rho}-S(A,B)_{\rho}~,
\end{equation}
where $\rho_A = \tr_B\rho_{AB}$ and $\rho_B = \tr_A\rho_{AB}$ are the reduced density matrices for subsystems $A$ and $B$, respectively. This quantity captures the total correlations between $A$ and $B$. Alternatively, mutual information can also be expressed using the conditional entropy, $S(A|B)_\rho = S(\rho_{AB}) - S(\rho_B)$, leading to another seemingly equivalent formulation:
\begin{equation}
\label{eq:another_mutual}
    \mathcal{J}(A;B) = S(A)_\rho - S(A|B)_\rho~.
\end{equation}
Physically, this means the difference between the uncertainty of $A$ and the uncertainty of $A$ conditioned on knowledge of $B$, which is intuitive. However, unlike in the classical setting where the two formulations are equivalent, accessing information about $B$ in the quantum case requires performing a measurement. Given the quantum system in $B$, the outcome of the measurement may not perfectly elucidate the identity of $B$. Specifically, suppose we perform a PVM  $\{\Pi_i\}_i$ on subsystem $B$.  Upon observing outcome $i$, the post-measurement state takes the form
\begin{equation}
    \rho_{A|\Pi_i^{B}} \coloneqq \frac{\Pi_i^B\rho_{AB}\Pi_i^B}{\tr(\Pi_i^B\rho_{AB}\Pi_i^B)}~.
\end{equation}
The measurement-based conditional entropy term is
\begin{equation}
    S(A|\{\Pi_i^B\}_i) = \sum_i p_i S(\rho_{A|\Pi_i^{B}})~,
\end{equation}
where $p_i=\tr[(\id_A\otimes\Pi_i^{B})\rho_{AB}]$ is the probability of outcome $i$. Note that $S(A|\{\Pi_i^B\}_i)\geq S(A|B)_\rho$ with equality holding only when subsystem $B$ is diagonalized in the basis aligning with the measurement $\{\Pi_i\}_i$. Motivated by this distinction, one can define a measurement-based mutual information as
\begin{equation}
    \mathcal{J}(A;B)_{\{\Pi^B_i\}} \coloneqq S(\rho_A)+S(\rho_B)-[H(B)^\Pi_{\rho}+S(A|\{\Pi_i^B\}_i)]~,
\end{equation}
where $H(B)^\Pi_{\rho}$ is the Shannon entropy associated with the outcome distribution when subjecting $\rho$ to measurements $\{\Pi_i\}_i$ according to Born's rule. 
The discrepancy between this measurement-based mutual information and the standard quantum mutual information gives rise to the concept of quantum discord~\cite{ollivier2001quantum}:
\begin{equation}
\begin{split}
    \delta_{\{\Pi^B_i\}} &= I(A;B)_{\rho} - \mathcal{J}(A;B)_{\{\Pi^B_i\}}\\
    &=H(B)^\Pi_{\rho}+S(A|\{\Pi_i^B\}_i) -S(A,B)_\rho ~.
\end{split}    
\end{equation}
Accordingly, one can remove the measurement dependency by finding the optimal set $\{\Pi_i\}_i$ that minimizes the first quantity, i.e.
\begin{equation}
\label{eq:D2}
    \delta \coloneqq \min_{\{\Pi_i\}_i}[H(B)^\Pi_{\rho}+S(A|\{\Pi_i^B\}_i)] -S(A,B)_\rho ~.
\end{equation}
It is important to note here that the definition of discord we used in Eq.~\eqref{eq:D2} is not unique. Multiple definitions of quantum discord exist in the literature~\cite{henderson2001classical,brodutch2010quantum,dakic2010necessary,modi2012classical}. The formulation in Eq.~\eqref{eq:D2} corresponds to the $D_2$-variant discussed in~\cite{brodutch2010quantum}.  This particular form is chosen because of its thermodynamic interpretation:  it quantifies the work deficit when classical communication is restricted from subsystem $A$ to $B$, assuming both parties have full knowledge of the joint state $\rho_{AB}$~\cite{zurek2003quantum}.

\section{Quantum Thermodynamics}
Instead of expressing temperature directly as $T$, we typically use the inverse temperature, defined as $\beta = (k_BT)^{-1}$, where $k_B$ is the Boltzmann constant. This quantity conveniently appears in many thermodynamic expressions and simplifies certain calculations. Crucially, it also helps distinguish between time and temperature, two quantities that play central roles throughout this thesis. For a quantum system with Hamiltonian $H=\sum_i E_i\ket{E_i}\!\bra{E_i}$, the thermal equilibrium state at inverse temperature $\beta$ is given by the thermal (or Gibbs) state:
\begin{equation}
    \gamma_\beta = \sum_i p_i\ket{E_i}\!\bra{E_i}, \quad p_i = \frac{e^{-\beta E_i}}{Z}~,
\end{equation}
where $Z = \Tr(e^{-\beta H})$ is the partition function of the system. This state represents a classical statistical mixture of energy eigenstates weighted according to the Boltzmann distribution. Note that for the rest of the thesis, $\gamma$ and $\gamma_\beta$ will be used interchangeably; this is to prevent cluttering of notation that may be used to indicate the specific subsystem that the thermal state is in, and its dependence on inverse temperature is always assumed implicitly. 

As in classical thermodynamics, when the system is in equilibrium, one can define the equilibrium free energy:
\begin{equation}
\label{eq:free_energy}
    \mathcal{F}_{\text{eq}} =U-TS =  -\beta^{-1}\ln Z~.
\end{equation}
$U$ here refers to the average internal energy of the system, whereas $S$ here refers to the thermodynamic entropy of the system. $T$ refers to the ambient temperature that the system is in.
The equilibrium free energy serves as a reference point for all other quantum states governed by the same Hamiltonian and inverse temperature. Thermal states are completely passive, meaning that no work can be extracted from it, even when there are multiple copies of the state available~\cite{pusz1978passive,allahverdyan2004maximal}. 
In contrast, if the quantum system is not in equilibrium, i.e., the state $\rho\neq\gamma_\beta$, where $\gamma_\beta$ is the thermal state at inverse temperature $\beta$, then one can define the non-equilibrium free energy.

This quantifies the potential for work extraction or other thermodynamic tasks. This quantity is closely related to a key concept in quantum information theory: the quantum relative entropy. Specifically, the non-equilibrium free energy is given by
\begin{equation}
\begin{split}
    \D(\rho||\gamma_\beta) &= \tr(\rho\log\rho) -\tr(\rho\log\gamma_\beta)\\
    & = -S(\rho) - \tr(\rho \beta(\mathcal{F}_{\text{eq}}\id-H))\\
    & = \beta\tr(\rho H)- S(\rho) - \beta\mathcal{F}_{\text{eq}}~.
\end{split}
\end{equation}
In the second equality, we used the identity $\log\gamma_\beta=\beta(\mathcal{F}_{\text{eq}}\id-H)$, note that $\mathcal{F}_{\text{eq}}$ is a scalar and thus multiplies the identity operator.
This decomposition provides a physical interpretation of the relative entropy. The first term, $\tr(\rho H)$ represents the average energy of $\rho$, corresponding to the quantity $U$ in Eq.~\eqref{eq:free_energy}. The second term $S(\rho)$ represents the entropy of $\rho$. Recalling the expression in Eq.~\eqref{eq:free_energy}, this matches the definition of free energy of state $\rho$. Thus, the relative entropy quantifies the excess free energy in $\rho$ relative to the thermal state $\gamma_\beta$, it captures how far $\rho$ is from equilibrium and, correspondingly, how much thermodynamic advantage it may offer for performing useful work.

To this end, we first establish an operational definition of extracted work. Considering the state $\rho$ as a source of free energy, an agent applies an operation $\mathcal{W}$ to the state, utilizing a battery and any necessary ancilla. The agent's objective is to transform $\rho$ into the thermal state $\gamma$ while increasing the energy of the battery by an amount equal to the non-equilibrium free energy of $\rho$. We define this net increase in the battery's energy as the \emph{work} extracted. A generic schematic of this process is shown in Fig.~\ref{fig:work_circuit}. Imposing additional constraints on the allowed operations would further reduce the upper bound on the amount of work an agent can extract.

\begin{figure}[htbp]
    \centering
    \includegraphics[width=0.4\linewidth]{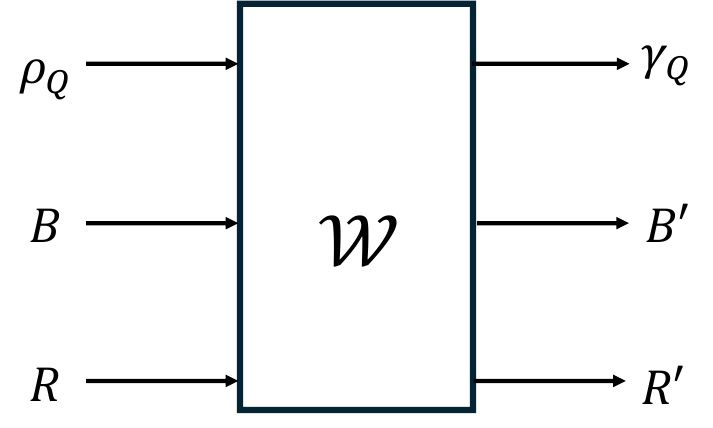}
    \caption{A circuit diagram representation of the work extraction protocol. The system $Q$ represents the system where free energy is drawn, $B$ is a battery, and $R$ represents a thermal reservoir as an ancillary system. The protocol aims to transform $\rho_Q$ to a thermal state $\gamma_Q$ with the help of the thermal states from the reservoir; the free energy lost in system $Q$ will be balanced by the increase in energy of the battery $B$.}
    \label{fig:work_circuit}
\end{figure}

Under this operational definition, the non-equilibrium free energy, quantified by the relative entropy, serves as an upper bound on the \emph{average work} extractable from the state:
\begin{equation}
    \langle W
    \rangle \leq \beta^{-1}\D(\rho\|\gamma_\beta)~.
\end{equation}
Alternatively, in the single-shot regime, work extraction is characterized by the maximum guaranteed work. This is quantified by the min-relative entropy $\D_{\min}(\rho\|\gamma)$ and its smoothed version, $\D^\epsilon _{\min}(\rho\|\gamma)$, if a failure probability $\epsilon$ is tolerated. 
While a detailed discussion of single-shot thermodynamics is beyond the scope of this thesis (interested readers may refer to Refs.~\cite {horodecki2013fundamental,aaberg2013truly,brandao2015second}), it is instructive to note the mathematical connection between these regimes. The min-relative entropy belongs to a family of divergences known as the $\alpha$-Rényi divergence, $\D_\alpha$, which is a generalization of the standard quantum relative entropy. The min-relative entropy corresponds to $\alpha=0$, whereas the standard relative entropy corresponds to $\alpha=1$. A property of $\alpha$-Rényi divergence is that $\D_\alpha\leq\D_\beta$ if $\alpha\leq\beta$. It naturally follows that the guaranteed single-shot work extracted will always be less than or equal to the average extractable work.

\subsection{Thermal operation}
\label{sec:thermal_ops}
One common approach to study quantum thermodynamics is through the resource theory framework~\cite{janzing2000thermodynamic,horodecki2013fundamental,brandao2013resource,brandao2015second}. In any resource theory, we define a set of free states—states that contain no resource—and a set of free operations—transformations that can be implemented without any cost. A core principle of resource theories is that free operations cannot generate resourceful states from free states. That is, the resource cannot be created ``for free." A canonical example is the resource theory of entanglement, where entanglement itself is the resource. In this setting, local operations and classical communication (LOCC) are considered free. It is well established that LOCC alone cannot generate entanglement from separable states. 
In quantum thermodynamics, the free states are thermal (Gibbs) states, and the class of free operations is called thermal operations. These operations take the general form:
\begin{equation}
    \mathcal{E}(\rho_S) = \tr_E U_{SE}(\rho_S\otimes\gamma_E)U_{SE}^\dagger~,
\end{equation}
where $\rho_S$ is the initial non-equilibrium state and $\gamma_E$ is a thermal state in the environment.  The operation is implemented via a global unitary $U_{SE}$ acting jointly on the system and environment, and is subject to strict energy conservation, meaning: $[U_{SE},\mathcal{H}_S+\mathcal{H}_E]=0$. Note that the notation of $\mathcal{H}_S+\mathcal{H}_E$ is not a literal sum but rather 
\begin{equation}
\label{eq:no_int_H}
    \mathcal{H}_A+\mathcal{H}_B = \mathcal{H}_A\otimes\id_B + \id_A\otimes \mathcal{H}_B~.
\end{equation}
This is a commonly used notation for non-interacting Hamiltonians. For brevity, we will use the shorthand $\mathcal{H}_S+\mathcal{H}_E$ throughout the remainder of the thesis, with the understanding that it refers to the full tensor-product form given in Eq.~\ref{eq:no_int_H}. 
The commutation condition ensures energy conservation, implying that no external work is required to implement the operation. However, thermal operations have a significant limitation: they cannot extract work from the quantum coherence present in the energy eigenbasis of the system. This restriction arises because the energy conservation constraint prohibits $U_{SE}$ from generating or exploiting coherence between different energy levels. As a result, the upper bound on the average amount of work extractable from a single copy of $\rho_S$ under thermal operations is given not by its non-equilibrium free energy, $\D(\rho\|\gamma)$, but by the relative entropy between the dephased version of $\rho$ and the corresponding thermal state: 
\begin{equation}
    \langle W\rangle \leq \beta^{-1}\D(\Delta(\rho)\|\gamma)~,
\end{equation}
where $\Delta(\rho)=\sum_i\bra{E_i}\rho\ket{E_i}\ket{E_i}\!\bra{E_i}$ is the dephased version of $\rho$ in the energy eigenbasis of the system. This phenomenon is known as ``work-locking"~\cite{lostaglio2015quantum,korzekwa2016extraction}.

For this thesis,  we turn our attention to a different class of protocols known as the $\rho^*$-ideal work extraction protocol. These protocols are subject to less stringent constraints compared to the strict energy conservation required by thermal operations. A detailed discussion of this framework is presented in~\cref{ch:work_extraction}.

\subsection{Collective vs single-copy}
\label{sec:collective}
An important notion for this thesis is the difference between collective processing and single-copy (local) processing. Given $N$ copies of a quantum state $\rho^{\otimes N}$, distributed across subsystems, $A_1,\cdots,A_N$, collective processing refers to applying a joint operation $\mathcal{E}_{Q_1,\cdots,Q_N}$ to the entire composite system. In contrast, local processing restricts the agent to operations acting on individual subsystems, of the form $\{\id_{A_1}\cdots\otimes\mathcal{E}_{A_i}\otimes \cdots\id_{A_N}\}_i$ where each $\mathcal{E}_{A_i}$ acts only on the individual copy $A_i$. A diagrammatic representation of the comparison is presented in Fig.~\ref{fig:collectvssingle}.
\begin{figure}[b]
    \centering
    \includegraphics[width=0.9\linewidth]{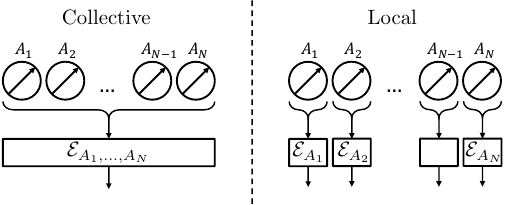}
    \caption{Distinction between collective processing vs single-copy(local) processing. Left panel: collective processing, where subsystems $A_1,\ldots,A_N$ are jointly acted upon by a global operation $\mathcal{E}_{A_1,\ldots,A_N}$. Right panel: local processing, where only individual subsystems are operated on separately, with operations restricted to one subsystem at a time.} 
    \label{fig:collectvssingle}
\end{figure}
This distinction plays a critical role in various domains of quantum information, including state discrimination and, more importantly for this thesis, quantum work extraction. Under collective operations—even restricted to thermal operations—it has been shown that in the asymptotic limit of using a large number of quantum states, $\rho^{\otimes N}$, an agent using a collective operation can extract an average amount of work equal to the non-equilibrium free energy of the state $\rho$ in the limit of $N\to\infty$~\cite{horodecki2013fundamental,brandao2015second,faist2015minimal,gour2015resource}. Formally,
\begin{equation}
   \lim_{N\to\infty} \frac{1}{N}\D(\Delta(\rho^{(N)})\|\gamma^{\otimes N}) = \D(\rho\|\gamma)~.
\end{equation}
Physically, this advantage arises because the collective Hamiltonian governing the tensor-product state $\rho^{\otimes N}$ exhibits a highly degenerate energy spectrum. By applying global coherent operations that commute with this collective Hamiltonian, an agent can couple degenerate energy levels that remain strictly inaccessible when operating on a single copy of $\rho$. This global access allows the agent to extract work from coherence that would otherwise be locked.

This asymptotic enhancement is further illuminated by considering single-shot work extraction with a permitted failure probability $\epsilon$. In the asymptotic regime, where global coherent operations are applied across infinitely many copies, the smoothed min-relative entropy $\D^\epsilon_{\min}(\rho\|\gamma)$ converges to the standard relative entropy $\D(\rho\|\gamma)$ when an agent is allowed to operate coherently across all time steps in the asymptotic regime. This foundational result is governed by the quantum asymptotic equipartition property (QAEP)~\cite{tomamichel2009fully}.

Remarkably, the performance gap between collective and single-copy processing persists even when the agent lacks full information about the quantum state. It has been shown that in the limit $N\to\infty$, collective strategies outperform single-copy ones—even when the identity of the quantum state from which work is extracted is unknown~\cite{watanabe2024black,watanabe2025universal}. A similar phenomenon is observed in the study of ergotropy~\cite{allahverdyan2004maximal,perarnau2015extractable}, where the distinction between passive and completely passive states reflects the same trend: collective processing generally yields greater work extraction than single-copy approaches~\cite{pusz1978passive,alicki2013entanglement}. This advantage largely stems from the ability to exploit quantum correlations, such as entanglement and discord, which are inaccessible to local operations. That said, results from quantum state discrimination suggest that adaptive, sequential strategies—in which operations are updated based on prior outcomes—can, in some cases, match the performance of collective measurements~\cite{martinez2021quantum}.  

The distinction between collective and local processing becomes especially significant when considering the sequential extraction of work from temporally correlated quantum states. Suppose the agent is given a joint state $\rho_{A_{1:T}}$, defined over a sequence of time steps $1,\cdots, T$. Due to physical constraints, each marginal state $\rho_{A_t}$ is only available for a limited duration, i.e., expired states from $t-1$ cannot be accessed at $t$ and for causality to be obeyed, agent at $t$ has no access to information about future states at $t+1$. Its decision must be made based only on past and present information. Under these constraints, the agent must act sequentially: at each time step $t$, it applies a local operation to the reduced state $\rho_{A_t} = \tr_{A_{1:T\backslash t}}(\rho_{A_{1:T}})$, based only on the history up to time $t$. Since the agent knows the process generating these correlated states, it can deduce the exact ensemble of states across all time steps. This allows the agent to implement global operations, which offer the theoretical advantage of harnessing global correlations, their practical implementation is technically demanding. Maintaining coherent quantum memory over extended timescales and performing coherent operations on many qubits are significant experimental challenges~\cite {lvovsky2009optical, heshami2016quantum}, making such strategies costly in real-world scenarios. For this reason, sequential protocols, which operate under realistic constraints, are the focus of this thesis—particularly with respect to their capacity to extract work from temporally structured quantum states. The sequential nature also opens up the possibility for the agent to adapt the future operations based on the work that is extracted in the past time step. To this end, the next section will aim to provide an overview of stochastic processes, in particular the hidden Markov Model (HMM), which will be used to generate and characterize these temporal correlations. This would also introduce the mathematical objects that a sequential agent requires to perform said adaptation. 

\section{Stochastic Processes}
\label{sec:stochastic_process}
As previously discussed, the central goal of this work is to investigate where an agent can extract work from temporal correlations in physical systems. To model such correlations rigorously, we employ the framework of \emph{stochastic processes}, with a particular focus on \emph{discrete-time} stochastic processes.
These are formally represented as sequences of random variables $\{X_t\}_{t\in T}$, where each variable is governed by a conditional probability distribution of the form $\Pr(X_t|X_{t-1}\cdots X_{-\infty})$.  In general, the statistical dependence of future variables on past observations can extend arbitrarily far into the history of the process. This leads to an important property of a stochastic process: its \emph{Markov order}. This quantity quantifies the length of past observation that influences future statistics~\cite{cover1999elements}. Specifically, a process is said to have Markov order $k$ if the conditional distribution of future events depends only on the most recent 
$k$ outcomes, rather than the entire past. Formally, this is expressed as:
\begin{equation}
    \Pr (X_t|X_{t-1}X_{t-2}\cdots X_{-\infty})=\Pr(X_t|X_{t-1}X_{t-2}\cdots X_{t-k})~.
\end{equation}
A process with Markov order $0$ is referred to as memoryless, meaning each random variable in the sequence is statistically independent of all others. Such processes are also known as \emph{independent and identically distributed} (i.i.d.). A process with Markov order 1 is called \emph{Markovian}, indicating that the future state depends solely on the immediate past state.
More generally, processes with Markov order greater than $1$ are said to exhibit \emph{non-Markovian behavior}, as their future statistics depend on longer histories rather than just the most recent outcome. This hierarchy of memory dependence plays a critical role in understanding the informational and thermodynamic structure of temporal processes. 

In this thesis, we focus specifically on a class of models known as \emph{Hidden Markov Model} (HMM). HMMs are particularly valuable because they offer a compact, state-based description of processes whose statistical dependencies may otherwise be difficult to capture explicitly~\cite{rabiner1989tutorial}. Unlike models that rely solely on observable sequences, HMMs introduce internal \emph{latent} states, enabling a more efficient representation of complex temporal dependencies. For instance, the well-known \emph{Even Process}~\cite{shalizi2001computational}, despite having infinite Markov order, can be succinctly represented using a finite-state HMM, as illustrated in the first row of Table~\ref{tab:table_msp}.

Formally, an HMM, or more specifically an edge-emitting HMM, is defined as the following~\cite{riechers2018spectral,jurgens2021shannon}.
\begin{definition}[Hidden Markov Model]
    An edge-emitting hidden Markov model (HMM) $\mathcal{M}$, is defined by a 4-tuple
    \begin{equation}
     \mathcal{M}=\{\mathcal{S},\{x\}_{x\in\X},\{T^{(x)}\}_{x\in\mathcal{X}},\gbm \pi\}
    \end{equation}
    where each component encodes a different aspect of the underlying stochastic process. The set $\mathcal{S}$ represents the finite set of latent (hidden) states, which are not directly observable. The set $\mathcal{X}$ denotes the observable alphabet, comprising the symbols that may be emitted during state transitions. For each symbol $x\in\mathcal{X}$, the matrix $T^{(x)}$ defines a labeled-sub-stochastic matrix, encoding the probabilities of transitioning between hidden states while emitting the symbol $x$. These matrices satisfy the condition 
\begin{equation}
\label{eq:stochastic_condition}
    T = \sum_{x\in\mathcal{X}}T^{(x)}~,
\end{equation} 
where $T$ is the overall state-to-state transition matrix, which must be row-stochastic, i.e., each row sums to 1. Finally, $\gbm\pi$ denotes the stationary distribution over the hidden states, representing the long-term probabilities of occupying each latent state. This distribution is uniquely determined by the transition matrix, $T$, satisfying $\gbm\pi T=\gbm \pi$. 
\end{definition}
Note that there is also another class of HMM known as state-emitting HMM, which assumes that the transition between latent state and emission is independent of each other. However, the edge-emitting HMM tend to be minimal in terms of memory requirement~\cite{hopcroft2001introduction,amato2010finite}.

However, the representation of a stochastic process via an HMM is generally \emph{not unique}. It is possible to construct alternative HMMs by introducing redundant latent states that do not alter the observable statistics of the output process. Such representations may obscure the underlying structure and inflate the model's complexity without providing additional predictive power. To address this, we introduce the $\epsilon$-machine representation—an HMM that is minimal, unifilar, and maximally predictive~\cite{crutchfield1989inferring,shalizi2001computational}.
To fully understand what this means, we first introduce the concept of causal state, which concerns the memory requirement of the hidden Markov Model
\begin{definition}[Causal State] 
Causal state is a class of past histories that are \emph{causally equivalent} to each other.
Let $\overleftarrow{x} = x_{-\infty},\cdots x_{t-2},x_{t-1}$ be a string of past symbols and $\overleftarrow{x}'$ be a different string. They are considered causally equivalent if they both lead to identical distributions of future observations. Mathematically, we say that the 2 past strings, $\overleftarrow{x}$ and $\overleftarrow{x}'$, are causally equivalent if and only if
    \begin{equation}
    \overleftarrow{x} \sim \overleftarrow{x}' \iff\Pr(\overrightarrow{X}|\overleftarrow{x}) = \Pr(\overrightarrow{X}|\overleftarrow{x}')~,
\end{equation}
where ``$\sim$" symbols represent causal equivalence. If strings $A$ and $B$ are causally equivalent to each other, they belong to the same causal state.
\end{definition}
Next, we need to introduce the concept of unifilarity, which concerns the transitions within the HMM.
\begin{definition}[Unifilarity]
    A process is said to be \emph{unifilar} if the next state is uniquely determined by the current state and observed symbol:
\begin{equation}
\label{eq:unifilar}
    H(S_{t+1}|X_{t+1},S_{t}) = 0~.
\end{equation}
A corollary of a process being minimal and unifilar is that it guarantees the observation-induced synchronization,
\begin{equation}
\label{eq:asymp_synch}
    \lim_{N\to\infty} H(S_N|X_{0:N}) = 0~,
\end{equation}
where we used the shorthand $X_{0:N}$ to represent the string $X_0,X_1\cdots,X_N$~\cite{crutchfield2009time,crutchfield2010synchronization}. This property is particularly important for prediction as it allows an agent to eventually synchronize its memory with the latent state in order to make meaningful predictions. 
\end{definition}

\begin{definition}[$\epsilon$-Machine]
    An $\epsilon$-machine is the \emph{minimal}, \emph{unifilar}, edge-emitting HMM that perfectly generates a given stationary stochastic process.
\end{definition}

The term \emph{minimal} here refers to an $\epsilon$-machine having the lowest statistical complexity $C_\mu$ among all unifilar representations of a process, while still preserving all statistically relevant structure necessary for optimal prediction.
\begin{equation}
    C_\mu \coloneqq -\sum_{s\in\mathcal{S}}\Pr(s)\log_2\Pr(s)~,
\end{equation}
where the distribution $\Pr(s)$ is taken over the stationary distribution $\gbm\pi$. Statistical complexity quantifies the amount of past information a system must store in order to reliably predict the future behavior of a stochastic process. Beyond its role in prediction, statistical complexity carries important thermodynamic implications: it sets a lower bound on the energetic cost required by any physical system that implements or simulates the process. This connection arises from the fundamental link between information storage and dissipation, as highlighted by Landauer's principle discussed in \cref{ch:introduction}. The $\epsilon$-machine achieves this precisely by keeping track of \emph{causal states} rather than keeping track of all past outputs. 
As a result of this, any unifilar minimal HMM will have the same structure as its $\epsilon$-machine, and its latent states correspond to the causal states.
The $\epsilon$-machine representation of an HMM can be obtained by minimizing the recurrent component of any HMM's mixed state presentation (MSP). We will discuss this in more detail in Section~\ref{sec:belief}~\cite{ellison2009prediction}. 

For this thesis, we investigate the quantum states generated by a semi-quantum stochastic process. These are processes where the underlying dynamics follow a classical HMM, but instead of emitting classical symbols, the model emits non-orthogonal quantum states, as illustrated in Fig.~\ref{fig:PC_diagram}. While such processes are unifilar from a modeling standpoint, measurements on the quantum outputs introduce measurement-induced non-unifilarity, since non-orthogonal quantum states cannot be perfectly distinguished~\cite{venegas2020measurement,venegas2023optimality}.  This undermines the synchronization property described in Eq.~\ref{eq:asymp_synch}, posing challenges for agents attempting to align their internal state estimates with the process. We address this challenge in Section~\ref{sec:synchronization}, where we present a Bayesian framework for updating the agent's belief state over time, enabling synchronization even in the presence of quantum uncertainty.

%% file: Chapters/Chapter3.tex

\chapter{Work extraction protocol} 
\chaptermark{Work extraction}  
\label{ch:work_extraction} 
\noindent\rule{\textwidth}{0.4pt}
\vspace{1.5em} 
\setlength{\parindent}{4ex}

\noindent \textit{In this chapter, we introduce and discuss a class of protocols, the $\rho^*$-\emph{ideal work extraction protocol}, which will be frequently referenced and employed throughout the thesis. We begin by outlining the motivation behind their development and examining their performance characteristics. Particular attention is given to the implications of relaxing strict energy conservation in this framework, and we discuss the consequences this may cause, as well as how this may potentially be resolved through careful energy accounting. Furthermore, we present a concrete realization of such a protocol, accompanied by a detailed analysis of its corresponding work distribution.}
\newpage

\section{Motivation}
\label{sec:motivation}
Work extraction with zero entropy production is of particular importance, as it represents the maximum amount of work that can be extracted from a quantum state. This serves as an ideal benchmark against which all other protocols can be evaluated. Achieving this limit requires that the thermodynamic process be precisely tailored to the input state~\cite{riechers2021impossibility}.

This tailoring is often implicitly assumed in many work extraction protocols, although it is rarely emphasized. This is primarily because much of the existing literature focuses on scenarios where the quantum state is already known, whether within the framework of thermal operations or in the study of ergotropy. While recent works have addressed state-agnostic work extraction---where the state is unknown---the methods primarily involve tomography or coarse-graining of the unknown state prior to extraction~\cite{vsafranek2023work,watanabe2024black,watanabe2025universal}. However, the thermodynamic penalties associated with these methods are seldom discussed.

In contrast, the agential approach highlights a critical distinction: the agent's belief about the state may differ from the actual state, due to incomplete knowledge or the stochastic nature of the underlying process. In such situations, it becomes essential to study protocols that bridge the gap between belief and reality, especially when extracting work under uncertainty.

Before delving into the formal definitions, it is important to introduce the concept of entropy production, which plays a key role in this context. Entropy production is defined as the difference between the heat supplied to the system and the actual change in the system's entropy. In other words, it is given by: 
\begin{equation}
    \langle\Sigma\rangle=\frac{\langle \tilde W\rangle-\Delta\mathcal{F}}{T}~.
\end{equation}
where $\tilde W$ represents the work exerted by the system, and $\Delta\F$ is the change in the system's free energy during the process.

It is important to note that entropy production is distinct from entropy itself. Entropy production measures the amount of entropy generated due to dissipation or irreversibility during a process. It serves as a measure of how irreversible the process is. For a reversible process, entropy production is zero, indicating no dissipation. For example, one can imagine a pendulum swinging back and forth in a vacuum, where its motion appears identical whether viewed forwards or backwards in time, as there is no dissipation. In contrast, for an irreversible process, where $\Sigma>0$, consider the story of Humpty Dumpty, who fell from the wall and could not be restored. The noise and heat generated during the collision with the ground represent dissipation.

If an agent wishes to design a work extraction protocol to extract \emph{all} non-equilibrium free energy from a quantum state without violating the laws of thermodynamics, the change in free energy must equal the work extracted (the negative of the work exerted)~\cite{parrondo2015thermodynamics}. This is only achievable when the entropy production is zero, which, in turn, implies that the operation across all systems involved must be unitary.

\section{General Set-up}
We consider a general setup for work extraction, where a system $Q$ supplies quantum states and acts as the source of free energy, a battery $B$ stores the extracted free energy, and a thermal reservoir $R$ is maintained at an inverse temperature $\beta$. We follow the same approach as in thermal operations, where the agent can acquire an unbounded number of thermal states from the reservoir. The agent performs the extraction by transforming the state in $Q$ into a thermal state with the assistance of any number of thermal states, $\gamma_\beta$, taken from the reservoir. During this process, the free energy lost from the state in $Q$ is transferred to the battery $B$. A circuit representation of this protocol is shown in Fig.~\ref{fig:work_circuit}.

In the following,we adopt several idealized assumptions. First, we treat the battery as an ideal system with a bi-infinite energy spectrum that spans all real values well above the ground state. In a more realistic scenario, however, a physical battery possesses a bounded Hamiltonian. When the battery's energy is low, corrections to the free energy must be considered~\cite{lipka2021second}, generally resulting in lower work extraction. Furthermore, a battery with finite capacity causes the extraction process to eventually terminate.

In the case of a battery with discrete energy levels, any energy not matching the specific gaps between levels cannot be extracted and is dissipated as heat, thereby reducing the total extractable work~\cite{aaberg2013truly,horodecki2013fundamental}. Such discrete systems may also necessitate a coherence investment~\cite{aaberg2014catalytic}.

Beyond the ideal battery, we also assume the thermal reservoir is infinitely large and memoryless. If the reservoir is finite or possesses memory, it may retain information from previous operations. This leads to work extraction rates below the Landauer bound, even in state-aware cases~\cite{reeb2014improved}. Additionally, while a finite bath allows for temporary violations of the Landauer bound during the extraction process, the bound remains valid for the cumulative work~\cite{pezzutto2016implications}. Such fluctuations disturb the transition rules between an agent’s belief states, as these rules depend on the actual work values rather than just the expectation value. To mitigate this, the agent’s update rules can be pre-programmed to account for these stochastic variations in work.

\section{\texorpdfstring{$\rho^*$}{}-work extraction protocol}
\label{sec:rho_ideal}
We now formally introduce the class of operations known as $\rho^*$-ideal work extraction protocols. Based on the analysis of entropy production in~\cite{riechers2021initial}, any non-unitary transformation designed to transform $\rho_0$ into $\rho_\tau$, where $\tau$ is the time interval over which the process occurs, when applied to an alternate state $\sigma_0$ will incur additional dissipation, quantified by:
\begin{equation}
\label{eq:entropy_production}
    \Sigma_{\rho_0} - \Sigma_{\sigma_0} = k_B \D(\rho_0\|\sigma_0)-k_B\D(\rho_\tau\|\sigma_\tau)~.
\end{equation}

Using this result, and the fact that zero entropy production is required for the agent to extract all non-equilibrium free energy, we can express the entropy production for a protocol tailored to $\rho_0$ as:
\begin{equation}
\label{eq:3.3}
    \Sigma_{\sigma_0} = k_B\D(\gamma_\beta\|\sigma_\tau)-k_B\D(\rho_0\|\sigma_0)~,
\end{equation}
where we enforce that $\Sigma_{\rho_0}=0$ since the protocol is tailored for $\rho_0$.
We now turn our attention to a class of protocols tailored to $\rho^*$,  which transforms it into a thermal state $\gamma$, extracting all its non-equilibrium free energy into a battery system. This will define the ``$\rho^*$-ideal work extraction protocol".
\begin{definition}[$\rho^*$-ideal work extraction protocol]
    A $\rho^*$-ideal work extraction protocol, $\mathcal{W}_{\rho^*}$ must satisfy the following two conditions:
    \begin{enumerate}
    \item When the initial state of the system is $\rho^*$, the protocol achieves zero entropy production, transferring on average all non-equilibrium addition to free energy $\beta^{-1} \text{D}( \rho^* \| \gamma)$ to a battery, $B$;
    \item It conserves energy globally when acting on any eigenstate of $\rho^*$.
\end{enumerate} 
\end{definition}
We now justify the reasoning behind these criteria. The first condition is straightforward, as discussed in Sec.~\ref{sec:motivation}. The second condition, requiring average energy conservation when acting on eigenstates of $\rho^*$, is a looser constraint compared to thermal operations. Recall from Sec.~\ref{sec:thermal_ops} that the restriction on allowable operations in thermal operations is strict energy conservation, i.e., the operations must commute with the combined Hamiltonian of the joint systems. This leads to an inability to extract work from coherence with respect to the energy eigenbasis. As a result, the agent cannot extract all available free energy. This constraint can be overlooked when one considers a system with a degenerate Hamiltonian, since the agent is then allowed to rotate the quantum state in any arbitrary way without violating energy conservation. It was shown in~\cite{skrzypczyk2014work} that the looser constraint imposed allows the agent to extract all non-equilibrium free energy. However, this comes with the trade-off that this condition can be violated if the input state is not diagonal in the eigenbasis of $\rho^*$. To mitigate this, we restrict our examination to systems $Q$ with a fully degenerate Hamiltonian, which ensures that this set of operations overlaps with thermal operations. The consequences of this condition, as well as cases involving non-degenerate Hamiltonians, will be discussed in Sec.~\ref{sec:consequence} and \ref{sec:non-degen}.

Using the definition of the $\rho^*$-ideal extraction protocol, and incorporating Eq.~\eqref{eq:3.3}, we can derive an upper bound on the average work extracted when such a protocol, tailored to $\rho^*$, is applied to a state $\sigma\neq \rho^*$:
\begin{equation}
\label{eq:work_for_rho_ideal}
    \langle W\rangle\leq \beta^{-1}\left[\D(\sigma\|\gamma)-\D(\sigma\|\rho^*)+\D(\gamma_\beta\|\sigma_\tau)\right]~.
\end{equation}
The first term, $\D(\sigma\|\gamma)$, represents the non-equilibrium free energy of the state $\sigma$. The second term, $\D(\sigma\|\rho^*)-\D(\gamma_\beta\|\sigma_\tau)$, arises from additional entropy production due to the agent incorrectly tailoring the protocol to $\rho^*$ instead of $\sigma$. This is referred to as dissipation due to \emph{misaligned expectations}.
Importantly, when $\rho^*=\sigma$, we recover the well-known result that the extractable work is upper-bounded by the non-equilibrium free energy: $\beta^{-1}\D(\sigma\|\gamma)$. This ideal scenario corresponds to the case where the agent has perfect knowledge of the input state.

\section{Explicit protocol}
\label{sec:actual_protocol}
We now discuss an explicit protocol which is instructive to how a $\rho^*$-ideal work extraction protocol can be carried out~\cite{skrzypczyk2014work}. For the protocol, we consider the following subsystems.
\begin{enumerate}
    \item System $Q$ where the temporally correlated quantum states occupy. Its Hamiltonian is $\mathcal{H}_Q = E_0\ket{E_0}\!\bra{E_0}+E_1\ket{E_1}\!\bra{E_1}$.
    \item Thermal Reservoir $R$ at specific inverse temperature $\beta$. It has a tunable Hamiltonian $\mathcal{H}_R(\nu)=e_0\ket{0}\!\bra{0}+(e_0+\nu)\ket{1}\!\bra{1}$, where $\nu$ represents the tunable energy gap between the 2 energy levels in order to reach the Maxwell-Boltzmann distribution needed for the protocol.
    \item Battery $B$ modeled after classical weight shifting up and down to gain or lose potential energy. It can be operated on by raising operation $\Gamma_y$ such that $\Gamma_y\ket{x}=\ket{x+y}$. Its Hamiltonian is defined as $\mathcal{H}_B\ket{x}\!\bra{x}_B=x\ket{x}\!\bra{x}$.
\end{enumerate}
Note that we will drop all time indices in this section since we are just dealing with extraction in a particular time step. 

Before the operation begins, the agent first chooses a state to tailor the protocol to, $\rho^*=\sum_i\lambda_i\ket{\lambda_i}\!\bra{\lambda_i}$, where $\lambda_i\geq\lambda_{i+1}$ are the ordered eigenvalues. The protocol then proceeds in 2 stages. Stage 1 involves the application of unitary rotation of the quantum state, and stage 2 involves a sequence of $M$ swap operations between the system and a specially tuned reservoir qubit. 
The first stage of the protocol proceeds by implementing a unitary rotation on the system qubit in an attempt to diagonalize it in the energy eigenbasis of the system Hamiltonian $\{\ket{E_i}\}_i$. The raising operator will balance the energy change by lifting/dropping the energy level of the battery. Mathematically, the operation is written as 
\begin{equation}
    U^{(\rho^*)}_{QB} = \sum_i\ket{E_i}\bra{\lambda_i}_Q\otimes\Gamma^{(B)}_{\epsilon_i}~,
\end{equation}
where we use the superscript in parenthesis $U^{(\rho^*)}$ to indicate that the operation is tailored for state $\rho^*$. Here, $\epsilon_i=\bra{\lambda_i}\mathcal{H}_Q\ket{\lambda_i}-E_i$ accounts for the difference in energy in the two states, and $\Gamma$ is an operator to raise the potential energy of the weight. It is important to point out that this is the stage in the whole protocol that violates the strict energy conservation imposed by thermal operations. More discussion on this operation can be found in the latter part of this chapter.
\begin{figure}
    \centering
    \includegraphics[width=1\linewidth]{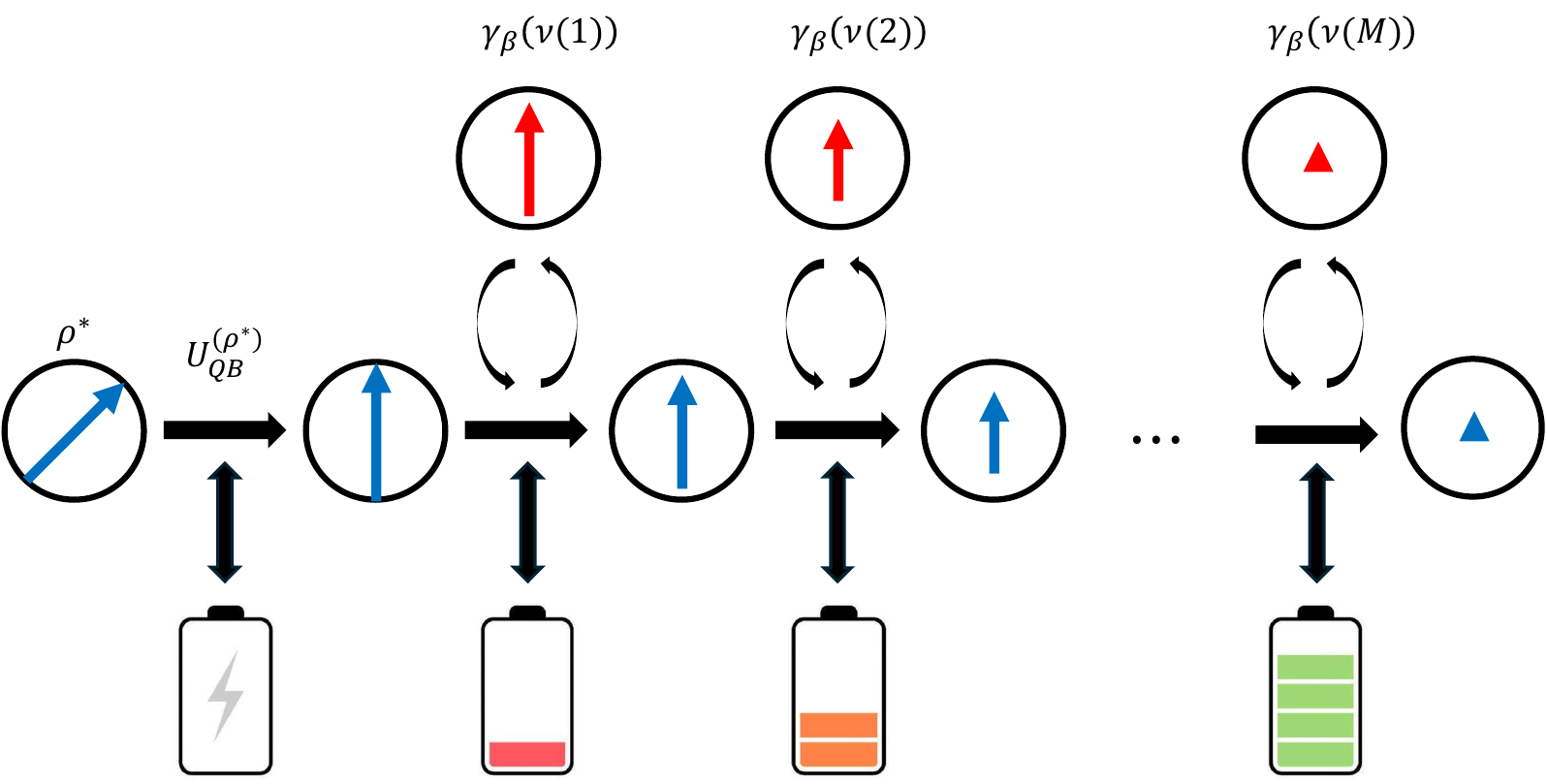}
    \caption{An illustration of how the protocol works. At stage 1, $U_{QB}^{(\rho^*)}$ is applied to the system-battery joint state; it effectively attempts to diagonalize $\rho^*$ in the energy eigenbasis. Stage 2 consists of $M$ swap operations with the tunable thermal reservoir; the thermal qubit in each step becomes increasingly mixed. All energy changes during the operations are stored in the battery, $B$.}
    \label{fig:protocol_gen}
\end{figure}
For the second stage of the protocol, the agent performs a thermal operation by repeatedly appending a reservoir qubit $R$, applying a strict energy-conserving unitary on the combined system $Q B R$, and discarding $R$ afterwards. Specifically, for each of $M$ repetitions (indexed by $\tau \in [M]$), the agent:
    \begin{enumerate}
        \item Sets the energy gap of the reservoir qubit to $\nu (\tau)$ (see Eq.~\eqref{eq:intro_gap_parametrization}) and gets a fresh thermal state $\gamma_\beta(\nu(\tau))$.
        \item Applies the following unitary 
        \begin{align}\label{eq:unitary_intro}
            V^{(\rho^* , \tau)} = \sum_{i,j} \ket{i}\!\bra{j}_Q \otimes \ket{j}\!\bra{i}_R \otimes \Gamma^{(B)}_{(i-j)\nu(\tau)} ,
        \end{align}
       satisfying $[\mathcal{H}_Q+\mathcal{H}_B+\mathcal{H}_R, V^{(\rho^* , \tau)}]=0$ to $QBR$.
        \item Discards the reservoir qubit.
    \end{enumerate}
After completing the $M$ iterations, the agent measures the energy of the battery in its energy eigenbasis and records the energy $W$. A diagrammatic representation of this protocol can be found in Fig.~\ref{fig:protocol_gen}

To approximate a quasi-static process, we repeat the thermal interaction $M$ times within each round. In the limit $M \to \infty$, the process becomes effectively reversible. At each repetition $\tau \in [M]$, we introduce a slight mismatch between the occupation of energy levels of system $A$ and the reservoir $R$ by varying the energy gap. We use the following parametrization of the gap
\begin{equation}
\label{eq:intro_gap_parametrization}
    \nu(\tau)=\beta^{-1}\log{\frac{\lambda_0-\tau\delta p}{\lambda_1+\tau\delta p}},
\end{equation}
where $\delta p = \frac{\lambda_0-e^{-\beta e_0}/Z}{M}$. 

The requirement for a quasi-static process --- that is, a process that is slow and nearly reversible. This arises from the hope to maximize the extractable non-equilibrium free energy from system $Q$. Under such conditions, the system qubit remains close to the thermal state of the reservoir's Hamiltonian, thereby suppressing heat flow during the interaction.  Conversely, if the process is carried out rapidly, i.e., the system state deviates from the reservoir's thermal state, the system will appear out of thermal equilibrium with the reservoir, resulting in heat exchange, which contributes to the entropy production during the protocol. The fundamental trade-off between efficiency and power in such protocols is well established: in general, the reduction in work output scales as $1/\tau_0$ where $\tau_0$ denotes the time taken for the extraction protocol~\cite{van2022finite}. This can also be characterized by a discrete number of interactions required to implement the transformation~\cite{taranto2023landauer}. It is also proven in the original literature that the protocol is able to extract the full non-equilibrium free energy up to $O(\delta p^2)$, which also implies a scaling of $O(\frac{1}{M^2})$.

\section{Analysis of work distribution}
In this section, we analyze the distribution of extracted work in the protocol outlined in Sec.~\ref{sec:actual_protocol}. The analysis involves examining entropy production and the scaling of error probabilities, particularly when the protocol is executed with a finite number of iterations $M$.

Notice that the protocol presented will transform \emph{any} quantum state to $\gamma_\beta$, simplifying Eq.~\eqref{eq:work_for_rho_ideal} expression:
\begin{equation}
    \langle W\rangle\leq \beta^{-1}\left[\D(\sigma\|\gamma)-\D(\sigma\|\rho^*)\right]~.
\end{equation}
The next theorem fully characterizes the set of extracted work values 
and their probabilities for any such protocol.
\begin{theorem}[The Work Distribution of $\rho^*$-Ideal Protocols]
\label{thm2}
Each $\rho^*$-ideal work-extraction protocol that thermalizes all $d$-dimensional quantum states and exhibits at most $d$ distinct extracted-work values. These extracted-work values can be expressed,
in terms of the ideal input's spectral decomposition $\rho^* = \sum_{n} \lambda_n \ket{\lambda_n} \bra{\lambda_n}$, as
\begin{align}
	w^{(n)} \coloneqq \bra{\lambda_n} \mathcal{H}_Q \ket{ \lambda_n}  + \beta^{-1} \ln \lambda_n - \mathcal{F}_{\text{eq}} 
	\label{eq:WorkExtractionValues}
\end{align}
where $\mathcal{F}_{\text{eq}}$ is the equilibrium free energy and $H$ is the Hamiltonian of the system. The associated probabilities are
\begin{equation}
    \Pr \bigl( W \! = w^{(n)}  | \sigma \bigr)
	= \sum_{m} \bra{\lambda_m }\sigma\ket{\lambda_m}  \delta_{ w^{(n)}, w^{(m)}  }
	\label{eq:WorkExtractionProbs}
\end{equation}
This set of work values is independent of the actual $d$-dimensional quantum state $\sigma$ input to the protocol, although the input state determines the probabilities of each outcome.
\end{theorem}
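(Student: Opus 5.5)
Our approach is to use the two defining conditions of a $\rho^*$-ideal protocol, together with linearity of quantum channels, to pin down the work value attached to each eigenstate of $\rho^*$. We then recover the distribution for an arbitrary input $\sigma$ by decomposing it in the eigenbasis $\{\ket{\lambda_n}\}$.

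The first step is to fix the structure. Condition 2 says the protocol conserves energy globally on each input $\ket{\lambda_n}\!\bra{\lambda_n}$, and the protocol thermalizes every input. Hence, on eigenstate inputs, the battery energy change must be a sharp (deterministic) value, namely the initial energy $\bra{\lambda_n}\mathcal{H}_Q\ket{\lambda_n}$ minus the final energy, with the heat drawn from the reservoir accounted for. For the explicit protocol of Sec.~\ref{sec:actual_protocol}, we would verify this directly. The stage-one unitary $U^{(\rho^*)}_{QB}$ maps $\ket{\lambda_n}$ to $\ket{E_n}$ and raises the battery by the deterministic amount $\epsilon_n=\bra{\lambda_n}\mathcal{H}_Q\ket{\lambda_n}-E_n$. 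The stage-two swaps are thermal operations whose battery shifts depend only on the classical population trajectory. So we may associate to each $n$ a work value $w^{(n)}$.

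The second step is to identify $w^{(n)}$. We write the mean work on $\rho^*$ as $\sum_n \lambda_n w^{(n)}$ and invoke condition 1, which requires this to equal $\beta^{-1}\D(\rho^*\|\gamma)=\sum_n\lambda_n[\bra{\lambda_n}\mathcal{H}_Q\ket{\lambda_n}+\beta^{-1}\ln\lambda_n-\mathcal{F}_{\text{eq}}]$. This alone fixes only the average, so we need a stronger argument. We would use zero entropy production on $\rho^*$, which must hold at the fluctuating (trajectory) level. Concretely, the dissipation identity Eq.~\eqref{eq:entropy_production} must hold for every input diagonal in the $\rho^*$ basis, including $\sigma=\ket{\lambda_m}\!\bra{\lambda_m}$ and perturbations $\rho^*+\epsilon(\ket{\lambda_m}\!\bra{\lambda_m}-\ket{\lambda_k}\!\bra{\lambda_k})$. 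The bound $\langle W\rangle_\sigma\le\beta^{-1}[\D(\sigma\|\gamma)-\D(\sigma\|\rho^*)]$ is affine in $\sigma$ on this commuting family; indeed it equals $\sum_m\sigma_{mm}[\bra{\lambda_m}\mathcal{H}_Q\ket{\lambda_m}+\beta^{-1}\ln\lambda_m-\mathcal{F}_{\text{eq}}]$, since the $\sigma\ln\sigma$ terms cancel. It is saturated at the interior point $\rho^*$, and $\langle W\rangle_\sigma$ is also affine in $\sigma$. An affine function lying below another affine function and touching it at an interior point of the simplex must coincide with it on the whole simplex. This forces $w^{(m)}$ to take the form in Eq.~\eqref{eq:WorkExtractionValues}. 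We expect this step to be the main obstacle: we must justify that the bound applies to each diagonal $\sigma$, and handle rank-deficient $\rho^*$, where $\ln\lambda_n=-\infty$. For the latter we would restrict to the support, or take a limit.

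The third step concerns general $\sigma$. The battery's final energy distribution is linear in the input. Moreover, the protocol's first action is effectively a measurement in the $\{\ket{\lambda_n}\}$ basis: the unitary $U^{(\rho^*)}_{QB}$ correlates each $\ket{\lambda_n}$ with the battery shift $\epsilon_n$. Battery states with different shifts are orthogonal, so the off-diagonal terms $\bra{\lambda_m}\sigma\ket{\lambda_n}$ with $m\ne n$ contribute no cross terms to the diagonal battery distribution. Hence $\Pr(W=w)=\sum_m\bra{\lambda_m}\sigma\ket{\lambda_m}\,\Pr(W=w\mid\lambda_m)=\sum_m\bra{\lambda_m}\sigma\ket{\lambda_m}\delta_{w,w^{(m)}}$, which is Eq.~\eqref{eq:WorkExtractionProbs}. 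Finally, the values $w^{(n)}$ depend only on $\rho^*$, $\mathcal{H}_Q$ and $\beta$, and there are at most $d$ of them, with coincidences merged by the Kronecker delta. This gives the independence from $\sigma$ and the count of at most $d$ distinct values.
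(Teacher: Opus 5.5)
Your step~2 is a sound alternative to the paper's route. The paper inserts eigenstates into the exact mismatch-cost identity $\langle\Sigma\rangle_\sigma-\langle\Sigma\rangle_{\rho^*}=k_B\D(\sigma\|\rho^*)$. You instead upgrade the inequality of Eq.~\eqref{eq:work_for_rho_ideal} to an equality by the affine-touching argument, which works for full-rank $\rho^*$.

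The problem is that this only gives the \emph{mean} work on each $\ket{\lambda_n}$. The sharpness you need must come from step~1, and its justification fails. Energy conservation plus thermalization does not make the battery shift deterministic. The reservoir starts in a mixed thermal state, so the heat it supplies, and with it the battery change, is random; conservation fixes only the energy balance. Your check on the explicit protocol shows exactly this. The stage-two shift depends on the random sequence of reservoir swap outcomes and only concentrates as $M\to\infty$ (the Hoeffding estimate in the appendix).

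What forces a sharp value is zero entropy production on $\rho^*$ holding trajectory by trajectory. For example, $\langle e^{-\Sigma}\rangle=1$ together with $\langle\Sigma\rangle_{\rho^*}=0$ gives $\Sigma=0$ almost surely by Jensen. The paper builds this into its assumed form of the global unitary. This trajectory-level statement actually fixes the value as well, which would make your affine argument unnecessary. Every thermal final outcome has stochastic free energy $\mathcal{F}_{\text{eq}}$, while the initial one is $\bra{\lambda_n}\mathcal{H}_Q\ket{\lambda_n}+\beta^{-1}\ln\lambda_n$.

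Step~3 uses the wrong mechanism. The cross terms $\bra{\lambda_n}\sigma\ket{\lambda_m}$ are not killed by orthogonality of the stage-one battery shifts, because those shifts coincide whenever the $\epsilon_n$ do. In particular, for the fully degenerate $\mathcal{H}_Q$ the thesis works with, every $\epsilon_n=0$, and stage one leaves the coherences untouched.

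In the explicit protocol the coherences are removed by the first swap, which transfers them into a reservoir qubit that is then discarded. In the paper's general argument they vanish because unitarity makes the post-evolution system--reservoir states $\ket{f_\varepsilon(n,r)}$ orthogonal for different $n$.

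A protocol-independent way to close this step is as follows. Once the work from each $\ket{\lambda_n}$ is deterministic, the effect operator $A_w$ on $Q$ for outcome $w$ satisfies $0\le A_w\le\id$ and $\bra{\lambda_n}A_w\ket{\lambda_n}\in\{0,1\}$. This forces $A_w\ket{\lambda_n}\in\{0,\ket{\lambda_n}\}$, so $A_w$ is diagonal in $\{\ket{\lambda_n}\}$ and the off-diagonal elements of $\sigma$ drop out.

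Relatedly, steps~1 and~3 are argued through the specific construction of Sec.~\ref{sec:actual_protocol}. As written, they cover only that protocol, not every $\rho^*$-ideal protocol that thermalizes all inputs, which is what the theorem asserts.
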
 
\begin{proof}

In the limit of zero-entropy-production work extraction from $\rho^*$, the net unitary time evolution of the system--battery--baths joint system must take a special form.
In particular, 
The state of the battery will change deterministically when
the initial state of the system is an eigenstate $\ket{\lambda_n}$ of $\rho^* = \sum_n \lambda_n \ket{\lambda_n} \bra{\lambda_n}$,
almost-surely independent of the initial realization of the thermal reservoirs.  
This implies that the net unitary time evolution
will be of the form
\begin{equation}
\label{eq:UnitaryForm}
U = \sum_{\varepsilon, n, r}
\left( 
\ket{\varepsilon+ w^{(n)}}_B \otimes 
 \ket{f_\varepsilon(n, r)}_{Q,R} 
\Bigr)  \Bigl( 
\bra{\varepsilon}_B \otimes 
\bra{\lambda_n}_Q \otimes 
\bra{r}_R
\right)
\end{equation}
for some $w^{(n)} \in \mathbb{R}$.
Above, $\ket{\varepsilon}$ and $\ket{\varepsilon + w^{(n)} }$
are energy eigenstates of the battery, while $\ket{r}$ 
is an energy eigenstate of the thermal reservoir, while $\ket{f_\varepsilon(n, r)}$ is the effective joint state of the battery and bath after the evolution.
It will be useful to note that
$\braket{f_\varepsilon(n, r)}{f_\varepsilon(n', r')} = \delta_{n, n'} \delta_{r,r'}$
since unitary operations preserve orthogonality between states.

To determine $w^{(n)}$ via the initial-state dependence of entropy production,
we let $\langle\Sigma\rangle_{\rho^*}$ denote the expectation value for entropy production, given initial system-state $\rho^*$, under the fixed work-extraction protocol optimized for $\rho^*$. In our case, with a single heat bath at temperature $T$, the expected entropy production can be defined as usual as 
$\langle\Sigma\rangle_{\rho^*} = \bigl( \langle\tilde W\rangle_{\rho^*} - \Delta \mathcal{F}_t \bigr) / T$.
This is the entropy production for a fixed protocol operating on the initial state $\rho^*$,
where $\Delta \mathcal{F}$ is the change in 
nonequilibrium free energy over the course of the protocol,
and $\tilde{W}$ is the work \emph{exerted}, which is just the negative of the extractable work~\cite{parrondo2015thermodynamics}.

Since all initial states map to $\gamma_\beta$ by the end of the work-extraction protocol,
we know
from Ref.~\cite{riechers2021initial} that
\begin{align}
	\langle\Sigma\rangle_{\sigma} - 	\langle\Sigma\rangle_{\rho^*} = k_B \text{D}( \sigma \| \rho^*)  ~.
\end{align}	
In this case,
$\langle\Sigma\rangle_{\rho^*} = 0$
and 
$T \langle\Sigma\rangle_{\sigma} = \langle\tilde{W}\rangle_\sigma - \Delta \mathcal{F} = \langle\tilde{W}\rangle_\sigma + \beta^{-1} \text{D}(\sigma \| \gamma_\beta)$. 

Hence, 
\begin{align}
	\beta \langle\tilde{W}\rangle_\sigma 
	&=  \text{D}( \sigma \| \rho^*) -  \text{D}( \sigma \| \gamma )\\
	&= \tr(\sigma \ln \gamma) - \tr(\sigma \ln \rho^*) ~.
\end{align}	
In particular, 
let $\sigma = \ket{\lambda_n} \bra{\lambda_n}$,
and note that 
$\ln \gamma = \ln (e^{-\beta H} / Z) = \beta(\F_{\text{eq}}-H)$.
This yields 
$\langle\tilde{W}\rangle_{\ket{\lambda_n} \bra{\lambda_n}} = \mathcal{F}_{\text{eq}} - \bra{\lambda_n}H\ket{\lambda_n} - k_B T \ln \lambda_n$.
The deterministic work-extraction value, given initial pure state $\ket{\lambda_n}$,
must be the same as its expected value $w^{(n)}  = -\langle\tilde{W}\rangle_{\ket{\lambda_n} \bra{\lambda_n}} $, and is thus given by
\begin{align}
	w^{(n)} =  \bra{\lambda_n}H\ket{\lambda_n} + k_B T \ln \lambda_n - \mathcal{F}_{\text{eq}} ~.
\end{align}	

The probability of obtaining the work-extraction value $w$, given any input state 
$\sigma = \sum_{n, m} \ket{\lambda_n} \bra{\lambda_m} \bra{\lambda_n}\sigma\ket{\lambda_m}$, 
can be calculated as
\begin{align}
\Pr(W = w | \sigma)
&=
\tr \Bigl[ \bigl( \ket{\varepsilon_0 + w} \bra{\varepsilon_0 + w} \otimes I \bigr) U \bigl( \ket{\varepsilon_0} \bra{\varepsilon_0} \otimes \sigma \otimes \ket{r} \bra{r}  \bigr) U^\dagger \Bigr] \\
&= \sum_{n, m}
  \bra{\lambda_n}\sigma\ket{\lambda_m}
  \tr \Bigl[ \bigl( \ket{\varepsilon_0 + w} \bra{\varepsilon_0 + w} \otimes I \bigr)\\
  &\quad\quad\quad\quad\quad\quad\quad\quad\quad\quad U \bigl( \ket{\varepsilon_0} \bra{\varepsilon_0} \otimes \ket{\lambda_n} \bra{ \lambda_m } \otimes \ket{r} \bra{r}  \bigr) U^\dagger \Bigr] \\
&= \sum_{n, m}
\bra{\lambda_n}\sigma\ket{\lambda_m}
\tr \Bigl[ \bigl( \ket{\varepsilon_0 + w} \bra{\varepsilon_0 + w} \otimes I \bigr)   \\
&\quad\quad\quad\bigl( \ket{\varepsilon_0 + w^{(n)}}\otimes \ket{f_{\varepsilon_0}(n, r)} \bigr) \bigl( \bra{\varepsilon_0 + w^{(m)}} \otimes \bra{f_{\varepsilon_0}(m, r)}  \bigr)  \Bigr]   \\
&= \sum_{n, m}
\bra{\lambda_n}\sigma\ket{\lambda_m}
\delta_{w, w^{(n)}}
\delta_{w, w^{(m)}}
\braket{f_{\varepsilon_0}(m, r) }{ f_{\varepsilon_0}(n, r) }  \\
&= \sum_{n, m}
\bra{\lambda_n}\sigma\ket{\lambda_m}
\delta_{w, w^{(n)}}
\delta_{w, w^{(m)}}
\delta_{n, m} \\
&= \sum_{n}
\bra{\lambda_n}\sigma\ket{\lambda_n}
\delta_{w, w^{(n)}}
\end{align}
independent of the initial energy state of the battery $\ket{\varepsilon_0}$, and almost-surely independent of the initial realization $\ket{r}$ of the thermal reservoirs in the probability theoretic sense. 
We see that $\Pr(W = w | \sigma) = 0$
unless $w \in \{ w^{(n)} \}_n$.
The probability distribution over these
allowed work-extraction values are 
\begin{equation}
\Pr(W = w^{(n)} | \sigma) = \sum_{m}
\bra{\lambda_m}\sigma\ket{\lambda_m}
\delta_{w^{(n)}, w^{(m)}} ~.
\end{equation}

When there is some entropy production, the probability density of work extraction will have more diffuse peaks. However, for sufficiently low entropy production, the peaks will still be well-separated and so effectively discrete for the purpose of Bayesian updating. The above derivation is valid whether or not 
$\rho^*$ has degenerate eigenvalues.
Notably, the above sums are taken over the eigenstates and their associated eigenvalues, rather than summing over the eigenvalues directly.
\end{proof}

Let us build intuition for the specific forms of $w^{(m)}$. To do so, we consider an energy-degenerate Hamiltonian, which can be trivially extended to a non-degenerate counterpart. Consider $w^{(0)}$ as an example, this represents the extracted work given $\ket{\lambda_0}\!\bra{\lambda_0}$. The term $\bra{\lambda_0}H\ket{\lambda_0}-\mathcal{F}_{\text{eq}}$ reflects the non-equilibrium free energy of $\ket{\lambda_0}\!\bra{\lambda_0}$, which corresponds to the extracted work in the quasi-static limit. The logarithmic term arises because the process is not entirely quasi-static --- it is quasi-static for all but the first repetition. During the first repetition, the system qubit $\ket{\lambda_0}\!\bra{\lambda_0}$ is swapped with the reservoir qubit $\lambda_0\ket{0}\!\bra{0} + 
\lambda_1\ket{1}\!\bra{1}$. This swap generates entropy $\Delta S=-\lambda_0\ln \lambda_0-\lambda_1\ln\lambda_1$ and yields work extraction $\Delta U =-\beta^{-1}\lambda_1\nu(1) =-\beta^{-1}\lambda_1\ln\frac{\lambda_1}{\lambda_0}$. Combining both, the resulting change in free energy of the system and the battery is $\Delta U- \beta^{-1}\Delta S = \ln \lambda_0$, which precisely accounts for the logarithmic term. A similar argument holds for $w^{(1)}$.

Now, we will demonstrate the concentration bound on the work distribution as well as the error scaling of the protocol in Section~\ref{sec:actual_protocol}. We will use the Lagrange mean value theorem and the first mean value theorem for definite integrals~\cite{strang2019calculus}. We will present the main conclusions, while detailed derivations will be presented in Appendix~\ref{apd:work_distribution}.
For the protocol presented, it can be shown that the rate of convergence of the random variable $\Delta W$ towards its expectation value is
\begin{align}
    \Pr[|\Delta W - \Ex[\Delta W]|\geq \zeta ] \leq 2 e^{-\frac{\zeta^2}{\sum_{\ell=1}^{M-1} \left(\frac{2}{\lambda_{1}} \delta p \right)^2}} \leq 2 e^{-\frac{\lambda_{1}^2 \zeta^2 M}{(2\lambda_{0}-1)^2}}~,
\end{align}
where we used $\delta p =O(\frac{1}{M})$ in the second inequality. It is then easy to see that taking $M\to\infty$, we would obtain 
\begin{equation}
    \lim_{M\to\infty }\Pr[|\Delta W - \Ex[\Delta W]|\geq \zeta ]= 0~.
\end{equation}
This then also indicates that the error probability is upper bounded
\begin{align}
\label{eq:error_bound}
    \Pr(\text{error}) \leq 2e^{-\frac{\lambda_1^2\zeta^2 M}{(2\lambda_0-1)^2}} 
\end{align}
This result is especially important when we discuss the determination of the number of iterations $M$ required in~\cref{ch:learning}.

\section{Consequence of looser constraint}
\label{sec:consequence}
We now examine the consequences of relaxing the strict energy conservation constraint. Average energy conservation is only maintained when $\sigma$ is diagonal in the eigenbasis of $\rho^*$. This becomes evident when analyzing the expectation value of extracted work. Combining Eq.~ \eqref{eq:WorkExtractionValues} and \eqref{eq:WorkExtractionProbs} yields:
\begin{equation}
\begin{split}
\label{eq: average_energy}
     \Ex(W) &= \sum_i\bra{\lambda_i}\sigma\ket{\lambda_i}\left(\bra{\lambda_i}\mathcal{H}_Q\ket{\lambda_i}+\beta^{-1}\ln\lambda_i-\mathcal{F}_{\text{eq}}\right)\\
     &= \sum_i\bra{\lambda_i}\sigma\ket{\lambda_i}\left(\bra{\lambda_i}\mathcal{H}_Q-\mathcal{F}_{\text{eq}}\id\ket{\lambda_i}\right) +\beta^{-1}\tr(\sigma\ln\rho^*)\\
     &=-\beta^{-1}\sum_i\bra{\lambda_i}\sigma\ket{\lambda_i}\bra{\lambda_i}\ln\gamma\ket{\lambda_i}+\beta^{-1}\tr(\sigma\ln\rho^*)\\
     &= \beta^{-1}\left[\tr(\sigma\ln\rho^*)-\tr(\Delta(\sigma)\ln\gamma)\right]\\
     &= \beta^{-1}\left[\D(\Delta(\sigma)\|\gamma)-\D(\sigma\|\rho^*)+\D(\sigma\|\Delta(\sigma))\right]~,
\end{split}
\end{equation}
where we used the identity $\ln\gamma=\beta(\mathcal{F}_{\text{eq}}\id-H)$ in the third equality and $\D(\sigma\|\Delta(\sigma))=S(\Delta(\sigma))-S(\sigma)$ in the final line~\cite{baumgratz2014quantifying}. Here, $\Delta(\sigma)$ represents $\sigma$ dephased into the eigenbasis of $\rho^*$. When $\sigma$ is diagonal in this eigenbasis, $\Delta(\sigma)=\sigma$ and we recover $\Ex(W) = \beta^{-1}\D(\sigma\|\gamma)-\beta^{-1}\D(\sigma\|\rho^*)$. The general case reveals a discrepancy between these quantities, indicating that implementing the protocol requires energy exchange between the agent and quantum system - necessitating an external energy source. This issue disappears when $\mathcal{H}_Q$ is fully degenerate. In such cases, $\rho^*$-ideal protocols may qualify as thermal operations discussed in Section~\ref{sec:thermal_ops}, as the agent can freely choose the energy eigenbasis without being constrained by a unique system Hamiltonian.

For non-degenerate Hamiltonians, operations not commuting with the total Hamiltonian can generate coherence in the energy eigenbasis. This coherence breaks time-translation symmetry and typically violates global energy conservation~\cite{lostaglio2015quantum}. While knowledge of the true quantum state prevents coherence generation, applying the operation to an unknown input state inevitably creates energy-basis coherence, requiring external energy input. Consequently, this thesis primarily considers $\rho^*$-ideal protocols for degenerate system Hamiltonians. However, we show that analogous results can be obtained for non-degenerate Hamiltonians when proper accounting of external energy inputs is included.

\section{Non-degenerate Hamiltonian}
\label{sec:non-degen}
We now analyze how the protocol's behavior differs between degenerate and non-degenerate Hamiltonians. The second stage of work extraction (Section~\ref{sec:actual_protocol}) maintains strict energy conservation regardless of the Hamiltonian, so the key distinction arises in the first stage during implementation of the unitary operation $U^{(\rho^*)}_{QB}$.
For degenerate Hamiltonians, this unitary preserves the average energy of the system $Q$ and acts as an identity operation on the battery. However, for non-degenerate cases ($E_0 \neq E_1$), we must consider the energy difference between the initial and final states of the combined system. The initial energy is $\tr(\sigma \mathcal{H}_Q)$ (assuming the battery starts at zero energy for simplicity), while the post-rotation energy becomes $\sum_i \bra{\lambda_i}\sigma\ket{\lambda_i}\bra{\lambda_i}\mathcal{H}_Q\ket{\lambda_i}$. The energy change is given by:
\begin{equation}
\begin{split}
    \Delta U &= \sum_i \bra{\lambda_i}\sigma\ket{\lambda_i}\bra{\lambda_i}H\ket{\lambda_i} - \tr(\sigma \mathcal{H}_Q)\\
    & = \tr(\Delta(\sigma)\mathcal{H}_Q) - \tr(\sigma \mathcal{H}_Q)~,
\end{split}
\end{equation}
indicating that the protocol requires at least $\Delta U$ units of external energy input.

Combining this with the average energy extraction from $\rho^*$-ideal protocols (Eq.~\eqref{eq: average_energy}), we obtain the net work extraction:
\begin{equation}
\begin{split}
    \Ex(W_\text{net})&=\beta^{-1}\left[\tr(\sigma\ln\rho^*)-\tr(\Delta(\sigma)\ln\gamma)\right]-\Delta U \\
    &=\beta^{-1}\tr(\sigma\ln\rho^*) -\tr(\Delta(\sigma)(\mathcal{F}_{\text{eq}}-\mathcal{H}_Q))-\tr(\Delta(\sigma)\mathcal{H}_Q) + \tr(\sigma \mathcal{H}_Q)\\
    &=\beta^{-1}\tr(\sigma\ln\rho^*) - \mathcal{F}_{\text{eq}}+\tr(\sigma \mathcal{H}_Q)\\
    &= \beta^{-1}\left[\D(\sigma\|\gamma)-\D(\sigma\|\rho^*)\right]~,
\end{split}
\end{equation}
recovering Eq.~\eqref{eq:work_for_rho_ideal}. 

The original proposal for implementing this stage-1 unitary operation \cite{skrzypczyk2014work} involves a time-dependent interaction Hamiltonian, requiring precise temporal control and smooth Hamiltonian variations \cite{woods2023autonomous}. Subsequent work has established that perfect implementation demands unbounded coherence resources \cite{aaberg2014catalytic,korzekwa2016extraction}. While laser-based approximations exist, they require careful energy accounting of the laser input. Nevertheless, this protocol remains theoretically valuable as it establishes fundamental performance limits under ideal conditions, yielding non-trivial and physically significant results.

\section{Summary}
In this chapter, we discussed the class of work extraction known as the $\rho^*$-ideal work extraction protocol in detail. This class of protocol is operationally motivated for an agential approach of thermodynamics, the fact that an agent has to design a protocol for a potentially unknown quantum state. Its simple work distribution is also derived rigorously; its implication on breaking strict energy conservation as well as its consequences, are also discussed. We then discussed the most general strategy that an agent with classical memory can operate in this sequential operation setup.  With this out of the way, we are now ready to properly start studying how an agent, utilizing such protocol, is able to extract work from temporally correlated quantum states, following different possible policies.

%% file: Chapters/Chapter4.tex

\chapter{Extraction of work from temporally correlated quantum states} 
\label{ch:3}
\noindent\rule{\textwidth}{0.4pt}
\vspace{1.5em} 
\setlength{\parindent}{4ex}

\noindent \textit{In this chapter, we present the first of our main results: a systematic construction of how an autonomous agent can extract work from a \emph{finite-state source of quantum states}. We begin by providing a formal definition of finite-state sources of quantum states. We then detail the construction of the agent’s operational protocol within this framework. Central to our discussion is the interplay between the dynamics of the source process and the meta-dynamics governing the agent’s belief state updates. Finally, we demonstrate the advantages of such an agent—specifically, the enhanced capacity to extract work when equipped with classical memory and quantum processing capabilities, in contrast to agents lacking these features.}
\newpage

\section{Finite-state sources of quantum states}
We first discuss the most important thing for all work extraction, the ``source" of free energy, which in our case is the so-called ``finite-state sources of quantum states". 
Suppose we are given a classical HMM, $\mathcal{M}=\{\mathcal{S},\{T^{(x)}\}_{x\in\mathcal{X}},\gbm \pi\}$, as a generator of sequences consisting of $\{0,1\}$s, but instead of emitting classical symbols $X_0,X_1\cdots$, this particular generator is able to emit quantum states $\sigma_0,\sigma_1\cdots$. An example of this can be found in Fig.~\ref{fig:PC_diagram}, in particular, we will discuss the case of the ``Perturbed Coin" depicted in Fig.~\ref{fig:PC_diagram}(a). One can draw a parallel between these quantum states and the boxes in the Szilard engine thought experiment; as long as they are individually out of equilibrium, i.e. $\{\sigma_t\}_t\neq\gamma$, then it is possible for an agent to extract useful work from it. 
\begin{figure}[b]
    \centering
    \includegraphics[width=0.7\linewidth]{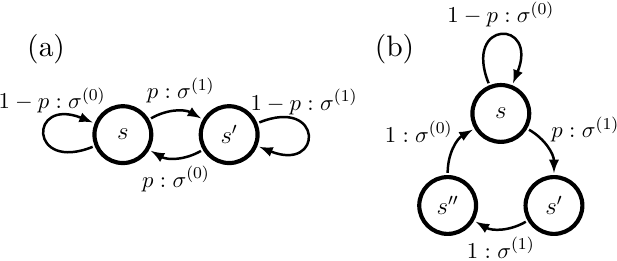}
    \caption{Latent-state sources of correlated quantum processes. 
Each arrow represents a transition between latent states; the label $p:\sigma^{(x)}$ indicates that the transition happens with probability $p$ and
produces a quantum state $\sigma^{(x)}$.
(a) Perturbed-coin process. (b) 2-1 golden-mean process.}
    \label{fig:PC_diagram}
\end{figure}
Mathematically, the quantum states generated by the HMM can be described formally by the density operator
\begin{equation}
    \label{eq: fuel}
\rho_{\overleftrightarrow Q}=\sum_{\overleftrightarrow x} \Pr \left(\overleftrightarrow x\right)\bigotimes_{t\in\mathbb{Z}}\sigma^{(x_t)}_{Q_t},
\end{equation}
where each time step $t$ is associated with a unique elementary physical system $Q_t$, and $\overleftrightarrow x = \dots x_{-1}x_0x_1\dots$ denotes a bi-infinite string over $\mathcal{X}$. This is what we define as \emph{finite-state sources of quantum state}, since they are quantum states that are generated by a finite-state HMM.
The joint quantum state is separable amongst the $Q_t$s, but 
can have non-classical correlations in the form of quantum discord~\cite{Modi10Unified}. These memoryful quantum sources generalize the kindred ``classically controlled qubit sources" of Ref.~\cite{venegas2020measurement}.

The time-indexed sequence of stochastically-generated quantum outputs $\sigma^{(x_t)}_{Q_t}$ of such a process is given to the agent one $Q_t$ at a time, in a temporal order where $Q_t$ precedes $Q_{t+1}$. 
Now, if the agent has access to an arbitrarily big quantum memory, quantum states from the past could be stored for a later time, then it would be straightforward to extract all non-equilibrium additions to free energy from such temporally correlated quantum systems, by acting coherently on the full sequence all at once~\cite{SSP14}, since the multipartite system could then be treated as a single quantum system. However, as we discussed in Sec.~\ref{sec:collective}, such an approach is costly and the decoherence time tends to be rather short, making it unfeasible. Instead, we consider a setting where each elementary physical system $Q_t$ has an immediate expiration date, and the agent is only equipped with a classical memory. This forces the agent to interact with $Q_t$ at the present time $t$, and then never again. This restriction is shared with the repeated-interaction~\cite{Stras17Quantum} frameworks and information-ratchet~\cite{mandal2012work, Boyd17PRE}, the latter being the \emph{classical analogue} of the problem we aim to study. Accordingly, we provide a brief overview of key results from the information ratchet literature.

\section{Information Ratchet}
The information ratchet framework provides a formal model for studying autonomous agents performing sequential work extraction from a classical input source, often referred to as a ``fuel tape"~\cite{mandal2012work,parrondo2015thermodynamics,boyd2016identifying}. It consists of a sequence of classical symbols $Y_{0:N}\in\mathcal{Y}^{\otimes N}$, which are fed one-by-one into an agent (the ratchet). The agent is equipped with an internal state $X_n\in \X$, and interacts with the symbol on the tape, depending on its internal state.

As a result of this interaction, each tape symbol $Y_n$ is transformed into an output symbol $Y_n'$. Depending on the Hamiltonian governing the agent and tape, this transformation can involve a change in energy, which is interpreted as work extracted from or injected into the system. A representation of such a ratchet can be found in Fig.~\ref{fig:ratchet}.
\begin{figure}
    \centering
    \includegraphics[width=0.9\linewidth]{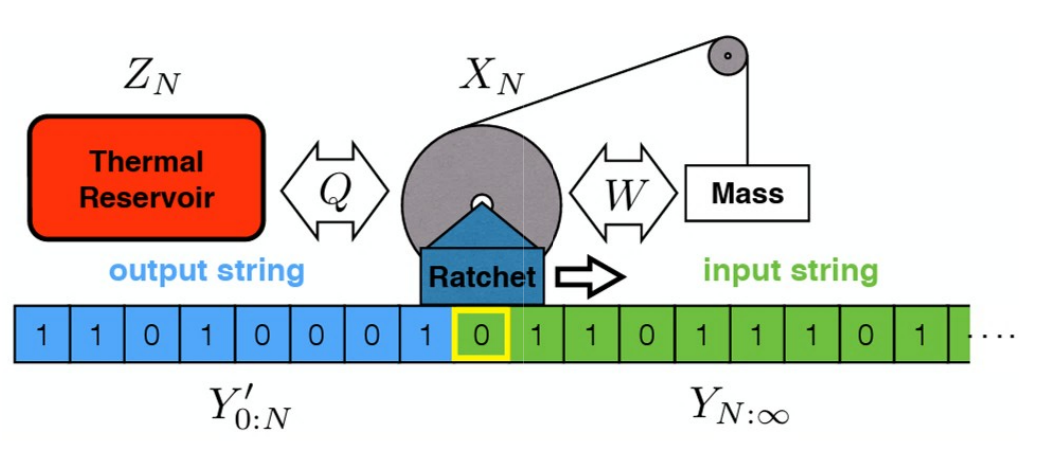}
    \caption{Basic form of a information ratchet considered in~\cite{boyd2016identifying}. The ratchet contains within itself some internal state $X\in\X$, it interacts with the input tape consisting of symbols $Y\in \mathcal{Y}$ according to some predetermined policy. The thermal reservoir provides the heat exchange necessary for work extraction.}
    \label{fig:ratchet}
\end{figure}
The combined evolution of the internal state and the tape forms a finite-state Markov process. The long-term (asymptotic) performance of the ratchet—particularly its capacity for work extraction—can be analyzed by studying the stationary distribution over the agent-tape joint states. This framework becomes especially useful when considering more complex settings, such as when the tape exhibits temporal correlations. Which is why it can be viewed as the classical analogue to the sequential work extraction we discuss in the quantum regime, where the fuel tape is replaced by sequences of correlated quantum states. 

It has been shown that the asymptotic rate of work extraction, $w$, by such a ratchet can be upper bounded by the difference in entropy rates between the input and output tapes:
\begin{equation}
    w\leq -\beta^{-1}\ln2 (h_\mu'-h_\mu)~,
\end{equation}
where $h_\mu$ is the entropy rate of the input tape,
\begin{equation}
\label{eq:ent_rate}
    h_\mu \coloneqq\lim_{n\to\infty}\frac{H(Y_{0:n})}{n}~,
\end{equation}
with $H(Y_{0:n})$ denoting the block entropy of a $n$-symbol segment~\cite{boyd2016identifying,boyd2017transient}. $h_\mu'$ 
is defined analogously for the output tape. Notably, for input processes that are unifilar, the entropy rate in  Eq.~\eqref{eq:ent_rate} admits a closed-form expression~\cite{jurgens2021shannon}. However, such analytic simplifications are generally unavailable in the quantum generalization of this framework. On top of that, temporally correlated quantum states tend to have non-classical correlations such as quantum discords, which could further reduce the extractable work.

\section{Set-up}
As discussed, we wish to study the case where the ``fuel tape" is made up of quantum states as in Eq.~\eqref{eq: fuel}. Drawing parallels from an information ratchet, 
the agent will be equipped with an internal classical memory $M$ and given access to a thermal reservoir $R$ at some fixed inverse temperature $\beta$. Its objective is to extract work by raising the internal energy of a battery $B$. To accomplish this, it can bring each system $Q_t$ closer to its thermal state, $\gamma = e^{- \beta H}/Z$. For simplicity of presentation, we assume that each subsystem $Q_t$ is subject to the same Hamiltonian $\mathcal{H}_Q$ for all $t$, but our results generalize in an obvious way
if we allow different Hamiltonians for each subsystem.
The agent requires only the description of the HMM, $\mathcal{M}$,
the Hamiltonian $\mathcal{H}$ and the quantum states that are emitted $\{\sigma^{(0)},\sigma^{(1)}\}$. 
Here we only impose the constraint that the stochastic generator has to be time-independent, i.e., the transition matrix, $\{T^{(x)}\}_{x\in\mathcal{X}}$, governing the dynamics of the HMM cannot change with time. The purity of the finite-dimensional emitted states, $\sigma^{(0)}$ and $\sigma^{(1)}$, is not restricted, but for simplicity of analysis, we will consider the case where both are pure and given by $\sigma^{(0)}=\ket{0}\bra{0}$, $\sigma^{(1)}=\ket{\psi}\bra{\psi}$ where $\ket{\psi}=\sqrt{r}\ket{0}+\sqrt{1-r}\ket{1}$. These are chosen as we can conveniently parametrize their fidelity $F(\sigma^{(0)},\sigma^{(1)})= \left(\Tr(\sqrt{\sqrt{\sigma^{(0)}}\sigma^{(1)}\sqrt{\sigma^{(0)}}})\right)^2=r$

Given this setup, the central question we aim to address is whether an agent with only classical memory can extract work that is otherwise locked within temporal correlations. If so, how does the performance of such an agent compare to that of a non-adaptive agent—i.e., one with no memory at all? To answer this, we first define the most general thing that an agent equipped with classical memory can do sequentially. 

\section{Classical-Causal Strategy}
\label{sec:classical_causal}
For an agent that only has classical memory and has to carry out operations sequentially, we can define a generic strategy characterized by a 4-tuple $\mathbf{S}=(\mathcal{K}, \A,\tau,\Lambda)$. Where
\begin{enumerate}
    \item $\mathcal{K}$ - set of memory states that is kept within the agent's classical memory.
    \item $\A$ -  set of allowed actions that the agent can take, which is a subset of the set of thermal operations on $Q,R,B$ when $\mathcal{H}_Q$ is energy degenerate.
    \item $\Lambda$ - policy, where $\Lambda: \mathcal{K}\rightarrow \mathcal{A}$ is a map from the agent's memory to the set of allowed actions. 
    \item $\tau$ - transition rule, this specifies how a memory state transitions to another after the agent obtains outcome $O_t$ when taking action $A_t$, i.e., $K_t=\tau(K_{t-1},A_t,O_t)$.
\end{enumerate}  
At each time step $t$, the agent utilizes its current memory state $K_{t-1}$ to select an action $A_t=\Lambda(K_{t-1})$, aiming to extract free energy from $\sigma_{Q_t}$ and deposit it into the battery $B$. Measuring $B$ in the energy eigenbasis then quantifies the extracted work, $W_t$.

The most na\"ive strategy involves the agent recording its entire history of actions and outcomes. Here, the memory state $\mathcal{K} \ni K_t=(a_0, a_1, \dots, a_t, o_t)$ is initialized as $K_0=()$. Under this approach, the transition rule $\tau$ is a simple concatenation, $K_t=(K_{t-1}, a_t, o_t)$, where the outcome $o_t$ is simply the extracted work $W_t$. This causal strategy with classical memory is depicted schematically in Fig.~\ref{fig:Bayes_Net}.
\begin{figure}[htbp]
    \centering
    \includegraphics[width=0.6\linewidth]{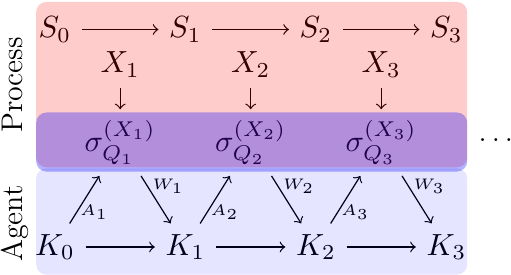}
    \caption{Schematic diagram illustrating the evolution of belief states over time. At each time step, the agent accesses its internal belief state $K_{t-1}$, which determines its action $A_{t}$. The agent then interacts with the quantum state $\sigma^{(X_t)}_{Q_t}$ and receives a corresponding reward $W_t$.}
    \label{fig:Bayes_Net}
\end{figure}

A fundamental drawback of this history-dependent approach is that the required memory capacity scales linearly with time, rendering it physically inefficient. To circumvent this, Sec.~\ref{sec:belief} demonstrates that it is sufficient for the agent to track its internal \emph{belief} state. This conceptual shift ensures that the necessary memory footprint remains constant, independent of the number of operational steps, $T$. The action space is defined by the set of $\rho^*$-work extraction protocols optimized for specific quantum states. Throughout this thesis, we evaluate two distinct decision-making policies: a locally optimizing policy ($\Lambda_\text{LO}$) and a globally optimal policy ($\Lambda^*$). Within this framework, the state transition $\tau$ operates via Bayesian updating, a mechanism explored in detail alongside belief states in Sec.~\ref{sec:belief}.

To rigorously assess whether classical memory provides an operational advantage over a memoryless agent, we must establish a comparative metric. We introduce the cumulative work extracted over a horizon of $T$ steps as 
\begin{equation}
    \mathbf{W}(T,\Lambda)\coloneqq\mathbb{E}_{\Lambda}\sum_{t=1}^T \mathbb{E}(W_t|A_t=\Lambda(K_{t-1}))~.
\end{equation}
However, finite-time metrics are often obscured by transient boundary effects, such as the agent's initially biased belief or the lag in synchronizing with the underlying HMM~\cite{boyd2017transient}. To isolate the fundamental performance, our comparative analysis focuses on the asymptotic rate of work extraction:
\begin{equation}
\label{eq:work_rate}
    w(\Lambda) = \lim_{T\to\infty}\frac{1}{T} \mathbf{W}(T,\Lambda)~.
\end{equation}

\section{Belief}
\label{sec:belief}
In the agential approach to many problems, an agent's belief about an underlying process or distribution governs their actions. This concept is central to both computational mechanics and Bayesian inference. In the Bayesian framework, these beliefs are formalized as prior distributions—representing the agent's expectations about certain events before new data is observed.
For example, suppose a certain disease is known to affect  $10\%$ of the population. An agent with no additional information might adopt an uninformative prior, assigning a $10\%$. However, if the agent experiences symptoms or receives partial test results, their prior would be adjusted accordingly, resulting in an informative prior that reflects this new evidence. When an agent claims to act based on a prior distribution $\gbm\eta_0$,  it means they assign probabilities to events in accordance with $\gbm\eta_0$. In the disease example, an uninformative prior would correspond to $\gbm\eta_0=(0.9,0.1)$, where $0.1$ is the prior probability of being infected. In contrast, an agent who has observed concerning symptoms might adopt a prior like $\gbm\eta_0 = (0.3,0.7)$, assigning higher likelihood to infection. Naturally, the decisions made by these two agents are likely to differ, given their distinct beliefs. In computational mechanics, we study how these beliefs evolve as actions are taken and outcomes are observed, governed by Bayesian update rules so as to remain consistent with the observed statistics.

Throughout the remainder of the thesis, beliefs are denoted by $\gbm \eta$, which can be interpreted by a vector with individual elements representing the prior distribution of a certain event $i$ occurring with probability $\{p_i\}_i$. Each of these elements must be positive, and they sum to unity. In other words
\begin{equation}
    \gbm \eta \coloneqq \{p_i\}_i,\quad p_i\geq0\quad\forall \quad i  , \quad\sum_ip_i=1~.
\end{equation}

To update one's belief, we employ Bayes’ theorem, which formalizes how prior beliefs are revised in light of new data. In its basic form:
\begin{equation}
    \Pr(H|E) = \frac{\Pr(E,H)}{\Pr(E)}=\Pr(E|H)\frac{\Pr(H)}{\Pr(E)}~.
\end{equation}
Here, $H$ denotes a \emph{hypothesis} that governs the distribution of observed data or evidence $E$.
\begin{enumerate}
    \item  $\Pr(H)$ is the prior: the agent’s belief about the hypothesis before observing any data, for instance, the probability that the agent has a disease.
    \item $\Pr(E|H)$ is the likelihood: the probability of observing data $E$ given that $H$ is true, for example, the probability of experiencing a symptom given that the agent has the disease.
    \item $\Pr(E)$ is the marginal likelihood: the total probability of observing the data under all possible hypotheses, such as the overall probability of experiencing the symptom whether or not the agent has the disease.
    \item Finally, $\Pr(H|E)$ is the posterior: the updated belief about the hypothesis after observing data, e.g., the probability that the agent has the disease given that they are experiencing symptoms.
\end{enumerate}
 
In the context of stochastic processes modeled by HMMs, an agent's \emph{belief at time $t$}, $\gbm\eta_t$ is the conditional distribution over the latent state, given all past observations. Mathematically, $\gbm\eta_t$ is a row vector with the number of elements equal to the number of latent states within the HMM. The $i$-th element is given by
\begin{equation}
    \gbm\eta_{t,i} := \Pr(S_t=S_i | O_1 \dots O_t = o_1 \dots o_t, S_0 \sim \gbm\pi )~.
\end{equation}
Here, the belief is conditioned on the initial latent states following the stationary distribution $\gbm\pi$ and the sequence of observed symbols. Upon observing symbol $x_t$, the belief updates via:
\begin{equation}
\label{eq:belief_update}
    \gbm\eta_{t+1}= \frac{\gbm\eta_t T^{(x_{t})}}{\gbm\eta_t T^{(x_t)}\mathbf{1}}
\end{equation}
where $\mathbf{1}$ is a column vector of $1$s and $T^{(x)}$ is the labeled-sub-stochastic matrix found in Eq.~\eqref{eq:stochastic_condition}. The new row vector $\gbm \eta_{t+1}$ is the posterior but also serves as the updated prior for the next time step. 

A central application of belief in HMMs is the construction of unifilar representations from non-unifilar models. This transformation is known as the mixed-state presentation (MSP), introduced by Crutchfield and collaborators~\cite{crutchfield2009time}. To avoid confusion with mixed states in quantum mechanics, we refer to these as belief states. Regardless of whether the original HMM is unifilar, one can always define a new unifilar model over belief states.  Initially, the agent's belief is the stationary distribution $\gbm\pi$. After observing each symbol, the belief updates via Eq.~\eqref{eq:belief_update}. As the agent continues observing, new belief states are generated, but after some time, the system enters a recurrent regime: the same belief states begin to reappear.
The recurrent portion of the belief-state space forms a unifilar representation of the original process. While this representation is always possible in principle, it may contain infinitely many belief states, especially if the original model is non-unifilar.

Two illustrative examples are presented in Table~\ref{tab:table_msp}.
\begin{sidewaystable}[htbp]
    \caption{Table of how MSP can be transformed into an $\epsilon$-machine. The first row shows a process known as ``Even process", this is a unifilar process. The second row shows a process that is non-unifilar. The first column shows the HMM of the process, the second shows how MSP is generated after each symbol is observed, and the last column shows the recurrent belief states which correspond to the $\epsilon$-machine.}
    \label{tab:table_msp}
\includegraphics[width=1\linewidth]{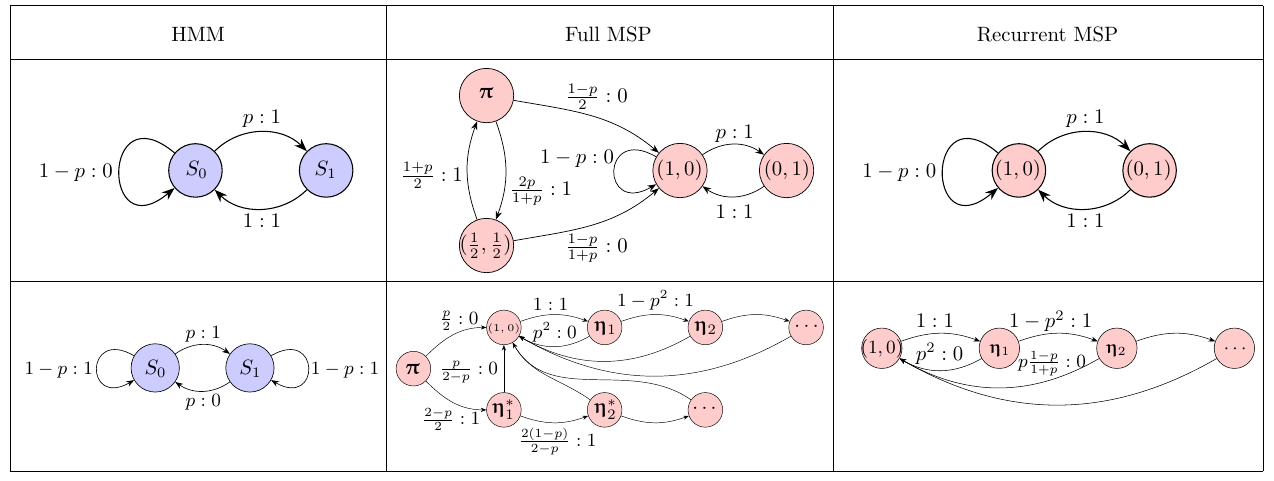}
\end{sidewaystable}
\begin{itemize}
    \item \textbf{First row} The Even Process, a unifilar HMM with stationary distribution $\gbm\pi = \left(\frac{1}{1+p}, \frac{p}{1+p} \right)$. The MSP is built by recursively applying Eq.~\eqref{eq:belief_update} for different $x\in\X$ and is shown in the middle panel. Its recurrent portion (right panel) is structurally identical to the original HMM, reflecting the fact that minimal unifilar HMMs are faithful to the underlying process\\
    \item \textbf{Second row}: A non-unifilar HMM. As shown, its MSP (middle panel) grows indefinitely due to uncertainty in the symbol-to-state transition. Even when only the recurrent portion is kept (right panel), it still contains an infinite number of belief states. Notice that observing a 0 perfectly collapses the belief to state $S_0$, but observing a 1 merely nudges the belief toward $(0,1)$ without fully resolving the latent state, an effect which is intuitive when we look at the structure of the non-unifilar model.
    
\end{itemize}
This transformation effectively shifts the dynamics from the latent states of the original HMM to a meta-dynamics over belief states—one that an agent can meaningfully track and update over time. This framework becomes especially crucial in the quantum setting, where direct observations are typically avoided to prevent the collapse of the quantum state.

\section{Synchronization to the process}
\label{sec:synchronization}
From Sec.~\ref{sec:rho_ideal}, we know that in order to extract all non-equilibrium free energy from any quantum state, one has to tailor the protocol for that specific quantum state; any deviation will then result in dissipation as shown in Eq.~\eqref{eq:work_for_rho_ideal}. In our case, we cannot directly measure the quantum states, nor do we know what the underlying latent is at any time $t$. In order to best tailor the protocol to extract work from $\sigma^{(x_t)}_{Q_t}$, the agent has to predict what $\sigma^{(x_t)}_{Q_t}$ will be, which in turn requires the agent's belief to be synchronized to the underlying HMM in order to make the best prediction.
It might be immediately clear to some readers that the challenge with such an HMM is precisely that of synchronization. Recall that we mentioned in Sec.~\ref{sec:stochastic_process}, in order to guarantee synchronization, the process has to be unifilar. The Perturbed Coin process in its classical formulation, with $\sigma^{(0)}$ being orthogonal to $\sigma^{(1)}$, is indeed unifilar when measured in their eigenbasis. However, in the case where both quantum states are non-orthogonal, any measurement on the state will introduce non-unifilarity~\cite{venegas2020measurement}. To illustrate this, consider a simple case where $\sigma^{(0)}=\ket{0}\!\bra{0}$ and $\sigma^{(1)}=\ket{+}\!\bra{+}$, suppose we choose to measure in the basis of $\{\ket{0},\ket{1}\}$. The resultant dynamic of the HMM will appear to be what is depicted in Fig.~\ref{fig:non_unifilar_coin}.
\begin{figure}
    \centering
    \includegraphics[width=0.6\linewidth]{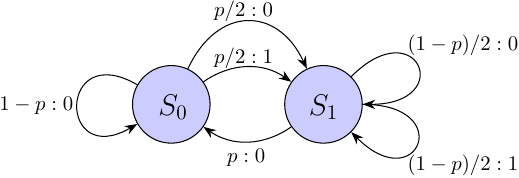}
    \caption{The effective dynamic of Perturbed Coin in Fig.~\ref{fig:PC_diagram}(a), if the quantum outputs are given by $\sigma^{(0)}=\ket{0}\!\bra{0}$ and $\sigma^{(1)}=\ket{+}\bra{+}$ and the agent chooses to measure along the $\ket{0},\ket{1}$ basis, which yields outcome of $0$ and $1$ respectively. }
    \label{fig:non_unifilar_coin}
\end{figure}
For any latent state, $S_0$ or $S_1$, if an agent observes a $0$ as the measurement outcome, the transitioned state cannot be uniquely determined, hence violating Eq.~\eqref{eq:unifilar}.

Furthermore, our objective is not to perform state estimation, but rather to extract work from the quantum states. According to the data-processing inequality, applying any quantum channel—including measurements—cannot increase the distinguishability of quantum states. The belief update method presented in Eq.~\eqref{eq:belief_update} is no longer adequate in this context. To address this, we generalize the update rule to incorporate the posterior probability distribution over the identity of the quantum state, conditioned on the measurement outcomes. 
\begin{theorem}[Optimal Recursive Belief Update]
\label{thm1} 
For any POVM which yields outcome $O_t$ on the quantum state of the system at time $t$, the optimal belief state --- about the latent state  of the quantum source --- updates iteratively according to
\begin{equation}
    \gbm\eta_{t} = z_t^{-1} \! \sum_{x \in \mathcal{X} } \!
\Pr( O_t = o_t | X_t = x, K_{t-1} = \gbm\eta_{t-1} ) \, \gbm\eta_{t-1} T^{(x)}
\label{eq:generalise_update}
\end{equation}
where $K_t$ is the random variable for the belief state,
and 
\begin{equation}
    z_t = \sum_{x' \in \mathcal{X} } \Pr( O_t = o_t | X_t = x', K_{t-1} = \gbm\eta_{t-1} ) \, \gbm\eta_{t-1} T^{(x')} \mathbf{1}
\end{equation}
is a normalizing factor.
\end{theorem}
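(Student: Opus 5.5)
The plan is to derive the update as an exact Bayesian filtering recursion over the latent state of the HMM $\mathcal{M}$. Here ``optimal'' means the belief equals the true posterior $\Pr(S_t = s \mid o_{1:t}, a_{1:t}, S_0\sim\gbm\pi)$. That posterior is a sufficient statistic for the future of the process given the agent's classical record, so no other classical memory can predict better. I will proceed by induction on $t$, with base case $\gbm\eta_0=\gbm\pi$.

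The key structural fact comes from the finite-state source defined in Eq.~\eqref{eq: fuel}. Conditioned on the latent state $S_{t-1}$, the pair $(X_t,S_t)$ is drawn with probability $T^{(x)}_{S_{t-1},S_t}$, and the system $Q_t$ is prepared in $\sigma^{(x)}$ independently of everything else. The action $A_t=\Lambda(K_{t-1})$ is a deterministic function of the past record, so the POVM applied at time $t$ is fixed once $\gbm\eta_{t-1}$ is fixed. Born's rule then gives
\[
\Pr(O_t=o\mid X_t=x, S_t, S_{t-1}, o_{1:t-1}) = \Pr(O_t=o \mid X_t=x, K_{t-1}=\gbm\eta_{t-1}),
\]
which depends only on $\sigma^{(x)}$ and the chosen measurement. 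For the $\rho^*$-ideal protocols this is Eq.~\eqref{eq:WorkExtractionProbs} of Theorem~\ref{thm2}. This conditional independence carries the whole argument. I must check that measuring $Q_t$ does not disturb later systems $Q_{t'}$. This holds because the global state is a classical mixture of product states, so the operation on $Q_t$ acts trivially on the other tensor factors, and tracing out $Q_t$ after measurement leaves the remaining mixture reweighted only through the classical label $x_t$.

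With this established, I write the joint probability
\[
\Pr(S_t=s', O_t=o_t \mid o_{1:t-1}) = \sum_{s}\sum_{x} \gbm\eta_{t-1,s}\, T^{(x)}_{s s'}\, \Pr(O_t=o_t\mid X_t=x, K_{t-1}=\gbm\eta_{t-1}).
\]
This uses the induction hypothesis that $\gbm\eta_{t-1}$ is the posterior over $S_{t-1}$, and marginalizes over $S_{t-1}$ and $X_t$. In vector form this is $\sum_x \Pr(o_t\mid x,\gbm\eta_{t-1})\,\gbm\eta_{t-1}T^{(x)}$. Dividing by the marginal $\Pr(O_t=o_t\mid o_{1:t-1})$, which is the same expression contracted with $\mathbf{1}$ and hence equals $z_t$, gives Eq.~\eqref{eq:generalise_update} by Bayes' theorem. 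As a consistency check, when the outcome reveals $x$ perfectly, i.e. $\Pr(o\mid x)=\delta_{o,x}$, the formula reduces to Eq.~\eqref{eq:belief_update}.

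The step I expect to be the main obstacle is rigorously justifying the conditional-independence claim in the quantum setting, including post-measurement states and the battery and reservoir registers. I would handle it by writing the full state conditioned on $\overleftrightarrow{x}$ as a product $\bigotimes_t \sigma^{(x_t)}$. The protocol at time $t$ is a quantum instrument on $Q_tRB$, and its outcome statistics are linear in the input. Averaging over the classical label therefore commutes with the measurement. I would also address optimality: any other classical memory $K'_t$ is a function of $(a_{1:t},o_{1:t})$, so by the data-processing property of sufficient statistics it cannot yield a better-calibrated predictive distribution than the exact posterior.
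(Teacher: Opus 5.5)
Your proposal is correct and follows essentially the same route as the paper's proof in Appendix~\ref{AppendixA}. Both arguments treat $\gbm\eta_t$ as the exact posterior over $S_t$, marginalize over $S_{t-1}$ and $X_t$, and normalize via Bayes' rule, relying on $O_t$ being conditionally independent of everything else given $X_t$ and $K_{t-1}$ (the paper encodes this in the Bayesian network of Fig.~\ref{fig:BayesNet}). The differences are only presentational: the paper first establishes $\gbm\eta_t=\Pr(S_t\mid O_t=o_t, K_{t-1}=\gbm\eta_{t-1})$ and then applies Bayes' rule to $X_t$, whereas you do the joint marginalization in one step and also sketch why the conditional independence follows from the separable product-mixture structure of the source, which the paper simply asserts.
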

The derivation of Theorem~\ref{thm1} will be shown in Appendix~\ref{AppendixA} for interested readers. The expression in Eq.~\eqref{eq:generalise_update} is very much a re-formulation of the familiar Bayes rule that we introduced earlier. Unlike the original formulation of the belief update in Eq.~\eqref{eq:belief_update}, where the observations are restricted to the same domain of $\mathcal{X}$. This generalized belief update allows us to account for any POVMs that an agent may choose to measure the system at hand, even if the measurement outcome lies outside the set of $\mathcal{X}$. This is relevant to our case as we will soon demonstrate the equivalence between a general POVM and the work extraction operation in the context of output statistics. 
The probability $\Pr( O_t = o_t | X_t = x, K_{t-1} = \gbm\eta_{t-1} )$ appearing in 
Eq.~\eqref{eq:generalise_update} is a straightforward calculation of the probability that the observed POVM outcome $o_t$ should be obtained, given that the system was prepared as $\sigma^{(x)}$; one can simply apply Born rule. Conditioning on
the previous state of knowledge $K_{t-1}$ is important as the belief of the agent can influence the choice of POVM applied at time $t$.

\section{Operation of the agent}
\label{sec:op_agent}
To harvest the free energy locked up in correlations, the agent's internal memory should somehow become correlated with the latent state of the source during its work extraction operation. However, directly measuring each quantum system potentially costs energy and may disturb its state, thereby depleting free energy~\cite{araki1960measurement}.
Rather, our agent updates its memory conditioned on the extracted work value $W_t$ at each time.  The change in the battery's energy thus serves as the observable $O_t = W_t$ for updating the belief state mentioned in Section \ref{sec:synchronization}.

At each time step, conditioned on its memory state, the agent performs a \emph{work extraction protocol}---a unitary transformation of
the composite $Q_t$, $B$, and $R$ supersystem, designed to transfer free energy from $Q_t$ to $B$. We will use the $\rho^*$-ideal work extraction protocol discussed in Section~\ref{sec:rho_ideal} in order to extract all non-equilibrium free energy from specific states the agent tailors to.

In particular, Eq.~\eqref{eq:WorkExtractionProbs} shows the probability distribution for work extracted when the protocol optimized for $\rho_t^{*}$ actually operates on $\sigma^{(x)}$.
Regardless of how the belief state influences the choice of $\rho_t^*$, we can now leverage Theorems.~\ref{thm1} and \ref{thm2} to rewrite the belief update as
\begin{equation}
    \label{eq:update_pc2}
\gbm\eta_{t+1} =
\frac{\sum_{x \in \mathcal{X} }
	\sum_n \delta_{ w_{t+1}, w^{(n)} } \bra{\lambda_n}  \sigma^{(x)} \ket{ \lambda_n} \, 
	\gbm\eta_{t} T^{(x)} }{
	\sum_n \delta_{ w_{t+1},  w^{(n)}  } 
	 \bra{\lambda_n} \xi_{t} \ket{\lambda_n} }~.
\end{equation}
Here, $\xi_t = \sum_{x\in\mathcal{X}} \gbm\eta_t T^{(x)}\mathbf{1} \sigma^{(x)}$, is the quantum state that the agent \emph{expects} conditioned on his belief $\gbm\eta_t$, where $\gbm\eta_t T^{(x)}\mathbf{1}$ is the prior over the next emitted quantum state, $\sigma^{(x)}$, conditioned on the belief of the agent at this time step.

From Eq.~\eqref{eq:WorkExtractionProbs}, we can find that the work-induced transitions between belief states have probabilities
\begin{equation}
  \Pr(W_{t+1} = w | K_t = \gbm\eta_t )
= \sum_n \bra{\lambda_n}  \xi_t \ket{\lambda_n }
\delta_{w, w^{(n)}} ~.
\label{eq:BeliefStateTransitionProbs}  
\end{equation}
This is an important quantity to consider as it characterizes the meta-dynamic of the belief states transitions, and will be especially useful when we analyze the asymptotic work extraction rate later on.

Both the belief states and transitions between them
can be derived from the HMM of the known source. Belief states can thus be explicitly represented in the memory $M$ of an autonomous work-harvesting device.
The memory states $\{ (\gbm\eta , \varepsilon) \}_{\gbm\eta, \varepsilon}$
should also store the last measured energy state $\varepsilon$ of the battery.
A memory-controlled unitary can implement memory-assisted quantum work extraction, as depicted in the circuit diagram of Fig.~\ref{fig:schem_eng}. Subsequent measurement of the battery state then gives access to the extracted work and allows an autonomous update of the memory, according to the above-outlined transition rule of Bayesian prediction.
This prediction-extraction cycle continues repeatedly, as was shown in Fig.~\ref{fig:Bayes_Net}.
\begin{figure} 
\begin{center}
\includegraphics[width=\columnwidth]{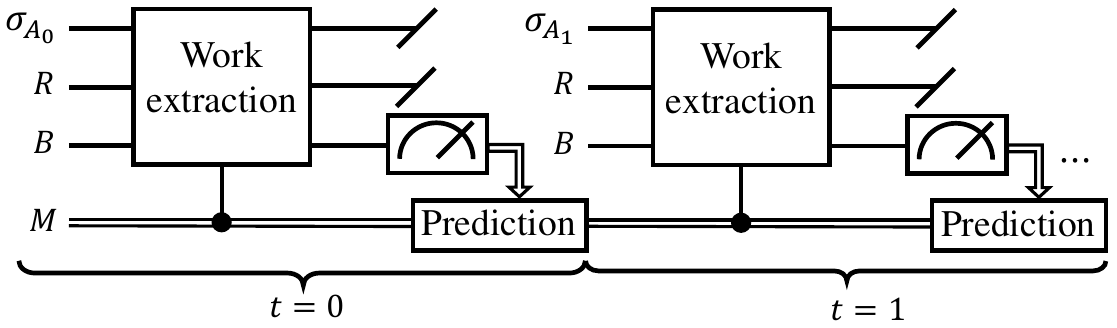}
\caption{Schematic diagram of the sequential work extraction.  At each time step, the process will take a quantum system, $\sigma_{Q_t}$, reservoir qubit, $R$, battery, $B$, and memory, $M$, as input. 
The `Work extraction' box
should be interpreted as 
a memory-dependent unitary. 
States of memory are recycled. The single wires represent quantum information being passed along, while the double wires represent classical information.}
\label{fig:schem_eng}
\end{center}
\end{figure}

\section{Choice of Policies}
\label{sec:local_action}
In the standard reinforcement learning literature, a general policy is defined as a sequence of conditional probability distributions $\Lambda=\{\Lambda_t\}_{t\in T}$ over the action set, given by:
\begin{equation}
\label{eq:general_policy}
    \Lambda_t(A_t|a_1,o_1,\ldots,a_{t-1},o_{t-1})~,
\end{equation}
where $A_{t}$ is the random variable representing the action taken at step $t$, with specific realizations $a_t$, while $o_t$ denotes the observed outcome at time $t$. As mentioned in Sec.~\ref{sec:belief}, in our setting, the agent retains memory only of its belief, which simplifies the policy definition to: 
\begin{equation}
\label{eq:belief_policy}
    \Lambda_t(A_t|\gbm\eta_{t-1})~.
\end{equation}
This is also known as a belief-based policy in the POMDP literature~\cite{arcieri2024pomdp}. It can be interpreted as choosing the appropriate state $\rho_t^*$ to tailor the protocol according to the agent’s belief. Accordingly, we define the set of all possible belief states as:
\begin{gather}
    \mathcal{K}\coloneqq \{\gbm \eta^{(i)}\}_i, \quad \gbm \eta^{(i)} \in \Delta^{n-1}\\
    \Delta^{n-1} = \{(p_1,p_2,\ldots,p_{n})\in \mathbb{R}^{n-1}\}
\end{gather}
where $n$ is the number of latent states of the HMM. The set $\Delta^{n-1}$ is a $n-1$-dimensional probability simplex, with $p_i\geq0 , \hquad\text{for}\hquad  i\in[n]$ and $\sum_{i=1}^np_i=1$. Each $p_i$ represents the probability of the system being in the $i$-th latent state of the HMM.

Constrained by sequential access and the absence of a quantum memory, the agent is restricted to a specific class of operations: a $\rho^*$-ideal protocol succeeded by a projective measurement on the battery to quantify the extracted work $W_t$ at each time step. Formally, each operation $\mathcal{W}$ is constructed from a global unitary $\mathcal{U}_{QRB}$—representing the $\rho^*$-ideal protocol—followed by a projective measurement $\mathcal{M}_B$ onto the energy eigenbasis of the battery:
\begin{equation}
    \mathcal{W}= (\mathcal{I}_Q\otimes\mathcal{I}_R\otimes\mathcal{M}_B)\circ\mathcal{U}_{QRB}~.
\end{equation}
Moving forward, we denote $\mathcal{W}_\rho$ as the state-specific operation wherein the underlying unitary $\mathcal{U}_{QRB}$ is uniquely optimized for a given state $\rho$.
We restrict to projective measurements in the energy eigenbasis rather than general POVMs, as any measurement misaligned with the eigenbasis may induce uncontrolled energy exchange between the system and the measurement device—an effect we seek to avoid for precise thermodynamic accounting~\cite{araki1960measurement}.

We therefore define the set of possible actions $\A$ to be within this class of operations, each tailored for a different state $\rho^{(j)}$ that has the same dimension as the potential outputs. More formally,
\begin{gather}
    \A \coloneqq \{\W_{\rho^{(j)}}\}_j\\
    \rho^{(j)} \in \mathcal{S}_d
\end{gather}
Note that both the inter-belief state transitions and the work extracted depend on the choice of policy. 

In this thesis, we primarily discuss two types of policies:
\begin{enumerate}
    \item \textbf{Local-optimizing, $\Lambda_{\text{LO}}$}: The agent selects actions based on its prediction of the expected emitted state $\xi_t$. 
    \item \textbf{Global-Optimum, $\Lambda^*$}: The agent selects actions based on an optimal policy obtained via dynamic programming.
\end{enumerate}
For the remainder of this chapter, we focus on the local-optimizing policy and reserve Chapter~\ref{ch:4} for a detailed discussion of the global-optimum policy. While the local-optimizing policy is not optimal, it carries physical interpretation: it describes an ``short-sighted" agent who only aims to extract the most work in the immediate time step with no concern for its future performance. On top of that, it yields interesting physical results and serves as a basis of comparison for the global-optimum policy.

An important caveat is that optimal policies are not necessarily time-independent. Even for a stationary process, one where the transition probabilities and rewards remain constant, the optimal policy can still be time-dependent in a finite-horizon setting (finite $T$), due to boundary effects. However, in the infinite-horizon limit ($T\to\infty$), the optimal policy becomes time-independent~\cite{sutton1998reinforcement,scherrer2012use}. In this chapter, we study a time-independent policy motivated by physical considerations. The distinction between finite and infinite horizons, along with the associated boundary effects, will be further discussed in Chapter~\ref{ch:4}.

We begin by discussing the local-optimizing policy. While this is \emph{not} the globally optimal policy, it has a clear physical intuition. At time $t$, the agent holds the belief $\gbm\eta_t$, from which it infers the expected emitted quantum state $\xi_t$. The agent can then choose to tailor the protocol specifically for this expected state. In principle, this enables the agent to extract more work than if it had tailored the protocol to $\xi_0$, the local reduced state of the multipartite quantum state in Eq.~\eqref{eq: fuel}—since $\xi_t$ typically contains more free energy than $\xi_0$. 

\begin{proposition}
    By self-consistency, the local reduced state of $\rho_{\overleftrightarrow A}$ must be a probabilistic mixture of $\sigma^{(x)}$, such that 
\begin{equation}
    \langle\xi_t\rangle_{K_{t-1}} = \sum_{x \in \mathcal{X}} \gbm\pi  T^{(x)}  \mathbf{1} \, \sigma^{(x)} = \xi_0~.
\end{equation} 
Since $\xi_0$ is the average of $\xi_t$ over belief states, the expected state $\xi_t$ has more free energy than $\xi_0$ on average, a consequence of the concavity of entropy and convexity of relative entropy. Specifically:
\begin{equation}
\label{eq:Nonneg}
\begin{split}
    \langle\text{D}(\xi_t \| \gamma)\rangle_{K_{t-1}} - \text{D}(\xi_0 \| \gamma)
&= 
\langle\text{D}(\xi_t \| \gamma)\rangle_{K_{t-1}} - \text{D}(\langle\xi_t\rangle_{K_{t-1}} \| \gamma ) \\
&= 
S(\langle\xi_t\rangle_{K_{t-1}}) - \langle S(\xi_t)\rangle_{K_{t-1}} \\
& \geq 0 
~,
\end{split}
\end{equation}
where, $\langle\text{D}(\xi_t \| \gamma)\rangle_{K_{t-1}}$ denotes the average relative entropy between $\xi_t$ and the Gibbs state $\gamma$,  taken over the belief state distribution at time $t$, as defined in Eq.~\eqref{eq:xi_t_expectation}.
\end{proposition}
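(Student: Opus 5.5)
The proof has two parts. First we establish the self-consistency identity $\langle\xi_t\rangle_{K_{t-1}}=\xi_0$. Then we turn the free-energy gap into an entropy gap using the linearity of $\tr(\rho\ln\gamma)$.

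For the first part, recall that $\xi_t=\sum_x \gbm\eta_t T^{(x)}\mathbf{1}\,\sigma^{(x)}$ is linear in the belief $\gbm\eta_t$. It therefore suffices to show that the belief, averaged over its own distribution, returns the stationary distribution: $\langle\gbm\eta_{t-1}\rangle_{K_{t-1}}=\gbm\pi$. I would prove this by induction on $t$ using the update rule of Theorem~\ref{thm1}.

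At $t=0$ the belief is $\gbm\pi$ by construction. For the inductive step, fix a belief $\gbm\eta$. The belief-state transition probabilities of Eq.~\eqref{eq:BeliefStateTransitionProbs} equal the normalizer $z_t$. Multiplying the updated belief by its probability therefore cancels $z_t$. Summing over outcomes $o$ then uses $\sum_o \Pr(O=o\mid X=x,K=\gbm\eta)=1$, which holds because the outcomes form a complete POVM, or equivalently a complete work distribution by Theorem~\ref{thm2}. This gives
\begin{equation}
\sum_o \Pr(o\mid\gbm\eta)\,\gbm\eta'_o=\sum_x \gbm\eta T^{(x)}=\gbm\eta T .
\end{equation}
Averaging over $\gbm\eta$ and applying the inductive hypothesis yields $\gbm\pi T=\gbm\pi$. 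Substituting into the linear expression for $\xi_t$ gives $\langle\xi_t\rangle=\sum_x\gbm\pi T^{(x)}\mathbf{1}\,\sigma^{(x)}=\xi_0$, which is the marginal of Eq.~\eqref{eq: fuel} at a single site.

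The main subtlety is the policy dependence: the POVM applied at each step depends on the current belief. The argument survives this because the cancellation happens belief by belief, before the outer average over beliefs is taken.

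For the inequality, write
\begin{equation}
\D(\rho\|\gamma)=-S(\rho)-\tr(\rho\ln\gamma).
\end{equation}
The term $\tr(\rho\ln\gamma)$ is linear in $\rho$, so it commutes with the average over belief states and cancels in the difference $\langle\D(\xi_t\|\gamma)\rangle-\D(\langle\xi_t\rangle\|\gamma)$. What remains is $S(\langle\xi_t\rangle)-\langle S(\xi_t)\rangle$. This is nonnegative by concavity of the von Neumann entropy, since it is the Holevo quantity of the ensemble $\{\Pr(\gbm\eta),\xi_t(\gbm\eta)\}$. Joint convexity of relative entropy with the second argument fixed at $\gamma$ gives the same conclusion. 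The only technical point is that the ensemble of beliefs may be infinite (cf.\ Table~\ref{tab:table_msp}). Concavity still holds for arbitrary probability measures on the finite-dimensional state space, because the von Neumann entropy is continuous there.
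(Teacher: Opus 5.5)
Your proof is correct and has the same structure as the paper's: the self-consistency identity $\langle\xi_t\rangle_{K_{t-1}}=\xi_0$, cancellation of the linear term $\tr(\rho\ln\gamma)$, and concavity of the von Neumann entropy. The paper only asserts the identity, as the single-site marginal of the source state $\rho_{\overleftrightarrow Q}$ (averaging a posterior over all records returns the prior); your induction through the update rule of Theorem~\ref{thm1} proves it, and also shows it holds for any belief-dependent choice of measurement.
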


By choosing the tailored state to be the expected state $\xi_t$, Eq.~\eqref{eq:update_pc2} simplifies to:
\begin{equation}
    \gbm\eta_{t+1} =
\frac{\sum_{x \in \mathcal{X} }
	\sum_n \delta_{ w_{t+1}, w^{(n)} } \bra{\lambda_n}  \sigma^{(x)} \ket{ \lambda_n} \, 
	\gbm\eta_{t} T^{(x)} }{
	\sum_n \delta_{ w_{t+1},  w^{(n)} } \lambda_n} ~,
\end{equation}
and Eq.\eqref{eq:BeliefStateTransitionProbs} becomes:
\begin{equation}
    \Pr(W_{t+1} = w | K_t = \gbm\eta_t )=\sum_n \lambda_n \, \delta_{ w,  w^{(n)}  }~.
\end{equation}
Combining Eqs.\eqref{eq:WorkExtractionValues} and \eqref{eq:BeliefStateTransitionProbs}, the expected work extracted at time step $t$ is given by:
\begin{equation}
\langle W_t \rangle =\sum_{\gbm\eta, w} w \Pr(K_{t-1} \! = \gbm\eta) \Pr(W_t = w | K_{t-1} \! = \gbm\eta)    ~,
\end{equation}
which simplifies to: 
\begin{equation}
\begin{split}
    \label{eq:xi_t_expectation}
    \langle W_t \rangle &=\sum_{\gbm\eta,i}\Pr(K_{t-1} \! = \gbm\eta)\lambda_i\left(\bra{\lambda_i}H\ket{\lambda_i}+\beta^{-1}\lambda_i\ln\lambda_i -F\right)\\
    &=\sum_{\gbm\eta,i}\Pr(K_{t-1} \! = \gbm\eta)\lambda_i\bra{\lambda_i}H\ket{\lambda_i}-\beta^{-1}S(\xi_t) -F \\
    & = \sum_{\gbm\eta}\Pr(K_{t-1} \! = \gbm\eta)\Tr(\xi_t(H-\mathcal{F}_{\text{eq}}\id))-\beta^{-1}S(\xi_t)\\
    &= \beta^{-1}\sum_{\gbm\eta}\Pr(K_{t-1} \! = \gbm\eta)[-\Tr(\xi_t\ln\gamma)+S(\xi_t)]\\
    &= \beta^{-1} \langle \text{D}(\xi_t \| \gamma)  \rangle_{\Pr(K_{t-1})} ~,
\end{split}
\end{equation}
The implications of this expression are significant: it shows that the average work extracted at any given time step depends on the instantaneous distribution over the agent’s belief states. Consequently, determining the agent’s work extraction rate requires characterizing the full meta-dynamics governing the evolution of these belief states. Drawing on prior studies of information ratchets~\cite{boyd2017transient}, we identify two distinct operational regimes for the agent: the transient regime and the recurrent regime.

The transient regime corresponds to the initial phase in which the agent is still synchronizing its belief with the latent states of the hidden Markov model (HMM). This phase is typically sensitive to the initial belief and is characterized by complex, history-dependent meta-dynamics. In contrast, the recurrent regime describes the long-term behavior after synchronization has occurred. In this regime, the belief dynamics become simpler and largely independent of initial conditions.
As the influence of the transient regime diminishes over time, our analysis focuses on the recurrent regime to determine the asymptotic work extraction rate.

\section{Meta-dynamics of belief}
\label{sec:metadynamic}
As established in Sec.~\ref{sec:local_action}, analyzing the asymptotic work-extraction rate requires a deeper understanding of the recurrent meta-dynamics of belief. We first observe that the Bayesian update in Eq.~\eqref{eq:update_pc2} can be interpreted as a nonlinear return map on the belief state—i.e., a nonlinear function that maps one belief state to another.
For the specific example of the Perturbed Coin illustrated in Fig.~\ref{fig:PC_diagram}(a), the belief states can be parametrized by a single variable as $\gbm\eta_t = (1/2+\epsilon_t,1/2-\epsilon_t)$, where $\epsilon_t\in [-0.5,0.5]$. Let $p$ denote the “flipping” probability of the coin, and $r$ denote the fidelity between the two possible quantum states $\sigma^{(0)}$ and $\sigma^{(1)}$. By varying these parameters, we can observe their impact on the update rule. Applying the update yields the return maps shown in Fig.~\ref{fig:update_map}.
\begin{figure}
    \centering
    \includegraphics[width=0.95\linewidth]{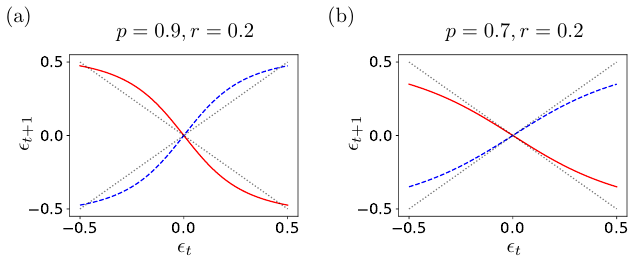}
    \caption{The update rule for different parameters. Panel (a) shows the update map for parameters $p=0.9,r=0.2$, while panel (b) shows the update for $p=0.7,r=0.2$. The red and blue solid lines represent the update functions corresponding to the two work values $\{w^{(i)}\}_{i=0,1}$, and the black dotted line represents the identity map.}
    \label{fig:update_map}
\end{figure}
The two colored lines (blue and red) correspond to the return maps conditioned on the outcome of the observation, while the black dotted line represents the identity map. Any point lying on the identity line maps back to itself under the update, meaning that such belief states remain unchanged—these are the recurrent belief states. 

Interestingly, varying the parameters $p$ and $r$ reveals two classes of return maps: one featuring two stable and one unstable equilibrium points (Fig.~\ref{fig:update_map}(a)), and another with only a single stable equilibrium point (Fig.~\ref{fig:update_map}(b)). Stability here refers to whether a small perturbation $\epsilon$ away from the equilibrium point results in convergence back to the point (stable), or divergence away (unstable), after repeated updates.

We begin by analyzing the case with only one stable equilibrium. The equilibrium point corresponds to the belief state $\gbm \eta_0 = (1/2,1/2)$, which is precisely the stationary distribution of the underlying HMM. This indicates that the agent gains no additional information over time—its belief does not evolve beyond the stationary prior. That is, the information extracted from past observations does not improve future predictions. The agent's predictive performance is thus equivalent to that of a memoryless agent. We refer to this regime as \emph{``memory-apathetic"}, since the agent’s performance is independent of whether or not it retains memory of past measurements.

On the other hand, in the regime with two stable and one unstable equilibrium points, the stationary point $\gbm\eta_0=\gbm\pi$ becomes unstable. Two new stable equilibrium points emerge as attractors. In this case, small perturbations away from $\gbm\pi$ cause the belief to evolve toward one of the two stable points, depending on the observed outcomes. The agent can therefore partially synchronize with the latent process, and its behavior differs from that of a memoryless agent. Based on our earlier analysis in Eq.~\eqref{eq:Nonneg}, this synchronization leads to an increase in the extracted work. We refer to this as the \emph{``memory-advantageous"} regime.

\section{Results}
\subsection{Benchmarking}
\label{sec:benchmark}
To demonstrate our \emph{memory-assisted quantum} approach, we employ a quantum work-extraction protocol tailored to the work-observation-induced expected state $\rho_t^* = \xi_t$. We apply this protocol to the quantum perturbed-coin process illustrated in Fig.~\ref{fig:PC_diagram}(a), and compute the agent’s asymptotic work extraction rate as was defined in Eq.~\eqref{eq:work_rate} under this policy (Approach~\ref{approach:1}). We then compare its performance against three alternative strategies (Approaches~\ref{approach:2}–\ref{approach:4}):
\begin{enumerate}
    \item 
    \label{approach:1}
    \emph{memory-assisted quantum}---the quantum work-extraction protocol is optimized for the work-observation-induced expected state $\rho_t^* = \xi_t$;
    \item 
    \label{approach:2}
    \emph{memory-assisted classical}--—the protocol cannot extract work from quantum coherence and is instead optimized for the energy-dephased version of the expected state, \(\rho_t^* = 
\xi_t^{\text{dec}} \coloneqq \sum_i \ket{E_i}\bra{E_i} \bra{E_i} \xi_t \ket{E_i}\), in some predetermined energy eigenbasis $\{\ket{E_i}\}_i$;

\item 
\label{approach:3}
\emph{memoryless quantum processing}—--the memory is never updated by observations, and the protocol is optimized for the time-averaged quantum state, \(\rho^* = \xi_0 = \sum_{x} \gbm\pi  T^{(x)}  \mathbf{1} \, \sigma^{(x)}\); and
\item 
\label{approach:4}
\emph{overcommitment to most probable quantum state}—--the protocol is optimized for \(\rho_t^* = \sigma^{(\text{argmax}_{x} \gbm\eta_t  T^{(x)}  \mathbf{1})}\).
\end{enumerate}

Note that Approach~\ref{approach:2} also corresponds to the amount of work extractable by an agent restricted to using thermal operations (with a fixed energy eigenbasis), as described in the resource-theoretic framework of Sec.~\ref{sec:thermal_ops}, where free energy stored in coherence remains inaccessible. In contrast, Approach~\ref{approach:3} represents the performance of an optimal non-adaptive agent—one that applies the same protocol at every time step, without updating its strategy based on past observations.

\begin{figure*}[htbp]
\begin{center}
\includegraphics[width=0.9\textwidth]{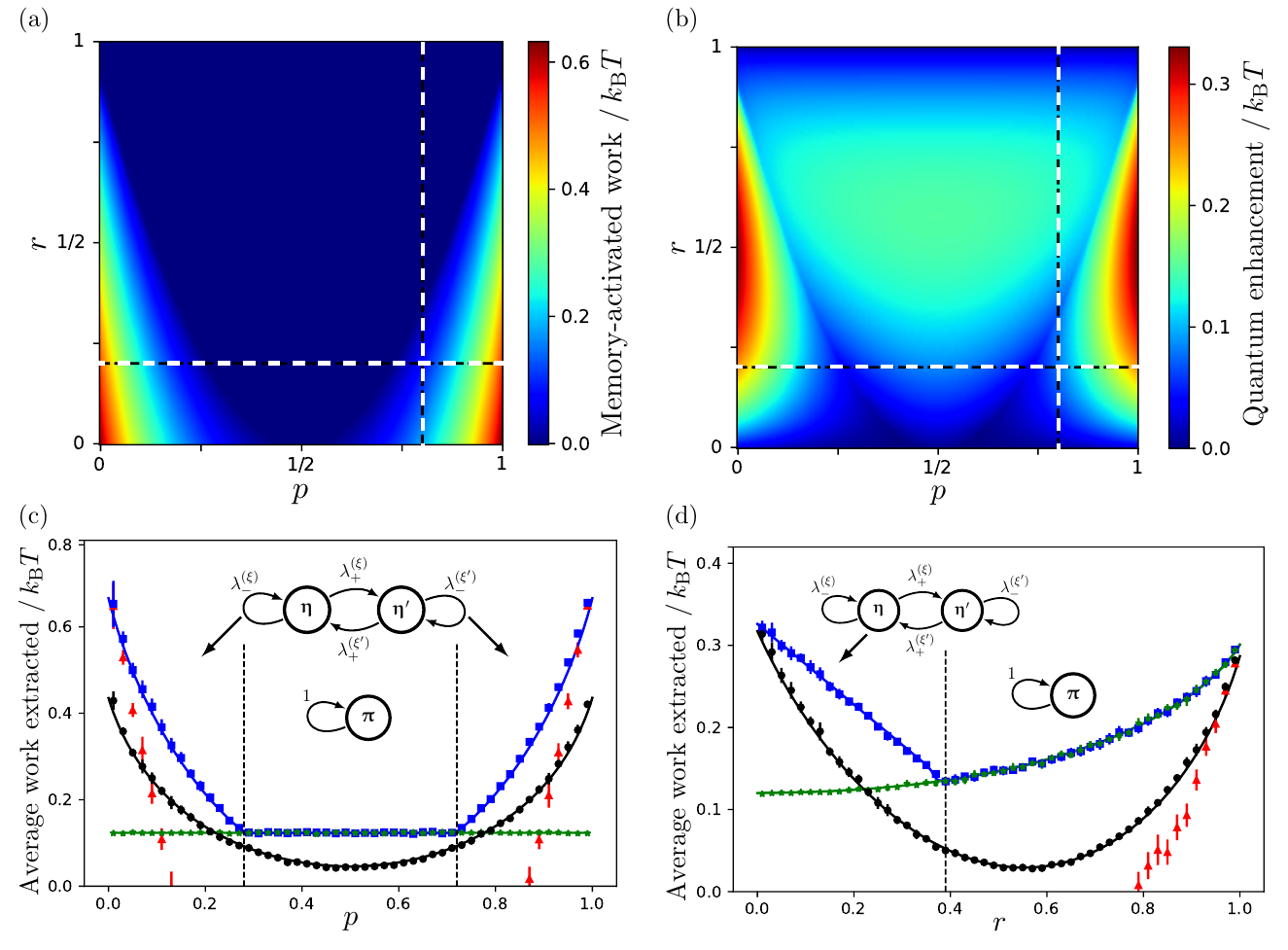}
\caption{Comparison of average work-extraction rates across different approaches.
The parameter $p$ characterizes the transition probability between the two latent states in the perturbed-coin process, while $r$ quantifies the overlap between the corresponding quantum outputs.
(a) illustrates the enhancement in work extraction due to memory, and (b) shows the quantum advantage in work extraction. Panels (c) and (d) present phase transitions in memory enhancement, shown via cross-sections of the parameter space, taken along the dashed black-white line. Solid lines indicate analytical results, while markers represent numerical simulations using the protocol shown in Sec.~\ref{sec:actual_protocol}.
Blue squares represent Approach~\ref{approach:1}; black circles, Approach~\ref{approach:2}; green stars, Approach~\ref{approach:3}; and red triangles, Approach~\ref{approach:4}.}
\label{fig:Deg}     
\end{center}
\end{figure*}
The asymptotic work-extraction rates from the different approaches are compared—both analytically and via numerical simulations—in Fig.~\ref{fig:Deg}, 
for the perturbed-coin process.
In Fig.~\ref{fig:Deg}(a), memory-activated work is defined as the difference between the memory-assisted quantum approach (Approach~\ref{approach:1}) and the memoryless quantum approach (Approach~\ref{approach:3}).
In Fig.~\ref{fig:Deg}(b), quantum enhancement is defined as the difference between the memory-assisted quantum approach (Approach\ref{approach:1}) and the memory-assisted classical approach (Approach~\ref{approach:2}).

As shown in Fig.~\ref{fig:Deg}, our memory-assisted quantum approach (Approach\ref{approach:1}) consistently performs at least as well as all other approaches, and strictly outperforms them across large regions of parameter space.
In these examples, the ideal input state $\rho_t^*$
is a qubit density matrix with eigenvalues $\lambda_{\pm}^{(\rho_t^*)}$, where $\lambda_+^{(\rho_t^*)} \geq\lambda_-^{(\rho_t^*)} \geq 0$.
According to Eqs.~\eqref{eq:WorkExtractionValues} and \eqref{eq:WorkExtractionProbs}, each distinct belief state therefore leads to one of two possible work values, $w^{(\pm)}$.

Approaches~\ref{approach:1}–\ref{approach:3} share several useful properties. From Eq.~\eqref{eq:WorkExtractionProbs}, assuming distinct work values $w^{(+)} \neq w^{(-)}$, the probability of observing each possible work value is directly given by the corresponding eigenvalue of the optimal input state:
\begin{align}
\Pr(W_{t+1} = w^{(\pm)} | K_t = \gbm\eta_{t}) = \lambda_{\pm}^{(\rho_t^*)} ~.
\end{align}
Combining this with Eq.~\eqref{eq:WorkExtractionValues}, the expected work extracted at each time step can be written as
\begin{align}
\label{eq:rhostar_expectation}
    \langle W_t \rangle =
    \beta^{-1} \langle \text{D}[\rho_t^* \| \gamma]  \rangle_{\Pr(K_t)}~,
\end{align}
which holds for Approaches~\ref{approach:1}–\ref{approach:3}. Note, however, that both $\rho_t^*$  and the distribution over belief states differ between these approaches: specifically, $\rho_t^* = \xi_t$ for Approach~\ref{approach:1}, $\rho_t^*=\xi_t^\text{dec}$ for Approach~\ref{approach:2}, and $\rho_t^*=\xi_0$ for Approach~\ref{approach:3}.
The non-negativity of the relative entropy ensures that the expected work extraction remains non-negative in all three cases. Moreover, Eq.~\eqref{eq:rhostar_expectation} generalizes Eq.~\eqref{eq:xi_t_expectation}, making it possible to compare the work-extraction performance across the three approaches. Additional details on the analytic derivation of expected work can be found in Appendix~\ref{app:analytic_work}.

Without memory or adaptivity, the extractable structure is limited to the time-averaged statistical bias of the output~\cite{Boyd17PRE}, which explains why the memoryless work extraction depends on $r$ but not $p$. Although it might seem intuitive to commit to the most likely outcome, the overcommitment approach (Approach~\ref{approach:4}) performs the worst. This is because any reset operation to $\gamma$ with minimal entropy production for a pure-state input causes infinite heat dissipation when applied to any other input~\cite{riechers2021initial, riechers2021impossibility}.
Consequently, this leads to infinite negative work extraction, $\langle W_t\rangle = -\infty$ in case where 
$\rho_t^* \in \{ \ket{0} \! \bra{0} , \, \ket{\psi} \! \bra{\psi} \}$.
This divergence can also be understood via Eq.~\eqref{eq:WorkExtractionValues}, since $w^{(-)} \sim \ln \lambda_- \to \ln 0 = -\infty$.
In our numerical simulations, following Ref.~\cite{SSP14}, the minimal eigenvalue $\lambda_-$
is inversely proportional to the number $M$ of bath interactions, causing the overcommitment work penalty to diverge as $-\beta^{-1} (1-r) \min( p, 1-p ) \ln M$.

\subsection{Phase transition}
As seen in Fig.~\ref{fig:Deg}, there exists a blue inner region of panel \ref{fig:Deg}(a) where the memoryless quantum approach \ref{approach:3} achieves the same performance
as our memory-assisted quantum approach \ref{approach:1}. 
\emph{There exists a sharp phase boundary within which the use of memory does not boost performance.} 
As seen clearly in panels \ref{fig:Deg}(c) and \ref{fig:Deg}(d), 
the phase boundary exhibits a discontinuity in the first derivative of work extraction with respect to process parametrization.
This phase boundary is not unique to the perturbed-coin process and, indeed, also occurs
in the 2-1 golden-mean process in Fig.~\ref{fig:PC_diagram}(b). This is consistent with the prediction we saw in Section~\ref{sec:metadynamic}. We summarized the meta-dynamic of the agents using different approaches (\ref{approach:1},\ref{approach:2},\ref{approach:3}) in Table.~\ref{tab:meta}. 
We focus our discussion on the first column of the table, as we previously predicted, the performance of the agent can be classified into 2 different regimes, the ``memory-apathetic" and the ``memory-advantageous" regime.

\begin{sidewaystable}[htbp]
    \begin{center}
    \caption{Summary of meta-dynamics in different regimes. The update function shows the nonlinear relationship between $\epsilon_t$ and $\epsilon_{t+1}$. The belief evolution shows the evolution of $\epsilon_t$ over iterations, which gives rise to the corresponding work series, with two possible work values per belief state. The recurrent belief states show the recurrent meta-dynamic of the different regimes.}
    \label{tab:meta}
\includegraphics[width=1\textwidth]{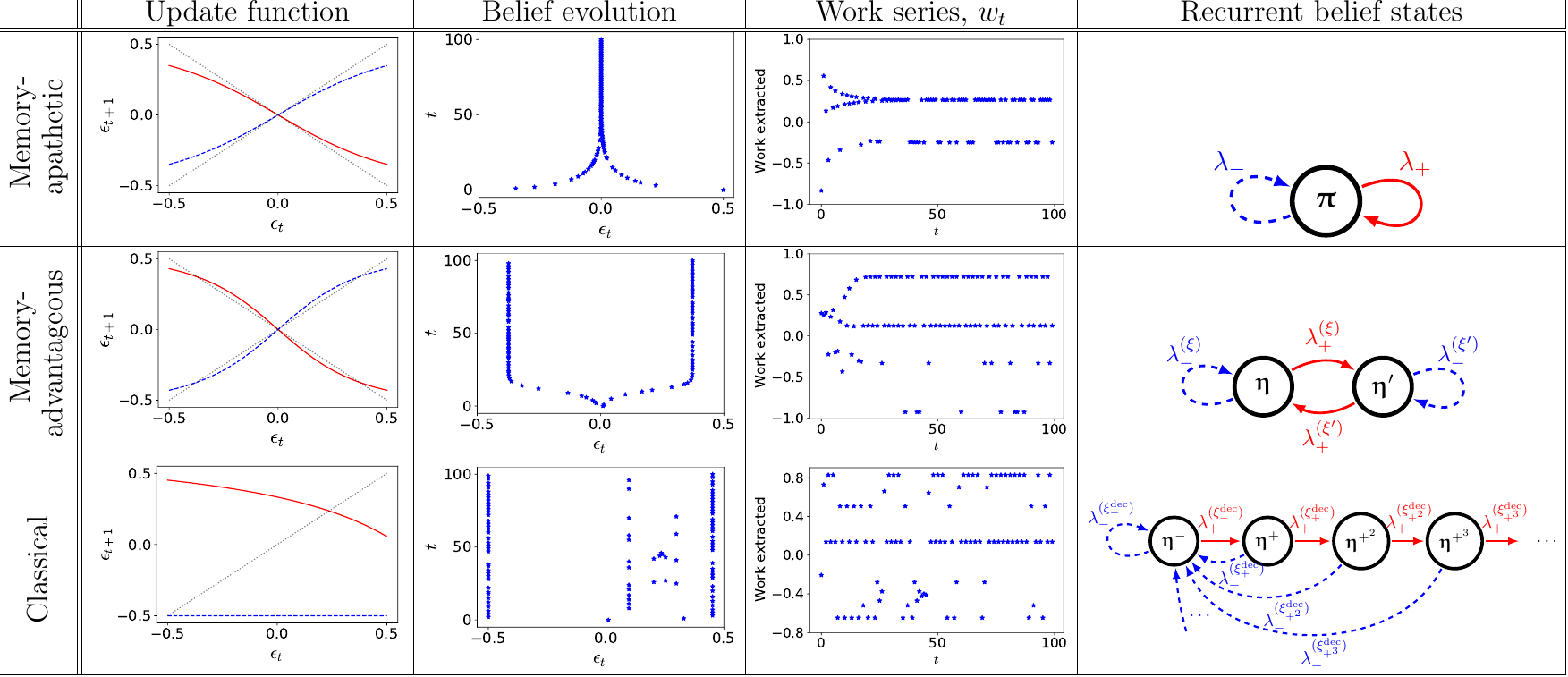}
\end{center}
\end{sidewaystable}
At the phase boundary
in the process' parameter space,
the fixed point at $\gbm\pi$
becomes unstable,
and new attractors emerge for each map.
However, the coexistence of the maps introduces competition between the attractors, as the maps are selected stochastically with probabilities $\lambda_{\pm}$.
The two maps (red solid and blue dashed) interact to induce a steady-state meta-dynamic over recurrent belief states, shown as a Markov process in each row of Table \ref{tab:meta}'s last column.
In this memory-advantageous region,
work extraction supplies sufficient evidence to 
inform an agent about the hidden state of the process, which in turn allows for more work to be extracted. Note that such a phase transition is not unique to the Perturbed Coin process; we have found a similar phenomenon for more complicated processes, such as the ``2-1 Golden Mean" process shown in Fig.~\ref{fig:PC_diagram}(b). Whether this exists for every process remains an open question.

The elegance of memory-assisted quantum work extraction 
is reflected in the simple one and two-state recurrent memory structures.
In comparison, the classical extractor (Approach \ref{approach:2}) not only harvests less work, but requires more memory to achieve its relatively meager returns.
Note the infinite number of recurrent memory states in the  classical-processing case,
in the last row of Table.~\ref{tab:meta}.
The return maps in the first column Fig.~\ref{tab:meta}, as well as the belief evolution in the second column, also demonstrate that the asymptotic behaviour of the agents is independent of the initial conditions; regardless of where the initial belief state is, the return map eventually moves them to a stable equilibrium. The work series in the third column also validates our claim that for each distinct belief state, there exist 2 possible work values, $w^{(\pm)}$.
\section{Limited quantum memory}
We now examine the implications of equipping the agent with a finite quantum memory, enabling the storage and coherent processing of $L$-qubit blocks. Operating on an $L$-qubit block strictly increases the work yield, as it allows the agent to extract coherence that is otherwise inaccessible. This enhancement is evident in Fig.~\ref{fig:finite_memory}, which demonstrates that introducing even a single qubit of quantum capacity universally increases the work extraction, irrespective of whether the agent utilizes a classical memory policy. Notably, the phase boundary separating the memory-advantage and memory-apathetic regimes remains strictly invariant under this extension. This invariance is a direct consequence of the Bayesian update rule (Eq.~\eqref{eq:update_pc2}), which depends exclusively on the eigenbasis of the tailored state and is independent of its eigenvalues. Because the measurement basis for an $L$-qubit block factorizes into the local bases of the individual subsystems, the classical information acquired from a joint $L$-qubit measurement is entirely equivalent to that obtained from $L$ sequential local measurements. Consequently, while a finite quantum memory provides a distinct thermodynamic advantage by unlocking temporal coherence, it offers no supplemental informational advantage for state prediction. The addition of quantum memory thereby improves the physical efficiency of the extraction process without altering the agent's fundamental predictive power.

\begin{figure}
    \centering
    \includegraphics[width=0.8\linewidth]{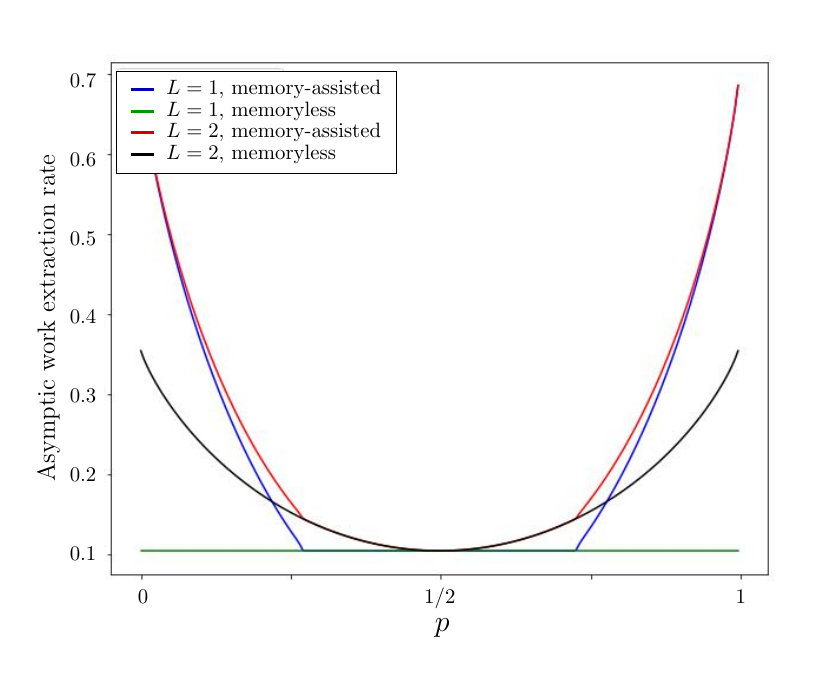}
    \caption{Comparison of the asymptotic work extraction rate of agents with varying block-length $L$, both memory-assisted and memoryless. The work extraction rate no doubt increased when we consider higher $L$, but notice that the phase boundary memory-advantageous and memory-apathetic region remains invariant with $L$. }
    \label{fig:finite_memory}
\end{figure}
\section{Summary}
In summary, now we have established a physically motivated method of extracting work from temporally correlated quantum states. Namely, tailoring the protocol to the state that the agent expects based on its belief. Clearly, this is not the only way to carry out the extraction, nor did we claim that it is the optimum. In fact, we hypothesize that there could exist other protocols that achieve higher overall work extraction over $T$ time steps but at each step incur greater local dissipation. How then can one reach the optimal method to extract work from such quantum states? We will discuss this in the next chapter and also coin a new term ``Time-ordered free energy" as an achievable upper bound on such work.

%% file: Chapters/Chapter5.tex

\chapter{Time-ordered free energy} 
\label{ch:4}
\noindent\rule{\textwidth}{0.4pt}
\vspace{1.5em} 
\setlength{\parindent}{4ex}

\noindent \textit{In this chapter, we show that the previous approach to extract work from temporally correlated quantum states can be further optimized. We make use of dynamic programming to reach the optimal work extraction rate and provide proof for the optimality claim. This can be interpreted as the maximal free energy an agent can extract constraint to obeying causality. This new free energy quantity is dubbed as \emph{``Time-ordered Free Energy"}.}
\newpage

\section{Choice of policies, revisited}
\label{sec:opt_action}
As discussed in Sec.~\ref{sec:local_action}, the agent has the freedom to choose an action based on its belief state; this choice was described as a policy. In \cref{ch:3}, we analyzed the agent’s performance under a local-optimizing (LO) policy. In this chapter, we address the question: What is the optimal policy that enables the agent to extract the maximal amount of work from these temporally correlated quantum states?

To address this, we utilize a technique from reinforcement learning—dynamic programming (DP), specifically backward induction. Before explaining how DP applies to our work-extraction problem, we first provide a brief review of dynamic programming in Sec.~\ref{sec:DP}.
In its original formulation for Markov decision processes (MDPs), actions are functions of the system’s state at time $t$. In our setting, however, the agent must condition its actions on its belief state $\gbm \eta_t$, since the true underlying quantum state $\sigma_{Q_t}$ is not directly accessible. Our goal is to determine the optimal policy $\Lambda^*$ that maximizes the work extracted. that maximizes work extraction. Formally, the maximum extractable cumulative work over $T$ time steps is given by
\begin{equation}
\label{eq:max_work}
\max_{\Lambda} \mathbf{W}(T,\Lambda)=\max_{\Lambda}\mathbb{E}_{\Lambda}\sum_{t=1}^T \mathbb{E}(W_t|A_t=\Lambda(K_{t-1}))
\end{equation}
and its asymptotic rate 
\begin{equation}
\label{eq:max_rate}
\max_{\Lambda} \lim_{T\to\infty}\frac{1}{T}\mathbf{W}(T,\Lambda)=\max_{\Lambda}\lim_{T\to\infty}\frac{1}{T}\mathbb{E}_{\Lambda}\sum_{t=1}^T \mathbb{E}(W_t|A_t=\Lambda(K_{t-1}))
\end{equation}

\section{Dynamic programming - DP}
\label{sec:DP}
Before we apply dynamic programming, we will provide a short overview of what dynamic programming is and how optimization can be carried out. Dynamic programming is a mathematical optimization technique that recursively decomposes a complex problem into a sequence of simpler, interdependent subproblems. This recursive structure enables efficient computation by avoiding redundant calculations, often at the cost of increased memory usage to store intermediate results. In the context of general stochastic optimal control problems, dynamic programming methods have been shown to yield optimal solutions~\cite{bellman1965dynamic}.

A stochastic optimal control problem involves determining a control strategy—interpreted as an agent's sequence of actions—that optimizes an objective function, typically expressed as the expected cumulative reward or cost, over a stochastic process evolving through time. This process extends a standard stochastic trajectory, $\cdots, X_{t-1}, X_{t}, X_{t+1},\cdots$ by incorporating actions $A_{t}$ at each time step, where each action influences the evolution of future states. The resulting controlled process can thus be represented as
\begin{equation}
    \cdots, X_{t-1}, A_{t}, X_{t}, A_{t+1}, X_{t+1},\cdots
\end{equation}
with specific realizations denoted by $\cdots, x_{t-1}, a_{t}, x_{t}, a_{t+1}, x_{t+2}, \cdots$.

We consider a Markovian process with a discrete, finite time horizon $t=0,1,\cdots, T$, where $T \in \mathbb{Z}^{+}$. At each time step $t$, the environment occupies a state $X_{t}\in\mathcal{X}=\{x^{(i)}\}_i$ and an actions $A_{t}\in\A=\{a^{(j)}\}_j$ is selected. Here, subscripts such as $t$ denote temporal indices, while superscripts in parentheses, e.g., $x^{(i)}$, indicate specific realizations within the corresponding set. For example, we denote the specific state realization time as $x_t^{(i)}$. The evolution of the system is governed by the Markov property: the random variable $X_{t}$ depends solely on the previous state $X_{t-1}$ and the previous action $A_{t}$, i.e.,
\begin{equation}
    \Pr(X_t|X_{t-1},A_{t}\cdots,A_1,X_0) = \Pr(X_t|X_{t-1},A_{t})~.
\end{equation}
The action $a_{t}$ at any time $t$ is determined by the current state $x_{t-1}$, and we may write this explicitly as $a_{t}(x_{t-1})$, though we often use the shorthand $a_t$ for brevity. as it is only dependent on the state $x_{t-1}$. Similarly, the reward function at each time step is denoted by $r( x_{t-1},a_{t}, X_{t})$, reflecting its dependence on the current state, $x_{t-1}$, the action taken $a_t$, and the resultant next state $X_{t}$.

Given this framework, one can define the objective function to be optimized as
\begin{equation} \label{eq:obj-func}
        J(x_{0},A_{1:T}) = \mathbb{E} \left[ r(x_0,A_1,X_1)+\sum^{T-1}_{t=1} g^{t} \cdot r( X_{t}, A_{t+1}, X_{t+1} ) \right]
\end{equation}
where $x_{0}$ is the initial state for $t=0$, $X_{t}$ and $A_{t}$ are random variables for all $t \geq 1$. The expectation is taken over the stochastic trajectories of the process. The parameter $g\in (0,1]$ is a discount factor that scales the contribution of future rewards. This factor is typically introduced in infinite-horizon problems ($T\to\infty$) to ensure convergence of the objective function and to model time preference, where future rewards are valued less than immediate ones.
The choice of $g$ reflects the trade-off between short-term and long-term gains: setting $g\to0$ prioritizes immediate rewards exclusively, while $g=1$ assigns equal weight to rewards across all time steps. 

The value function is defined as the maximized objective function over all admissible control sequences,
\begin{equation} \label{eq:mark-val-func}
    V(x_{0}) := \sup_{\{A_{t}\}^{T}_{t=1} \in {\mathcal{A}}} J(x_{0},A_{1:T})
\end{equation}
This formulation seeks the optimal control sequence that maximizes the expected cumulative reward, given an initial state $x_0$.
However, directly solving this optimization problem is generally computationally intractable due to the high dimensionality: it involves optimizing over $T$ decision variables and accounting for the joint distribution of $2T$ random variables (states and actions). To address this complexity, dynamic programming can be employed, in particular, the backward induction approach.

\subsection{Backward induction}
In the backward induction approach, the objective is to determine the optimal policy defined in Eq.~\eqref{eq:general_policy}, such that the objective function defined in Eq.~\eqref{eq:obj-func} is maximized, thereby attaining the optimal value function in Eq.~\eqref{eq:mark-val-func}. However, the \emph{environment} is now a fully-observable agent---the stochastic trajectory, $\{X_t\}_t$, is a classical variable; this, along with the Markovian assumption, allows us to rewrite the policy as
\begin{equation}
    \pi_t(A_t|X_{t-1})~.
\end{equation}
This optimization is achieved by recursively solving subproblems starting from the final time step and proceeding backward to the initial state. 
At each time step, the policy prescribes the action $A_t$ to be taken given the current state $X_{t-1}$. 

We now define a time-dependent version of the objective function, $J_{t} (x_{t-1}, a_{t})$, and the corresponding value function $V_{t} (x_{t-1})$ for each time step $t\in\{1,\cdots, T\}$. This function $J_{t} (x_{t-1}, a_{t})$ represents the expected rewards starting from a given state $x_{t-1}$ and action $a_{t}$ at time $t$, incorporating both the immediate and future rewards. It is defined as
\begin{equation} \label{eq:mark-obj-func-t}
    J_{t} (x_{t-1}^{(i)}, a_{t}^{(j)}) := \mathbb{E} \left[ r\left(x_{t-1}^{(i)},a_{t}^{(j)},X_{t}\right) + V_{t+1}(X_{t}) \big|X_{t}\right]~.
\end{equation} 
where the expectation is taken over the distribution of the next state $X_{t}$ given the current state-action pair $(x^{(i)}_{t-1},a_t^{(j)})$.
The corresponding value function at time $t$ is obtained by maximizing the objective over all admissible actions:
\begin{equation}
    V_{t} (x_{t-1}) = \max_{a_{t}^{(j)} \in \mathcal{A}} J_{t}(x_{t-1}^{(i)}, a_{t}^{(j)})~,
\end{equation} 
and the optimal action at time $t$ for state $x_{t}^{(i)}$ is given by 
\begin{equation}
    a^*_{t}(x^{(i)}_{t-1}) := \argmax_{a_{t}^{(j)} \in \mathcal{A}} J_{t}(x_{t-1}^{(i)}, a_{t}^{(j)})~.
\end{equation}
The backward induction algorithm proceeds as follows: we initialize the terminal value function by setting $V_{T}(x^{(i)}_{T-1})=0$ for all $x^{(i)}_{T-1}$, reflecting the assumption that no further rewards are obtained beyond the final time step $T$.
Starting from $t=T-1$,  we compute $J_{t} (x_{t-1}^{(i)}, a_{t}^{(j)})$ for all state-action pairs and derive the value function $V_t(x_{t-1}^{(i)})$ by maximizing over actions. Simultaneously, we record the optimal action $a^*_{t}(x_{t-1}^{(i)})$ corresponding to each state. This procedure is repeated recursively backward in time until $t=0$.
The output of the algorithm is the optimal policy, $\pi^*$, which achieves the maximum expected cumulative reward in  Eq.~\ref{eq:mark-val-func}. 

\section{DP for work-extraction}
\label{sec:DP_work}
In order to utilize DP, we have to first define a reward function. In terms of work-extraction, the reward will naturally be the amount of work extracted, $W_t$, at each time step $t$. The actual work extracted at each time $W_t$ is inherently probabilistic as both the identity of the next emitted state and the measured work are stochastic quantities. Consequently, the average reward defined in terms of its expectation over these underlying random work values is a better quantity to maximize. When a work extraction protocol is applied to an arbitrary state, the resulting extracted work can be modeled as a random variable, $W_t( K_{t-1}, A_t)$, governed by a probability distribution, $\mathcal{D}$. We assume that $\mathcal{D}$ depends solely on $K_{t-1}$ and $A_t$. This allows us to define the average reward
\begin{equation}
 \label{eq:DP_reward}
      r(K_{t-1},A_t) \coloneqq \Ex_{\mathcal{D}}\left(W_t(K_{t-1},A_t)\right)~.
\end{equation}
Note that during the actual extraction, the extracted work $W_t$ also depends on the quantum state $\sigma_{Q_t}^{(x_t)}$. However, during optimization, we can only model this dependency using the belief state $K_{t-1}$ since it most accurately reflects the agent's inference of the underlying distribution of latent state given the past outcomes. A schematic diagram of the evolutions and relations between the process (environment) and the agent can be found in Fig.~\ref{fig:Bayes_Net}

Recall that for any work extraction protocol aimed at extracting all non-equilibrium free energy from a quantum state $\rho$. The protocol has to be tailored for $\rho$, and if it is applied to a state $\sigma \neq \rho^*$, we will see a dissipation of \emph{at least} $\D(\sigma\|\rho^*)$~\cite{riechers2021initial}. 
Since the agent does not know the quantum states, the best it can do is rely on its belief to predict the distribution $P_x$, over $\{\sigma^{(x)}\}_{x\in\X}$.
Eq.~\eqref{eq:DP_reward} can then be written as 
\begin{equation}
\label{eq:final_rewards}
\begin{split}
      r(\gbm \eta^{(i)},a^{(j)}) &= \beta^{-1}\left[\sum_{x\in\mathcal{X}} P_x^{(i)}   \D(\sigma^{(x)}\|\gamma) - \D(\sigma^{(x)}\|\rho^{(j)})\right] \\
     &=  \beta^{-1}\left[\D(\xi_{\gbm\eta^{(i)}}\|\gamma) - \D(\xi_{\gbm\eta^{(i)}}\|\rho^{(j)})\right]~,
\end{split}
\end{equation}
 where we defined the expected state, $\xi_{\gbm\eta^{(i)}}\coloneqq \sum_{x\in\X} P_x^{(i)}\sigma^{(x)}$ , where $P_x^{(i)}=\gbm\eta^{(i)}T^{(x)}\mathbf{1}$ is the prior distribution over $\{\sigma^{(x)}\}_i$ induced by the belief state $\gbm\eta^{(i)}\in\K$. 
The form of this particular reward matches precisely with the expected work extracted from a $\rho^*$-ideal protocol that is tailored for $\rho^{(j)}$ applied onto the expected state $\xi_{\gbm\eta^{(i)}}$, and thus it makes sense for us to consider this class of operation. 
Conversely, one can always try to use different formalisms of thermodynamic such as thermal operations, but as we have already discussed in Sec.~\ref{sec:benchmark}, thermal operations can be re-cast into this formalism by tailoring the protocol to a state that is diagonalized in the energy eigenbasis of the Hamiltonian. Here, we wish to investigate the ultimate upper bound, which thermal operation will not be able to achieve due to its inability to extract work from coherence with respect to the energy eigenbasis.

To better understand the expression in Eq.~\eqref{eq:final_rewards}, observe that the second term captures the expected dissipation incurred by the agent when performing an operation tailored to the state $\rho^{(j)}$. A natural but premature conclusion might be that maximizing the reward function simply entails minimizing this dissipation term—achieved when $\rho^{(j)}=\xi_{\gbm\eta^{(i)}}$. Indeed, this strategy was precisely adopted in Chapter~\ref{ch:3}. However, this approach is not globally optimal. 
The reason lies in the fact that the agent’s action at time step $t$ affects not only the immediate reward but also the potential rewards in subsequent steps. This temporal interdependence arises through the Bayesian update rule governing the evolution of belief states, as given in Eq.~\eqref{eq:update_pc2}. In particular, the update mechanism depends on the eigenbasis of the tailored state, which is itself determined by the chosen action. Consequently, choosing a tailored state $\rho^{(j)} \neq \xi_{\gbm\eta^{(i)}}$ might reduce the immediate reward but can strategically shape future belief states in ways that enhance long-term rewards.

From this perspective, the divergence term $\D(\xi_{\gbm\eta^{(i)}}\|\rho^{(j)})$ quantifies the trade-off between maximizing short-term gains and fostering favorable conditions for future reward accumulation. 

Following the reward, we can define a global value function 
\begin{equation}
\label{eq:overall}
    \tilde{V_1}\left(\gbm\eta^{(i)},A_{1;T}\right) \coloneqq \mathbb{E}\left[\sum_{t=1}^{T} r(K_{t-1},A_t) \Bigg|K_0=\gbm\eta^{(i)}\right]~,
\end{equation}
where $A_{1;T}\coloneqq A_1,A_2\cdots A_{T}$. Physically, this expression represents the average work extracted by an agent over $T$ time steps starting with a belief state $\gbm\eta^{(i)}$. The expression in Eq.~\eqref{eq:max_work} can then be rewritten a maximization, 
\begin{equation}
\label{eq:overall_value}
    \tilde V^*_1 \left(\gbm\eta^{(i)}\right)\coloneqq \max_{A_{1;T}\in\A} \tilde{V_1}\left(\gbm\eta^{(i)},A_{1;T}\right)~,
\end{equation}
which reflects the maximization of the average work extracted over all possible sequences of action $A_1\cdots A_T$.

We use backward induction to perform the maximization, which guarantees the optimal policy, $\pi^*$. The pseudocode for the algorithm can be found in Alg.~\ref{alg:DP_for_work}  
\begin{algorithm}[htbp!]
	\caption{\textsf{Pseudocode for work extraction}} 
    \label{alg:DP_for_work}
    \textbf{input:} discretized set of belief states $\mathcal{K}\coloneqq \{\gbm \eta^{(i)}\}_{i=1}^{N}$, set of possible operations $\A \coloneqq \{a^{(j)}\}_{j=1}^{M}$, reward function $r(\gbm\eta_{t-1},a_t)$, the underlying HMM, and the total time $T$ of the process.
    \\

    \emph{Set final value to 0 as there are no more future rewards} \\
    Set $V^{*}_{T+1}(\gbm \eta^{(i)}_{T}) \coloneqq 0$ for all $\gbm \eta^{(i)}$ \\

    \For {$t = T, T-1, \cdots, 1$}{
    \For {belief state $\gbm \eta^{(i)}_{t-1} \in \mathcal{K}$ from $i=1$ to $N$}{
    \For {operation $a_{t}^{(j)} \in \A$ from $j=1$ to $M$}{
    
    \emph{Find next state following Theorem~\ref{thm1}} \\
    
    Find $\gbm \eta^{(k)}_{t}$ from $\gbm \eta^{(i)}_{t-1}$ and $a_{t}^{(j)}$ \\

    \emph{Calculate expected value for each action} \\
    $V_{t}(\gbm \eta^{(i)}_{t-1}, a^{(j)}_{t}) \coloneqq r(\gbm \eta^{(i)}_{t-1},a^{(j)}_t) + V^{*}_{t+1}(\gbm \eta^{(k)}_{t})$
    }   

    \emph{Find best value and action from all actions} \\
    Set $V^{*}_{t}(\gbm \eta^{(i)}_{t-1}) \coloneqq \max_{a^{(j)}_{t}}{V_{t}(\gbm \eta^{(i)}_{t-1}, a^{(j)}_{t})}$ \\
    Set $a^{*}_{t}(\gbm \eta^{(i)}_{t-1}) \coloneqq \argmax_{a^{(j)}_{t}}{V_{t}(\gbm \eta^{(i)}_{t-1}, a^{(j)}_{t})}$
    }
    }

    \textbf{output:} policy $\pi = \{ \mathcal{K}_{t-1}\to \A_{t} \}_{t=1}^{T}$
\end{algorithm}

\section{Search Space}
Unlike the previous approach in \cref{ch:3}, where the actions chosen always fall within the convex sum of the emitted quantum state $\{\sigma^{(x)}\}_{x\in\mathcal{X}}$. There is no reason why the optimal state for tailoring of the protocol should also fall within the convex sum. Instead, the global maximization is obtained over all possible states within the Bloch sphere; it is easy to see that such optimization is not feasible or at least extremely time-consuming. Instead, we will utilize the following theorem to narrow down the search space. 
\begin{theorem}
    Given any set of eigenbasis $\{|\psi_i^{(\gbm\theta)}\rangle\}_i$, parametrized $\gbm\theta$,
    The optimal state to tailor the protocol for is then given by:
    \begin{equation}
    \label{eq:opt_state}
        \rho^* = \sum_i\lambda_i\ket{\psi_i}\!\bra{\psi_i}, \hquad \lambda_i=\bra{\psi_i}\xi_{\gbm\eta^{(k)}}\ket{\psi_i}~,
    \end{equation}
    where $\xi_{\gbm\eta^{(k)}}$ is the expected state given a belief state $\gbm\eta^{(k)}$
\end{theorem}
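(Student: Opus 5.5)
Fix an eigenbasis $\{\ket{\psi_i^{(\gbm\theta)}}\}_i$ and consider every action $\W_{\rho}$ whose tailored state is diagonal in that basis, $\rho=\sum_i\mu_i\ket{\psi_i}\!\bra{\psi_i}$, with $\mu_i\geq0$ and $\sum_i\mu_i=1$. The plan rests on one separation. The eigenbasis controls the future, through the Bayesian update in Eq.~\eqref{eq:update_pc2}. The eigenvalues $\{\mu_i\}$ affect only the immediate reward in Eq.~\eqref{eq:final_rewards}. Given that, optimizing the eigenvalues reduces to a one-step problem that can be solved in closed form.

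\emph{First step: the future is independent of the eigenvalues.} Fix the belief $\gbm\eta^{(k)}$ and the basis. Theorem~\ref{thm2} gives the outcome probabilities $\Pr(W=w^{(n)}|\sigma^{(x)})=\sum_m\bra{\psi_m}\sigma^{(x)}\ket{\psi_m}\delta_{w^{(n)},w^{(m)}}$. Assume generic distinct work values, so that each outcome identifies a basis index $n$. Then the posterior in Eq.~\eqref{eq:update_pc2} is
\[
\gbm\eta' \propto \sum_x\bra{\psi_n}\sigma^{(x)}\ket{\psi_n}\,\gbm\eta^{(k)}T^{(x)},
\]
and the probability of reaching it is $\bra{\psi_n}\xi_{\gbm\eta^{(k)}}\ket{\psi_n}$. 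Neither quantity involves $\mu_i$. The degenerate case, where distinct values $\mu_i$ collide into the same $w$, deserves comment. A coarser outcome partition cannot improve the agent's information; it can only merge posteriors. So I would either restrict to the generic case or argue that merged posteriors are weakly dominated, using convexity of the value function in the belief, which is standard for POMDPs. As a result, the continuation term $\mathbb{E}[V^*_{t+1}(K_t)]$ in the backward induction of Alg.~\ref{alg:DP_for_work} is the same for every $\rho$ in the family. The argmax over eigenvalues is therefore the argmax of $r(\gbm\eta^{(k)},\W_\rho)$ alone.

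\emph{Second step: optimize the immediate reward.} By Eq.~\eqref{eq:final_rewards}, maximizing $r$ means minimizing $\D(\xi\|\rho)$ with $\xi=\xi_{\gbm\eta^{(k)}}$. Write
\[
\D(\xi\|\rho)=-S(\xi)-\sum_i\bra{\psi_i}\xi\ket{\psi_i}\ln\mu_i .
\]
Set $q_i=\bra{\psi_i}\xi\ket{\psi_i}$. The objective then becomes $-S(\xi)+H(q)+\D_{\mathrm{KL}}(q\|\mu)$. Gibbs' inequality shows this is minimized uniquely at $\mu_i=q_i$. As a cross-check, the expected-work formula in Eq.~\eqref{eq: average_energy}, evaluated on the basis-restricted state, gives the same Lagrange stationarity condition.

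\emph{Third step: combine.} Any optimal policy picks, at each belief, some basis together with eigenvalues. Within the chosen basis, swapping the eigenvalues for $\lambda_i=\bra{\psi_i}\xi_{\gbm\eta^{(k)}}\ket{\psi_i}$ weakly increases the immediate reward and leaves every future term unchanged. Applying this at each step of the backward induction shows that an optimal policy exists within the family in Eq.~\eqref{eq:opt_state}, so the search reduces to the basis parameters $\gbm\theta$.

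\emph{Main obstacle.} I expect the hard part to be the first step when work values are degenerate, since there the eigenvalues do influence which outcomes are distinguishable. I would handle it with the convexity and coarse-graining argument above. The fallback is to note that the degenerate set is measure zero, and that the reward and value are continuous in $\mu$.
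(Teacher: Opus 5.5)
Your main argument is the paper's own proof. The posterior in Eq.~\eqref{eq:update_pc2} depends only on the overlaps $\bra{\psi_i}\sigma^{(x)}\ket{\psi_i}$, so the eigenvalues enter only the immediate reward. Minimizing $-\sum_i q_i\ln\mu_i$ then forces $\mu_i=q_i$; this is your Gibbs-inequality step and the paper's ``resembles Shannon entropy'' remark. You also correctly notice something the paper glosses over. Because of the Kronecker deltas in Eq.~\eqref{eq:update_pc2}, the update does depend on the eigenvalues when work values collide. For $\mathcal{H}_Q\propto\id$ this happens exactly when $\mu_i=\mu_j$.

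Your patch for that case works in only one direction. The convexity you invoke is genuine here. The reward $\beta^{-1}[\tr(\xi_{\gbm\eta}\ln\rho)-\tr(\xi_{\gbm\eta}\ln\gamma)]$ is linear in $\gbm\eta$, so the backward-induction values are convex. This shows that a competitor $\mu$ with ties is weakly beaten by $q$ \emph{when the $q_i$ are distinct}.

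The trouble comes when $q$ itself has ties. Take a qubit basis whose Bloch axis is orthogonal to that of $\xi_{\gbm\eta^{(k)}}$, so $q=(1/2,1/2)$. Choosing $\mu=q$ then merges the outcomes: the work is deterministically zero and the belief learns nothing. Your own convexity argument now shows that a nearby untied $\mu$ gains the finer continuation value, at only an $O(\|\mu-q\|^2)$ loss in immediate reward. So whenever finer information strictly helps, $\mu=q$ is not optimal, and the supremum is approached as $\mu\to q$ but not attained.

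Your fallback fails for the same reason. The reward is continuous in $\mu$, but the continuation value jumps at ties, so you cannot appeal to continuity plus ``measure zero.'' What you can honestly prove is the theorem restricted to bases in which the $q_i$, and hence the work values, are pairwise distinct, or a supremum statement otherwise. The paper's proof has the same blind spot.
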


\begin{proof}
Notice that the Bayesian update in Eq.~\eqref{eq:update_pc2} only depends on the eigenstates of the tailored state and the actual quantum state, i.e., $\bra{\psi_i}\sigma^{(x)}\ket{\psi_i}$, it has no dependence on the eigenvalues $\lambda_i$ of the tailored state. This means that the future inference and updating of belief states are independent of the eigenvalues of the local tailored state; hence, we have the freedom to choose the eigenvalues to minimize any local dissipation. Accordingly, based on the reward function that the agent aims to maximize in Eq.~\eqref{eq:final_rewards}, in other words, the agent aims to minimize expected local dissipation, $\D(\xi_{\gbm\eta^{(i)}}\|\rho^{(j)})$. We can rewrite this expression as
\begin{equation}
    \begin{split}
        \D(\xi_{\gbm\eta^{(i)}}\|\rho^{(j)}) & = -\tr(\xi_{\gbm\eta^{(i)}}\ln\rho^{(j)}) - S(\xi_{\gbm\eta^{(i)}})\\
        & = -\sum_i \ln \lambda_i \bra{\psi_i}\xi_{\gbm\eta^{(i)}}\ket{\psi_i}-S(\xi_{\gbm\eta_t})~,
    \end{split}
\end{equation}
notice that the entropic term is fixed for any given belief state, so we only have the freedom to minimize the first term over all possible $\{\lambda_i\}_i$, this can be done via Lagrange multipliers by imposing constraints on the eigenvalues or recognizing that this term resembles the definition of Shannon entropy and hence will be maximized when $\lambda_i = \bra{\psi_i}\xi_{\gbm\eta_i}\ket{\psi_i}$. Hence given any set of eigenbasis $\{\ket{\psi_i}\}_i$, the optimal state to tailor the extraction protocol to is 
\begin{equation}
        \rho^* = \sum_i\lambda_i\ket{\psi_i}\!\bra{\psi_i}, \hquad \lambda_i=\bra{\psi_i}\xi_{\gbm\eta^{(k)}}\ket{\psi_i}
    \end{equation}
\end{proof}
On top of this, if the HMM emits only two distinct quantum states,  the search for an optimal measurement basis can be significantly simplified. Specifically, we need only consider bases parameterized by
\begin{equation}
     \ket{\psi_\theta}=\cos\theta/2 \ket{0_{\phi_0}}+\sin\theta/2\ket{1_{\phi_0}}~,
\end{equation}
where $\phi_0$ denotes the angle between the plane $\mathbf{P}$—spanned by the Bloch vectors of $\sigma^{(0)}$ and $\sigma^{(1)}$—and the plane spanned by the computational basis. The rotated basis states $\ket{i_{\phi_0}}$ are defined via the unitary transformation $U_{\phi_0}$ acting on the computational basis: $\ket{i_{\phi_0}} = U_{\phi_0} \ket{i}$.

This constraint reduces the search space from the full Bloch sphere (a three-dimensional space) to a one-dimensional subspace lying within the plane $\mathbf{P}$. The optimal states defined in Eq.~\eqref{eq:opt_state} can be interpreted as projections of $\xi_{\gbm\eta^{(k)}}$ onto this chosen eigenbasis. This methodology can also be extended to higher-dimensional systems (qudits). For a more detailed derivation, see Appendix~\ref{app:narrow_search}.

\section{Proof of Optimality of DP}
Now we prove that the policy obtained from DP is indeed optimal in maximizing Eq.~\eqref{eq:max_work}.
\begin{theorem}
The optimal policy obtained from backward value iteration in Alg.~\ref{alg:DP_for_work}, $\Lambda^*$, is the optimal policy in achieving the highest global value function, $\tilde V^*_1 (\gbm\eta^{(i)})= \max_{a_{1:T}}\tilde{V_1}(\gbm\eta^{(i)},a_{1;T})$ stated in Eq.~\eqref{eq:overall_value}.
\end{theorem}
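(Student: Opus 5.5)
The plan is the standard Bellman optimality argument, by backward induction on the time index. First I fix notation for the tail quantities. For each $t\in\{1,\dots,T\}$ and each belief state $\gbm\eta\in\K$, define the tail value of an arbitrary (possibly history-dependent, belief-based) policy $\Lambda$ as
\begin{equation*}
\tilde V_t(\gbm\eta,\Lambda)\coloneqq\mathbb{E}_\Lambda\Big[\sum_{s=t}^{T} r(K_{s-1},A_s)\,\Big|\,K_{t-1}=\gbm\eta\Big],
\end{equation*}
and compare it with the quantity $V^*_t(\gbm\eta)$ produced by Alg.~\ref{alg:DP_for_work}, with $V^*_{T+1}\equiv 0$. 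The claim to prove is
\begin{equation*}
V^*_t(\gbm\eta)=\max_{\Lambda}\tilde V_t(\gbm\eta,\Lambda)\quad\text{for all } \gbm\eta,
\end{equation*}
with the maximum attained by $\Lambda^*$. Setting $t=1$ then gives exactly the statement about $\tilde V_1^*$ in Eq.~\eqref{eq:overall_value}.

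The key structural input is that the belief process is a controlled Markov chain. By Theorem~\ref{thm1} together with Eq.~\eqref{eq:update_pc2}, the next belief $K_t$ is a deterministic function of $(K_{t-1},A_t,W_t)$. By Eq.~\eqref{eq:BeliefStateTransitionProbs}, the law of $W_t$ depends only on $(K_{t-1},A_t)$. Likewise, the reward $r(K_{t-1},A_t)$ in Eq.~\eqref{eq:final_rewards} depends only on the current belief and action. Hence $\Pr(K_t\mid K_{0:t-1},A_{1:t})=\Pr(K_t\mid K_{t-1},A_t)$, and the tower property yields the one-step decomposition
\begin{equation*}
\tilde V_t(\gbm\eta,\Lambda)=\mathbb{E}_{A_t\sim\Lambda_t}\Big[r(\gbm\eta,A_t)+\mathbb{E}\big[\tilde V_{t+1}(K_t,\Lambda)\mid K_{t-1}=\gbm\eta,A_t\big]\Big].
\end{equation*}
Here I must be careful: the continuation policy may depend on earlier history. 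By the Markov property, though, its conditional continuation value is still bounded by the supremum over policies started at $K_t$.

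The induction then proceeds backward. The base case $t=T+1$ is trivial, since both sides vanish. For the inductive step, assume $\tilde V_{t+1}(\gbm\eta',\Lambda)\le V^*_{t+1}(\gbm\eta')$ for every policy, with equality for $\Lambda^*$. Substituting into the decomposition gives
\begin{equation*}
\tilde V_t(\gbm\eta,\Lambda)\le\mathbb{E}_{A_t}\big[r(\gbm\eta,A_t)+\mathbb{E}[V^*_{t+1}(K_t)]\big]\le\max_{a}\big[r(\gbm\eta,a)+\mathbb{E}[V^*_{t+1}(K_t)\mid\gbm\eta,a]\big]=V^*_t(\gbm\eta).
\end{equation*}
The second inequality holds because an average over randomized actions cannot exceed the best single action. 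Choosing $a^*_t(\gbm\eta)$ turns both inequalities into equalities, which shows that $\Lambda^*$ attains the bound. This also shows that restricting to deterministic belief-based policies, as in Eq.~\eqref{eq:belief_policy}, loses nothing relative to the general policies of Eq.~\eqref{eq:general_policy}.

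I expect the main obstacle to be justifying that conditioning on the belief alone, rather than on the full history $(a_1,o_1,\dots)$, is sufficient. This amounts to showing that the belief is a sufficient statistic, so that the history-dependent policy class cannot beat the belief-based maximum. I would handle this by invoking Theorem~\ref{thm1}: the posterior over latent states, and hence the predictive distribution of $\sigma^{(X_t)}$ and of $W_t$, depends on the history only through $\gbm\eta_{t-1}$.

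A secondary point is that the algorithm's line ``$V_t=r+V^*_{t+1}(\gbm\eta^{(k)})$'' must be read as an expectation over the work outcomes $w^{(n)}$, weighted by Eq.~\eqref{eq:BeliefStateTransitionProbs}. I would also note that, on a discretized $\K$ and a finite $\A$, the maxima exist, so $\max$ rather than $\sup$ is justified.
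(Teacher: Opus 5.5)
Your proposal is correct and takes the same route as the paper's proof. Both are a Bellman backward-induction argument: use the Markovian meta-dynamics of the belief to condition on $K_{t-1}$, bound the expected continuation by $\max_a J_t(K_{t-1},a)$ at each stage, and note that the argmax actions produced by Alg.~\ref{alg:DP_for_work} attain every one of these bounds. Your explicit treatment of history-dependent and randomized policies, and your reading of the algorithm's next-state update as an expectation over work outcomes, make rigorous points that the paper's chain of inequalities leaves implicit.
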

\begin{proof}
    To prove that the policy obtained via DP is indeed optimal. We first note that $\tilde V_1^{(\pi^*)} (\gbm\eta^{(i)})\leq \tilde V_1^*(\gbm\eta^{(i)})$ trivially as $\tilde V_1^*(\gbm\eta^{(i)})$ is the maxima by definition. We then have to prove that $\tilde V_1^{(\pi^*)}(\gbm\eta^{(i)}) \geq \tilde V_1^*(\gbm\eta^{(i)})$. If this is satisfied, then $\tilde V_1^{(\pi^*)}(\gbm\eta^{(i)}) = \tilde V_1^*(\gbm\eta^{(i)})$ and $\pi^*$ is the optimal policy that yields the highest global value function. 

For any $a_{1:T}$ in any policy $\Lambda$ and all $\gbm \eta^{(i)}$, the global value function is formally written as
\begin{equation}
    \tilde V_1^{\Lambda}(\gbm\eta^{(i)}) = \Ex\left(\sum_{t=1}^{
    T}r( K_{t-1},a_t)+V_{T+1}\Bigg|K_0=\gbm\eta^{(i)}\right)
\end{equation}\\
 We can rewrite this as a conditional expectation on the second last belief state, $K_{T-1}$, which can only be done since the metadynamics of belief states are Markovian. This can then be done iteratively backwards in time, as demonstrated in Eq.~\eqref{eq:proof}.
\begin{equation}
    \begin{split}
        \label{eq:proof}
        V_1^{(\Lambda)}(\gbm\eta^{(i)}) &= \Ex\left(\sum_{t=1}^{T}r( K_{t-1},a_t)+V_{T+1}\Bigg|K_0=\gbm\eta^{(i)}\right)\\
        &=\Ex\left[\Ex\left(\sum_{t=1}^{T-1}r(K_{t-1},a_t)+r(K_{T-1},a_{T})+V_{T+1}\Bigg|K_0=\gbm\eta^{(i)}\right) \Bigg |K_{T-1}\right]\\
        &= \underbrace{\Ex\left(\sum_{t=1}^{T-1}r(K_{t-1},a_t)\Bigg |K_0 = \gbm\eta^{(i)}\right)}_{I_{T-1}} \\
        &\quad\quad\quad\quad\quad\quad\quad\quad+\Ex\left[\underbrace{\Ex\left(r(K_{T-1},a_{T})+V_{T+1}\Bigg| K_{T-1}\right)}_{J_{T}(K_{T-1},a_{T})}\Bigg | K_0=\gbm\eta^{(i)}\right]\\
        &\leq I_{T-1} + \Ex\left[\max_{a\in\A} J_{T}(K_{T-1},a_{T})\Bigg | K_0=\gbm\eta^{(i)}\right]\\
        &= I_{T-1} + \Ex\left[J_{T}(K_{T-1},a^*_{T})\Bigg | K_0=\gbm\eta^{(i)}\right]\\
        &= I_{T-1}+\Ex\left(V_{T}(K_{T-1})\Big|K_0=\gbm\eta^{(i)}\right)\\
        &=\Ex\left(\sum_{t=1}^{T-2}r(K_{t-1},a_t)\Bigg |K_0 = \gbm\eta^{(i)}\right)+ \\
        &\quad\quad\quad\quad\qquad\Ex\Big[\Ex\left(r(K_{T-2},a_{T-1})+V_{T}(K_{T-1})\Big|K_{T-2}\right)\Big|K_0=\gbm\eta^{(i)}\Big]\\
        &\leq I_{T-2}+\Ex\Big[\max_{a\in\A}J_{T-1}(K_{T-2},a_{T-1})\Big|K_0=\gbm\eta^{(i)}\Big]\\
        &=I_{T-2}+\Ex\left(V_{T-1}(K_{T-2})\Big|K_0=\gbm\eta^{(i)}\right)\\
        &\vdots\\
        &=V_1^{(\Lambda^*)}(\gbm\eta^{(i)})~,
    \end{split}
\end{equation}
where $J_{t} (\gbm\eta^{(i)}, a_{t}) := \mathbb{E} \left[ r\left(\gbm\eta^{(i)},a_{t}\right) + V_{t+1}(K_{t}) |K_{t-1}=\eta^{(i)}\right]$ is the objective function and $V_t(\gbm\eta^{(i)}) = \max_{a_{t} \in \mathcal{A}} J_{t} (\gbm \eta^{(i)}, a_{t})$ is the value function of the belief state $\gbm\eta^{(i)}$ at time $t$.
The first inequality comes from the maximization of $J_{T}$ over all actions, and the 6th line follows from the definition of $V_t$. Notice that after each maximization, the optimal action can be generated, which will form the optimal policy $\Lambda^*$. This proof can extend till $t=1$, which then proves that $\tilde V_1^{(\Lambda)}(\gbm\eta^{(i)})\leq \tilde V^{(\Lambda^*)}_1(\gbm\eta^{(i)})$ for any possible policy. Since $\tilde V_1^*$ is the defined as $\max_{a\in\A} \tilde V^{(\Lambda)}_1(\gbm\eta^{(i)})$, this then imply that $\tilde V_1^*(\gbm\eta^{(i)})\leq \tilde V_1^{(\Lambda^*)}(\gbm\eta^{(i)})$. This together with the previous statement that $\tilde V_1^{(\Lambda^*)} (\gbm\eta^{(i)})\leq \tilde V_1^*(\gbm\eta^{(i)})$, proves that $\tilde V_1^{(\Lambda^*)} (\gbm\eta^{(i)})= \tilde V_1^*(\gbm\eta^{(i)})$. Hence $\Lambda^*$ is indeed the optimal policy.
\end{proof}

\section{Results}
\subsection{Asymptotic work-extraction rate}
As mentioned in Sec.~\ref{sec:local_action}, the optimal policy for a finite-horizon MDP, with stationary distribution, is usually time-dependent. Specifically, it can be divided into two different phases: the stationary and non-stationary phases. The non-stationary phase occurs when the future rewards diminish towards the end of the process, i.e., $t$ close to $T$. This can be illustrated when we consider a process that only lasts for 2 time steps.

At time $t=2$, which is the last step of the process. The value function $V_{3}$ is 0 regardless of belief state. Hence, the action taken at $t=2$ for a DP-agent will be one that minimizes the expected dissipation,  

\begin{equation}
\label{eq:last_step}
 a_{\text{2}}^*(K_1=\gbm\eta^{(i)})=\argmax_{a^{(j)}\in\A}\left[-k_BT\D(\xi_{\gbm\eta^{(i)}}\|\rho^{(j)})\right]~.
\end{equation}
which achieves maximum value of 0 when $\rho^{(j)}=\xi_{\gbm\eta^{(i)}}$.
Now we move on to the first time step, $t=1$. The value function that the agent optimizes becomes 
\begin{equation}
\begin{split}
    V_1(K_0=\gbm\eta^{(i)}) &= \max_{a^{(j)}\in\A}\bigg\{k_BT\left[\D(\xi_{\gbm\eta^{(i)}}\|\gamma)-\D(\xi_{\gbm\eta^{(i)}}\|\rho^{(j)})\right]\\
    &+\Ex\left[k_BT\D(\xi_{K_1}\|\gamma)\bigg|K_0 = \gbm\eta^{(i)},a^{(j)}\right]\bigg\}
\end{split}
\end{equation}
where the expectation is taken over the conditional distribution of $\Pr(K_1|K_0=\gbm\eta^{(i)},a^{(j)})$. Therefore, the action taken by the DP-agent at the first time step will be given by 
\begin{equation}
\begin{split}
\label{eq:action_of_DP}
    a_{1}^*(K_0=\gbm\eta^{(i)})&= \argmax_{a^{(j)}\in\A}\bigg\{-k_BT\D(\xi_{\gbm\eta^{(i)}}\|\rho^{(j)})\\
    &+\Ex\left[k_BT\D(\xi_{K_1}\|\gamma)\bigg|K_0 = \gbm\eta^{(i)},a^{(j)}\right]\bigg\}~.
\end{split}
\end{equation}
The first term represents the expected dissipation of the current action $a_1$, while the second term represents the expectation of future reward. Since the relative entropy is concave, that is, $p\D(\rho_1\|\gamma)+(1-p)\D(\rho_2\|\gamma)\geq \D(p\rho_1+(1-p)\rho_2\|\gamma)$. The first term is convex, and the second term will be concave; this results in a trade-off between the current reward and the future rewards that the DP-agent has to optimize. In general, an action at the end of the process differs from those at the start, since the change in expected future reward becomes significant towards the end. This also depends on the expectation of future rewards; the closer the future reward is to a constant function, the smaller the boundary effect. 

For the case of the perturbed coin, this can be illustrated in Fig.~\ref{fig:analytical_action}.
\begin{figure}
    \centering
    \includegraphics[width = 0.7 \linewidth]{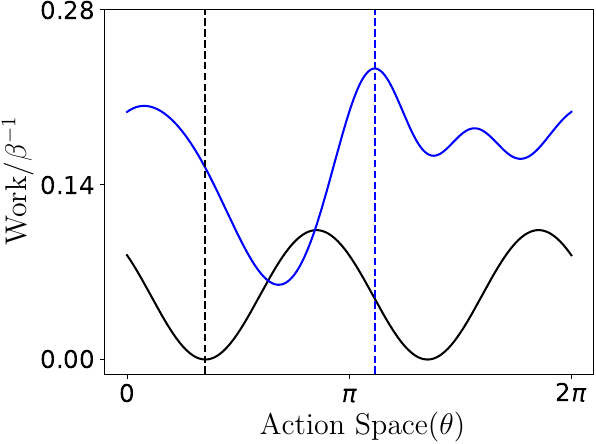}
    \caption{Graph of reward and dissipation, conditioned on the belief state $K = \gbm\pi$. The action space is parametrized by $\theta\in[0,2\pi]$. The blue line represents $V_1(K_0=\gbm\pi)$ represented in Eq.~\eqref{eq:action_of_DP}, the black line represents the dissipation incurred at the second time step, in Eq.~\eqref{eq:last_step}. The blue and black dotted lines correspond to the optimal action taken at $t=1$ and $t=2$, respectively.}
    \label{fig:analytical_action}
\end{figure}
We can see that the action taken in the second time step $a_2$ is chosen such that the dissipation is minimized, whereas $a_1$ is chosen at the point where the value function $V_1$ is maximized. It is evident that for a finite-horizon optimization, the optimal policy will be time-dependent since $a_2^*(\gbm\eta^{(i)})\neq a_1^*(\gbm\eta^{(i)})$.
As previously mentioned, the boundary effect diminishes when the expectation of future rewards becomes constant with respect to actions. In the case of the perturbed coin, the expectation reaches a constant in 3 cases. 
\begin{enumerate}
    \item When $p=0.5$, the process becomes a purely random process. The expected distribution of the future is uniform regardless of current belief. 
    \item When $r=0$, the classical limit where all states are orthogonal, allowing an agent to obtain perfect knowledge of the future regardless of actions and observations.
    \item When $r=1$, trivial limit where process emits identical states, hence regardless of action, the expectation of the future is the same.
\end{enumerate}
An interesting thing to note is that when the boundary effect completely disappears, even for the finite-horizon case, the optimal policy for all time steps becomes the same as the local-optimizing policy, since there is no longer a need to be concerned about the future reward.

On the other hand, far from the end of the process, the optimal policy tends to be time-independent, hence stationary. In this work, we focus on the stationary phase to avoid complications introduced by the boundary effect, as shown above. Furthermore, as we increase the time frame of the process, the boundary effect of the non-stationary phase diminishes. In the stationary phase, the agent's optimal actions are time-independent. As such, the transitions between belief states, conditioned on the optimal actions, essentially form a finite-state Markov chain with a fixed transition matrix. This permits at least 1 class of recurrent belief states. In the event that the transitions are aperiodic and irreducible, then there exists a unique limiting distribution, $\gbm\pi_{\lim}$, over the recurrent belief states. At each time step $t\in[T]$, the work extracted is $W_t$, then the asymptotic work extraction rate can be defined as
\begin{equation}
    w^* =\beta^{-1}\Ex_{\gbm\eta^{(k)}\in\gbm \pi_{\lim}} \left[\D(\xi_{\gbm\eta^{(k)}}\|\gamma_\beta)-\D(\xi_{\gbm\eta^{(k)}}\|\rho_{\Lambda^*(k)})\right]~,
\end{equation}
where the expectation is taken over $\gbm\pi_{\text{lim}}$, and $\rho_{\pi^*(k)}$ is the state that the DP-agent chooses to tailor the protocol to when its belief state is $\gbm\eta^{(k)}$.

\section{Time-ordered Free energy}
We define a new concept, \emph{time-ordered free energy} (TOFE). As discussed in Sec.~\ref{sec:collective}, when the agent is constrained to sequential access and possesses only classical memory, it is generally unable to extract the full non-equilibrium free energy of a multi-time quantum state. This motivates the definition of TOFE: the maximum free energy that can be extracted under a strict time-ordering constraint, where the agent's action at time $t$ can only depend on classical outcomes from time $t-1$. In other words, TOFE characterizes the extractable free energy available to an agent restricted to time-ordered operations.
This setting is precisely that imposed on the dynamic programming (DP). We can therefore define TOFE using the work expression in Eq.~\eqref{eq:max_work},
\begin{equation}
\label{eq:TOFE}
    \F_{\text{TO}}^{(1:T)} \coloneq \max_{\Lambda} \mathbf{W}(T,\Lambda)=\max_{\Lambda}\mathbb{E}_{\Lambda}\sum_{t=1}^T \mathbb{E}(W_t|A_t=\Lambda(K_{t-1}))
\end{equation}
Similarly, we can define the asymptotic rate of TOFE as that achieved by the agent under optimal policy $\Lambda^*$
\begin{equation}
\label{eq:TOFE_rate}
    f_{\text{TO}} \coloneqq \lim_{T\to\infty}\frac{1}{T} \F_{\text{TO}}^{(1:T)} = w^*~.
\end{equation}
For any multi-time state that can be generated by such a classical HMM, investigated in the form of 
\begin{equation}
\label{eq:multi_time}
    \rho^{(1:T)}\coloneqq \sum_{x_{1:T}} \Pr(x_{1:T})\bigotimes_{t=1}^T\sigma^{(x_t)}_{Q_t}~,
\end{equation} 
its non-equilibrium free energy is given by the relative entropy, 
\begin{equation}
   \F_{\text{noneq}}^{(1:T)}\coloneqq \F_\text{eq} + \beta^{-1} \D(\rho^{(1:T)}\|\gamma^{\otimes T})~,
\end{equation}
which sets the ultimate upper bound on any form of work extraction. 
Accordingly, for any bounded, finite-dimensional Hamiltonian, the non-equilibrium free energy rate of a quantum state generated by a stationary quantum process can be defined as 
\begin{equation}
\label{eq:FE_rate}
    f_{\text{noneq}} \coloneqq \lim_{t\to\infty}\frac{1}{t}\mathcal{F}^{(1:t)}_{\text{noneq}}~.
\end{equation}
We would also like to compare the rate of work extraction of a LO agent, one which minimizes the expected immediate dissipation in Eq.~\ref{eq:final_rewards}, $w(\Lambda_\text{LO})$ in Eq.~\ref{eq:rhostar_expectation} as well as its finite-time realization $\mathbf{W}(T,\Lambda_{\text{TO}})$.
Non-equilibrium free energy will always serve as an upper bound, whereas TOFE will always upper bound the performance of an LO agent by definition, since it has been optimized via DP.

\begin{proposition}[Thermodynamic Hierarchy]
\label{prop:hierachy}
For a sequence of temporally correlated quantum states shown in Eq.~\eqref{eq:multi_time}, both the cumulative work extracted as well as the asymptotic rate of work extraction of an adaptive agent with classical memory following different policies achieve the following relationship.
\begin{equation}
\label{eq:hierachy}
\F_{\text{noneq}}^{(1:T)} \geq \F_{\text{TO}}^{(1:T)}=\mathbf{W}(T,\Lambda^*)\geq\mathbf{W}(T,\Lambda_{\text{TO}}), \quad f_{\text{noneq}} \geq f_{\text{TO}}=w(\Lambda^*)\geq w(\Lambda_\text{LO})~,
\end{equation}
where $\Lambda^*$ is the optimal policy and $\Lambda_{\text{LO}}$ is the local-optimizing policy.
\end{proposition}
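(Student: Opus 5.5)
The hierarchy splits into three relations, which I would prove in order of difficulty, each first for finite $T$ and then passed to rates by dividing by $T$ and taking $T\to\infty$.

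\textbf{Equalities.} Both equalities hold by definition plus the optimality theorem for DP. Eq.~\eqref{eq:TOFE} defines $\F_{\text{TO}}^{(1:T)}$ as $\max_\Lambda \mathbf{W}(T,\Lambda)$. The preceding theorem shows that the backward-induction policy $\Lambda^*$ of Alg.~\ref{alg:DP_for_work} attains $\tilde V_1^*$, and $\tilde V_1^*$ is exactly this maximum once averaged over the initial belief $K_0$. Hence $\F_{\text{TO}}^{(1:T)}=\mathbf{W}(T,\Lambda^*)$. For the rate, I would invoke Eq.~\eqref{eq:TOFE_rate}, which identifies $f_{\text{TO}}$ with $w^*=w(\Lambda^*)$.

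\textbf{Right-hand inequalities.} These are immediate. $\Lambda_{\text{LO}}$ is an admissible belief-based policy in the same action set $\A$ (it tailors to $\rho^*=\xi_t$). So $\mathbf{W}(T,\Lambda^*)\geq \mathbf{W}(T,\Lambda_{\text{LO}})$ for every $T$. Dividing by $T$ and taking $\liminf$ or $\lim$ preserves the inequality, giving $w(\Lambda^*)\geq w(\Lambda_{\text{LO}})$. One should remark that both limits exist, for instance via the stationary Markov chain on recurrent belief states. Alternatively, one can work with $\limsup$ and $\liminf$ throughout.

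\textbf{Left-hand inequality, and the main obstacle.} The substantive step is $\F_{\text{noneq}}^{(1:T)}\geq \F_{\text{TO}}^{(1:T)}$. I would argue it thermodynamically. Any classical-causal strategy $\mathbf{S}$ with policy $\Lambda$ is a particular global process acting on $Q_{1:T}RB$ together with a classical memory $M$. Since each $Q_t$ is processed once and the memory is classical, the whole sequence composes into one global operation on the joint input $\rho^{(1:T)}$ that outputs $\gamma^{\otimes T}$ on the systems. The memory is conditioning only on battery readings, and those readings can be treated as part of the battery. The second law for a single bath then bounds the total expected work by $\beta^{-1}\D(\rho^{(1:T)}\|\gamma^{\otimes T})$ plus the $\F_{\text{eq}}$ offset in the paper's convention. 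The delicate points are the following:
\begin{itemize}
\item the memory must start and end in a fixed reference state, or its erasure cost must be counted, so that no free energy is smuggled in from $M$;
\item the adaptive, measurement-conditioned unitaries must be shown to form a legitimate CPTP map with the battery as the sole work sink.
\end{itemize}

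A cleaner alternative route, which I would try first, is purely information-theoretic. Write $\mathbf{W}(T,\Lambda)=\beta^{-1}\sum_t \Ex[\D(\xi_{K_{t-1}}\|\gamma)-\D(\xi_{K_{t-1}}\|\rho_t^*)]\le\beta^{-1}\sum_t\Ex\,\D(\xi_{K_{t-1}}\|\gamma)$. Then show that this sum is at most $\beta^{-1}\D(\rho^{(1:T)}\|\gamma^{\otimes T})$. The steps are:
\begin{enumerate}
\item Use the chain rule $\D(\rho^{(1:T)}\|\gamma^{\otimes T})=\sum_t[\D(\rho_{Q_t}\|\gamma)+I(Q_t;Q_{1:t-1})]$.
\item Note that $\Ex\,\D(\xi_{K_{t-1}}\|\gamma)-\D(\xi_0\|\gamma)$ is a Holevo-type quantity $\chi$ of the ensemble $\{\Pr(k),\xi_k\}$, by the computation in Eq.~\eqref{eq:Nonneg}.
\item Bound this $\chi$ by $I(Q_t;K_{t-1})$, and that in turn by $I(Q_t;Q_{1:t-1})$ via data processing, since $K_{t-1}$ is a classical function of measurements on $Q_{1:t-1}$.
\end{enumerate}
Summing over $t$ gives the finite-$T$ bound. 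Dividing by $T$ and taking the limit, with existence of $f_{\text{noneq}}$ by stationarity, yields $f_{\text{noneq}}\ge f_{\text{TO}}$.
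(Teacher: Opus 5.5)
Your treatment of the two equalities and of the comparison with $\Lambda_{\text{LO}}$ matches the paper. $\F_{\text{TO}}^{(1:T)}$ is by definition a maximum over admissible policies, it is attained by the backward-induction policy, and $\Lambda_{\text{LO}}$ is one admissible policy.

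For $\F_{\text{noneq}}^{(1:T)}\geq\F_{\text{TO}}^{(1:T)}$, the paper gives only a one-line second-law appeal: $\beta^{-1}\D(\rho^{(1:T)}\|\gamma^{\otimes T})$ bounds any work extraction from the joint state. That is your first route. The paper does not address the bookkeeping you flag; in particular, without resetting the classical memory, its initially pure state lets the naive bound grow by up to $\beta^{-1}$ times the entropy of the record.

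Your information-theoretic route is genuinely different, and it is sound. Two points make it work:
\begin{itemize}
\item Since $\Ex\,\xi_{K_{t-1}}=\rho_{Q_t}$, the Holevo quantity is exactly $I(Q_t;K_{t-1})$ for the cq state $\sum_k\Pr(K_{t-1}=k)\ket{k}\!\bra{k}\otimes\xi_k$, so your first inequality is an equality.
\item Data processing is legitimate because the reservoir qubits, battery and memory start uncorrelated with $Q_{1:T}$. Hence the adaptive extraction on $Q_{1:t-1}$ is a channel acting only on the past.
\end{itemize}

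What the route buys is an exact gap identity,
\[
\D(\rho^{(1:T)}\|\gamma^{\otimes T})-\beta\,\mathbf{W}(T,\Lambda)=\sum_{t=1}^{T}\Bigl[\Ex\,\D(\xi_{K_{t-1}}\|\rho_t^*)+I(Q_t;Q_{1:t-1})-I(Q_t;K_{t-1})\Bigr],
\]
which separates two sources of loss. The first term is local misalignment dissipation. The second is the past--present correlation that the classical record fails to capture, which is what the paper's causal dissipation is meant to quantify.

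The route has further advantages. The memory never enters it, so no erasure accounting is needed. If you condition on the full record $O_{1:t-1}$ instead of $K_{t-1}$, the same chain covers arbitrary history-dependent policies. It then no longer relies on belief sufficiency or on the DP discretization.

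Its price is that it is tied to the $\rho^*$-ideal reward \eqref{eq:final_rewards}. The second-law route, once made rigorous, covers any causal strategy.

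One caveat concerns the offset. Compare against $\beta^{-1}\D(\rho^{(1:T)}\|\gamma^{\otimes T})$ itself rather than carrying the $\F_{\text{eq}}$ term in the paper's definition of $\F_{\text{noneq}}^{(1:T)}$. When $\F_{\text{eq}}<0$ (e.g.\ $\mathcal{H}_Q=\omega\id$ with $\omega<\beta^{-1}\ln 2$), the literal inequality fails for an i.i.d.\ source, where $\F_{\text{TO}}^{(1:T)}=\beta^{-1}\D(\rho^{(1:T)}\|\gamma^{\otimes T})$. The rate used in Appendix~\ref{app:free_energy_rate_bound} indeed omits that term.
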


There is generally a separation between the Time-Ordered Free Energy (TOFE) and the true non-equilibrium free energy. A quantity known as thermal discord, introduced in Sec.~\ref{sec:multi-partite}, has long been used to characterize this gap in bipartite systems. Notably, even separable states—those without entanglement—can exhibit non-zero quantum discord. This discord reflects quantum correlations that cannot be accessed through local operations once any form of measurement is performed on earlier subsystems~\cite{ollivier2001quantum,henderson2001classical,zurek2003quantum,brodutch2010quantum}. Prior work has shown that thermal discord can quantify the discrepancy in work extraction between a fully quantum agent and a classical agent restricted to one-way classical communication in bipartite scenarios~\cite{zurek2003quantum}. This notion of thermal discord has been extended to multipartite systems to quantify the work deficit resulting from sequential access~\cite{braga2014maxwell}. However, these formulations do not account for adaptivity—that is, the agent's ability to condition future measurements on past outcomes. In this work, we propose a novel quantity, which we call causal dissipation, to properly quantify this gap.

\subsection{Causal dissipation}
\label{sec:adaptive_discord}
Causal dissipation quantifies the work potential of quantum correlations that are inevitably destroyed by the operations inherent to any time-ordered, sequential protocol~\cite{ollivier2001quantum,henderson2001classical,zurek2003quantum,brodutch2010quantum}. To obtain predictive information, an agent must use a series of adaptive measurements $\vec{\Pi}:=\{\Pi_{Q_{t}|O_{1:t-1}}\}_{t=1}^{N-1}$.
For each step, the entropy of the agent's memory increases by the Shannon entropy of the measurement outcomes:
\begin{equation}
    H\left(p_{\Pi_{Q_t}}|\Pi^{(O_{1:t-1})}_{Q_{1:t-1}}\right)\!\coloneqq\!-\sum_{o_{1:t}}\! p^{(o_{1:t-1})} p^{(o_t)}\!\ln p^{(o_t)},
\end{equation}
with $p^{(o_{1:t-1})}\coloneqq \tr(\Pi^{(o_1)}_{Q_1}\otimes\ldots\Pi^{(o_{t-1})}_{Q_{t-1}|O_{1:t-2}}\rho^{(Q_{1:T})})$ is the probability of obtaining outcome $o_{1:t-1}$, and $H(p_{\Pi_{Q_1}}) \coloneqq -\sum_{o_1}p^{(o_1)}\ln p^{(o_1)}$.

The entropy of the final system, conditioned on previous outcomes, is given by:
\begin{equation}
    S\left(\rho_{Q _T}\Big|\Pi^{(O_{1:T-1})}_{Q_{1:T-1}}\right)\!\coloneqq\!\sum_{o_{1:T-1}}\!p^{(o_{1:T-1})}S(\tilde{\rho}_{Q_T}(o_{1:T-1}))
\end{equation}
where $\tilde{\rho}_{Q_T}(o_{1:T-1})$ is the post-measured state at time $T$. We define causal dissipation as the minimal difference between the total entropy introduced by these measurements and the state's original entropy:
\begin{equation}
\label{eq:adaptive_discord}
    \delta(\overrightarrow{Q_1:Q_T})\coloneqq \min_{\gbm{\overrightarrow{\Pi}}}\sum_{t=1}^{T-1}H\left(p_{\Pi_{Q_t}}|\Pi^{(O_{1:t-1})}_{Q_{1:t-1}}\right)+S\left(\rho_{Q _T}\Big|\Pi^{(O_{1:T-1})}_{Q_{1:T-1}}\right)-S(\rho^{(Q_{1:T})})~.
\end{equation}

\subsection{Properties of causal dissipation}
Here, we try to establish some mathematical properties of causal dissipation.
\begin{enumerate}
    \item $\delta(\overrightarrow{Q_1:Q_T})=0$ when the subsystems are not correlated, i.e., 
        \begin{equation}
            \rho^{(1:T)} = \rho^{\otimes T}~,
        \end{equation} 
        then, by definition of discord, the measurement on the preceding systems will not affect the state of the systems after the measurement, hence the multipartite discord will be $0$. This can be seen when $p=1/2$ for the perturbed coin process, the state takes the form of $\rho^{(1:T)} = \left[\frac{1}{2}(\sigma^{(0)}+\sigma^{(1)})\right]^{\otimes T}$.

    \item $\delta(\overrightarrow{Q_1:Q_T})=0$ if the multipartite state is of the form 
        \begin{equation}
            \rho^{(1:T)} = \sum_{i_1\cdots i_{T}} \Pr(i_1\cdots i_{T})\bigotimes_{t=1}^{T-1}\ket{i_t}\!\bra{i_t}\otimes \rho^{(i_T)}_{Q_T}~,
        \end{equation}
        where $\ket{i_t}$ are orthonormal basis. The agent can always measure in the orthonormal eigenbasis of the first $T-1$ systems without disturbing the succeeding system. Note that this can be viewed as a stricter constraint compared to the bipartite analogue of a classical-quantum state. This property also demonstrates the \emph{asymmetry} of causal dissipation, i.e. $\delta(\overrightarrow{Q_1:Q_T})\neq\delta(\overleftarrow{Q_1:Q_T})$.
    \item $\delta(\overrightarrow{Q_1:Q_T})\neq0 \nRightarrow \lim_{T\to\infty}\frac{1}{T}\delta(\overrightarrow{Q_1:Q_T})\neq0$, i.e.,  the \emph{rate} of causal dissipation can asymptotically approach 0, even though the cumulative dissipation for a finite time step is non-zero. To illustrate this, we consider the state produced by the perturbed coin in the case where $p=0$, the quantum will take the following form.
    \begin{equation}
        \rho^{(1:T)} = \frac{1}{2}\sigma^{(0)\otimes T} + \frac{1}{2}\sigma^{(1)\otimes T},
    \end{equation}
    where $\sigma^{(0)}$ and $\sigma^{(1)}$ are not orthogonal. Clearly, if we just consider the case where $T$ is not too large, the multipartite discord would not be 0, since $\sigma^{(0)}$ and $\sigma^{(1)}$ do not commute, and any measurement will disturb the systems after the measurements. Conceptually, discord arises precisely because measurement on one subsystem produces indistinguishable conditional states on the other subsystems. Unlike the finite $T$, when we consider the asymptotic limit, it is always possible for the agent to distinguish the 2 states, $\sigma^{(0)\otimes T}$ or $\sigma^{(1)\otimes T}$. Further measurements no longer contain any useful information, resulting in rate of discord approaching 0 in the limit of $T\to\infty$, the rate of which the rate of discord approaching 0 \emph{could} be dependent on the error probability of the agent misidentifying the quantum state $\sigma^{(0)\otimes T}$ from $\sigma^{(1)\otimes T}$, which can be upper bounded by the quantum Chernoff bound:
    \begin{equation}
        P_{e,T} = \exp(-T\xi_{QCB}), \quad \xi_{QCB}=-\log(\min_{0\leq s\leq1}\tr(\rho^s\sigma^{1-s}))~.
    \end{equation}
    where $P_{e,T}$ is the error probability of misidentification, here we take $\rho,\sigma=\sigma^{(0)\otimes T}, \sigma^{(1)\otimes T}$. Note that this a loose bound as it assumes collective measurement is allowed, if $\rho$ and $\sigma$ are very close to each other, the Chernoff exponent $\xi_{QCB}$ will be very small, indicating a slow decay rate for the rate of discord, whereas for if the 2 states are far apart, the exponent will be large, resulting in a quick convergence to 0.
\end{enumerate}
\begin{proof}
We provide a detailed proof for Property 3 of causal discord. The proofs for Properties 1 and 2 are omitted here, as they follow directly from the definition of causal dissipation, utilizing reasoning analogous to that of bipartite quantum discord.

Formally, we wish to demonstrate that
\begin{equation}
    \lim_{T\to\infty} \delta(\overrightarrow{Q_1:Q_T}) = 0~,
\end{equation}
even though multipartite discord for any finite time steps is clearly non-zero. We can move around the terms to obtain that
\begin{equation}
    \lim_{T\to\infty} \min_{\overrightarrow{\gbm\Pi}}\sum_{i=1}^{T-1}H(p_{\Pi_{Q_i}}|\Pi_{Q_{1:i-1}},O_{1:i-1}) + S(\rho_{Q_T}|\Pi_{Q_{1:i-1}},O_{1:i-1}) = \lim_{T\to\infty} S(\rho^{(1:T)})~,
\end{equation}
where the LHS is the entropy rate of the quantum state.
Here we consider the case where the HMM has parameters of $p=0,1$ to demonstrate. 
First, we realize that in both cases where $p=0$ or $p=1$, the total quantum state can be written as 
\begin{equation}
    \rho^{(1:T)} = \frac{1}{2}\rho_0 + \frac{1}{2}\rho_1~,
\end{equation}
where $\rho_0,\rho_1 = \sigma^{(0)\otimes T},\sigma^{(1)\otimes T}$ when $p=0$  and $\rho_0,\rho_1 = (\sigma^{(0)}\otimes\sigma^{(1)})^{\otimes T/2}, (\sigma^{(1)}\otimes\sigma^{(0)})^{\otimes T/2}$ when $p=1$.

We first prove that the actual entropy rate of the total quantum state tends to 0.
Here, we consider a quantity 
    \begin{equation}
        \chi_T = S(\rho^{(1:T)})-\frac{1}{2}S(\rho_0)-\frac{1}{2}S(\rho_1)~,
    \end{equation}
this quantity is known as the Holevo information or Holevo quantity,  which is upper-bounded by the classical description of the ensemble, i.e., $0\leq \chi_T\leq H_2(\frac{1}{2})$ since both states appear with $1/2$ probability, $H_2$ here is the binary entropy. 
We then rewrite the entropy rate with this quantity.
\begin{equation}
    h \coloneqq \lim_{T\to\infty}\frac{1}{T}S(\rho^{(1:T)}) = \lim_{T\to\infty}\frac{\chi_T}{T} + \frac{1}{2T}S(\rho_0)+\frac{1}{2T}S(\rho_1)~.
\end{equation}
Since the Holevo information is bounded, this results in
\begin{equation}
    h = \lim_{T\to\infty}\frac{1}{2T}S(\rho_0)+\frac{1}{2T}S(\rho_1)~.
\end{equation}
For our case, this admits a closed form solution, and equals 0 in the event both $\sigma^{(0)}$ and $\sigma^{(1)}$ are pure. 

Now we have to prove that 
\begin{equation}
    \lim_{T\to\infty} \frac{1}{T}\min_{\overrightarrow{\gbm\Pi}}\sum_{i=1}^{T-1}H(p_{\Pi_{Q_i}}|\Pi_{Q_{1:i-1}},O_{1:i-1}) + S(\rho_{Q_T}|\Pi_{Q_{1:i-1}},O_{1:i-1}) =0
\end{equation}
We explicitly write out the definition of the terms
\begin{equation}
\label{eq:branch_entropy}
    H(p_{Q_i}|\Pi_{Q_{1:i-1}|O_{1:i-1}}) = -\sum_{o_{1:i-1}} \Pr(O_{1:i-1}=o_{1:i-1}|\Pi_{Q_{1:i-1}}) \sum_{o_i}p^{(o_i)}\ln p^{(o_i)}~,
\end{equation}
and 
\begin{equation}
    S(\rho_{Q_T}|\Pi_{Q_{1:T-1}},O_{1:T-1}) = \sum_{o_{1:T-1}}\Pr(O_{1:T-1}=o_{1:T-1}|\Pi_{Q_{1:T-1}})S(\rho_{Q_T|\Pi_{Q_1:Q_{T-1},O_{1:T-1}}})~.
\end{equation}
Both of these expressions are positive, which implies they must all converge to 0 independently. 

From results in symmetric state discrimination, given 2 pure quantum states $\rho$ and $\sigma$ with equal prior, one may distinguish them asymptotically. I.e., there exist a sequence of measurements $\{\Pi_1^{(i_1)},\Pi_2^{(i_2)},\ldots,\Pi_T^{(i_T)}\}$ such that
\begin{equation}
\lim_{T\to\infty}\tr(\Pi_1^{(i_1)}\otimes\Pi_2^{(i_2)}\ldots\otimes\Pi_T^{(i_T)} \rho^{\otimes T}) = 1 
\end{equation}
and 
\begin{equation}
\lim_{T\to\infty}\tr(\Pi_1^{(i_1)}\otimes\Pi_2^{(i_2)}\ldots\otimes\Pi_T^{(i_T)} \sigma^{\otimes T}) = 0 
\end{equation}
In fact, we can measure along the eigenbasis of one of the states, such that $\tr(\Pi\rho) = 1$ and $\tr(\Pi\sigma)=r$. As long as the measurement result correspond to $\Pi$, we keep measuring in that basis, on the other hand, if we observe an outcome corresponding to $\id-\Pi$, we choose to measure along the eigenbasis of the other quantum state, i.e. $\Pi'$ such that $\tr (\Pi'\sigma)=1,\tr(\Pi'\rho)=r$. We now focus on the case where we observe the outcome corresponding to $\Pi$ for the first $k$ times, and the post-measured state will be
\begin{equation}
\label{eq:post_state}
\begin{split}
    \tilde\rho^{(1:T)} = \frac{(\Pi^{\otimes k}\otimes \id^{\otimes(T-k)})\rho^{(1:T)})}{\tr((\Pi^{\otimes k}\otimes \id^{\otimes(T-k)})\rho^{(1:T)})} &= \frac{\frac{1}{2}\rho^{\otimes T}+\frac{1}{2}r^k\rho^{\otimes T}\otimes \sigma^{\otimes (T-k)}}{\frac{1}{2}+\frac{1}{2}r^k}\\
    &= \frac{1}{1+r^k} \rho^{\otimes T} + \frac{r^k}{1+r^k}\rho^{\otimes k}\otimes \sigma^{\otimes (T-k)}
\end{split}
\end{equation}
In the limits $T\to\infty$ and $k\to\infty$, $\tilde\rho^{(1:T)}$ becomes increasingly purified to become $\rho^{\otimes T}$, the probability $p^{(o_i)}$ appearing in Eq.~\eqref{eq:branch_entropy} converges to 1 in this regime. In fact, the moment an observation corresponding to $\id-\Pi$ appears, the agent already knows that the state \emph{cannot} be $\rho$, subsequent measurements in the eigenbasis of $\sigma$ would result in $p^{(o_i)}$ equal to 1. Formally, one can see that
for Eq.~\eqref{eq:branch_entropy}, after measuring $k$ outcomes the $k+1$-th measurement will yield an entropy of
\begin{equation}
    H(p_{Q_{k+1}}|\Pi_{Q_{1:k}|O_{1:k}}) = \frac{1+r^k}{2}H_2\left(\frac{1+r^{k+1}}{1+r^k}\right) + \frac{1-r^k}{2}H_2(1)
\end{equation}
The first term represents the sequence of outcomes corresponding to only $\Pi$, while the second term represents all other sequences that involve at least 1 measurement $\id-\Pi$. As $k\to\infty$, the term $\frac{1+r^{k+1}}{1+r^k}\to 1$, resulting in the whole expression converging to 0.
This then implies that 
\begin{equation}
    \lim_{T\to\infty}\frac{1}{T}\sum_{i=1}^{T-1}H(p_{\Pi_{Q_i}}|\Pi_{Q_{1:i-1}},O_{1:i-1}) = 0
\end{equation}
for some measurement, which is sufficient since this achieves the theoretical minimum.

Using similar measurement strategy, the final conditional von Neumann entropy $S(\rho_{Q_T|\Pi_{Q_1:Q_{T-1},O_{1:T-1}}})$ will just be given by $S(\rho)$ or $S(\sigma)$ since the identity of the final subsystem can be obtain unambiguously in the asymptotic limit. Formally, the final conditional state will be given by 
\begin{equation}
\begin{split}
    \lim_{T\to\infty} \frac{1}{T}\min_{\overrightarrow{\gbm\Pi}} S(\rho_{Q_T}|\Pi_{Q_{1:T-1}},O_{1:T-1}) &= \lim_{T\to\infty} \frac{1+r^{T-1}}{2}S(\tilde\rho^{(T)})+\frac{1-r^{T-1}}{2}S(\sigma)\\
    &= \frac{1}{2}S(\rho)+\frac{1}{2}S(\sigma)~,
\end{split}
\end{equation}
where $\tilde\rho^{(T)} = \tr_{/T}\tilde\rho^{(1:T)}$ is the reduced state of the final system post-measurement as in Eq.~\eqref{eq:post_state} as $T,k\to\infty$.

For our scenario, since both $\rho$ and $\sigma$ are pure,
\begin{equation}
  \lim_{T\to\infty} \frac{1}{T}\min_{\overrightarrow{\gbm\Pi}} S(\rho_{Q_T}|\Pi_{Q_{1:T-1}},O_{1:T-1}) = 0  
\end{equation}
This completes the proof.
\end{proof}

\section{Case study:  Perturb-coin}
For illustration purposes, we implemented DP on a simple process of ``Perturbed Coin" as illustrated in Fig.~\ref{fig:PC_diagram}(a). Without loss of generality, the two emitted states $\sigma^{(0)}$ and $\sigma^{(1)}$ are assumed to be pure and have fidelity, $F(\rho,\sigma)=\left(\tr\sqrt{\sqrt{\rho}\sigma\sqrt{\rho}}\right)^2$ of $r$. We once again adopt the work extraction protocol proposed in~\cite{skrzypczyk2014work}, an example of the \emph{$\rho^*$-ideal extraction protocol}. Under the assumption of a degenerate Hamiltonian, this protocol also belongs to the family of thermal operations. When it is tailored to a state $\rho = \sum_i \lambda_i\ket{\lambda_i}\bra{\lambda_i}$ and applied onto a state $\sigma$, the resultant probability distribution takes the form.
\begin{gather}
    \label{eq:work_val}w^{(i)} \coloneqq  \beta^{-1}(\ln{\lambda_i}+\ln2)\\
    \label{eq:work_prob}\Pr(W=w^{(i)}) = \bra{\lambda_i}\sigma\ket{\lambda_i}~,
\end{gather}
where $W$ is the random variable of the work extracted and $\{w^{(i)}\}_i$ are the realizations. The DP-agent will extract the free energy based on the optimal policy, $\pi^*$, whereas the LO-agent will execute the action by minimizing the expected dissipation term.

In order for us to compare the free energy and the TOFE, we have to be able to compute the free energy, which may appear as a trivial task, but it involves finding eigenvalues of a Hermitian matrix of $2^T\times2^T$ dimension, a task that scales exponentially with $T$. Instead, we focus on finding the rate of non-equilibrium free energy per time step in Eq.~\eqref{eq:FE_rate}. Once again, calculating the von Neumann entropy rate, $s_{vN}\coloneqq\lim_{T\to\infty}\frac{1}{T}S(\rho^{1:T})$ is difficult due to eigen-decomposition, but we can use the data-processing inequality to derive a lower bound on it. Here we demonstrate how the upper bound on the entropy rate can be achieved; readers interested in the lower bound may refer to Appendix~\ref{app:free_energy_rate_bound} for a detailed exposition.

\subsection{Lower bound on free energy rate}
\label{sec:FE_lowerbound}
First, notice that for any multi-time state $\rho^{(1:T)}$, we can rewrite it into a bi-partite state $\rho_{AB}$ where $B$ represents all the subsystems in the past time $Q_1,Q_2,\ldots,Q_{T-1}$ and $A$ represents the subsystem at $Q_T$. The joint entropy can then be written as
\begin{equation}
    S(\rho_{AB}) = S(\rho_B) + S(A|B)_{\rho}~,
\end{equation}
if $s_{vN}$ does indeed exist, then as $T\to\infty$, $S(\rho_{AB})=Ts_{vN}$ and $S(\rho_{B})=(T-1)s_{vN}$ which leaves us with 
\begin{equation}
    s_{vN} = S(A|B)_{\rho}~.
\end{equation}
Alternatively, we can express it as follows: the purpose will become clear. 
\begin{equation}
    s_{vN} = S(\rho_A)-I(A;B)_\rho=S(\rho_A)-\D(\rho_{AB}\|\rho_A\otimes\rho_B)~.
\end{equation}
With this, we can apply the data-processing inequality, which states that given any 2 quantum states $\rho$ and $\sigma$ and any quantum channel $\mathcal{E}$,
\begin{equation}
\label{eq:data_processing}
    \D(\rho\|\sigma)\geq \D(\mathcal{E}(\rho)\|\mathcal{E}(\sigma))~.
\end{equation}
We let be $\mathcal{E}:B\mapsto D$ be some channel that acts only on $B$ leaving $A$ unchanged to obtain
\begin{equation}
    \sigma_{AD} = (\id_A\otimes\mathcal{E}_B)(\rho_{AB})~.
\end{equation}
Then by Eq.~\eqref{eq:data_processing}, we have that
\begin{equation}
    \begin{split}
        S(A|D)_{\sigma} &= S(\sigma_A) - \D(\sigma_{AD}\|\sigma_A\otimes\sigma_D)\\
        &=S(\rho_A)-\D((\id_A\otimes\mathcal{E}_B)(\rho_{AB})\|(\id_A\otimes\mathcal{E}_B)(\rho_A\otimes\rho_B))\\
        & \geq S(\rho_A) - S(\rho_{AB}\|\rho_A\otimes\rho_B)\\
        &=S(A|B)_{\rho}~,
    \end{split}
\end{equation}
this provides an upper bound on the entropy rate. Now, we have to find a meaningful channel which reduces $S(A|D)_{\sigma}$ so the bound is not too loose. One way to do this is via a POVM,$\{\mathcal{M}_B^{(i)}\}_i$, on $B$, also a valid quantum channel. This can be interpreted as finding a measurement such that the uncertainty of the quantum state in the next time step is as low as possible. Since the state is generated by an HMM, the best way to predict the next emitted state would be to consider the states in the past, i.e., the subsystems in $B$ as much as possible. An obvious choice would be to use Helstrom measurements; this may or may not be optimal, but it is good enough for us. 

Based on the structure of the HMM, this boils down to distinguishing all multi-time states ending from $\sigma^{(0)}$ from those of $\sigma^{(1)}$. For the case of the perturbed coin, it boils down to distinguishing the following states,
\begin{equation}
    \tau_{t}^{(s)} = (1-p)\tau_{t-1}^{(s)}+p\tau_{t-1}^{(s)}\quad s\in \{0,1\},\quad \tau_1^{(s)}=\sigma^{(s)}~.
\end{equation}
This implies that the original state $\rho_{AB}$ can be expressed as
\begin{equation}
    \rho^{(t+1)}_{AB}=\sum_{s}\gbm\pi(s)\tau_{t,B}^{(s)}\otimes\xi_A^{(s)}
\end{equation}
where $\gbm \pi$ is the stationary distribution of the HMM, $\xi^{(s)}$ is the expected state conditioned on the latent state of the HMM, specifically,
\begin{equation}
    \xi^{(s)} = (1-p)\sigma^{(s)}+p\sigma^{(1-s)}~.
\end{equation}
The application of POVM $\{\mathcal{M}^{(i)}\}_i$ on to $\rho_{AB}$ results in a post-measurement state
\begin{equation}
\begin{split}
\label{eq:classic_quantum}
    \eta^{(t+1)}_{AD} &= \sum_{s,y}\gbm\pi(s)\tr(M^{(y)}\tau_t^{(s)})\ket{y}\!\bra{y}_D\otimes \xi^{(s)}_A\\
    &= \sum_yp^{(y)}_t\ket{y}\!\bra{y}_D\otimes\chi_A^{(y)}
\end{split}
\end{equation}
where $p^{(y)}_t $ is the probability distribution of the measurement and index $s$ now run through the latent states of the HMM as well as the last emitted state. Note that this can only be done due to the unique dynamic of the Perturbed Coin. $\chi_t^{(y)}$ is the post-measured state in system $A$ associated with outcome $y$. 
The classical-quantum form of Eq.~\eqref{eq:classic_quantum} allows us to express the resultant upper bound as 
\begin{equation}
    S(A|B)_{\rho} \leq S(A|D)_{\eta} \leq \sum_yp^{(y)}_tS(\chi_t^{(y)})~,
\end{equation}
The equality of the second inequality can be satisfied with $t\to\infty$, but this provides a tight bound even for small $t$s.
This then allows us to calculate the lower bound on $f_{\text{noneq}}$. The asymptotic rate of non-equilibrium free energy and that of TOFE can then be compared. 

\subsection{Comparison}
We begin by analyzing the gap between the true non-equilibrium free energy rate and the time-ordered free energy (TOFE) rate. We will make use of the lower bound obtained in Sec.~\ref{sec:FE_lowerbound}. As illustrated in the right panel of Fig.~\ref{fig:LOvsDP},  $f_{\text{TO}}$ coincides with $f_{\text{noneq}}$ in regions where temporal correlations are trivial—namely, when $p\in\{0,1,1/2\}$, or $r\in\{0,1\}$. The $r=0$ case also recovers the results in the classical limit in~\cite{garner2017thermodynamics}. Outside these regions, a nonzero gap emerges. 
\begin{figure*}
    \centering
    \includegraphics[width=\linewidth]{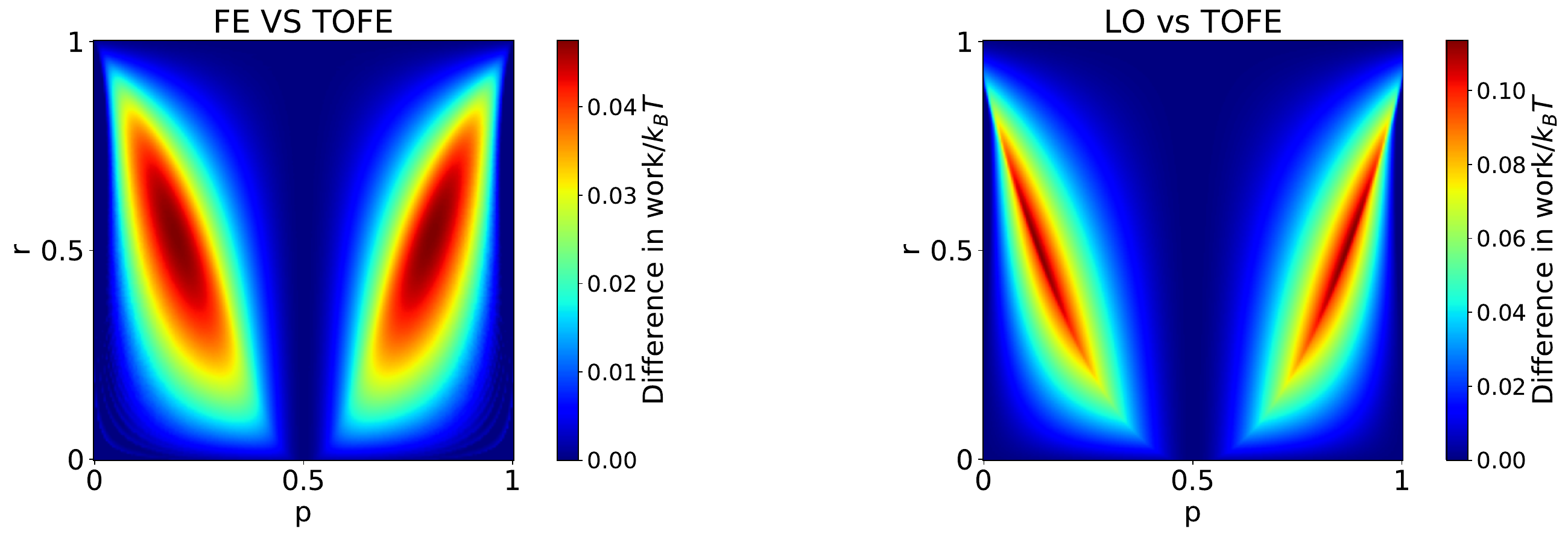}
    \caption{Comparison of asymptotic work extraction rate. The left panel shows the difference between the non-equilibrium free energy rate in Eq.~\eqref{eq:FE_rate} and the asymptotic TOFE rate in Eq.~\eqref{eq:TOFE_rate}. The right panel shows the difference between the asymptotic TOFE rate and the asymptotic work extraction rate of an LO-agent that just aims to minimize immediate expected dissipation.}
    \label{fig:LOvsDP}
\end{figure*}

We next compare the performance of the LO-agent and the DP-agent. As seen in the right panel of Fig.~\ref{fig:LOvsDP}, the most substantial improvement by the DP-agent occurs in the previously identified region where the LO-agent performance was not enhanced by memory---the \emph{memory-apathetic} region, as we discussed in \cref{ch:3}. This demonstrates that the DP-agent exhibits greater predictive power than the LO-agent; this is achievable only by permitting non-zero expected local dissipation. In contrast, within the navy blue region, even the DP-agent loses its advantage and performs equivalently to the LO-agent. This region corresponds to processes that are intrinsically hard to predict based on past observations, either due to high fidelity between the two quantum states, high stochasticity in the process, or both.

\begin{figure}
    \centering
    \includegraphics[width=1\linewidth]{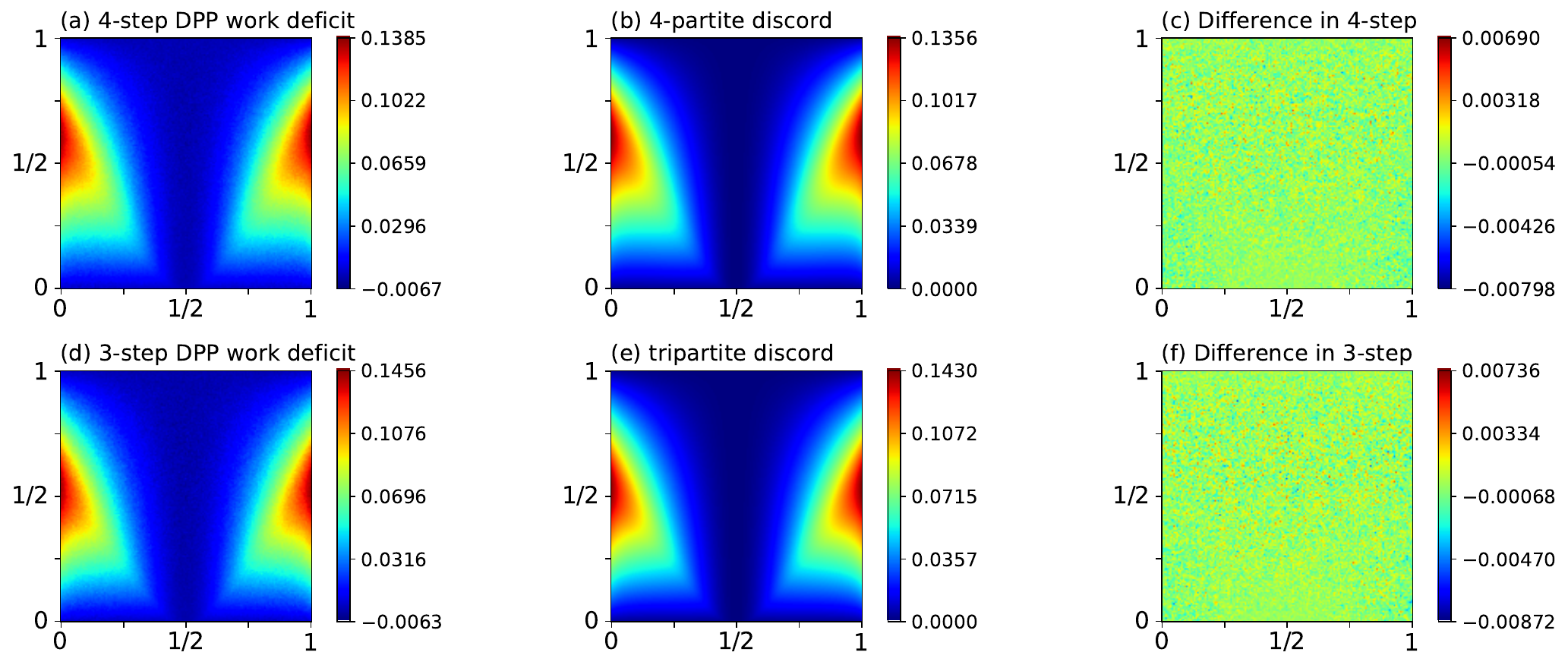}
    \caption{Comparison between simulated work extraction using dynamic programming (DP) and the analytical adaptive multipartite discord defined in Eq.~\eqref{eq:adaptive_discord}. The top row corresponds to a four-subsystem state, while the bottom row corresponds to a tripartite system. Panels (a) and (d) show the simulated work deficit under the optimal adaptive policy. Panels (b) and (e) show the corresponding analytical values of the adaptive multipartite discord. Panels (c) and (f) display the pointwise difference between the work deficit and the discord.}
    \label{fig:enter-label}
\end{figure}
As discussed in Sec.~\ref{sec:adaptive_discord}, we compare the work deficit incurred when an agent extracts work using the optimal adaptive policy to the analytical value of the adaptive multipartite discord defined in Eq.~\eqref{eq:adaptive_discord}. As shown in panels (c) and (f), the difference between the two quantities appears to be dominated by stochastic noise without any clear structure. This observation supports our conjecture that the adaptive multipartite discord provides a good quantitative description of the gap between the time-ordered free energy (TOFE) and the true non-equilibrium free energy, at least for finite systems.

\subsection{Understanding the policy} 
Above all, we wish to understand the physical meaning of the optimal policy, $\gbm\pi^*$---specifically, how the agent’s belief state relates to its optimal action. To this end, we examine the difference between the expected state $\xi_{\gbm\eta^{(k)}}$ and the tailored state $\rho_{\gbm\pi^*(k)}$ associated with a given belief state $\gbm\eta^{(k)}$. This comparison is made via their Bloch vectors, defined as:
\begin{equation}
    \label{eq:bloch_vect}
    \vec{b} = \begin{pmatrix}
        x\\y\\z
    \end{pmatrix},\hquad x= \tr(X\rho),\hquad y= \tr(Y\rho), \hquad z= \tr(Z\rho)~,
\end{equation}
where $X,Y,Z$ are the 3 Pauli matrices. The Bloch vectors with varying belief states can be found in Fig.~\ref{fig:compare_action}. 
\begin{figure}
    \centering
    \includegraphics[width=\linewidth]{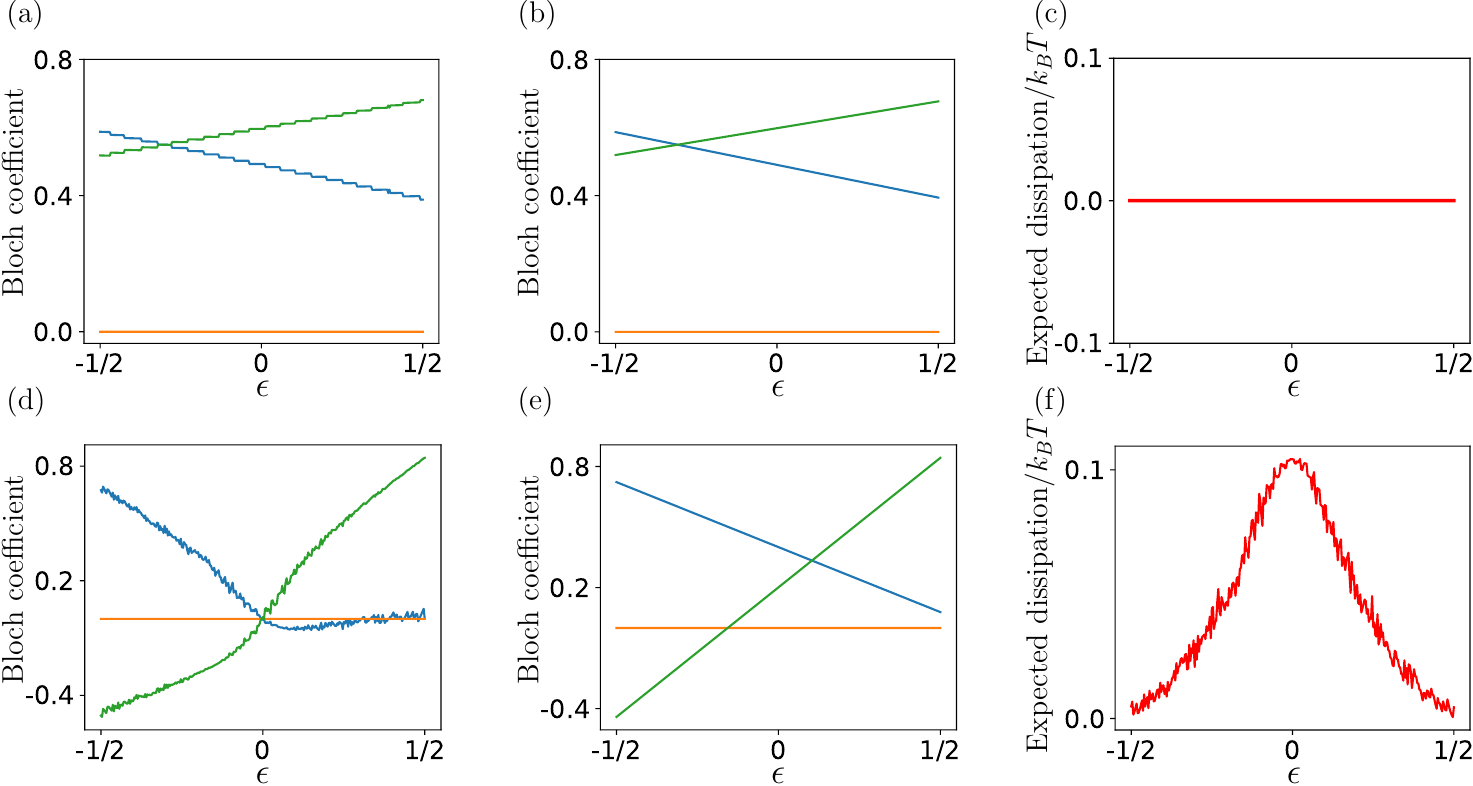}
    \caption{Comparison of Bloch vectors of expected states in panel (b) and (c) with tailored states in panel (a) and (d) under different parameters against varying belief states parametrized by $\gbm\eta=(1/2+\epsilon,1/2-\epsilon)$. Panel (a)(b)(c) has parameter $p=0.6,r=0.6$, panel (e)(d)(f) has parameter $p=0.9,r=0.2$. Panels (c) and (f) show the expected dissipation. The orange line represents the $Y$-component, blue the $X$-component, and red the $Z$-component.}
    \label{fig:compare_action}
\end{figure}
We begin by analyzing the central navy blue region in Fig.~\ref{fig:LOvsDP}(c), where the DP-agent shows no advantage over the LO-agent (Fig.~\ref{fig:compare_action}(a), (b)).  The Bloch vectors of the expected states closely match those of the tailored states. Plotting the local expected dissipation as a function of the belief state (Fig.~\ref{fig:compare_action}(c)) confirms this: the dissipation is nearly zero across all belief values. This implies that the DP, LO, and memoryless agents all behave similarly here, as the underlying process is too stochastic for any agent to gain a predictive advantage. In such cases, minimizing local dissipation is the best possible strategy. In contrast, in the regime where the DP-agent outperforms the LO-agent (Fig.~\ref{fig:compare_action}(d),(e)), the Bloch vectors of the tailored and expected states no longer align clearly. Yet, the plot of local expected dissipation reveals a distinctive trend (Fig.~\ref{fig:compare_action}(f)).  At $\gbm\eta=(1/2,1/2)$, representing maximal uncertainty, the local dissipation peaks, indicating that the agent strategically sacrifices immediate work extraction to gain more in future steps. As $\gbm\eta$ approaches $(1, 0)$ or $(0, 1)$, the local dissipation drops to zero, showing that the agent becomes increasingly confident about the next state and reverts to minimizing immediate dissipation. This behavior highlights the temporal foresight encoded in the optimal policy; it dynamically balances present and future rewards depending on belief certainty.

\section{Comment on robustness of result}
As established in this chapter and Chapter~\ref{ch:3}, the average work extracted by the agent is primarily governed by its adopted policy. Yet, the overall efficacy of the extraction process relies equally on the broader classical-causal strategy detailed in Sec.~\ref{sec:classical_causal}. Modifying the set of allowed operations—for example, generalizing beyond thermal operations—or restricting the agent's memory capacity to limit its predictive capability will correspondingly shift the thermodynamic yield. Despite these potential variations, provided that the four defining components of the strategy are rigorously specified, dynamic programming remains a robust mathematical tool for evaluating the average extractable work.

It is important to note, however, that our analysis currently assumes the underlying state evolution proceeds independently of the agent's interventions. Generalizing this framework to scenarios where actions actively disturb future dynamics necessitates the formalism of true quantum stochastic processes, such as quantum process tensors or combs. We explore this open trajectory further in \cref{ch_conclusions}

\section{Summary}
By integrating reinforcement learning techniques with the agent-based belief framework from computational mechanics, we establish a fundamental upper bound on the rate of work extraction from temporally correlated quantum states generated by classical hidden Markov models (HMMs). This approach leads us to define the concept of \emph{Time-Ordered Free Energy} (TOFE), which captures causal constraints in a manner not accounted for by conventional free energy. We demonstrate a distinct separation between TOFE and the actual free energy of the quantum state—--one that we conjecture to be quantified by a new measure of discord in the multipartite regime. Furthermore, we elucidate the physical intuition underlying the optimal control policy that governs the agent’s actions. Our results deepen the understanding of sequential processing in non-i.i.d. temporal quantum systems and provide a classical benchmark against which quantum agents can be evaluated in comparable tasks. Although the current optimization algorithm used in dynamic programming scales exponentially with the discretization of the belief and action spaces, future improvements may be achievable by leveraging quantum dynamic programming to mitigate this computational overhead.

%% file: Chapters/Chapter6.tex

\chapter{Learning of unknown state} 
\label{ch:learning}
\noindent\rule{\textwidth}{0.4pt}
\vspace{1.5em} 
\setlength{\parindent}{4ex}

\noindent \textit{In this chapter, we explore a novel setting for work extraction. Instead of considering a sequence of temporally correlated quantum systems, we focus on an i.i.d. sequence of unknown pure quantum states $\psi$. and ask if an agent without any prior knowledge is able to extract work from such a sequence. We investigate whether an agent, with no prior knowledge of the state, can simultaneously extract work and learn the underlying quantum state. Remarkably, we demonstrate that not only is this dual task feasible, but the adaptive strategy employed by the agent also leads to an exponential reduction in thermodynamic dissipation compared to tomographical approaches.}

\newpage

So far, we have been focusing on extracting free energy from temporally correlated quantum systems with the assumption that the agent knows the full description of the generator of these quantum systems. This assumption is not necessarily realistic. Rather than enforcing on the agent knowing the full description of the underlying stochastic generator, is it possible for the agent, without the knowledge of the process, to extract work and simultaneously learn the dynamics? This question may be a lot to tackle, and hence we start with an easier question. Suppose we are given an i.i.d sequence of unknown pure quantum states of length $N$, i.e.,
\begin{equation}
\rho^{(1:N)} = \rho^{\otimes N}~.  
\end{equation}
\emph{Is it possible to extract work while simultaneously learn the identity of the unknown quantum state, $\rho$}? 
While numerous protocols have been proposed for work extraction in quantum settings \cite{allahverdyan2004maximal,aaberg2013truly,brandao2013resource,skrzypczyk2014work,elouard2018efficient}, the majority assume that the agent has full knowledge of the quantum state---these are so-called \emph{state-aware} protocols. In practice, we may not always know how the state is prepared, leading us to the \emph{state-agnostic} scenario at hand. In this case, how can we design a protocol for work extraction?

A natural strategy might involve first performing quantum state tomography to estimate the unknown state, followed by work extraction based on this estimate \cite{vsafranek2023work,watanabe2024black,watanabe2025universal}. Indeed, knowledge of the quantum state is essential for efficient work extraction; consider, for example, Szilard's engine, where the information about the system's configuration---such as the position of a particle---enables work to be extracted \cite{szilard1929entropieverminderung}. However, with only finitely many samples available, any estimate of the true state has statistical uncertainty~\cite{o2016efficient,haah2016sample}, resulting in unavoidable heat dissipation during work extraction \cite{riechers2021initial}. Furthermore, quantum systems that are measured during state tomography are no longer available for work extraction and therefore contribute to irreversible heat dissipation. Combining these observations, we will show that any two-step procedure that first uses tomography to inform an efficient work extraction procedure will lead to a cumulative dissipation that scales at least as $\Omega(\sqrt{N})$ in the number $N$ of copies of the unknown state.
Such strategies are only minimally adaptive, though, and this raises a natural question: \emph{Can more adaptive strategies, which simultaneously balance learning and work extraction, offer better performance in terms of cumulative dissipation?} We answer this in the affirmative, showing an exponential improvement for pure qubit states. This result builds on a novel connection between adaptive quantum control and classical reinforcement learning~\cite{lattimore2020bandit}, which we believe to be of independent interest. More precisely, we show that the dissipation in our framework can be interpreted as the \emph{regret} incurred during the learning of pure quantum states, formulated within a multi-armed bandit setting~\cite{lumbreras22bandit,lumbreras24pure}.
 
\section{Multi-armed bandit}
\label{sec:MAB}
A classical framework for adaptive decision-making in environments with unknown stochastic dynamics is the multi-armed bandit (MAB) problem. In its simplest form, the discrete classical MAB consists of a finite set of actions (or arms) and an unknown environment.
At each time step, $t$, the agent selects an action, $a_t\in\A$, and a corresponding reward drawn from a probability distribution conditioned on both the action and the environment.

We consider a quantum generalization of this framework, in which both the action space and the environment possess quantum structure. Specifically, actions correspond to a finite set of quantum observables, while the environment is characterized by an unknown quantum state. This leads us to the following formal definition:
\begin{definition}[Quantum multi-armed bandit]
A $d$-dimensional \emph{discrete multi-armed quantum bandit} is defined over a finite set $\A\in\mathcal{O}_d$ of observables that we denote as the \emph{action set}. the bandit is in an \emph{environment}, an unknown quantum state $\rho$ that is taken from a set of potential environments $\Gamma\in \mathcal{S}_d$. The bandit problem is fully characterized by the tuple $(\A,\Gamma)$.
\end{definition}
A policy in this framework is a sequence of conditional probability distributions $\pi=\{\pi_t\}_{t\in\mathbb{N}}$ over the action set, defined by Eq.~\ref{eq:general_policy}.
This quantum bandit model is particularly relevant in quantum state tomography, where the actions typically correspond to choices of measurement bases, and the outcomes $o_t\in\{0,1\}$ represent the binary measurement results. We are particularly interested in the case where the unknown environment is characterized by a single unknown pure state, $\psi=\ket{\psi}\!\bra{\psi}$, and all the measurements are projective. In that case, the measurement outcomes at time $t$, using measurement $\Pi_{a_t}$ is distributed according to:
\begin{equation}
    \Pr(o_t|\Pi_{a_t})=\begin{cases}
        \bra{\psi}\Pi_{a_t}\ket{\psi}, \hquad\hquad\hquad \text{if}\hquad x=1\\
        1-\bra{\psi}\Pi_{a_t}\ket{\psi} \hquad \text{if}\hquad x=0\\
        0\quad\quad \quad\quad\quad\quad\hquad\text{else}
    \end{cases}
\end{equation}
where $\Pi_{a_t}$ correspond to the projective measurement associated with the action $a_t$.
Other than the definitions and interactions between the agent and the multi-armed bandit, we are required to define the objective of the agent. In this thesis, we focus on the reward maximization over a finite time horizon $T\in\mathbb{N}$. In particular, we aim to minimize a metric in reinforcement learning called \emph{cumulative regret} over the course of $T$ steps. Formally, the cumulative regret is defined as
\begin{equation} 
    R_T(\A,\psi,\pi) \coloneqq \sum_{t=1}^T\max_{a_t\in \A}\bra{\psi}\Pi_{a_t}\ket{\psi}-\bra{\psi}\Pi_{a_t}\ket{\psi}
\end{equation}
The expected cumulative regret $\Ex_{\pi}(R_T)$ can be obtained by taking the expectation value with respect to the probability distribution of 
\begin{equation}
    P_{\pi}(a_1,o_1,\ldots,a_T,o_{T}) \coloneqq \prod_{t=1}^T \pi_t(a_t|a_1o_1\ldots,a_{t-1},o_{t-1})P(o_t|a_t)~,
\end{equation}
governed by the policy. In the pure state scenario, the regret can be restated using fidelity and trace distance as:
\begin{equation}
\label{eq:cumu_regret}
\begin{split}
    R_{T}(\A,\psi,\pi) &= \sum_{t=1}^T 1-\bra{\psi}\Pi_{A_t}\ket{\psi}\\
    &=\sum_{t=1}^T1-F(\psi,\Pi_{A_t})\\
    &=\frac{1}{4}\sum_{t=1}^T \|\ket{\psi}\!\bra{\psi}-\Pi_{A_t}\|_1^2~,
\end{split}
\end{equation}
where the $\|A\|_1= \sqrt{A^\dagger A}$ is the trace norm (also called operator 1-norm or Schatten 1-norm), and $F(\psi,{\Pi_{A_t}})=\bra{\psi}\Pi_{A_t}\ket{\psi}$ can be interpreted as the quantum fidelity since $\psi$ is a pure state and $\Pi_{A_t}$ is a projective measurement.  The overarching goal is to identify a policy, $\pi$, that minimizes this cumulative regret, thereby ensuring that the sequence of actions converges to optimal measurement directions with respect to the underlying unknown quantum state. Recent work in QMAB has provided bounds and shown that for pure quantum states~\cite {lumbreras22bandit,lumbreras24pure}, there exists a learning algorithm that can upper bound the expected cumulative regret 
\begin{equation}
    \Ex\left[R_{T}(\A,\psi,\pi)\right] = O(\log^2(T))~.
\end{equation}

\section{Set-up}
Consider a model where the goal is to extract energy from a qubit system in contact with a thermal bath~\cite{skrzypczyk2014work,huang2023engines} and store it in a battery. The battery is modeled as the potential energy of a weight that can be raised and lowered. We will now formally introduce our model, starting with the three physical subsystems involved in the process.

\begin{enumerate}
    \item Unknown pure state source (System $Q$): A qubit in a pure state $\psi_Q = \ketbra{\psi}{\psi}$ with Hamiltonian $\mathcal{H}_Q=\omega\id$. This is the system from which free energy is to be extracted.
    
    \item Battery (System $B$): A semi-classical weight described by a continuous-variable state $\varphi(x) \in L^2(\mathbb{R})$. The battery Hamiltonian is defined as $\mathcal{H}_B \varphi(x) = x \varphi(x)$, where $x$ represents the height of the weight. The energy of the battery can be changed by translating the weight up by a certain height $\mathrm{d} x$, described by the translation operator $\Gamma^B_{\mathrm{d} x} \varphi(x) = \varphi(x-\mathrm{d} x)$.
    
    \item Thermal reservoir (System $R$): A heat bath at a fixed inverse temperature $\beta$, modeled as a supply of qubit states $\gamma_\beta(\nu) = Z(\nu)^{-1} e^{-\beta \mathcal{H}_R(\nu)}$, where the Hamiltonian is $\mathcal{H}_R(\nu) = \nu \ketbra{1}{1}$ and $Z(\nu)$ 
    is the partition function. Here, $\ket{0}$ and $\ket{1}$ are energy eigenstates and $\nu$ is an energy gap that can be tuned.
\end{enumerate}
We consider the system Hamiltonian $H_A$ to be fully degenerate. Discussion on the consequences of a non-degenerate Hamiltonian can be found back in~\cref{ch:work_extraction}.

Now, assume that the agent has oracle access to the unknown system $Q$ over a finite number of rounds $N\in\mathbb{N}$. The goal is to extract the maximum amount of free energy from $Q$ and store it in the battery $B$ through interactions with a thermal reservoir $R$. However, since the state $\psi$ of the system is unknown, the agent cannot extract work optimally from the outset. Instead, it must gradually improve its strategy by learning from each round, where the learning is done by measuring the battery. The work extraction protocol is defined by three key components: a policy that updates the agent’s guess $\psi_k$ of the unknown state $\psi$ at each round $k \in [N]$, a sequence $\{ \epsilon_k \}_{k=1}^N$ that predicts the accuracy of these guesses and the number of iteration $M\in\mathbb{N}$ that need to be performed to sufficiently approximate a quasi-static interaction between system, battery and thermal reservoir.
These parameters together determine the dissipation at each round, and we will discuss their choice subsequently. A diagrammatic representation of the operation of the agent can be found in Fig.~\ref{fig:protocol}.
\begin{figure}
    \centering
    \includegraphics[width=0.6\linewidth]{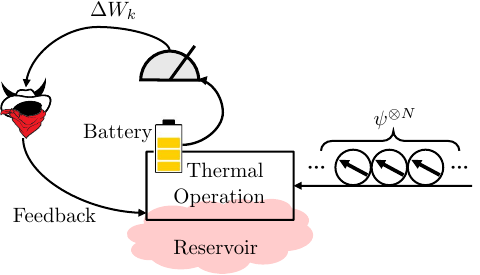}
    \caption{ Sketch of the sequential work extraction protocol with a thermal reservoir. At each time step $k \in [N]$, the agent receives a copy of an unknown qubit state $\psi$ and performs a thermal operation involving the reservoir and a battery. A measurement of the battery system is carried out to determine the extracted work $\Delta W_k$, which is then used as feedback to improve the extraction strategy in subsequent rounds. }
    \label{fig:protocol}
\end{figure}
\section{Operation of agent}
The details on how the sequence of $\{\epsilon_k\}_{k=1}^N$ and $M$ are determined will be discussed later on, along with justifications for such a choice. 
For now, suppose the agent is already given a sequence $\{\epsilon_k\}_{k=1}^N$ and fixed $M$, the general operation of the agent can be described as follows:

\begin{enumerate}
    \item The agent receives a sample with unknown state $\psi$.
    
    \item The agent selects a direction $\psi_k$ on the Bloch sphere that defines a basis $\lbrace \psi_k , \psi_k^\perp \rbrace$ for system $Q$. The direction $\psi_k$ is a function of past measurement choices and measured battery energies (see~Eq.~\eqref{eq:action_general_update}). One may think of $\psi_k$ as an estimate of $\psi$, but as we will see, it actually slightly deviates from the agent's best guess of $\psi$ so that the subsequent battery measurement reveals more information about $\psi$.
    
    \item The agent performs a unitary on the system qubit in the form of 
        \begin{equation}
            U_k = \ketbra{0}{\psi_k} + \ketbra{1}{\psi_k^\perp},
        \end{equation}
        satisfying $[\mathcal{H}_Q,U_k]=0$, note that if $\psi_k = \psi$, this unitary simply diagonalizes the system qubit in the computational basis.
    \item 
    The agent then performs a thermal operation by repeatedly applying an energy-conserving unitary on a combined system $Q B R$ to transfer energy from a reservoir qubit $R$ to the battery $B$ mediated by $Q$. 
    Specifically, for each iteration $\ell \in [M]$, the agent:
    \begin{enumerate}
        \item sets the energy gap to $\nu (\ell,\epsilon_k )$ (see Eq.~\eqref{eq:intro_gap_parametrization_2}) and selects a reservoir qubit in the thermal state $\gamma_\beta(\nu(\ell,\epsilon_k))$;
        \item applies the energy-conserving unitary
        \begin{align}\label{eq:unitary_intro_ch6}
            V_{\psi_k , \ell} = \sum_{i,j} \ket{i}\!\bra{j}_Q \otimes \ket{j}\!\bra{i}_R \otimes \Gamma^B_{(i-j)\nu(\ell,\epsilon_k)} \,;
        \end{align}
       \item discards the reservoir qubit.
    \end{enumerate}
    Note that these $M$ iterations of the swap operation are designed to mimic a quasi-static process, thereby minimizing entropy production.
    
    \item Finally, the agent measures the energy of the battery in its eigenbasis and records the outcome $\mu_k$.
\end{enumerate}
Note that steps 2-5 of this protocol are exactly what was presented in~\cref{ch:work_extraction} as a realization of the $\rho^*$-ideal work extraction protocol. Aside from the choice of the energy gap mentioned in step 4, where the accuracy parameter $\epsilon_k$ comes into play. Here, the energy gap is defined as
\begin{align}\label{eq:intro_gap_parametrization_2}
    \nu (\ell,\epsilon_k ) = \beta^{-1} 
    \ln \left( \frac{1 - \frac{\ell}{2M}- \left(1 - \frac{\ell}{M} \right) \epsilon_k }{\frac{\ell}{2M}+\left(1 - \frac{\ell}{M} \right) \epsilon_k} \right),
\end{align}
where $\epsilon_k \in [0,1]$ controls the mismatch. The full interaction of systems $Q$, $B$, and $R$ through the unitary $V_{\psi_k,\ell}$ is illustrated in Figure~\ref{fig:protocol_pic}.
\begin{figure}
    \centering
    \includegraphics[width=0.9\linewidth]{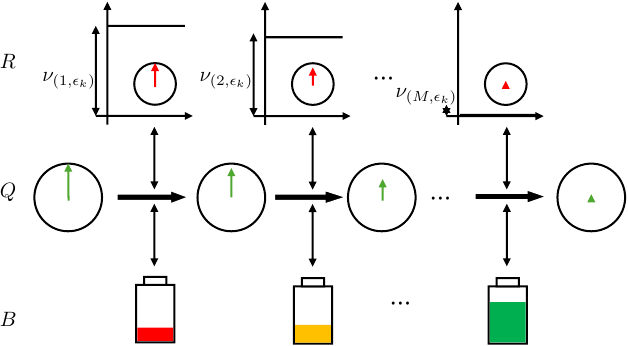}
    \caption{Illustration of the iterations of the thermal operation in the full system $QBR$, where arrows represent Bloch vectors of states, showing that the system qubit becomes more and more mixed as the process goes. The energy gaps $\{\nu_{k,i}\}_i$ of successive reservoir Hamiltonian forms a strictly decreasing sequence, making the successive thermal states more mixed. At each step, we take a new qubit from the reservoir and swap the system qubit with the reservoir qubit; the energy from the reservoir will flow into the battery.
    At the end of the process, the qubit in system $Q$ is the thermal state.}
    \label{fig:protocol_pic}
\end{figure}

Since this protocol is a member of the $\rho^*$-ideal work extraction protocol, the work distribution as well as the respective probability will be as stated in Theorem~\ref{thm2}. In other words, in the quasi-static limit $M\to\infty$, the work values are: 
\begin{align}
    \label{eq:work_values_intro}
    \begin{split}
        w_{k,0} &:=  \beta^{-1}(\D(\psi_{k}\|\id/2) +  \ln (1-\epsilon_k))~, \\
        w_{k,1} &:= \beta^{-1}(\D(\psi_{k}^\perp\|\id/2) +  \ln \epsilon_k)~.
    \end{split}
\end{align}
and the probability distribution follows
\begin{equation}\label{eq:work_distribution_intro}
  \mathrm{Pr}(  \Delta W_k = w_{k,i} ) = \begin{cases}
      |\braket{\psi}{\psi_k}|^2       &i=0\\
      1-|\braket{\psi}{\psi_k}|^2&i=1~,
  \end{cases}
\end{equation}
where $\Delta W_k = \mu_{k+1} - \mu_k$ is the energy injected into the battery at round $k\in[N]$. For the detailed analysis, including derivation of the work distribution, rate of convergence with respect to $M$ as well as the error probability, please refer to Appendix~\ref{app:convergence_rate}.

Since this extraction protocol is indeed a member of the $\rho^*$-ideal work extraction protocol, the expected work extracted at round $k\in[N]$ will be given by 
\begin{equation}
   \label{eq:avgwork}
 \Ex[\Delta W_k] = \beta^{-1} \left[ \D (\psi \| \id/2) - \D (\psi \| \Delta_{2\epsilon_k}(\psi_k)) \right], 
\end{equation}
where $\Delta_{\epsilon}(\rho) = (1 - \epsilon)\rho + \epsilon \id/2$ denotes the depolarizing channel. The maximal expected work per round, $\beta^{-1} \D(\psi \| \id/2)$, is achieved when the agent perfectly estimates the state, i.e., when $\psi_k = \psi$ and $\epsilon_k = 0$. Accordingly, we define the dissipation at round $k$ as:
\begin{align}
\dissipation^{(k)} :=&\  \beta^{-1} \D(\psi \| \id/2) - \mathbb{E}[\Delta W_k]\\
=&\ \beta^{-1} \D(\psi \| \Delta_{2\epsilon_k}(\psi_k)).
\end{align}
This expression highlights a key trade-off: to reduce dissipation, the agent must align $\psi_k$ with the true state $\psi$, but it cannot set $\epsilon_k$ too small unless the estimate is sufficiently accurate, as otherwise the divergence becomes unboundedly large. The parameter $\epsilon_k$ thus plays a dual role, quantifying both the uncertainty in the estimate and its thermodynamic penalty. The agent's objective is to extract the maximum amount of work into the battery using the $N$ copies of the unknown system. Equivalently, the agent aims to minimize the cumulative dissipation over $N$ rounds, which is given by
\begin{equation}
    \label{eq:dissipation_sc}
    \dissipation(N) = \beta^{-1} \sum_{k=1}^N \dissipation^{(k)}~.
\end{equation}
The question still remains, how should the agent decide on what the estimate $\psi_k$ should be, given all the past information, and what value should the parameter $\epsilon_k$ take to balance between minimization of dissipation and preventing the dissipation from diverging? 

\section{Extracting while learning}
To determine the choices of the estimate $\psi_k$, the accuracy $\epsilon_k$ and number of iterations within the extraction protocol, $M$ we start from the observation that the sequence of battery measurement outcomes follows the same probability distribution as the outcome of a projective measurement $\ket{\psi_k}\!\bra{\psi_k}$ performed on an unknown quantum state $\psi$. This equivalence means the extracted work statistics can be used analogously to measurement outcomes for performing state tomography on $\psi$. 
Thus, the agent aims to select $\psi_k$ and $\epsilon_k$ to minimize $\D(\psi \| \Delta_{2\epsilon_k} (\psi_k))$. For updating $\psi_k$, we adopt the fully adaptive state tomography algorithm introduced in~\cite{lumbreras24pure} which builds on the multi-armed bandit method from~\cite{pmlr-v247-lumbreras24a}. 
The full details of this algorithm are provided in Appendix~\ref{sec:learning_algo}, but here we give an overview and highlight the main results.

The sequence $\lbrace \epsilon_k \rbrace_{k=1}^N$ quantifies the accuracy of the state estimates $\psi_k$, while the parameter $M$ controls the success probability of the work extraction at each step. Their values are chosen to ensure a provable bound on the total dissipation, as detailed in the following subsections.

The learning algorithm assumes sequential access to $N$ identical copies of an unknown pure qubit state $|\psi\rangle$. At each round $k\in[N]$, it performs an adaptive single-copy measurement using a rank-1, two-outcome POVM. Specifically, the algorithm chooses a measurement direction $\psi_k$ and performs a projective measurement in the basis $\lbrace \psi_k, \psi_k^\perp \rbrace$.

The binary measurement outcome (or ``reward'') $r_k\in\lbrace 0,1\rbrace$ is distributed according to Born's rule
\begin{align}
    \label{eq:reward_intro_bandit}
    \mathrm{Pr} (R_k = r_k) = 
    \begin{cases}
    |\langle\psi_k|\psi\rangle|^2, \quad  & r_k=1~,\\
    1-|\langle\psi_k|\psi\rangle|^2, \quad & r_k=0~.
    \end{cases}
\end{align}
These outcomes exactly match the energy distribution from the work extraction protocol in Eq.~\eqref{eq:work_distribution_intro}, with $r_k = i$ if $\Delta W_k = w_{k,1-i}$ for $i\in\lbrace 0,1 \rbrace$.

To ensure the measurement directions $\psi_k$ remain close to the unknown state $\psi$, thereby bounding the relative entropy, we apply a bound adapted from~\cite[Proposition 2.34]{Flammia2024quantumchisquared} which yields:
\begin{equation}
    \D(\psi\|\Delta_{2\epsilon_k} (\psi_k))\leq16\epsilon_k (2-\ln \epsilon_k)~,
\end{equation} 
where the infidelity satisfies $1 - |\langle \psi | \psi_k \rangle|^2 \leq \epsilon_k \leq \frac{1}{2}$. This leads to an upper bound on the cumulative dissipation over $N$ rounds:
\begin{align}\label{eq:dissipaiton_infidelity}
       \dissipation(N) \leq \beta^{-1}\sum_{k=1}^N 16\epsilon_k (2-\ln \epsilon_k)~.
\end{align}
\subsection{Determination of estimator and parameters}
The update of $\psi_k$ reflects the classic exploration-exploitation trade-off described in Section \ref{sec:MAB}. Each $\psi_k$ depends adaptively on past measurement outcomes $\lbrace r_l \rbrace_{l=1}^{k-1}$ and measurement directions $\lbrace \psi_l \rbrace_{l=1}^{k-1}$. Instead of a detailed mathematical derivation, we will present the physical intuition to aid in understanding. The full algorithm, along with the mathematical details including the form of the estimator, can be found in Appendix~\ref{sec:learning_algo}; readers may also refer to \cite{lumbreras24pure}. 

A series of weighted least-squares estimators can be constructed from the history of past rewards and outcomes. As shown in recent work~\cite{pmlr-v247-lumbreras24a,lumbreras24pure}, these estimators effectively boost learning along directions of low reward variance—those most closely aligned with the unknown state $\psi$.  This allows the algorithm to ``exploit" its current knowledge in pursuit of higher rewards. The evolving confidence in these estimates can be represented by a shrinking confidence region $\mathcal{C}_t$, where $t$ is the number of estimators constructed. Simultaneously, the algorithm balances this exploitation with "exploration": it actively selects two Bloch vectors (equivalently, two measurement directions $\Pi^{(+)}_{k},\Pi^{(-)}_{k}$) corresponding to the directions of highest uncertainty in the reward variance. These measurements target regions where information about $\psi$ is most lacking, thereby improving subsequent estimates. This trade-off between exploration and exploitation is illustrated in Fig.~\ref{fig:bandit_ball}. It is important to note that while $\Pi^{(+)}_{k},\Pi^{(-)}_{k}$ are not themselves the updated state $\psi_k$, they influence how $\psi_k$ is constructed.

\begin{figure}
    \centering
    \includegraphics[width=0.4\linewidth]{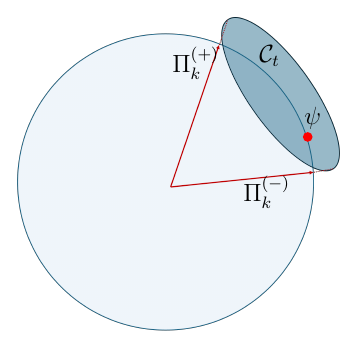}
    \caption{Diagrammatic representation of the learning process. Based on past observations, the algorithm constructs a confidence region $\mathcal{C}_t$ around the unknown state $\psi$. It then selects two directions $\{\Pi^{(+)}_{k},\Pi^{(+)}_{k}\}$to probe the space of maximum reward uncertainty, influencing the next state estimate $\psi_{k}$.}
    \label{fig:bandit_ball}
\end{figure}

Within the thermal work extraction protocol discussed in \cref{ch:work_extraction}, when $M$ is finite, i.e., the non-quasi-static regime, the reward distribution may deviate from the idealized form, potentially affecting the algorithm’s success probability. Nevertheless, the learning protocol guarantees success with probability at least $1-\delta$. By choosing $\delta =\frac{1}{N}$, the tolerance on the error probability for the reward will be upper bounded by $O(\frac{1}{N^2})$. Returning to the error condition established in Eq.~\eqref{eq:error_bound}, we can ensure this bound is satisfied by selecting $M=D N^2/\ln(N)^3$ for some constant $D>0$. This guarantees that the cumulative reward error remains below the error threshold of the tomography-based estimation.

For setting $\epsilon_k$, the algorithm guarantees that the infidelity between the estimated and true state is bounded with high probability. Specifically, with probability at least $(1-\delta)$, the following bound holds:
\begin{equation}
\label{eq:infidelity_bound}
    1-|\braket{\psi_k}{\psi}|^2 \leq C\frac{\ln (N/\delta)}{k}~,
\end{equation}
where $C\geq0$ is a large enough constant. This inequality also holds in expectation. Recalling Eq.~\eqref{eq:dissipaiton_infidelity}, where $\epsilon_k$ has to be bounded such that $1-|\langle\psi|\psi\rangle|^2\leq \epsilon_k\leq 1/2$. Using the bound we obtained from the algorithm, we can now set $\epsilon_k = \min (C\ln(N/\delta)/k,1/2)$. Substituting this into Eq.~\eqref{eq:dissipaiton_infidelity}, the total dissipation over $N$ steps is bounded by:
\begin{equation}
    \begin{split}
        \dissipation(N) &\leq \beta^{-1}\sum_{k=1}^N 16\epsilon_k (2-\ln \epsilon_k)\\
        &= \beta^{-1}\sum_{k=1}^N 16C\frac{\ln(N/\delta)}{k}\left(2-\ln \left(C\frac{\ln(N/\delta)}{k}\right)\right)\\
        &=\beta^{-1}\sum_{k=1}^N 16C\frac{\ln(N/\delta)}{k}\left(\underbrace{2-\ln C-\ln^2\left(\frac{N}{\delta}\right)}_{A}+\ln k\right)\\
        &=16C\beta^{-1}\ln\left(\frac{N}{\delta}\right)\left[A\sum_{k=1}^N\frac{1}{k}+\sum_{k=1}^N\frac{\ln k}{k}\right]
    \end{split}
\end{equation}
The first summation is a finite Harmonic sum, which can be approximated using $\sum_{k=1}^N \frac{1}{k}\approx \ln N+\gamma_E$, where $\gamma_E\approx0.577$ is the Euler–Mascheroni constant. The second summation can be approximated using integral $\sum_{k=1}^N\frac{\ln k}{k}\approx \int_{1}^N\frac{\ln x}{x} \mathrm{d} x=(\ln N)^2/2$. Combining these, we obtain that
\begin{equation}
    \begin{split}
        \dissipation(N) &\leq16C\beta^{-1}\ln\left(\frac{N}{\delta}\right)[A(\ln N+\gamma_E) + (\ln N)^2/2]~.
    \end{split}
\end{equation}
Notice that the expression in square brackets is dominated by $\frac{(\ln N)^2}{2}$, which then yields our main result.
\begin{theorem} [Cumulative Dissipation Scaling]
There exists an explicit protocol for the semi-classical battery model that adaptively updates the estimate $\hat{\psi}_k$ and the probe state $\psi_k$ based on the rewards $\lbrace r_s \rbrace_{s=1}^{k-1}$, achieving, with probability at least $1-\delta$ 
\begin{align} 
\dissipation (N) = O\left(\beta^{-1}\ln^2(N) \ln \left( \frac{N}{\delta} \right)  \right)~,
\end{align} 
where $N$ is the number of copies of the unknown pure state,$\psi$, available that provide the non-equilibrium free energy to be extracted.
\end{theorem}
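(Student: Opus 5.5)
The approach is to reduce cumulative dissipation to a sum of per-round relative entropies and then control each term with the high-probability infidelity guarantee of the adaptive pure-state learning algorithm of~\cite{lumbreras24pure}. The first step is the reduction from work extraction to a bandit. By Theorem~\ref{thm2} applied to the protocol tailored to $\Delta_{2\epsilon_k}(\psi_k)$, the battery outcome at round $k$ takes one of two values $w_{k,0},w_{k,1}$, with probabilities $|\braket{\psi}{\psi_k}|^2$ and $1-|\braket{\psi}{\psi_k}|^2$. Relabelling outcomes as $r_k\in\{0,1\}$ therefore reproduces exactly the reward distribution~\eqref{eq:reward_intro_bandit} of a projective measurement along $\psi_k$. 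The agent can thus feed its recorded work values into the learning algorithm as if they were tomographic outcomes, and the algorithm's guarantee~\eqref{eq:infidelity_bound} holds with probability at least $1-\delta$, simultaneously for all $k\le N$.

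The second step handles finite $M$. Here I would use the concentration bound~\eqref{eq:error_bound}: each round misassigns its outcome with probability $e^{-\Omega(M)}$. Choosing $M$ polynomial in $N$ (the text suggests $M=DN^2/\ln^3 N$) and applying a union bound over the $N$ rounds keeps the total misassignment probability $O(1/N)$. This can be absorbed into $\delta$, for instance by running the learner at confidence $\delta/2$. The residual $O(1/M^2)$ per-round deviation from the quasi-static work contributes $O(N/M^2)=o(1)$ in total.

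The third step bounds each round's dissipation. Set $\epsilon_k=\min\bigl(C\ln(N/\delta)/k,\,1/2\bigr)$. On the good event, $1-|\braket{\psi}{\psi_k}|^2\le\epsilon_k\le 1/2$, so the relative-entropy bound adapted from~\cite{Flammia2024quantumchisquared} gives $\D(\psi\|\Delta_{2\epsilon_k}(\psi_k))\le 16\epsilon_k(2-\ln\epsilon_k)$. For the initial rounds with $k\le 2C\ln(N/\delta)$, $\epsilon_k=1/2$ and each term is $O(1)$, which contributes $O(\ln(N/\delta))$ in total. For the remaining rounds I would split $2-\ln\epsilon_k=2-\ln\bigl(C\ln(N/\delta)\bigr)+\ln k\le c+\ln k$. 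The sum is then at most $16C\ln(N/\delta)\sum_{k\le N}(c+\ln k)/k$. Using $\sum 1/k\le\ln N+1$ and $\sum_{k\le N}\ln k/k\le \tfrac12\ln^2 N+O(1)$, this becomes $O(\ln(N/\delta)\ln^2 N)$. Multiplying by $\beta^{-1}$ gives the claimed bound.

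I expect the main obstacle to be the first two steps rather than the summation: one must ensure that the learning guarantee survives when rewards are generated by the thermodynamic protocol and not by an ideal measurement. The agent's $\epsilon_k$ also alters the work values (though not the outcome probabilities), and the finite-$M$ errors must not bias the least-squares estimators. My first attempt would be a coupling argument. Run the ideal bandit and the physical protocol on the same randomness, note that they agree except on the union-bounded error event, and note that the outcome probabilities depend only on the eigenbasis $\{\psi_k,\psi_k^\perp\}$ and not on $\epsilon_k$, since Theorem~\ref{thm2} gives probabilities independent of the eigenvalues.
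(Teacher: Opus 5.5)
Your proposal is correct and follows the paper's proof essentially step for step: the reward reduction via Theorem~\ref{thm2}, $M=\Theta(N^2/\ln^3 N)$ with a union bound over misassigned rewards, $\epsilon_k=\min\{C\ln(N/\delta)/k,\,1/2\}$ from the infidelity guarantee, the bound $\D(\psi\|\Delta_{2\epsilon_k}(\psi_k))\le 16\epsilon_k(2-\ln\epsilon_k)$, and the harmonic and $\sum_k \ln k/k$ sums (your $O(\ln(N/\delta))$ for the capped early rounds is more accurate than the paper's ``constant'' contribution). One caveat, which you share with the paper: at $\epsilon_k=1/2$ the tailored state is $\id/2$, every $\nu(\ell,\epsilon_k)=0$, the battery never moves and $w_{k,0}=w_{k,1}=0$, so Theorem~\ref{thm2} merges the two outcomes and your claim that the outcome law does not depend on $\epsilon_k$ fails there; those early rounds then give the learner no rewards, or biased ones under the threshold rule, but capping $\epsilon_k$ at a constant below $1/2$ (e.g.\ $1/4$) restores informative rewards at a cost of $\D(\psi\|\rho^*)\le\ln 4$ per early round and leaves the final bound unchanged.
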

One could go a step further to demand that $\delta=1/N$, which means as the number of sample $N$ increases, the acceptable error probability decreases accordingly. This would then yield a cumulative dissipation of
\begin{equation}
    \dissipation (N) = O\left(\beta^{-1}\ln^2(N) \ln(N)\right)~,
\end{equation}
with a probability of at least $1-\frac{1}{N}$.

\section{Extracting after learning}
One may naturally ask: can we achieve the same dissipation by using quantum state tomography to estimate $\psi$, and then extract work based on that estimate? The answer is no. Any tomography algorithm that achieves a relative entropy error $\zeta$ requires at least $N = \Omega(1/\zeta)$ copies of the unknown state~\cite{haah2016sample,Flammia2024quantumchisquared,chen_adaptivity}. Suppose we use a fraction $\alpha N$, with $0 \leq \alpha \leq 1$, to obtain an estimate with relative entropy error of at least $\zeta = \Omega(\frac{1}{\alpha N})$. The agent can then extract work based on that estimate for the remaining $(1-\alpha)N$ time steps, resulting in at least $\frac{1-\alpha}{\alpha }$ dissipation. 
The total dissipation will be lower bounded as
\begin{align}
\dissipation(N) = \Omega\left( \alpha N + \frac{(1 - \alpha)}{\alpha} \right) ,
\end{align}
and minimizing over $\alpha$ yields the optimal $\alpha=\frac{1}{\sqrt{N}}$, the dissipation then scales with $\dissipation(N) = \Omega(\sqrt{N})$. This bound is tight as it can be matched by an upper bound $\dissipation(N) = O(\sqrt{N})$ using an optimal learning algorithm for relative entropy~\cite{Flammia2024quantumchisquared}. This scaling reflects a fundamental limit imposed by sample complexity of state tomography. Achieving lower dissipation requires a more adaptive strategy that goes beyond standard tomography, like the one we exhibit here. 

These conclusions are inherently tied to the agent's reliance on classical memory. However, the landscape changes significantly when the agents are allowed to have quantum memory; as shown in~\cite{huang2022quantum}, a single qubit of memory can trigger an exponential leap in learning capability. Consequently, if the agent were permitted a finite quantum memory, we would expect a decrease in cumulative dissipation.

\section{Numerical simulation}
To empirically validate our findings, we conduct numerical simulations confirming that the cumulative dissipation of our adaptive algorithm is strictly bounded by $O(\polylog(N))$. The simulation comprises 50 independent trials. In each trial, the target pure qubit is drawn from a Haar-random distribution on the Bloch sphere. We randomly select four initial measurement bases and set the algorithmic parameter to $t \approx \ln(30000) = 10$. The combined learning and work extraction protocol is then executed as outlined in Appendix~\ref{sec:learning_algo}. We quantify the cumulative dissipation by summing the deficit between the theoretically maximal extracted work ($k_B T \ln 2$) and the actual work extracted at each step. The results are visualized in Fig.~\ref{fig:dissipation_scaling}, with mean values plotted as blue dots enclosed by translucent $95\%$ confidence intervals. As a benchmark, we plot the cumulative dissipation of a standard tomography-first approach (red markers and shading), which allocates a fixed fraction of the ensemble exclusively for state estimation. A least-squares regression on the adaptive protocol data rigorously supports the $O(\polylog(N))$ scaling, yielding a coefficient of determination of $R^2 = 0.9958$. 
\begin{figure*}[t]
    \centering
\includegraphics[width=\textwidth]{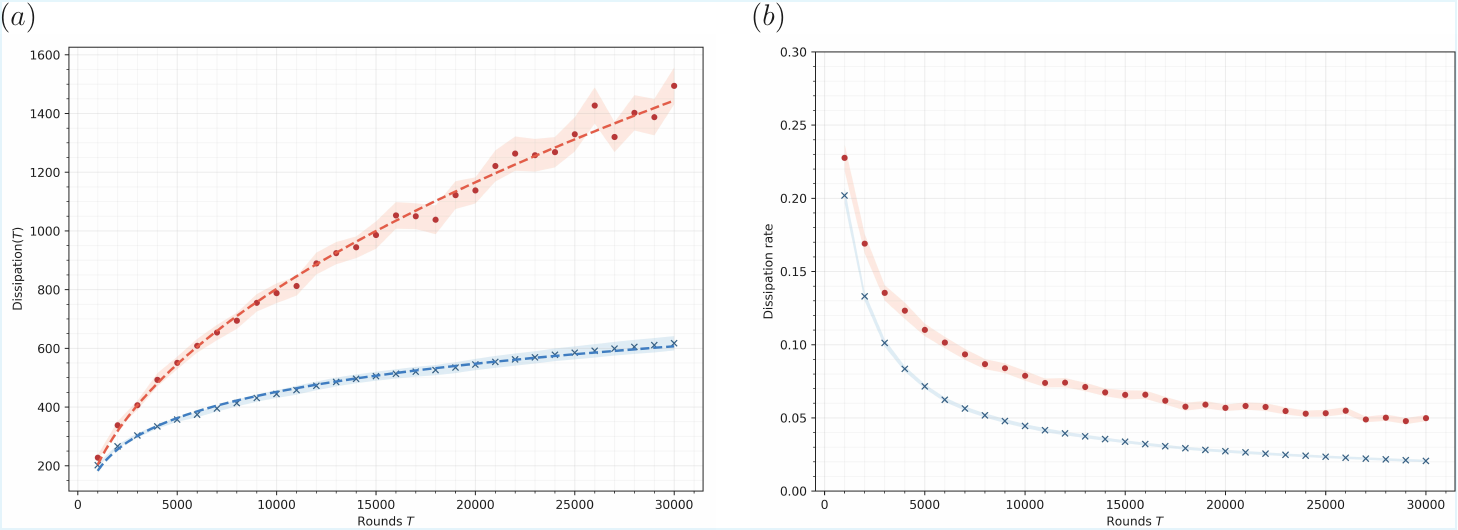}
 \caption{\textbf{Performance scaling of the adaptive work extraction protocol.}
Cumulative dissipation (a) and dissipation rate (b) versus the number of rounds $T$ (rate = average dissipation per copy). Blue: our adaptive protocol. Red: a tomography-first baseline that uses $O(1/\sqrt{T})$ of the available copies for learning and then applies the state-aware extraction protocol on the remainder.
For our protocol, we probe four directions per update and repeat each direction for $t=5$ trials, hence the estimator is updated every $4t=20$ rounds. While $t$ is mildly horizon-dependent in the theory (to guarantee high-probability confidence bounds), empirically $t=5$ already yields the same scaling, making the implementation effectively horizon-independent.
Points are averages over 50 independent trials with a random pure state per trial; for the tomography-first baseline, each $T$ is simulated as an independent experiment (since the strategy depends explicitly on the horizon).
Dashed lines show least-squares fits: polylogarithmic $a(\log_{10} T)^2+b$ for our protocol ($R^2=0.9958$, $a=38.4\pm0.47$, $b=-163\pm8$) and square-root $a\sqrt{T}+b$ for tomography-first ($R^2=0.9930$, $a=8.76\pm0.14$, $b=-74.0\pm17$). Shaded regions indicate 95\% confidence intervals; regression statistics were computed with \texttt{scipy.stats.linregress}.}
    \label{fig:dissipation_scaling}
\end{figure*}

\section{Robustness of result}
Notice that in this model, measurements in the energy eigenbasis of the battery were performed. According to Landauer's principle, such measurements reduce entropy and thus incur an energetic cost associated with information erasure. This form of dissipation --- arising from the cost of resetting memory---should also be taken into account. In Appendix~\ref{sec:landauer} we show that the dissipation required by Landauer's principle also scales as $O(\polylog(N))$ when our adaptive algorithm is employed. Furthermore, this algorithm is not specific to the work extraction protocol discussed above. We have also applied it to an alternative protocol modeled using the Jaynes-Cummings Hamiltonians, where the battery is represented as a uniform energy ladder, and the work dissipation similarly exhibits $O(\polylog(N))$ scaling. Details can be found in Appendix~\ref{apd:jc_protocol}.

\section{Summary}
In this chapter, we introduced a novel perspective by framing sequential work extraction from an unknown quantum state as an instance of the exploration–exploitation trade-off. We quantified the cumulative dissipation in the finite-copy regime and established a rigorous upper bound on the total dissipated work. Crucially, we constructed an explicit, adaptive algorithm that saturates this bound and achieves exponential improvement over conventional tomography-based approaches.

%% file: Chapters/Chapter7.tex

\chapter{Conclusions and Outlook} 
\label{ch_conclusions}
\noindent\rule{\textwidth}{0.4pt}
\vspace{1.5em} 
\setlength{\parindent}{4ex}

\noindent \textit{This concluding chapter summarizes the work presented throughout the thesis, highlighting the substantial advancements made in the field of not only quantum thermodynamics but also that of computational mechanics. It provides a comprehensive overview of the key findings and also paves the way for future explorations. We will not only focus on some of the opening questions that arise during the research process but also discuss how the results and techniques developed in the paper can be used for future works.}
\newpage

\section{Conclusions}
This thesis presents a comprehensive exploration of agential approaches to work extraction in the quantum regime. The results are not only relevant to foundational questions in quantum thermodynamics, but also establish promising connections with techniques from computational mechanics and reinforcement learning, suggesting new directions for interdisciplinary applications.

In \cref{ch:work_extraction}, we introduced a new class of work extraction protocols, termed the $\rho^*$-ideal work extraction protocol, motivated by the perspective of an agent lacking complete knowledge of the underlying quantum state.  We demonstrated both the simplicity of the resulting work distributions and the physical interpretability of the final extracted work. The limitations of this protocol were also discussed, along with its role in the broader framework developed throughout the thesis.

\cref{ch:3} formally introduced the agential framework for work extraction, emphasizing causal constraints and the absence of quantum memory. This framework was applied to the problem of extracting work from temporally correlated states. We introduced the concept of an agent’s belief state and its critical role in determining the work extraction protocol. Notably, we highlighted the interplay between the meta-dynamics of the agent’s belief and the hidden-state dynamics of an underlying hidden Markov model (HMM), and how this interplay governs the agent's performance. We showed that such an adaptive agent outperforms the best non-adaptive strategy, and uncovered an apparent phase transition that demarcates regimes where adaptivity significantly enhances performance.

In \cref{ch:4}, we extended our analysis by deriving a fundamental upper bound on work extraction from temporally correlated quantum states under causal constraints. We introduced the concept of Time-Ordered Free Energy (TOFE), a non-trivial bound that is strictly lower than the conventional non-equilibrium free energy in general settings. This gap in performance could potentially be quantified by a new measure of discord. To achieve this bound, we proposed a constructive algorithm based on dynamic programming (DP), enabling an agent to optimally exploit temporal correlations. Furthermore, we identified the agent introduced in \cref{ch:3} as a local-optimizing (LO) agent, and demonstrated that the DP-based agent outperforms the LO agent in both extractable work and predictive power. Finally, we offered physical insights into how the structure of the underlying quantum process guides the emergence of optimal policies.

\cref{ch:learning} addressed the setting where an agent interacts with a sequence of identically prepared but unknown pure quantum states.  We framed this scenario as an instance of the exploration–exploitation trade-off and established a connection to multi-armed bandit problems. Leveraging this connection, we developed an adaptive algorithm for optimizing work extraction. Remarkably, the resulting cumulative dissipation over $N$ copies exhibits exponential improvement compared to conventional tomography-based approaches.

\section{Future works}
\subsection*{Questions in computational mechanics}
This thesis investigated the optimal strategy for an agent to extract work from states generated by arbitrary classical hidden Markov models (HMMs). In the domain of predictive modeling, it is well-established that a fundamental asymmetry often exists between modeling a process forward versus backward in time, even when both directions yield sequences with the same entropy rate~\cite{marzen2015informational,thompson2018causal,kechrimparis2023causal}. This phenomenon, known as causal asymmetry, reflects a difference in statistical complexity between prediction and retrodiction. Beyond its information-theoretic implications, causal asymmetry can also be viewed through a thermodynamic lens. For instance, in bipartite quantum systems, the quantum discord obtained by measuring subsystem $A$ differs from that obtained by measuring subsystem $B$, i.e.,
\begin{equation}
    \delta(A;B) \neq \delta(B;A)~.
\end{equation}
Such asymmetries suggest an inherent directional bias in the underlying correlations. By extension, it would seem natural that the adaptive ordered discord would also exhibit such asymmetry. This idea can be further explored by computing the Time-Ordered Free Energy (TOFE) for both the forward and backward processes. Unlike the entropy rate, which remains invariant under time reversal, TOFE captures directional differences in extractable work, providing a refined thermodynamic signature of causal asymmetry.

A compelling direction for future work is the analytical characterization of the phase boundaries observed in the performance of both the local-optimizing and globally optimal agents. The presence of such macroscopic transitions hints at fundamental information-theoretic limits inherent to quantum learning tasks. To rigorously derive these boundaries, one could leverage the framework of Ref.~\cite{liu2022partially}, which establishes that a quantum process is efficiently learnable only if it is sufficiently $\alpha$-revealing. In this context, the parameter $\alpha$ is governed by the minimum eigenvalue of the associated observation matrix—a mathematical condition that directly quantifies the physical distinguishability of the constituent quantum states within the process.

\subsection*{Relation to fluctuation theorem or thermodynamic uncertainty relations}
The standard fluctuation theorem encodes that the probability of entropy production $\Sigma$ in a forward process and a backward process can be expressed as
\begin{equation}
    \frac{\Pr(\Sigma)}{\Pr(-\Sigma)}=e^\Sigma~,
\end{equation}
which physically implies that a forward trajectory that produces entropy is exponentially more likely than a backward trajectory that reduces said entropy, a mathematically rigorous way one may take to look at the arrow of time~\cite{jarzynski1997nonequilibrium,crooks1999entropy,campisi2011colloquium}. In the standard Fluctuation theorem, the different trajectories (say a state evolving from $\rho_0$ to $\rho_\tau$) taken during the evolution are assumed to be i.i.d in the sense that the other trajectory sampled has no correlation to this trajectory. It would be very interesting to see if the sequence of initial state ($\{\rho^{(i)}_0\}_{i=1}^N$) exhibits temporal correlation might break this particular bound, since now the agent's memory is somewhat synchronized to the process, this correlation might affect and break the standard fluctuation theorem and require some additional factor such as information flow within the exponent.

Thermodynamic uncertainty relations, on the other hand, aim to characterize the precision of integrated thermodynamic current $J$ within an evolution that is upper bounded by the entropy production, $\Sigma$, incurred in the transformation~\cite{barato2015thermodynamic,gingrich2016dissipation}. I.e.
\begin{equation}
    \frac{\text{Var}(J)}{\langle J\rangle ^2}
    \geq \frac{2k_B}{\Sigma}
\end{equation} 
In our work, the cumulative work extracted can be considered one of these currents, and its fluctuation or uncertainty would be bounded by entropy production. However, the standard TUR is assumed to be Markovian, whereas in our scenario, the agent actually carried memory from the last extraction to the next time step. This would induce non-Markovianity and could cause the TUR to be violated. 
\subsection*{Continuously monitored system}

In the current framework, the HMM is assumed to be discrete in time and hence emits a quantum state at fixed time-boxes $t$. It would be interesting to see the generalization to a continuous-time stochastic process, where the frequency of the agent's interaction may affect the final work output. There are a few ways to do this; we will briefly mention a few.

The most intuitive way is to consider a truly quantum stochastic process where the temporal correlation is generated by some unknown environment~\cite{strasberg2019operational}. At any time $\tau$, the agent can choose to interact with the environment and carry out some joint operation on the environment and perhaps a battery. In this scenario, the more frequently the agent interacts with the environment, the less evolution is going to occur, and hence the temporal correlation may no longer exist. This freezing of dynamics is known as the Quantum Zeno effect~\cite{misra1977zeno,itano1990quantum}. This can also be avoided by considering weaker coupling between the environment and the battery~\cite{facchi2008quantum}. 

On the other hand, a discrete sequence of quantum states can be mapped onto the continuum limit of a quantum collision model~\cite{strasberg2017quantum}. In this framework, the non-i.i.d. source is modeled as a continuous field, where infinitesimal 'time-bins' of duration $\text{d}t$ act as the individual quantum subsystems~\cite{gardiner1985input}. A classical continuous-time Hidden Markov Model modulates the preparation of these time bins, embedding temporal correlations into the continuous field. The agent then continuously couples to the passing time bins via some local interactions. This formulation avoids the Zeno-freezing of the source, as the agent continuously interacts with fresh, independent temporal modes of the field. However, optimizing work extraction in this regime requires moving beyond discrete Bayesian updates to the formalism of continuous quantum filtering, where the agent must continuously infer the hidden state of the HMM from weak measurements on the post-interaction field~\cite{wiseman2009quantum}.
\subsection*{Extension to a true quantum stochastic process}
So far, we have focused on quantum states generated by classical stochastic processes. A natural extension of this analysis involves considering genuinely quantum stochastic processes, formalized through the framework of quantum combs or process tensors~\cite{chiribella2008quantum,pollock2018non,milz2021quantum}. 

In this setting, the underlying dynamics are governed by a sequence of unitary interactions with an unknown environment, which acts as a memory system that mediates temporal correlations. Crucially, in such quantum processes, the agent's actions can influence the future evolution of the system—a feature absent in the classical HMM-based models considered thus far. This opens up new avenues to study causal influence, feedback control, and adaptive strategies in the quantum regime. A particularly promising direction is to adopt the operational framework proposed in~\cite{strasberg2019operational}, which reformulates the first and second laws of thermodynamics within the process tensor formalism, allowing for a consistent thermodynamic interpretation of memory effects and agent-environment interactions in non-Markovian quantum processes.
Furthermore, this extended framework provides a natural platform to analyze trade-offs between predictive synchronization of the agent’s belief and the amount of extractable work.

\subsection*{Application of quantum algorithms}
The current implementation of the dynamic programming (DP) algorithm, while systematic and effective, suffers from an exponential scaling with the number of time steps $T$, leading to significant computational overhead. This computational bottleneck poses a challenge for practical applications involving long time horizons. To address this, one promising direction is to explore quantum algorithms that may offer improved efficiency. In particular, it may be possible to construct a quantum-enhanced dynamic programming framework that leverages quantum parallelism to reduce the computational complexity of optimization. 

\subsection*{Generalized resources}
The learning algorithm presented in \cref{ch:learning} does more than provide an efficient way to learn quantum states—it establishes a pathway for work extraction from unknown and dynamic quantum sources, potentially extending well beyond the domain of free energy. This opens the door to extracting other forms of quantum resources such as:
\begin{itemize}
    \item Quantum coherence, which captures superposition relative to a reference basis,
    \item Entanglement, enabling non-classical correlations across subsystems, and
    \item Quantum magic, relevant to fault-tolerant quantum computation and resource theories of non-stabilizerness.
\end{itemize}
These extensions would involve integrating appropriate monotones and operational criteria from their respective resource theories, thus embedding work extraction in a broader theoretical landscape.
\subsection*{Learning of non-i.i.d. sequences}
While our current formulation assumes access to identically prepared but otherwise unknown quantum states, a significant and promising future direction lies in relaxing this assumption. By extending the learning strategy to temporally correlated quantum states—particularly those exhibiting non-Markovian or memoryful dynamics—we can begin to quantify and exploit the thermodynamic value of temporal correlations themselves.
Learning algorithm which accomplishes these already exist, but has yet to be applied to the scenario of thermodynamics~\cite{fanizza2023learning}. Such an extension would allow us to move beyond static state learning toward dynamic adaptation, where the history of interaction plays a crucial role in optimizing performance. This line of inquiry could ultimately lead to a unified framework for understanding how memory, predictability, and quantum coherence jointly determine the extractable thermodynamic resources in sequential quantum processes.

%% file: Appendices/Appendix3.tex

\chapter{Proofs for Chapter 4} 

\label{AppendixA} \section{Synchronizing to a memoryful quantum source}

Inferring the latent state of a known memoryful quantum source
allows maximal work extraction
when operating serially on the quantum states of the process.
The optimal state of knowledge, given a sequence of observations $o_1 o_2 \dots o_t$ obtained via interventions on the sequence of quantum systems $\sigma^{(x_1)}, \sigma^{(x_2)}, \dots \sigma^{(x_t)}$ is the conditional probability distribution induced by these interventions,
\begin{align}
\gbm\eta_t := \Pr(S_t | O_1 \dots O_t = o_1 \dots o_t, S_0 \sim \gbm\pi ) ~.
\label{eq:MxStDef}
\end{align}
The last condition $S_0 \sim \gbm\pi$ means that the initial latent state of the generator is distributed as $\gbm\pi$.
This can be rewritten as 
$\gbm\eta_t = \sum_s \gbm\pi(s) \Pr(S_t | O_1 \dots O_t = o_1 \dots o_t, S_0 =s)$ for $t>0$.
Recall that $\gbm\pi = \gbm\pi \sum_{x \in \mathcal{X}} T^{(x)}$ is the stationary distribution over the states of the generator.
Thus, $\gbm\eta_0 = \gbm\pi$.


If we 
introduce a new random variable $K_t$ to denote the optimally updated state of knowledge about the latent state of the
pattern generator, 
then we can replace the condition $S_{t-1} \sim \gbm\eta_{t-1}$
with
$K_{t-1} = \gbm\eta_{t-1}$.
The condition on the state of knowledge is relevant to the extent that the choice of POVM is influenced by the state of knowledge.
We 
remind the reader that in our framework
the POVM on the current
quantum output is chosen as a function of the state of knowledge $K_t$.

Note that the current
quantum output only depends on the current latent state of the process.  Accordingly,
the next observation---which is the outcome of the 
POVM on the current
quantum output---is conditionally independent of all previous outputs, given the current latent state and given the state of knowledge induced by all previous outputs.

We will now show that the optimal state of knowledge 
is recursive.
I.e., we will show that: 
\begin{align}
\gbm\eta_t 
& = 
\Pr(S_t | O_t = o_t, S_{t-1} \sim \gbm\eta_{t-1} ) ~.
\label{eq:RecursiveMxSt}
\end{align}
This follows from marginalizing over intervening latent states, employing Bayes' rule, and recognizing that the belief state $\gbm\eta_t$ is a function of the observations up to that time $o_{1} \dots o_t$. Starting from Eq.~\eqref{eq:MxStDef}, we find:
\begin{align}
\gbm\eta_t 
&:= 
\Pr(S_t | O_{1:t}= o_{1:t}, S_0 \sim \gbm\pi ) 
\nonumber \\
&= 
\sum_s 
\Pr(S_t, S_{t-1} = s | O_{1:t} = o_{1:t}, S_0 \sim \gbm\pi ) \\
&= 
\frac{
\sum_s 
\Pr(S_t, O_t = o_t, S_{t-1} = s | O_{1:t-1} = o_{1:t-1}, S_0 \sim \gbm\pi ) }{\Pr(O_t = o_t | O_{1:t-1} = o_{1:t-1}, S_0 \sim \gbm\pi )}
\\
&= 
\frac{
\sum_s 
\Pr(S_{t-1} = s | O_{1:t-1} = o_{1:t-1}, S_0 \sim \gbm\pi )
\Pr(S_t, O_t = o_t | O_{1:t-1} = o_{1:t-1}, S_0 \sim \gbm\pi , S_{t-1} = s) }{\sum_{s'} \Pr(O_t = o_t, S_{t-1} = s' | O_{1:t-1} = o_{1:t-1}, S_0 \sim \gbm\pi )}\\
&= 
\frac{
\sum_s 
\gbm\eta_{t-1}(s)
\Pr(S_t, O_t = o_t | K_{t-1} = \gbm\eta_{t-1}, S_{t-1} = s) }{
\sum_{s'} 
\gbm\eta_{t-1}(s')
\Pr(O_t = o_t | K_{t-1} = \gbm\eta_{t-1}, S_{t-1} = s') }
\\
&=
\Pr(S_t | O_t = o_t, K_{t-1} = \gbm\eta_{t-1} )
~.    
\end{align}
Hence, we have obtained Eq.~\eqref{eq:RecursiveMxSt}
from Eq.~\eqref{eq:MxStDef} as promised.

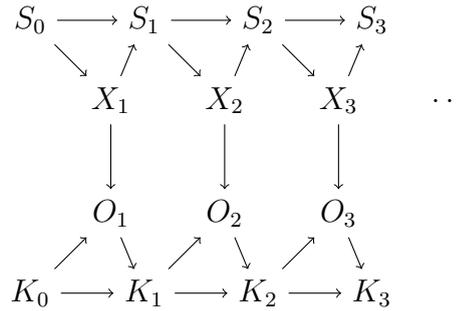
\begin{figure}[h]
\centering
\begin{tikzpicture}[node distance={15mm},main/.style = {draw, circle}] 
\node (1) {$S_0$}; 
\node (2) [below right of =1]{$X_1$};
\node (3) [below of =2]{$O_1$};
\node (4) [right of =1]{$S_1$};
\node (5) [below right of =4]{$X_2$};
\node (6) [below of =5]{$O_2$};
\node (7) [right of =4]{$S_2$};
\node (8) [below right of =7]{$X_3$};
\node (9) [below of =8]{$O_3$};
\node (10) [right of =7]{$S_3$};
\node (11) [right of =8]{$\dots$};
\node (12) [below left of =3]{$K_0$};
\node (13) [right of =12]{$K_1$};
\node (14) [right of =13]{$K_2$};
\node (15) [right of =14]{$K_3$};

\draw[->](1) to (2);
\draw[->](1) to (4);
\draw[->](2) to (3);
\draw[->](2) to (4); 
\draw[->](4) to (5);
\draw[->](4) to (7);
\draw[->](5) to (6);
\draw[->](5) to (7);
\draw[->](7) to (8);
\draw[->](7) to (10);
\draw[->](8) to (9);
\draw[->](8) to (10); 
\draw[->](12) to (13);
\draw[->](13) to (14);
\draw[->](14) to (15);
\draw[->](12) to (3);
\draw[->](3) to (13);
\draw[->](13) to (6);
\draw[->](6) to (14);
\draw[->](14) to (9);
\draw[->](9) to (15);
\end{tikzpicture} 
 
\caption{Bayesian network showing the structure of conditional independencies among latent states $S_t$ of the quantum source, the type $X_t$ of quantum state produced, the observable $O_t$ attained from interaction, and the state of knowledge $K_t$ that influences the work extraction protocol.}
\label{fig:BayesNet}
\end{figure}

Further manipulations, using the rules of probability and the conditional independencies indicated in the Bayesian network depicted in Fig.~\ref{fig:BayesNet}, allow us to express the optimal state of knowledge in terms of both conditional work distributions and simple linear algebraic manipulations of the generative HMM representing the memoryful source. We find
\begin{align}
\gbm\eta_t 
& = 
\Pr(S_t | O_t = o_t, S_{t-1} \sim \gbm\eta_{t-1} ) \\
& = 
\sum_{x \in \mathcal{X} }
\Pr(S_t , X_t = x | O_t = o_t, S_{t-1} \sim \gbm\eta_{t-1} ) \\
& = 
\sum_{x \in \mathcal{X} }
\Pr( X_t = x | O_t = o_t, S_{t-1} \sim \gbm\eta_{t-1} )
\Pr(S_t | X_t = x ,  S_{t-1} \sim \gbm\eta_{t-1} ) \\
& = 
\sum_{x \in \mathcal{X} }
\Pr( X_t = x | O_t = o_t, S_{t-1} \sim \gbm\eta_{t-1} )
\frac{\gbm\eta_{t-1} T^{(x)}}{\gbm\eta_{t-1} T^{(x)}\mathbf{1}} \\
& = 
\frac{\sum_{x \in \mathcal{X} }
\Pr( X_t = x , O_t = o_t | S_{t-1} \sim \gbm\eta_{t-1} )
\gbm\eta_{t-1} T^{(x)} / \gbm\eta_{t-1} T^{(x)}\mathbf{1} }{
\sum_{x' \in \mathcal{X} }
\Pr( X_t = x' , O_t = o_t | S_{t-1} \sim \gbm\eta_{t-1} )
} \\
& = 
\frac{\sum_{x \in \mathcal{X} }
\Pr( O_t = o_t | X_t = x, S_{t-1} \sim \gbm\eta_{t-1} ) \, \gbm\eta_{t-1} T^{(x)} }{
\sum_{x' \in \mathcal{X} }
\Pr( O_t = o_t | X_t = x', S_{t-1} \sim \gbm\eta_{t-1} ) \, \gbm\eta_{t-1} T^{(x')} \mathbf{1}
}
~.
\end{align}

\section{Using this to build a predictive work-extraction engine}

Rather than repeatedly calculating these ideal belief states on the fly for a specific realization of the process,
we can alternatively
systematically build up the set of all such belief states,
together with the
observation-induced transitions among them, to inform the design of an autonomous engine. 
There will be both a set of transient belief states and a set of recurrent belief states.
Both of these sets may be either finite or infinite.
In the case that only finitely many belief states are induced by observations,
we can explicitly build out the transition structure among them.  If there are infinitely many such states, then we would need to truncate unlikely states in the design of our finite physical engine~\cite{marzen17nearly}.

The physical memory system of our proposed engine should have at least one distinguishable state corresponding to every observation-induced belief state.
In fact, the memory must encode both the belief state and the most recent energy of the battery, so that conditioning on the new state of the battery is sufficient to supply the change in battery energy. 
These will likely be encoded with some finite precision, to avoid storing real numbers.
Conditioned on the state of the memory encoding $\gbm\eta$, the work extraction protocol will operate jointly on the quantum system, thermal reservoirs, and battery, to optimally extract work from the expected state $\xi = \sum_{x \in \mathcal{X}} \gbm\eta T^{(x)} \mathbf{1} \sigma^{(x)}$.

The subsequently observed work value $w$ 
uniquely updates
the memory from 
the state encoding $\gbm\eta$ to
the state encoding $\gbm\eta' = \frac{\sum_{x \in \mathcal{X} }
\Pr( W_t = w | X_t = x, S_{t-1} \sim \gbm\eta ) \, \gbm\eta T^{(x)} }{
\sum_{x' \in \mathcal{X} }
\Pr( W_t = w | X_t = x', S_{t-1} \sim \gbm\eta ) \, \gbm\eta T^{(x')} \mathbf{1}
}$.
Once the next quantum system arrives, the predictive quantum work extraction cycle begins again.

\section{Expected work extraction from the four approaches}
\label{app:analytic_work}

Here, we derive analytical expressions for the expectation value of work extraction from the various approaches compared in the main text.
We derive these expressions for the ($p$, $r$)-parametrized family of perturbed-coin processes 
of classically correlated quantum states discussed in the main text.

Recall that 
the quantum states $(\sigma^{(x)} )_{x\in \mathcal{X} }$ are the outputs of a Mealy HMM with labeled transition matrices $( T^{(x)} )_{x \in \mathcal{X}}$.  
An element of the labeled transition matrix $T_{s \to s'}^{(x)} = \Pr(X_t = x, S_{t} = s' | S_{t-1} = s)$ gives the joint probability of producing quantum state $\sigma^{(x)}$ and arriving at latent state $s'$, given that the HMM begins in state $s$.

For the perturbed-coin example, the HMM's labeled transition matrices are
\begin{equation}
    T^{(0)} = 
	\begin{bmatrix}
	1-p & 0 \\
	p & 0
	\end{bmatrix}
\qquad
\text{and }
\quad
	T^{(1)} = 
	\begin{bmatrix}
		0 & p \\
		0 & 1-p
	\end{bmatrix} ~.
\end{equation}
The stationary distribution over the latent states is $\gbm\pi=\left[\frac{1}{2},\frac{1}{2}\right]$ and the two different quantum states created are 
\begin{align}
    \sigma^{(0)} = \ket{0} \bra{0} =
 	\begin{bmatrix}
 		1 & 0 \\
 		0 & 0
 	\end{bmatrix}
 	\quad
 	\text{and }
 	\quad
 	\sigma^{(1)} = \ket{\psi} \bra{\psi} =
 	\begin{bmatrix}
 		r & \sqrt{r (1-r) }  \\
 		\sqrt{r (1-r)}  & 1 - r
 	\end{bmatrix} ~.
\end{align}
For any state of knowledge, $\gbm\eta_t=\left[ \frac{1}{2}+\epsilon_t, \frac{1}{2}-\epsilon_t \right]$ parameterized by $\epsilon_t \in [-\frac{1}{2},\frac{1}{2}]$, the induced expected state is 
\begin{align}
    \xi_t
	&
	= \rho^{(\epsilon_t)} :=
\sum_x  
	\begin{bmatrix}
		\tfrac{1}{2} + \epsilon_t &  \tfrac{1}{2} - \epsilon_t 
	\end{bmatrix}	
	T^{(x)} \mathbf{1} \sigma^{(x)} 
		= 
	\tfrac{1}{2}
	\begin{bmatrix}
		1 + r + \epsilon_t' \sqrt{1-r}  \,\, & \sqrt{r (1-r) }  - \epsilon_t' \sqrt{r}   \\
		\sqrt{r (1-r) }  - \epsilon_t' \sqrt{r}   \,\, & 1 - r - \epsilon_t' \sqrt{1-r}  
	\end{bmatrix} ~,
\end{align}
where $\epsilon' :=2 \epsilon(1-2p) \sqrt{1-r}\; \propto \epsilon$.

\subsection{Memory-assisted quantum processing}\label{app:memfull}

In the memory-assisted quantum approach,
we utilize work-extraction protocols 
that are thermodynamically optimized for the expected quantum state
\begin{align}
    \rho_t^* = \xi_t = \rho^{(\epsilon_t)} ~.
\end{align}

Recall that the eigenvalues and eigenstates of $\rho_t^*$
play a prominent role in the work-extraction statistics.
We find that the eigenvalues of $\rho^{(\epsilon)}$ are
\begin{align}
    \lambda_{\pm}^{(\rho^{(\epsilon)})} 
	= \tfrac{1}{2} \pm \tfrac{1}{2}  \sqrt{r +  \epsilon'^2}~,
\end{align}
with corresponding eigenstates
\begin{align}
	\ket{\lambda_{\pm}^{(\rho^{(\epsilon)})}} = 
	2^{-1/2}
	\Bigl[ r + \epsilon'^2 \mp \bigl(r + \epsilon' \sqrt{1-r} \, \bigr) \sqrt{r + \epsilon'^2 }  \, \Bigr]^{-1/2}
	\begin{bmatrix}
		\sqrt{r(1-r)} - \epsilon' \sqrt{r}   \\
		- r -\epsilon' \sqrt{1-r}  \pm  \sqrt{r+\epsilon'^2}
	\end{bmatrix}  
	~.
\end{align}

For all times after $t=0$,
the update rule for belief states simplifies to the following
\begin{align}
    \gbm\eta_{t+1} \Bigr|_{ W_{t+1} = w^{(\pm)} }
	& = 
	\frac{
		\sum_{x \in \mathcal{X} }
		 \bra{ \lambda_{\pm}^{(\xi_t)}  } \sigma^{(x)} \ket{ \lambda_{\pm}^{(\xi_t)}} \, 
		\gbm\eta_{t} T^{(x)} 
	}{
		\lambda_{\pm}^{(\xi_t)}
	} ~,
\end{align}
which can be expressed explicitly in terms of $p$, $r$, and $\epsilon_t$.

When $\epsilon_t=0$, we find that $\gbm\eta_{t+1}=\Bigl[ \frac{1}{2},\frac{1}{2} \Bigr] = \gbm\pi$. I.e., the stationary distribution is a fixed point for this dynamic over belief states.
Because of this, we break the initial symmetry by setting $\epsilon$ to a small non-zero value to obtain useful knowledge. 
In other words,
for the very first work-extraction protocol, we choose some $\rho_0^* \neq \xi_0$
to avoid an unstable fixed point of the update rule.
However, for all subsequent time steps, we choose $\rho_t^* = \xi_t$.

For the perturbed coin, the metadynamic of the belief state in the long run will yield two different results, depending on which regime the system is in, ``memory-apathetic regime" or ``memory-advantageous regime". 
The reason for this separation comes from the shape of their update function. For the memory-apathetic region, the update function has gradient less than unity, making $\epsilon=0$ an attractor. For the memory-advantageous region, the gradient of the update function exceeds unity, therefore making $\epsilon=0$ a repellor, at the same time two other points become part of a new attractor. 


In the long run,  transient belief states die out, leaving only the steady-state dynamics among the recurrent states of knowledge; any initial distribution over belief states generically converges to the stationary measure $\boldsymbol{\pi}_{\mathcal{K}}$. 
Hence the steady-state rate of work extraction is given by 
\begin{equation}
\lim_{t \to \infty}
    \langle{W_t}\rangle = \beta^{-1} \langle \text{D} [ \xi_t \| \gamma ]\rangle_{\Pr(K_t) = \boldsymbol{\pi}_{\mathcal{K}}} ~,
\end{equation}

The expected extracted work for the memory-apathetic region coincide with that of memoryless extraction and is given by
\begin{equation}
    \left\langle W^{\text{apathetic}}\right\rangle= \beta^{-1}\text{D}[\xi_0 \| \gamma]= \left\langle W^{\text{memoryless}} \right\rangle ~.
\end{equation}
On the other hand, in the regime where memory enhances the performance of the protocol, the stationary distribution over the two recurrent belief states $\gbm\eta$ and $\gbm\eta'$, with corresponding expected quantum states $\xi$ and $\xi'$, is
\begin{equation}
    \boldsymbol{\pi}_{\mathcal{K}} =\frac{1}{\lambda^{(\xi)}_++\lambda^{(\xi')}_+} [\lambda^{(\xi')}_+,\lambda^{(\xi)}_+] ~.
\end{equation}
Hence, the work extraction rate is given by 
\begin{equation}
    \left\langle W^{\text{advantage}} \right\rangle =\frac{\beta^{-1} }{\lambda^{(\xi)}_+ + \lambda^{(\xi')}_+} \Bigl( \lambda^{(\xi')}_+ \text{D}[\xi \| \gamma]+\lambda^{(\xi)}_+ \text{D}[\xi' \| \gamma] \Bigr) ~.
\end{equation}

\subsection{Classical approach}\label{app:class}
The derivation for the memory-assisted classical approach is similar to that of the memory-assisted quantum approach illustrated above. However rather than operating on the induced expected state $\xi_t$, the classical approach uses work-extraction protocols that are thermodynamically optimized for the decohered state 
\begin{equation}
    \rho_t^* = \xi_t^{\text{dec}}= 
	\frac{1}{2}\begin{bmatrix}
	1 + r + \epsilon_t' \sqrt{1-r}  \,\, & 0  \\
	0  \,\, & 1 - r - \epsilon_t' \sqrt{1-r}  
\end{bmatrix} ~.
\end{equation}

The eigenstates of $\rho_t^*$ are thus $\ket{0}$ and $\ket{1}$, independent of time in this case.
In the classical approach,
$\gbm\pi$ is no longer a fixed point of the belief-state update maps.
The transition probabilities between belief states are now given by
\begin{align}
\lambda_{\pm}^{(\xi_t^\text{dec})} = \tfrac{1}{2} \Bigl[ 1 \pm \bigl( r + \epsilon_t' \sqrt{1-r} \, \bigr) \Bigr] ~.
\end{align}

The metadynamic of belief in the classical case behaves as a reset processes. Unlike the quantum case with only two recurrent belief states, the 
classical protocol induces an infinite set of 
recurrent belief states.
To construct a finite-state autonomous engine, we could choose to truncate those states within some small $\delta$ distance from another recurrent state, or truncate belief states with negligible probability, with vanishing work-extraction penalty. 

We find that the work-extraction rate can again be computed by averaging the relative entropy-now between the decohered expected state and thermal state-over all recurrent states of knowledge:
\begin{align}
    \langle W_t^\text{classical} \rangle =
    \beta^{-1} \langle \text{D}[\xi_t^\text{dec} \| \gamma]  \rangle_{\Pr(K_t^\text{classical})} ~.
\end{align}


     

\subsection{Overcommitment to the most likely outcome}\label{app:naive}

The ``overcommitment'' approach used for comparison in the main text bets exclusively on the most likely outcome in $\{ \sigma^{(x)} \}_x$.

The expected thermodynamic cost of misaligned expectations during work extraction can be quantified exactly via the relative entropy $\text{D} [ \rho_0 \| \alpha_0 ]$ between the actual input 
$\rho_0$ and the anticipated input $\alpha_0$ that the protocol is optimal for, if we assume that the final state is independent of the initial state~\cite{riechers2021initial, riechers2021impossibility}.  Hence, if we design the protocol for a pure state, but operate on a mixed state, we will encounter divergent thermodynamic penalties.

Accordingly, we can observe divergent thermodynamic costs when we design the Skrzypczyk work extraction protocol to be optimal for operation on a pure state.  

Using the Skrzypczyk protocol (with $N$ relaxation steps)
to extract work from the pure state bet upon,
we see that the first bath state swapped with the system for energy extraction is not exactly pure, but rather satisfies $\gamma_\text{B} = \Bigl( 1 - \frac{e^{-\beta E_0}}{N(e^{-\beta E_0} + e^{-\beta E_1})} \Bigr) \ket{0} \bra{0}
+ \frac{e^{-\beta E_1}}{N(e^{-\beta E_0} + e^{-\beta E_1})} \ket{1} \bra{1}$.
(Recall that $H$ is the Hamiltonian for the system, not of the bath.) Any purity of the actual input beyond this initial bath purity is wasted.
The input state leading to minimal entropy production under this protocol is thus 
a unitary rotation of $\gamma_\text{B}$.

Thus, for this use case of the Skrzypczyk protocol, the minimally dissipative state $\alpha_0$ becomes pure as $N \to \infty$.  As $N \to \infty$, we observe the battery's final expected energy diverging (but only logarithmically in $N$) to negative infinity, when this protocol acts on any other state.  I.e., 
$\langle W \rangle  \sim - \beta^{-1} \ln N$.

More specifically, we can leverage 
Eqs.~\eqref{eq:WorkExtractionValues}
and \eqref{eq:WorkExtractionProbs}
to calculate the expected value of work
for the overcommitment approach.
We find that 
\begin{align}
    \Pr(W=w^{(-)} | \sigma^{(\text{argmin}_{x} \gbm\eta_t  T^{(x)}  \mathbf{1} )}) = |\braket{1 | \psi}|^2 = 1-r ~.
\end{align}
With
$\lambda_- = \frac{e^{-\beta E_1}}{N(e^{-\beta E_0} + e^{-\beta E_1})}$,
$w^{(\lambda_-)} \sim - \beta^{-1} \ln N$, 
and 
$\text{min}_x  \gbm\eta_t  T^{(x)}  \mathbf{1}  \sim
\text{min}(p,1-p)$ when $\gbm\eta_t$ is close to either latent state,
we anticipate that the overcommited work penalty diverges as $- \beta^{-1} (1-r) \min(p, 1-p) \ln N$,
as observed.

Interestingly, for a finite number of bath interactions, some work can be extracted on average within certain regimes.  But other regions of parameter space would yield very negative work-extraction averages.

Unlike the other approaches,
the expectation value of work in the 
overcommitment approach cannot be written as a relative entropy.
Hence, whereas the other approaches were guaranteed to have non-negative work extraction on average,
the overcommitment approach enjoys no such guarantee of non-negativity.
Indeed in the limit of many bath interactions,
the overcommitment approach leads to infinitely negative work extraction.

%% file: Appendices/Appendix4.tex

\chapter{Proofs for Chapter 5} 

\label{AppendixB} 
\section{Narrowing down search space}
\label{app:narrow_search}
In this section we derive the optimal states that each protocol should be tailoring to. Since the update of belief state is entirely dependent on the set of probability $\{\bra{\lambda_i}\sigma^{(x)}\ket{\lambda_i}\}_i$. We will first show that for any set of eigenbasis $\{\ket{\lambda_i}\}_i$ the protocol tailors to, the optimal state that results in least dissipation is given by 
\begin{equation}
\label{eq:best_state}
    \rho_k^*=\sum_ip^*_i\ket{\lambda_i}\bra{\lambda_i},\quad p_i^*=\bra{\lambda_i}\xi_k\ket{\lambda_i} \quad\text{for}\quad i\in {0,1}
\end{equation}
where $\xi_k$ is the expected state formed from $\gbm\eta^{(k)}$. Then we show that the complex degree of freedom can be ignored when choosing the eigenbases.

First, notice that the Bayesian update is characterized solely by the work distribution which are given by Eq.~\eqref{eq:work_prob}. Which also means that if the work distribution for 2 different protocols, $\mathcal{W}_{\rho}$ and $\mathcal{W}_{\rho'}$ are the same, then all the future statistics should remain the same. The only difference will then only be in the dissipation incurred. 
For any $\rho$ that the protocol is tailored to, we can write the expected dissipation term as
\begin{equation}
    \D(\xi_k\|\rho) = -\tr(\xi\ln\rho) - S(\xi_k)~.
\end{equation}
Since the second term is independent of $\rho$, we simply wish to minimise the first term, i.e., we wish to find $\rho^*$ that maximize $\tr(\xi_k\ln\rho)$. We expand the expression using the fact that $\rho=\sum_i\lambda_i\ket{\lambda_i}\bra{\lambda_i}$
\begin{equation}
\label{eq:optimality_proof}
    \tr(\xi\ln\rho)  = \sum_i\ln \lambda_i\bra{\lambda_i}\xi_k\ket{\lambda_i}
\end{equation}
Now in order to ensure 2 protocols, $\mathcal{W}_{\rho}$ and $\mathcal{W}_{\rho'}$ induce the same future statistics in the Baysian sense, we require that $\bra{\lambda_i}\xi_k\ket{\lambda_i} = \bra{\lambda'_i}\xi_k\ket{\lambda'_i}$ for all $i\in\{0,1\}$ and $\{\lambda_i\}_i,\{\lambda'_i\}_i$ are the eigvectors of $\rho$ and $\rho'$ respectively. The only way to maximize is then via changing the eigenvalues. A simple analysis of Eq.~\eqref{eq:optimality_proof} shows that it reaches maximum value when $\lambda_i = \bra{\lambda_i}\xi_k\ket{\lambda_i}$, which then shows the validity of our claim in Eq.~\eqref{eq:best_state}.

In order to search through all possible eigenbasis, the search has to go through all $\theta$ value as well as $\phi$, this is more tractable compare to the searching through the whole Bloch sphere but can still be computationally difficult. Instead, we argue that the $\phi$ is redundant for this optimization. Given any 2 quantum states in the Bloch sphere, it is always possible for us to define a plane such that both of the quantum states as well as the maximally mixed state are co-planar, let this plane be defined by some $\phi_0$ away from the computational basis. Without loss of generality, we can always set $\phi_0=0$. Which means both of the quantum states has no imaginary components. Notice again that the Bayesian update is essentially looking at the expression of
\begin{equation}
\begin{split}
\label{eq:bayes_ratio}
     \frac{\Pr(O=o_t | \sigma_{A_t}=\sigma^{(x)})}{\sum_{x\in\mathcal{X}}\Pr(O=o_t | \sigma_{A_t}=\sigma^{(x)})} = \frac{\bra{\lambda_{\theta,\phi}}\sigma^{(x)}\ket{\lambda_{\theta,\phi}}}{\sum_{x\in\mathcal{X}}\bra{\lambda_{\theta,\phi}}\sigma^{(x)}\ket{\lambda_{\theta,\phi}}}
\end{split}
\end{equation}
Again, without the loss of generality, we can assume one of the quantum state , $\sigma^{(0)}$ is diagonalized in the computational basis such that 
\begin{equation}
    \sigma^{(0)} = \begin{pmatrix} a&0\\ 0&1-a
    \end{pmatrix} ,\quad \sigma^{(1)} = \begin{pmatrix} b&c\\ c&1-b
    \end{pmatrix}~.
\end{equation}
Doing so helps us to analyze in expression in Eq.~\eqref{eq:bayes_ratio}, it can be rewritten as
\begin{equation}
    \frac{\Pr(O=o_t | \sigma_{A_t}=\sigma^{(x)})}{\sum_{x\in\mathcal{X}}\Pr(O=o_t | \sigma_{A_t}=\sigma^{(x)})} = \frac{a\cos\theta+\sin^2\frac{\theta}{2}}{(a+b)\cos\theta+\sin^2\frac{\theta}{2}+c\sin\theta\cos\phi}
\end{equation}
for $\theta\in[0,2\pi],\phi\in[0,\pi]$~. Notice that this function is smooth and continuous over the range of $\theta$ and can only reach its maximum range when $\cos\phi=1$. This indicates that $\phi = n\pi$, i.e., the eigenbasis searched through should always lie on the plane itself, hence searching through different $\phi$ is not required if we are already searching through all $\theta$ values.

For any $N$ dimensional eigenstate, $\ket{\varphi}$ it can be parametrized by 
$N-1$ parameters, i.e.
\begin{equation}
    \ket{\varphi}= \sum_{i=0}^{N-1}\alpha_i\ket{i}
\end{equation}
where 
\begin{equation}
    \begin{split}
    \alpha_0 &= \cos\frac{\theta_1}{2}\\
    \alpha_1 &=  \sin\frac{\theta_1}{2}\cos\frac{\theta_2}{2}\\
    &\vdots\\
    \alpha_{N-1}&=\sin\frac{\theta_1}{2}\sin\frac{\theta_2}{2}\cdots\sin\frac{\theta_{N-1}}{2}
    \end{split}
\end{equation}
where each $\theta_i \in [0,2\pi]$. For each of the eigenstate, we will choose to tailor the protocol according to the state in Eq.~\eqref{eq:optimality_proof}. This can then guarantee the optimal work extracted.

\section{Bounds on rate of free enegry}

\label{app:free_energy_rate_bound}
It is necessary for us to show that there exist a separation between the best possible adaptive local strategy and the global collective strategy. To compare these, we can compare the asymptotic work extraction rate based on policy obtained via DPP and the free energy rate of quantum state. In the previous section we have already discussed on the asymptotic rate of DP-agent, here we show that even the lower bound of the free energy rate can be shown be to higher, hence a separation. 

To start, we first define the so-called free energy rate. Consider a quantum state in the form of 
\begin{equation}
    \rho_{Q_1\cdots Q_N}=\sum_{x_{1:N}}\Pr(x_{1:N})\bigotimes_{t=1}^N\sigma^{(x_t)}_{Q_t}.
\end{equation}
The free energy rate can then be defined as
\begin{equation}
    r = \lim_{N\to\infty}\frac{\beta^{-1}S(\rho_{Q_1\cdots Q_N}\|\gamma^{\otimes N})}{N}
\end{equation}
if the limit on the right hand side exists and is well behaved. Now suppose it is then we can expand the expressions to find that 
\begin{equation}
\begin{split}
    r=&\lim_{N\to\infty}\frac{\beta^{-1}S(\rho_{Q_1\cdots Q_N}\|\gamma^{\otimes N})}{N}\\
    =& \lim_{N\to\infty}\frac{\beta^{-1}}{N}\left[-\Tr(\rho_{Q_1\cdots Q_N}\ln\gamma^{\otimes N})-S(\rho_{Q_1\cdots Q_N})\right]\\
    =& \lim_{N\to\infty}\frac{\beta^{-1}}{N}\left[-\Tr(\rho_{Q_1\cdots Q_N}\beta(F\mathbb{I}-H)^{\otimes N})-S(\rho_{Q_1\cdots Q_N})\right]\\
    =&\lim_{N\to\infty}\frac{1}{N} \left[\Tr(\rho_{Q_1\cdots Q_N}H^{\otimes N})-S(\rho_{Q_1\cdots Q_N})\right]-\mathcal{F}_\text{eq}
\end{split}
\end{equation}
Suppose we are dealing with a degenerate Hamiltonian in the form of $H=\omega\id$ then the expression can be further simplified to
\begin{equation}
\label{eq:free_energy_rate}
    r = \lim_{N\to\infty} \frac{-S(\rho_{Q_1\cdots Q_N})}{N}+\ln2
\end{equation}
So then what we really need to find is just the expression $\lim_{N\to\infty} \frac{-S(\rho_{Q_1\cdots Q_N})}{N}$, which is essentially the entropy rate of the quantum system. Now to we can likewise denote the combine systems of $Q_1\cdots Q_{N-1}$ as $B$ and $Q_N$ as $A$. Then by the definition of quantum conditional entropy, we can obtain that
\begin{equation}
    S(\rho_{BA})=S(\rho_B)+S(A|B)_\rho
\end{equation}
Now if the limit on the von Neumann entropy rate, $S_{vN}$, does exist, we can rewrite them to be $S(\rho_B)=s_{vN}(N-1)$ and $S(\rho_{BA})=Ns_{vN}$. Then we conclude that
\begin{equation}
    s_{vN} = S(A|B)_\rho~.
\end{equation}
The von Neumann entropy rate is not an easy quantity to calculate, numerically it is also difficult to compute since it involves eigen-decomposition which scales exponentially with $N$. We therefore turn to finding the upper bound of the entropy rate, which will then give us the lower bound on the free energy rate in Eq.~\ref{eq:free_energy_rate}.

To find a meaning upper bound on the entropy rate, we turn to data processing inequality, which essentially states that any data processing cannot make 2 probability distribution further apart. Here we can define a quantum channel that acts on $B$ while leaving $A$ untouched i.e.,
\begin{equation}
    \eta_{AD} = (\text{id}_A\otimes \mathcal{E}_B)\rho_{AB} , \quad \eta_A\otimes\eta_D = (\text{id}_A\otimes \mathcal{E}_B)(\rho_A\otimes\rho_B)~.
\end{equation}
We can then show the quantum data processing inequality
\begin{equation}
    \begin{split}
        S(A|D)_{\eta} &= S(\eta_A)-S(\eta_{AD}\|\eta_A\otimes\eta_D)\\
        &=S(\rho_A) - S\left[ (\text{id}_A\otimes \mathcal{E}_B)(\rho_{AB})\| (\text{id}_A\otimes \mathcal{E}_B)(\rho_A\otimes\rho_B)\right]\\
        &\geq S(\rho_A)-S (\rho_{AB}\| \rho_A\otimes\rho_B)\\
        &= S(A|B)_\rho
    \end{split}
\end{equation}
Conveniently, any quantum measurement does constitute a channel as well, however we need to ensure that the measurement is minimally disturbing to obtain a tight bound. For example, a trivial measurement, which always return an outcome $0$, will render the bound loose since such measurement gives us no information about the state in $B$. A possible candidate for such a measurement could be the Helstrom measurement. This may or may not be the optimal measurement but it is sufficient for the lower bound in our case.

The Helstrom measurement is the optimal measurement to distinguish 2 quantum states that results in the lowest error probability. Based on the structure of the quantum state, the best an agent can know about the current latent state of the system is if he is able to distinguish the very last quantum state, to be either $\sigma^{(0)}$ which implies the current latent state to be in $S$ or $S'$ if $\sigma^{(1)}$ is the last state. Here we denote $\tau_n^{(s)},s={0,1}$ to be the quantum states which are $n$-partite quantum states ending with $\sigma^{(s)}$. 
In this case, the Helstrom measurement can be expressed as
\begin{equation}
    M_0 = \sum_{\lambda_i>0} \ket{\lambda_i}\bra{\lambda_i},\quad M_1 = \sum_{\lambda_i<0} \ket{\lambda_i}\bra{\lambda_i}~,
\end{equation}
where $p_0\tau_n^{(0)} -p_1\tau_n^{(1)}=\sum_i\lambda_i\ket{\lambda_i}\bra{\lambda_i}$ is the eigen-decomposition. 
The post-measurement state can then be written as
\begin{equation}
\begin{split}
    \eta_{AD}&=\sum_{s,y} \gbm\pi(s)\Tr\left[M_y\tau_n^{(s)}\right] \ket{y}\bra{y}_D\otimes \xi_A^{(s)}\\
    &=\sum_yp_y\ket{y}\bra{y}_D\otimes \chi_A^{(y)}
\end{split}
\end{equation}
where $p_y=\sum_s\gbm\pi(s)\Tr\left[M_y\tau_n^{(s)}\right]$ and $\chi^{(y)} = \frac{1}{p_y}\sum_s\gbm\pi(s)\Tr(M_y\tau_n^{(s)})\xi^{(s)}$. The upper bound on the entropy rate can then be written as
\begin{equation}
    s_{vN} = S(A|B)_\rho \leq S(A|D)_\eta = \sum_y p_yS\left[\chi^{(y)}\right]
\end{equation}
which in turn provide us with the lower bound on the free energy rate,
\begin{equation}
    r \geq \lim_{n\to\infty} \ln 2 - S(A|D)_\eta~,
\end{equation}
where the dependence on $n$ comes from $\tau_n^{(s)}$. For the numerical simulation, taking $n=12$ is enough to create a gap in performance.

%% file: Appendices/Appendix5.tex
\chapter{Proofs for Chapter 6} 

\label{AppendixC} 
\section{Learning Algorithm}
\label{sec:learning_algo}
and at each time step $k\in\{1,\ldots,N\}$, a reward measurement is performed on the direction $\ket{\psi_k}$. Formally the reward measurement is described by a rank-1 two-outcome POVM $\lbrace \psi_k, \psi_k^\perp \rbrace$ where $\psi_k=\ketbra{\psi_k}{\psi_k}$ corresponds to $R_k=1$ and $\psi_k^\perp=\mathbb{I}-\psi_k$ corresponds to $R_k=0$. The observed reward $R_k$ is distributed accordingly to Born's rule, i.e 
\begin{align}\label{eq:prob_quantum_reward}
    \mathrm{Pr} \left(R_k = r_k \right) = 
    \begin{cases}
    |\langle \psi_k | \psi \rangle |^2, \quad & r_k=1,\\
    1-|\langle \psi_k | \psi \rangle |^2, \quad & r_k=0. 
    \end{cases}
\end{align}
The goal of this work was to design an algorithm that uses measurements that is minimally disturbing to the unknown state $\ket{\psi}$. The figure of merit used in the paper was a quantity termed \emph{regret}, which is defined as
\begin{align}\label{eq:infidelity_regret}
    \regret (N) = \sum_{k=1}^N (1 - |\langle \psi_k | \psi \rangle|^2).
\end{align}
This also corresponds to the cumulative sum of infidelities between the unknown state $|\psi\rangle$ and the selected direction $\psi_k$. The algorithm is formulated in terms of the Bloch vector which means that at each time step $k$, the algorithm uses the previous and current information of the reward $\lbrace r_1,\ldots , r_{k} \rbrace$ as well as the corresponding measurement directions $\lbrace \ket{\psi_1},\ldots , \ket{\psi_{k}} \rbrace $ to output a Bloch vector $a_{k+1}\in\mathbb{R}^3$ that is normalized, i.e., $\| a_{k+1} \| =1$. This will then be used for the next reward measurement described by a two-outcome POVM  $\lbrace \psi_{k+1} , \psi_{k+1}^\perp \rbrace$ (or equivalently the direction $\ket{\psi_{k+1}}$), given by:
\begin{align}
    \psi_k = \frac{1}{2} (\mathbb{I} + a_k \cdot \sigma ),
\end{align}
where $\sigma = (\sigma_x , \sigma_y , \sigma_z )$ are the Pauli matrices. The pseudo-code for how the algorithm updates the reward measurements can be found in Algorithm~\ref{alg:linucb_vn_var} (\textsf{LinUCB Vanishing Variance Noise}). We now describe all the involved quantities. Note that the original algorithm presented in~\cite{lumbreras24pure} was formulated for a more general $d$-dimensional qudits, but here we focus on the specific qubit case. 

\begin{algorithm}
	\caption{\textsf{LinUCB-VVN}} 
	\label{alg:linucb_vn_var}
 
        Require: $\lambda_0\in\mathbb{R}_{>0}$, $t\in\mathbb{N}$
        
        Set initial design matrix $V_0 \gets \lambda_0\mathbb{I}$. 
        
        Set initial estimate of variance $\hat{\sigma}^2_{1} \gets 1$. 
        
        Set initial Bloch vectors $a_{1,1}=(\frac{1}{\sqrt{2}},0,\frac{1}{\sqrt{2}})^\intercal$, $a_{1,2}=(-\frac{1}{\sqrt{2}},0,\frac{1}{\sqrt{2}})^\intercal$, $a_{1,3}=(0,\frac{1}{\sqrt{2}},\frac{1}{\sqrt{2}})^\intercal$, $a_{1,4}=(0,-\frac{1}{\sqrt{2}},\frac{1}{\sqrt{2}})^\intercal$.

        \For{$l=1,2,\ldots$}{
            \vspace{1mm}
            \textit{Optimistic action selection}
            \vspace{1mm}
            
            \For{$i = 1,2,3,4$}{               
                \vspace{1mm}
                \textit{Perform $t$ independent measurements for each $a_{l,i}$} 
                \vspace{1mm}
                
                \For{$j=1,...,t$}{
                    Measure the unknown $|\psi\rangle$ in the Bloch vector directions given by $a_{l,i}$ and receive outcomes $r_{l,i,j}$ 
                }
            }

            \vspace{1mm}
            \textit{Update design matrix}
            \vspace{1mm}
            
            $V_{l} \gets V_{l-1} + \frac{1}{\hat{\sigma}_{l}^2} \sum_{i=1}^{4} a_{l,i}  a_{l,i}^\intercal $        
            
            \vspace{1mm}
            \textit{Update Least Square Estimator for each subsample}
            \vspace{1mm}

            \For{$j=1,2,...,t$:}{
                $\tilde{\theta}_{l,j}^\text{w} \gets V_{l}^{-1}  \sum_{s = 1}^{l} \frac{1}{\hat{\sigma}_{l}^2} \sum_{i=1}^{4} a_{s,i} (2r_{s,i,j}-1)$
            } 
            
            \vspace{1mm}
            \textit{Update Bloch vectors for estimates}
            \vspace{1mm}
            
            Compute ${\theta}_{l}^{\text{\tiny wMOM}}$ using $\lbrace \tilde{\theta}_{l,j}^\text{w} \rbrace_{j=1,\ldots t}$ according to Eq.~\eqref{eq:median_of_means}

            \vspace{1mm}
            \textit{Update Bloch vectors for measurements}
            \vspace{1mm}
            
            Select Bloch vectors $a_{l+1,i}$ using ${\theta}_{l}^{\text{\tiny wMOM}}$ according to Eq.~\eqref{eq:action_general_update}

            \vspace{1mm}
            \textit{Update estimator of variance for $a_{l+1,i}$}
            \vspace{1mm}
            
            $\hat{\sigma}^2_{l+1} \gets \frac{2\zeta}{\sqrt{\lambda_{\max}(V_{l})}}$
        }
\end{algorithm}
The algorithm presented in Algorithm~\ref{alg:linucb_vn_var} accepts a parameter $t$, which controls the success probabilities. The protocol operates on $N$ identically prepared copies of an unknown pure quantum state, with each copy corresponding to a measurement round indexed globally by $k\in\{1,\ldots,N\}$. The measurements are organized into blocks of $4t$ rounds, with each block indexed by $l\in 1,\ldots,L$, where $L=N/(4t)$ denotes the total number of blocks. Each block $l$ consists of four sub-blocks, indexed by $i\in\{1,2,3,4\}$, and each of the sub-block contains $t$ measurements, indexed by $j\in\{1,\ldots, t\}$.
To precisely identify each individual measurement round, we define a composite index $(l,i,j)$, with components:
\begin{equation}
    (l,i,j) \quad\text{where}\quad\begin{cases}
        l\in\{1,\ldots,L=\frac{N}{4t}\}  &\quad \text{(block index)}\\
        i\in\{1,2,3,4\}&\quad \text{(sub-block index)}\\
        j\in\{1,\ldots,t\}&\quad \text{ (iteration within sub-block)}~.
    \end{cases}
\end{equation}
Each composite index $(l,i,j)$, maps to a unique global round index $k\in\mathbb{N}$ via the function:
\begin{equation}
    k = \underbrace{4t(l-1)}_{\text{offset for previous blocks}}+\underbrace{4(j-1)}_{\text{offset within current block}}+\underbrace{i}_{\text{position with sub-block}}
\end{equation}
Within each group $l$, the algorithm selects a set of four Bloch vectors $\lbrace a_{l,i} \rbrace_{i=1}^4$. For each sub-block, the algorithm performs $t$ independent measurements on $t$ distinct copies of the unknown quantum state. All measurements within sub-block $i$ are defined by the Bloch vector $a_{l,i}$.

The vectors $a_{l,i}$ are updated recursively, incorporating information from previous measurement outcomes along with the measurements themselves. This update is performed after completing each block of $4t$ measurements, allowing the algorithm to adaptively refine its estimates of the underlying quantum state.

At each block $l\in\{1,\ldots,L\}$, the algorithm computes an estimate of the Bloch vector associated with the unknown quantum state using weighted least-squares estimators.  Specifically, for each of the $t$ measurement rounds within the block, indexed by $j\in\{1,\ldots,t\}$, the algorithm constructs
\begin{align}
\tilde{\theta}_{l,j}^\text{w} = V_{l}^{-1}  \sum_{m = 1}^{l} \frac{1}{\hat{\sigma}_m^2} \sum_{i=1}^4a_{m,i}( 2r_{m,i,j}-1),
\end{align}
where:
\begin{itemize}
    \item $r_{m,i,j}\in\lbrace 0 ,1 \rbrace$ denotes the binary outcome of the quantum measurement along direction $a_{m,i}$ at round $(m,i,j)$,
    \item $\hat{\sigma}_{m}^2$ is an upper bound on the variance of  $r_{m,i,j}$ in block $m$, defined in Eq.~\eqref{eq:weight_choice},
    \item $V_{l}$ is the design matrix capturing the accumulated measurement directions, recursively defined by: 
\end{itemize}
\begin{align}
  V_{l} = V_{l-1} + \frac{1}{\hat{\sigma}_{l}^2} \sum_{i=1}^{4} a_{l,i}  a_{l,i}^\intercal, \quad \text{with}\hquad V_0 = \lambda_0 \mathbb{I},\hquad \lambda_0 > 0~.
\end{align}
To obtain enhance the robustness of the estimate of the Bloch vector, the algorithm applies the Median-of-Means (MoM) technique. It selects a particular weighted least-squares estimator among the $t$ candidates using:
\begin{equation}
    j^* = \argmin_{j\in[t]} \text{median}\lbrace \|\tilde{\theta}^\text{w}_{l,j} - \tilde{\theta}^\text{w}_{l,j'} \|_{V_{l}}: j'\in [k]/j\rbrace,
\end{equation}
where $\| x \|^2_{V_{l}} = \langle x, V_{l} x\rangle $ is the weighted norm and $\langle \cdot , \cdot \rangle$ is the standard inner product between real vectors. The weighted median-of-means (wMoM) estimator is then:
\begin{equation}
\label{eq:median_of_means}
        \tilde{\theta}_{l}^{\text {wMoM}} := \tilde{\theta}_{l,j^*}, \hquad \text{and}\hquad {\theta}^{\text{wMoM}}_{l} = \frac{\tilde{\theta}^{\text{wMoM}}_{l}}{\| \tilde{\theta}^{\text{wMoM}}_{l} \|_2} ~,
\end{equation}
where $\theta^{\text{wMoM}}_{l}$ is the normalized wMoM estimator used for subsequent measurement updates.

Using the estimate $\theta_{l}^{\text {wMoM}}$, the algorithm determines the next set of measurement directions
$\{a_{l+1,i}\}_{i=1}^4$ for block $l+1$ via:
\begin{equation}
\label{eq:action_general_update}
    a_{l+1,i} = \frac{\tilde{a}_{l+1,i}}{\| \tilde{a}_{l+1,i}\|_2}, \hquad \text{where}\hquad  \tilde{a}_{l+1,i} = {\theta}^{\text{wMoM}}_{l} - \frac{(-1)^i}{\sqrt{\lambda_{\min}(V_{l})}}v_{l,\lceil\frac{i}{2}\rceil}~.
\end{equation}
Here $\lbrace v_{l,0},v_{l,1}\rbrace$ are eigenvectors of $V_{l}$ associated with its two smallest eigenvalues, and $\lambda_{\min}(V_{l})$ denotes the smallest eigenvalue. This construction perturbs the estimated Bloch vector along the least certain directions, while ensure that the infelidelity between the estimate and the unknown state are still small, balancing exploration and exploitation.

The variance estimate $\hat{\sigma}^2_{l}$ is chosen to upper-bound the variability of measurement outcomes and is defined as:
\begin{align}\label{eq:weight_choice}
    \hat{\sigma}^2_{l} = \frac{2\zeta}{\sqrt{\lambda_{\max}(V_{l})}}~,
\end{align}
where $\lambda_{\max}(V_{l})$  is the largest eigenvalue of the design matrix and $\zeta$ is any constant satisfying $\zeta \geq 334812\sqrt{2}+1296\sqrt{6}$. The specific choice of variance estimator $\hat{\sigma}^2_{l}$ in Eq.~\eqref{eq:weight_choice} ensures that it serves as a upper bound on the true variance of the measurement outcomes $r_{l,i,j}$ with high probability of $1-\delta$, which result from projecting the unknown quantum state onto the directions $a_{l,i}$. Importantly, as demonstrated in~\cite{lumbreras24pure}, this formulation permits the derivation of rigorous high-probability concentration bounds. In particular, it guarantees that the weighted Median-of-Means estimator $\tilde{\theta}^{\text{wMoM}}_{l} $ remains a statistically reliable and robust approximation of the true Bloch vector of the unknown quantum state.

We are now prepared to present the main theoretical result from~\cite{lumbreras24pure},  which characterizes the performance of the algorithm in terms of two central metrics:
\begin{enumerate}
    \item Regret scaling as a function of the total number of measurement rounds $N$, and
    \item Infidelity behavior between the true quantum state $\ket{\psi}$ and the sequence of adaptive measurement directions $\ket{\psi_k}$
\end{enumerate}
The theorem formally establishes high-probability bounds that quantify how efficiently the algorithm learns the underlying quantum state, both in terms of cumulative performance (regret) and per-round estimation accuracy (infidelity).
\begin{theorem}[{\cite[Theorem 9 and 11]{lumbreras24pure}}]\label{th:fidelity_gurantee_bandit}
 Fix $L\in\mathbb{N}$,  $t= \lceil 24\ln \left( L / \delta \right) \rceil $ for some $\delta > 0$ and time horizon $N = 4t L$. Then we have that the quantum state tomography Algorithm~\ref{alg:linucb_vn_var} over an unknown state $\ket{\psi}$ achieves with probability at least $1-\delta$ the regret Eq.~\eqref{eq:infidelity_regret} scaling
\begin{align}
    \regret (N) \leq C_1 \log \left( \frac{N}{\delta} \right)\log ( N ),
\end{align}
 for some universal constant $C_1 > 0$. Also for all $k\in\{1,\ldots,N\}$ the selected 2-outcome POVM's given by the rank-1 projector $\psi_k = \ketbra{\psi_k}{\psi_k}$ achieve infidelity
 \begin{align}
   1- |\langle \psi_k | \psi \rangle|^2 \leq C_2 \frac{\log\left(\frac{N}{\delta} \right)}{k},
 \end{align}
 for some universal constant $C_2 > 0$. Moreover setting $\delta = \frac{1}{N}$  it holds
 \begin{align}
     \Ex [\regret (N)] = C_3\log^2 (N), \quad \Ex [1- |\langle \psi_k | \psi \rangle|^2] \leq C_4 \frac{\log (N) }{k},
 \end{align}
 for some universal constants $C_3 , C_4 > 0$ and the expectation is taken over the probability distribution of outcomes and measurements induced by the policy. 
\end{theorem}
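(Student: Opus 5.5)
The plan is to work entirely with Bloch vectors. Write the unknown state as $\psi=\tfrac12(\mathbb{I}+\theta\cdot\sigma)$ with $\|\theta\|_2=1$. For a measurement direction $a$ we have
\begin{equation}
1-|\langle\psi_a|\psi\rangle|^2=\tfrac12(1-a\cdot\theta)=\tfrac14\|a-\theta\|_2^2 .
\end{equation}
So both claims of Theorem~\ref{th:fidelity_gurantee_bandit} reduce to bounding $\|a_{l,i}-\theta\|_2^2$ block by block. The key structural fact is that the centred outcome $X=2r-1$ has mean $a\cdot\theta$ and variance $1-(a\cdot\theta)^2\le 2(1-a\cdot\theta)=\|a-\theta\|_2^2$. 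The noise therefore vanishes as the actions approach $\theta$, and this is exactly what the weights $1/\hat\sigma_l^2$ are meant to exploit. I would prove everything by induction over blocks $l$, on a good event $\mathcal{G}$ of probability at least $1-\delta$. On $\mathcal{G}$, for every $l\le L$ three things hold: (i) $\hat\sigma_l^2$ upper-bounds the true outcome variance in block $l$; (ii) $\|\tilde\theta^{\text{wMoM}}_l-\theta\|_{V_l}\le c$ for a constant $c$; and (iii) $\lambda_{\max}(V_l)\gtrsim l^2/\zeta^2$ and $\lambda_{\min}(V_l)\gtrsim l/\zeta$.

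\emph{Concentration.} Fix $j$ and condition on (i) for all earlier blocks. The $j$-th weighted least-squares estimator then has error $V_l^{-1}\sum_m\hat\sigma_m^{-2}\sum_i a_{m,i}\,\text{noise}_{m,i,j}$. Its $V_l$-norm has second moment $O(1)$: the weighted noise has variance at most $1$ per term, the $\lambda_0\mathbb{I}$ regulariser bias is $O(\sqrt{\lambda_0})$, and since $V_l$ depends only on past actions, a martingale/Chebyshev argument applies. Hence each subsample estimator satisfies (ii) with probability at least $3/4$. Median-of-means over $t=\lceil 24\ln(L/\delta)\rceil$ independent subsamples, with a Hoeffding bound on the number of bad subsamples, boosts this to failure probability at most $\delta/L$ per block. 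A union bound over the $L$ blocks gives $\mathcal{G}$. Normalising then costs only a constant: since $\|\theta\|=1$ and $\lambda_{\min}(V_l)\ge\lambda_0$, we get $\|\theta^{\text{wMoM}}_l-\theta\|_2^2\lesssim c^2/\lambda_{\min}(V_l)$.

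\emph{Eigenvalue growth and closing the induction.} By Eq.~\eqref{eq:action_general_update}, each pair $\tilde a_{l+1,2s-1},\tilde a_{l+1,2s}$ sums in outer product to $2\theta^{\text{wMoM}}_l{\theta^{\text{wMoM}}_l}^{\intercal}+2v_{l,s}v_{l,s}^\intercal/\lambda_{\min}(V_l)$. The update therefore adds $\Theta(1/\hat\sigma^2_{l+1})$ along $\theta$, and $\Theta\big(1/(\hat\sigma^2_{l+1}\lambda_{\min}(V_l))\big)$ along the two least-explored directions. With $\hat\sigma^2_{l+1}=2\zeta/\sqrt{\lambda_{\max}(V_l)}$ this gives two recursions. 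The first is $\sqrt{\lambda_{\max}}$ increasing by $\Theta(1/\zeta)$ per block, hence $\lambda_{\max}\sim l^2/\zeta^2$. The second is $\lambda_{\min}$ increasing by $\gtrsim \sqrt{\lambda_{\max}}/(\zeta\lambda_{\min})\sim l/(\zeta^2\lambda_{\min})$, hence $\lambda_{\min}^2\gtrsim l^2/\zeta^2$. Plugging back in, every action satisfies $\|a_{l+1,i}-\theta\|_2^2\lesssim 1/\lambda_{\min}(V_l)\lesssim \zeta/l$. This bounds the true variance by $O(\zeta/l)$, which must be at most $\hat\sigma^2_{l+1}\approx 2\zeta/\sqrt{\lambda_{\max}}\sim\zeta^2/l$. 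That constant comparison is where the enormous lower bound on $\zeta$ is used, and it closes (i).

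\emph{Conclusion and main obstacle.} Converting blocks to rounds via $k\approx 4tl$ gives infidelity $\lesssim t/k\lesssim\log(N/\delta)/k$, which is the second claim. Summing $4t\cdot O(1/l)$ over $l\le L$ gives $\regret(N)\lesssim t\log L\lesssim\log(N/\delta)\log N$. For the expectation statements, set $\delta=1/N$. Off $\mathcal{G}$ the regret is at most $N$, and $\mathcal{G}$ fails with probability at most $1/N$, so the failure event contributes $O(1)$; the per-round infidelity bound in expectation follows the same way.

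The main obstacle is the self-referential induction. The weights $\hat\sigma_l^2$, the confidence radius, and the eigenvalue lower bounds each depend on the others, and the noise is heteroscedastic with variance controlled only through previously estimated actions. So the concentration step must be done conditionally on the filtration and combined with the growth recursion uniformly in $l$. I would first fix all constants by assuming (i)--(iii) up to block $l$ and verifying them at $l+1$. Only afterwards would I handle the probability bookkeeping for the median-of-means step.
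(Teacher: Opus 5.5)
The thesis does not prove this statement; it is quoted from \cite{lumbreras24pure}. Your outline follows the architecture of that analysis: the reduction $1-|\langle\psi_a|\psi\rangle|^2=\tfrac14\|a-\theta\|_2^2$, the vanishing variance $1-(a\cdot\theta)^2\le\|a-\theta\|_2^2$, weighted least squares boosted by median-of-means over $t\approx 24\ln(L/\delta)$ subsamples, the eigenvalue relation $\lambda_{\min}(V_l)\gtrsim\sqrt{\lambda_{\max}(V_l)}$ enforced by the perturbed actions, and closing the variance bound with the large constant $\zeta$. Several parts are fine as sketched: the growth recursion (up to the cross terms created by normalising $\tilde a_{l+1,i}$ when $v_{l,s}$ is not exactly orthogonal to $\theta^{\text{wMoM}}_l$), the conversion $k\approx 4tl$, and the passage to expectations with $\delta=1/N$. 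The gap is in the concentration step. You treat it as routine, but the whole induction rests on it.

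You assert that $\mathbb{E}\|S_{l,j}\|^2_{V_l^{-1}}=O(1)$, where $S_{l,j}=\sum_{m\le l}x_m$ and $x_m=\hat\sigma_m^{-2}\sum_i a_{m,i}\varepsilon_{m,i,j}$, ``since $V_l$ depends only on past actions''. For $m\ge 2$, however, the actions $a_{m,i}$ and weights $\hat\sigma_m^{-2}$ are functions of the wMoM estimates, and hence of subsample $j$'s own earlier outcomes. So $V_l^{-1}$ is correlated with the noise inside $S_{l,j}$. The cross terms $\mathbb{E}[x_m^{\intercal}V_l^{-1}x_{m'}]$ do not vanish, and the fixed-design bound $\sum_m\hat\sigma_m^{-2}\sum_i a_{m,i}^{\intercal}V_l^{-1}a_{m,i}\le\mathrm{tr}(\mathbb{I})$ is not available. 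The computation that does survive adaptivity uses the predictable $V_m$ together with $V_l\succeq V_m$, and yields only $\mathbb{E}\|S_{l,j}\|^2_{V_l^{-1}}\le\mathbb{E}\log\det(V_l/\lambda_0)$. In your own growth regime ($\lambda_{\max}\sim l^2/\zeta^2$, $\lambda_{\min}\sim l/\zeta$) this is $\Theta(\log l)$, not $O(1)$.

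A squared radius $c^2\sim\log l$ breaks the closure of (i), since you would need $\zeta\gtrsim\log l$ and no fixed constant provides that. It would also cost an extra factor $\log N$ in the regret. The boosting step has the same defect: Hoeffding on the number of bad subsamples requires the $t$ events $\{\|\tilde\theta^{\text{w}}_{l,j}-\theta\|_{V_l}\le c\}$ to be independent, but they share a design chosen from all subsamples' outcomes. Likewise, ``conditioning on (i) for earlier blocks'' changes the law of the noise; you need a stopped martingale instead. What is missing is a confidence bound that is uniform in $l$, has $O(1)$ radius and $e^{-\Omega(t)}$ failure probability, and holds under this adaptive, shared design. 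You must either prove that lemma or import it from \cite{lumbreras24pure}; it cannot be deferred as probability bookkeeping.
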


\section{\texorpdfstring{$\rho^*$}{}-work extraction protocol}
\label{app:convergence_rate}
Here we discuss more details about the work-extraction protocol that is used, we adapted the protocol formalized in Skrzypczyk's paper~\cite{skrzypczyk2014work}. Just as in their formulation, the expected work extracted from a known state $\rho$ will precisely be given by the state's non-equilibrium free energy, which equals the relative entropy between the state and Gibbs' state, $\gamma_\beta$ with inverse temperature $\beta$, i.e.
\begin{equation}
    \label{eq:relative_entropy}
    \beta\Ex(W) = \D(\rho\|\gamma_\beta). 
\end{equation}
We will be applying the protocol to a degenerate Hamiltonian.

We focus on a specific time step within the $N$ rounds of extraction, doing so simplifies notation by removing the $k$ index. In a specific round, the agent is given a partially unknown qubit system state $\psi$, and a classical description of the direction $\hat{\psi}$ and an accuracy $\epsilon$. The agent will then choose to optimize the protocol for a state $\rho^* = (1-\epsilon)\hat{\psi}+ \epsilon\hat{\psi}^\perp$.
Along the unknown state $\psi$, he also has access to a heat bath at inverse temperature $\beta$ and a battery state $\varphi(x)$. The Hamiltonian of the system is $H_A = \omega \id$, setting $\hbar=1$. The heat bath can provide any amount of thermal state with any Hamiltonian at inverse temperature $\beta$. We will mainly consider qubit thermal states $\gamma_\beta(\nu) = \frac{1}{Z_R(\nu)}e^{-\beta H_R(\nu)}$ where $Z_R(\nu) = \tr(e^{-\beta H_R(\nu)})$ with Hamiltonian $H_R(\nu)=\nu\ketbra{1}{1}$ where $\{\ket{i}\}_{i=0,1}$ are the energy eigenstates. The battery is modeled as a weight at a certain height whose state is described by $\varphi(x)\in L^2(\mathbb{R})$ and Hamiltonian $H_B$ such that $H_B \varphi(x-\mathrm{d} x) = x\varphi(x-\mathrm{d} x)$. We will assume $\varphi(x-\mathrm{d} x)$ to be a battery state whose energy is sharply centered at $\mathrm{d} x$. The energy of the battery can be changed by translating the weight up by a certain height $\mathrm{d} x$, described by the translation operator $\Gamma_{\mathrm{d} x}^B \varphi(x) = \varphi(x-\mathrm{d} x)$. We aim to extract work from the partially unknown system system state into the battery. We will design the work extraction protocol for the state $\rho^*$, i.e. the protocol optimally extracts work from $\rho^*$, see Algorithm~\ref{alg:sc_work_extraction}. 

 To simplify calculation as well as maintain generality later, we will denote $\psi$ as $\rho$, $\hat{\psi}$ as $\ket{\phi_0}$ and $\hat{\psi}^\perp$ as $\ket{\phi_1}$, likewise we denote $p_0 =1-\epsilon$ and $p_1=\epsilon$, so $\rho^* = \sum_i p_i\ketbra{\phi_i}{\phi_i}$. 

\begin{algorithm}[H]
	\caption{\textsf{$\rho^*$-ideal work extraction}} 
	\label{alg:sc_work_extraction}
 
        Require: An unknown system state $\psi$, a classical description of the state $\rho^*$, a battery state $\varphi(x)$ with the battery energy $\mu=0$, a reservoir at inverse temperature $\beta$

        Set $\{\phi_{i}\}_i$ and $\{p_{i}\}_i$ to be the eigenvectors and eigenvalues of $\rho^*$

        \vspace{1mm}
        \textit{Unitary Rotation}
        \vspace{1mm}
        
        Apply unitary $U=\sum_i \ketbra{i}{\phi_i}$ to try to diagonalize the system qubit in computational basis.
        \vspace{1mm}
        
        \For{$\ell=1,2,\ldots,M$}{
            \vspace{1mm}
            \textit{Prepare a fresh a reservoir qubit and exchange it with the system}
            \vspace{1mm}
            
            Take a fresh thermal qubit $\gamma_\beta(\nu(\ell,\epsilon))=\frac{1}{Z_R(\nu(\ell,\epsilon))}e^{-\beta H_R(\nu(\ell,\epsilon))}$ where $\nu(\ell,\epsilon)=\beta^{-1}\ln\frac{p_{0,\ell}}{p_{1,\ell}}$, $p_{i,\ell} = p_i-(-1)^i \ell \delta p$ and $\delta p=\frac{1}{M}(p_0-\frac{1}{2})$ from the reservoir.  
            
            Apply the swap unitary $V_{\rho^*,\ell}= \sum_{ij}\ketbra{i}{j}_A\otimes\ketbra{j}{i}_R \otimes \Gamma_{(i-j)\nu(\ell,\epsilon)}$ on the system, the battery and the reservoir qubit. 
            
            Discard the reservoir qubit. 
        }
        
        \vspace{1mm}
        \textit{Measure the extracted work}
        \vspace{1mm}
            
        Measure the battery energy, obtain the battery energy $\mu'$ and compute the extracted work $\Delta W=\mu'-\mu$. 
\end{algorithm}

In the first stage of the protocol, we rotate the unknown qubit via unitary 
\begin{equation}
    U = \sum_i\ketbra{i}{\phi_i}~.
\end{equation}
This operation attempts to diagonalize the system qubit in the computation basis.
We then interact the system with the battery. The state of the system together with the battery is 
\begin{equation}
\label{eq:intial_joint_state}
    \rho_{AB} = \sum_{ij}\bra{\phi_i}\rho\ket{\phi_j}\ketbra{i}{j}_A\otimes \varphi(x)_B.
\end{equation}

In the second stage of the protocol, we perform $M$ repetitions of the following process. In repetition $\ell$, we take a fresh thermal qubit $\gamma_\beta(\nu(\ell,\epsilon))$ with Hamiltonian $H_{R}(\nu(\ell,\epsilon))$ from the reservoir where $\nu(\ell,\epsilon)= \beta^{-1} \ln \frac{p_{0,\ell}}{p_{1,\ell}}$, $p_{i,\ell}= p_{i} - (-1)^i \ell \delta p$ and $\delta p = \frac{1}{M}(p_{0} - \frac{1}{2})$. Note that the reservoir qubit we take depends on which repetition we are in. Then we perform the swap unitary 
\begin{align}
    V_{\rho^*,\ell}  = \sum_{ij}\ketbra{i}{j}_A\otimes\ketbra{j}{i}_R \otimes \Gamma_{(i-j)\nu(\ell,\epsilon)}~.
\end{align}
This unitary swaps the system and the fresh qubit from the reservoir, extracts work into the battery due to the different energy gap between $\{\ket{i}_A\}_{i=0,1}$ and $\{\ket{i}_R\}_{i=0,1}$ and conserves energy of the system, the qubit from the reservoir and the battery. Finally, the qubit from the reservoir is discarded. At the end of each repetition $\ell$, the reduced state is 
\begin{align}
    \rho_{AB,\ell}  = \tr_R\left(V_{\rho^*,\ell}\left(\rho_{AB,\ell-1}\otimes \gamma_\beta(\nu(\ell,\epsilon))\right) V_{\rho^*,\ell}^\dagger\right),
\end{align}
After the first repetition, we obtain 
\begin{align}
    \label{eq:joint_state_1}
    \rho_{AB,1} = \sum_{i} p_{i,1} \ketbra{i}{i}_A \otimes \rho_{B,i,1},
\end{align}
where 
\begin{align}
    \label{eq:battery_state_1}
    \rho_{B,i,1} = \sum_j \bra{\phi_{j}}\rho\ket{\phi_{j}} \varphi(x-(i-j)\nu(\ell,\epsilon)), 
\end{align}
and after repetition $\ell$ where $\ell\geq 2$, we obtain
\begin{align}
    \label{eq:joint_state_tau}
    \rho_{AB,\ell} = \sum_{i} p_{i,\ell} \ketbra{i}{i}_A \otimes \rho_{B,i,\ell}, 
\end{align}
where
\begin{align}
    \label{eq:battery_state_tau}
    \rho_{B,i,\ell} = \sum_j  p_{j,\ell-1} \Gamma_{(i-j)\nu(\ell,\epsilon)} \rho_{B,j,\ell-1} \Gamma_{(i-j)\nu(\ell,\epsilon)}^\dagger. 
\end{align}
From Eq.~\eqref{eq:joint_state_tau}, we observe that the reduced state of the system changes gradually, which resembles a quasi-static process in thermodynamics. This is the reason why we take the swap unitary in repetition.

\subsection{Work distribution}
\label{apd:work_distribution}
In this section, we will use the Lagrange mean value theorem and the first mean value theorem for definite integrals~\cite{strang2019calculus} as follows:
\begin{theorem}
    \label{thm:lag_mean_value_theorem}
    Let $f:[a, b] \to \mathbb{R}$ be a continuous on the closed interval $[a,b]$ and differentiable on the open interval $(a,b)$. Then there exists $c\in (a, b)$ such that
    \begin{align}
        f(b)-f(a) = f'(c) (b-a). 
    \end{align}
\end{theorem}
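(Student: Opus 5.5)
The statement is the Lagrange mean value theorem. I will prove it by reducing to Rolle's theorem, and I will prove Rolle's theorem from the extreme value theorem together with Fermat's interior-extremum criterion.

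\emph{Auxiliary function.} Define
\begin{equation}
g(x) = f(x) - f(a) - \frac{f(b)-f(a)}{b-a}(x-a), \qquad x\in[a,b].
\end{equation}
This subtracts the secant line through $(a,f(a))$ and $(b,f(b))$. Since $f$ is continuous on $[a,b]$ and differentiable on $(a,b)$, so is $g$, with
\begin{equation}
g'(x) = f'(x) - \frac{f(b)-f(a)}{b-a}.
\end{equation}
A direct check gives $g(a)=g(b)=0$. It therefore suffices to find $c\in(a,b)$ with $g'(c)=0$.

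\emph{Rolle's step.} By the extreme value theorem, the continuous function $g$ attains both a maximum and a minimum on the compact interval $[a,b]$.
\begin{itemize}
\item If both extrema equal $0$, then $g\equiv 0$. Any $c\in(a,b)$ has $g'(c)=0$.
\item Otherwise, at least one extremum is nonzero. Because $g(a)=g(b)=0$, that extremum is attained at some interior point $c\in(a,b)$.
\end{itemize}

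\emph{Fermat's criterion.} Suppose $c$ is an interior maximum (the minimum case is symmetric). For $h>0$ small, the difference quotient $(g(c+h)-g(c))/h$ is $\le 0$. For $h<0$ small, it is $\ge 0$. Since $g$ is differentiable at $c$, both one-sided limits equal $g'(c)$. Hence $g'(c)\le 0$ and $g'(c)\ge 0$, so $g'(c)=0$.

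\emph{Conclusion.} From $g'(c)=0$ we get
\begin{equation}
f(b)-f(a) = f'(c)(b-a),
\end{equation}
which is the claim.

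The only non-elementary input is the extreme value theorem, which I take as standard; this is the step carrying the real content. In the sequel I will apply the theorem to $f=\ln$, or to the gap function $\nu(\ell,\epsilon)$ as a function of the probabilities. Those functions are smooth on the relevant intervals $[p_{1,\ell},p_{0,\ell}]\subset(0,1)$, so the hypotheses hold there.
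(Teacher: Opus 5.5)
Your proof is correct. The paper gives no proof of its own and simply cites the Lagrange mean value theorem from a standard calculus text (Strang); that text's argument is exactly your subtraction of the secant line, reduction to Rolle's theorem, and use of the extreme value theorem with Fermat's criterion.

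One small aside on your closing remark: in the appendix the theorem is applied to $x\mapsto\ln\frac{x}{1-x}$ on $[p_0-(\ell+1)\delta p,\,p_0-\ell\delta p]$, not on $[p_{1,\ell},p_{0,\ell}]$. The hypotheses still hold there, since that map is smooth on $(0,1)$.
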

\begin{theorem}
    \label{thm:int_mean_value_theorem}
    Let $f:[a, b] \to \mathbb{R}$ be a continuous function on the closed interval $[a,b]$. Then there exists $c\in (a, b)$ such that
    \begin{align}
        \int_a^b f(x) \textnormal{d} x = f(c) (b-a). 
    \end{align}
\end{theorem}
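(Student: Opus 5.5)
The shortest route is to reduce the statement to Theorem~\ref{thm:lag_mean_value_theorem}, which has just been stated and may be assumed. I will introduce the accumulation function
\begin{equation*}
F(x) \coloneqq \int_a^x f(s)\,\mathrm{d}s, \qquad x\in[a,b],
\end{equation*}
which satisfies $F(a)=0$ and $F(b)=\int_a^b f(x)\,\mathrm{d}x$. The whole argument then consists of verifying that $F$ meets the hypotheses of Theorem~\ref{thm:lag_mean_value_theorem}, namely continuity on $[a,b]$ and differentiability on $(a,b)$, and identifying $F'$ with $f$.

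For continuity, I will use that a continuous function on the compact interval $[a,b]$ is bounded, say $|f|\le K$. This gives $|F(x)-F(y)|\le K|x-y|$, so $F$ is Lipschitz and hence continuous on $[a,b]$.

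For differentiability, I will invoke the fundamental theorem of calculus at each interior point $x_0\in(a,b)$. Concretely, I write
\begin{equation*}
\frac{F(x_0+h)-F(x_0)}{h}-f(x_0)=\frac{1}{h}\int_{x_0}^{x_0+h}\bigl(f(s)-f(x_0)\bigr)\,\mathrm{d}s
\end{equation*}
and bound the right-hand side by $\sup_{|s-x_0|\le |h|}|f(s)-f(x_0)|$. This tends to $0$ as $h\to0$ by continuity of $f$ at $x_0$, so $F'(x_0)=f(x_0)$.

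Applying Theorem~\ref{thm:lag_mean_value_theorem} to $F$ then yields some $c\in(a,b)$ with $F(b)-F(a)=F'(c)(b-a)=f(c)(b-a)$, which is exactly the claim. Because the Lagrange theorem already places $c$ in the open interval, no separate treatment of endpoints is required.

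I do not expect a real obstacle here. The only point needing care is the open-interval conclusion, and the reduction above handles it automatically. If one instead wanted a proof that avoids Theorem~\ref{thm:lag_mean_value_theorem}, I would use the extreme value theorem to get $m\le f\le M$, integrate to obtain $m(b-a)\le\int_a^b f\le M(b-a)$, and then apply the intermediate value theorem. That route requires one extra case split: if $f$ is constant the claim is trivial, and otherwise the inequalities are strict (by continuity of $f$ near a point where it differs from $m$ or $M$), which forces the intermediate point into $(a,b)$.
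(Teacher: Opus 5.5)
Your proof is correct. The paper does not prove this statement at all: it quotes it, together with the Lagrange mean value theorem, as a standard fact from a calculus text, so there is no argument in the paper to compare yours against.

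Your main route is the standard textbook proof. You set $F(x)=\int_a^x f(s)\,\mathrm{d}s$, show that $F$ is Lipschitz (hence continuous) on $[a,b]$, and show that $F'=f$ on $(a,b)$ via the estimate $\bigl|\frac{1}{h}\int_{x_0}^{x_0+h}(f(s)-f(x_0))\,\mathrm{d}s\bigr|\le\sup_{|s-x_0|\le|h|}|f(s)-f(x_0)|$. Applying Theorem~\ref{thm:lag_mean_value_theorem} then gives $c\in(a,b)$ with no extra work, with $a<b$ understood as it implicitly is in the paper.

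Your alternative via the extreme and intermediate value theorems is also sound. For the conclusion $c\in(a,b)$, you need to apply the intermediate value theorem on the segment between a minimizer and a maximizer of $f$. The strict inequalities $m<\frac{1}{b-a}\int_a^b f<M$ then place $c$ strictly between those two points, hence in $(a,b)$.

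It is worth noticing that the paper uses this theorem only once, to bound the remainder $R_1(\delta p)$ with integrand $1/p$. There the point $\xi_\ell'$ is only claimed to lie in the closed subinterval. So the open-interval refinement you took care over is not actually needed for how the paper uses the result.
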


We will show the following theorem in the following subsection.
\begin{theorem}
    \label{thm:work_distribution}
    Let $\lbrace\phi_i\rbrace_{i=0,1}$ and $\lbrace p_i\rbrace_{i=0,1}$ be the eigenvectors and eigenvalues of $\rho^*$, $\Delta W$ be the extracted work (which is a continuous random variable) and $M$ be the number of repetitions as in Algorithm~\ref{alg:sc_work_extraction}. It holds that: if the extraction protocol is operated on a state $\rho$ that is the eigenstate of $\rho^*$, i.e., $\rho=\phi_{i}$, then then the expected extracted work $\Ex[\Delta W]$ converges to a fixed value $w_i$ and the extracted work $\Delta W$ converges in probability to its expectation $\Ex[\Delta W]$. To be precise, it means
        \begin{align}
            \lim_{M\to \infty} \Ex[\Delta W] = w_i,
        \end{align}
        where
        \begin{equation}
        \label{eq:thm3work_values}
            w_{i} \coloneqq \beta^{-1} (\D(\phi_{i}\|\id/2) + \ln p_{i}),
        \end{equation} 
        and for any $\epsilon>0$
        \begin{align}
            \lim_{M\to\infty} \Pr[|\Delta W - \Ex[\Delta W]|\geq \epsilon ] = 0.
        \end{align}

\end{theorem}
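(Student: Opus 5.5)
The approach is to track the battery exactly when the input is an eigenstate $\rho=\phi_i$. After the stage-one rotation $U=\sum_j\ketbra{j}{\phi_j}$, the system is exactly $\ketbra{i}{i}_A$ and the battery is unshifted. The joint state therefore stays of the block form in Eq.~\eqref{eq:joint_state_tau}, and each swap is a classical stochastic step. In repetition $\ell$ the system enters in a definite level $a_{\ell-1}$ and leaves in level $a_\ell$. The level $a_\ell$ is drawn from the fresh reservoir qubit, so it is independent of everything earlier, with $\Pr(a_\ell=j)=p_{j,\ell}$; the initial level is deterministic, $a_0=i$. The battery shift in that step is $(a_\ell-a_{\ell-1})\nu_\ell$. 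Writing $\nu_\ell=\nu(\ell,\epsilon)$, summation by parts gives
\begin{equation}
\beta\Delta W=\sum_{\ell=1}^M (a_\ell-a_{\ell-1})\ln\frac{p_{0,\ell}}{p_{1,\ell}}
= a_M\ln\frac{p_{0,M}}{p_{1,M}}-a_0\ln\frac{p_{0,1}}{p_{1,1}}-\sum_{\ell=1}^{M-1}a_\ell\Bigl(\ln\frac{p_{0,\ell+1}}{p_{1,\ell+1}}-\ln\frac{p_{0,\ell}}{p_{1,\ell}}\Bigr).
\end{equation}
This expresses $\Delta W$ as a sum of independent bounded terms, with one deterministic boundary term coming from $a_0=i$.

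For the expectation, the boundary term at $\ell=M$ vanishes because $p_{0,M}=p_{1,M}=1/2$. The sum has expectation $-\sum_\ell p_{1,\ell}\bigl(g(p_{1,\ell+1})-g(p_{1,\ell})\bigr)$ with $g(q)=\ln\frac{1-q}{q}$. By Theorem~\ref{thm:lag_mean_value_theorem}, each increment equals $g'(c_\ell)\,\delta p$ with $c_\ell$ between consecutive $p_{1,\ell}$. This Riemann sum therefore converges to $-\int_{p_1}^{1/2} q\,g'(q)\,\mathrm{d}q=\int_{p_1}^{1/2}\frac{\mathrm{d}q}{1-q}=\ln(2p_0)$. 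The error is $O(\delta p)$, and Theorem~\ref{thm:int_mean_value_theorem} controls the discrepancy $|q-c_\ell|\le\delta p$.

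Combining this with the deterministic term $-a_0\ln\frac{p_{0,1}}{p_{1,1}}\to -\,i\ln\frac{p_0}{p_1}$ gives two cases. For $i=0$ the limit is $\ln 2+\ln p_0$. For $i=1$ it is $\ln 2+\ln p_0-\ln\frac{p_0}{p_1}=\ln 2+\ln p_1$. Since $\D(\phi_i\|\id/2)=\ln 2$ for a pure qubit, both cases equal $\beta w_i$ in Eq.~\eqref{eq:thm3work_values}, consistent with Theorem~\ref{thm2}.

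For convergence in probability, the random part of $\beta\Delta W$ is a sum of independent terms. The $\ell$-th term ranges over an interval of length $|g(p_{1,\ell+1})-g(p_{1,\ell})|\le\sup|g'|\,\delta p$. On $[p_1,1/2]$ we have $|g'(q)|=\frac{1}{q(1-q)}\le \frac{2}{p_1}$ when $p_1\le 1/2$. The term $a_M$ has a vanishing coefficient. Hoeffding's inequality then gives
\begin{equation}
\Pr\bigl[|\Delta W-\Ex[\Delta W]|\ge\zeta\bigr]\le 2\exp\!\Bigl(-\tfrac{2\beta^2\zeta^2}{\sum_{\ell}(2\delta p/p_1)^2}\Bigr)=2\exp\bigl(-O(M)\bigr),
\end{equation}
using $\delta p=O(1/M)$. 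This matches the bound quoted before Eq.~\eqref{eq:error_bound}, and the probability tends to zero as $M\to\infty$.

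The main obstacle is the bookkeeping: $\Delta W$ has to be written as a sum of independent terms despite the telescoping battery shifts, and the $\ell=1$ boundary term, where $a_0$ is deterministic rather than thermal, has to be kept separate. That boundary term is what produces the $\ln p_i$ rather than the free-energy value. The Riemann-sum limit and the Hoeffding step are routine once this decomposition is in place.
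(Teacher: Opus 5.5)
Your proposal is correct and follows essentially the same route as the paper. It uses the same classical-trajectory reading of the swaps with independent Bernoulli levels, the same summation-by-parts identity (the paper's Eq.~\eqref{eq:B9}), the Lagrange and integral mean value theorems to turn the expectation into a Riemann sum (you integrate $\mathrm{d}q/(1-q)$ over $[p_1,1/2]$ where the paper integrates $\mathrm{d}p/p$ over $[1/2,p_0]$, the same integral after $p=1-q$), and Hoeffding's inequality for the concentration. The differences are cosmetic: your Hoeffding exponent with the factor $2\beta^2$ is the correctly normalized one where the paper drops these constants, and your $2\exp(-O(M))$ should read $2\exp(-\Omega(M))$.
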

\begin{proof}
We consider the case where $\rho=\phi_{i}$. We assume that $p_{0} > \frac{1}{2}$ as we deal with an estimate for a pure state in Algorithm~\ref{alg:sc_work_extraction}, although similar proof holds for other cases. According to Eq.~\eqref{eq:joint_state_1}, the state after the first repetition can be viewed as a classical state described as follows: the state after repetition $1$ is $\phi_{x_1}$ where $x_1$ is a random bit sampled from $\{0,1\}$ according to the probability distribution $(p_{0,1},p_{1,1})$; the extracted work after the first repetition conditioned on $x_1$ is $(x_1-i)\nu(1,\epsilon)$. According to Eq.~\eqref{eq:joint_state_tau}, the evolution in repetition $\ell$ where $\ell\geq 2$ can be viewed as a classical process described as follows: the state after repetition $\ell$ is $\phi_{x_\ell}$ where $x_{\ell}$ is a random bit sampled from $\{0,1\}$ according to the probability distribution $(p_{0,\ell}, p_{1,\ell})$; the extracted work in repetition $\ell$ conditioned on $x_{\ell-1}x_\ell$ and is $(x_\ell - x_{\ell-1}) \nu(\ell,\epsilon)$. Suppose that the random bits sampled during the above process is $x_1\ldots x_M$ after $M$ repetitions. The extracted work after repetition $M$ conditioned on $x_1\ldots x_M$ is
\begin{align}
\label{eq:B9}
    \Delta W&= (x_1-i)\nu(1,\epsilon) + \sum_{\ell=2}^M (x_\ell - x_{\ell-1})\nu(\ell,\epsilon) \\
    &= - i  \nu(1,\epsilon) + \sum_{\ell=1}^{M-1} x_\ell (\nu(\ell,\epsilon) - \nu(\ell+1,\epsilon)) + x_M \nu(M,\epsilon), 
\end{align}
recall that 
\begin{align}
\label{eq:nu_tau}
    \nu(\ell,\epsilon) = \beta^{-1} \ln \frac{p_{0} - \ell \delta p }{p_{1} + \ell \delta p} . 
\end{align}
where $\delta p = \frac{1}{M}(p_{0}-\frac{1}{2})$.
The expected extracted work is 
\begin{align}
    \Ex[\Delta W] =   - i  \nu(1,\epsilon) + \sum_{\ell=1}^{M-1} \Ex[x_\ell] (\nu(\ell,\epsilon) - \nu(\ell+1,\epsilon)) + \Ex[x_M] \nu(M,\epsilon).
\end{align}
Notice that, from the definition of $x_\ell$, $\Ex[x_\ell] = p_{1,\ell}=p_{1} + \ell \delta p$, 
we obtain
\begin{align}
\label{eq:B12}
    \Ex[\Delta W] & =   - i \beta^{-1}\ln \frac{p_{0} - \delta p }{p_{1} + \delta p} + \beta^{-1} \sum_{\ell=1}^{M-1}  \left(  \ln \frac{p_{0} - \ell \delta p }{p_{1} + \ell \delta p} - \ln \frac{p_{0} - (\ell+1) \delta p }{p_{1} + (\ell+1) \delta p}\right) (p_{1}+\ell\delta p ). 
\end{align}
We now use the definition of $\nu(\ell,\epsilon)$ in Eq.~\eqref{eq:nu_tau} as well as the Lagrange mean value theorem in Theorem~\ref{thm:lag_mean_value_theorem} to obtain 
\begin{align}
    \beta (\nu(\ell,\epsilon) - \nu(\ell+1,\epsilon)) = \ln \frac{p_{0} - \ell \delta p }{p_{1} + \ell \delta p} -\ln \frac{p_{0} - (\ell+1) \delta p }{p_{1} + (\ell+1) \delta p} = \frac{1}{\xi_\ell(1-\xi_\ell)}  \delta p, 
\end{align}
for some $\xi_\ell\in [p_{0}-(\ell+1) \delta p, p_{0}-\ell\delta p ]$. 

Therefore, Eq.~\eqref{eq:B12} can be simplified to 
\begin{equation}
\begin{split}
    \label{eq:expected_extracted_work}
    \Ex[\Delta W]  & =  - i  \beta^{-1} \left(\ln \frac{p_{0} }{p_{1}} -\frac{\delta p}{\xi_0 (1-\xi_0)}  \right)+ \beta^{-1} \sum_{\ell=1}^{M-1}   \frac{p_{1}+\ell\delta p }{\xi_\ell(1-\xi_\ell)}  \delta p   \\
    & =  - i  \beta^{-1} \ln \frac{p_{0} }{p_{1}} + \beta^{-1} \sum_{\ell=1}^{M} \frac{p_{1}+\ell\delta p }{\xi_\ell(1-\xi_\ell)}  \delta p \\
    & \quad + i  \beta^{-1} \frac{\delta p}{\xi_0 (1-\xi_0)}  - \beta^{-1} \frac{\delta p}{2\xi_M (1-\xi_M)}~.
\end{split}
\end{equation}

We will approximate the sum in the second line of Eq.~\eqref{eq:expected_extracted_work} with an integration, where the remainder is bounded due to first mean value theorem for definite integrals as in Theorem~\ref{thm:int_mean_value_theorem}. Namely, 
\begin{align}
    \beta^{-1} \sum_{\ell=1}^{M}  \frac{p_{1}+\ell\delta p }{\xi_\ell(1-\xi_\ell)}  \delta p &=\beta^{-1} \int_{\frac{1}{2}}^{p_{0} } \frac{\text{d} p}{p} + \beta^{-1}R_1(\delta p) \\
    &= \beta^{-1}\ln p_{0}+ \beta^{-1}\ln(2) + \beta^{-1}R_1(\delta p),
\end{align}
where $R_1(\delta p)$ is the remainder given by 
\begin{align}
    R_1(\delta p) & = \sum_{\ell=1}^{M}  \frac{p_{1}+\ell\delta p }{\xi_\ell(1-\xi_\ell)}  \delta p -  \int_{\frac{1}{2}}^{p_{0}} \frac{\text{d}p}{p} \\
    &= \sum_{\ell = 1}^{M} \left( \frac{p_{1}+\ell\delta p }{\xi_\ell(1-\xi_\ell)}  \delta p - \int_{p_{0}-\ell \delta p}^{p_{0}-(\ell-1)\delta p}\frac{\text{d} p}{p} \right) \\ 
    & = \sum_{\ell=1}^{M} \left(\frac{p_{1}+\ell\delta p }{\xi_\ell(1-\xi_\ell)} -  \frac{1}{\xi_\ell'} \right)\delta p =  \sum_{\ell=1}^{M} \frac{\xi_\ell(1-\xi_\ell) - \xi_\ell'(p_{1}+\ell\delta p)}{\xi_\ell'\xi_\ell(1-\xi_\ell)} \delta p, 
\end{align}
where from the first line to the second line, we have used the first mean value theorem for definite integrals~Theorem~\ref{thm:int_mean_value_theorem} that
\begin{align}
    \int_{p_{0}-\ell \delta p}^{p_{0}-(\ell-1)\delta p}\frac{\text{d} p}{p} = \frac{1}{\xi_\ell'} \delta p,
\end{align}
for some $\xi_\ell'\in [p_{0}-\ell \delta p, p_{0}-(\ell-1)\delta]$. Therefore, the remainder satisfies
\begin{align}
    |R_1(\delta p )| & \leq \sum_{\ell=1}^{M} \left|\frac{\xi_\ell(1-\xi_\ell) - \xi_\ell'(p_{1}+\ell\delta p)}{\xi_\ell'\xi_\ell(1-\xi_\ell)} \right|\delta p \leq \sum_{\ell=1}^{M} \left|\frac{p_{1}+\ell\delta p + \xi_\ell}{\xi_\ell'\xi_\ell(1-\xi_\ell)} \right| (\delta p)^2 \\
    &\leq \sum_{\ell=1}^{M}\frac{4}{p_{1}} (\delta p)^2 \leq  (2p_{0}-1)\frac{2}{p_{1}}\delta p. 
\end{align}

Therefore, the second line in Eq.~\eqref{eq:expected_extracted_work} is finite while the third line is infinitesimal, and we obtain
\begin{align}
    \Ex[\Delta W] & =  - i  \beta^{-1} \ln \frac{p_{0} }{p_{1}} + \beta^{-1}\ln p_{0}+ \beta^{-1}\ln(2) + O(\delta p) \\
    & = -\beta^{-1} \tr(\phi_{i}\ln\id/2)  - i  \beta^{-1} \ln \frac{p_{0} }{p_{1}} + \beta^{-1}\ln p_{0} + O(\delta p) \\
    & = -\beta^{-1} \tr(\phi_{i}\ln\id/2) + \beta^{-1}\ln p_{i} + O(\delta p) \\
    & = \beta^{-1}\left[\D\left(\phi_{i}\middle\|\id/2\right) + \ln p_{i} \right] + O(\delta p). 
\end{align}
Since $\delta p \propto\frac{1}{M}$, we then obtain that
\begin{align}
    \Ex[\Delta W] =  \beta^{-1}\left[\D\left(\phi_{i}\middle\|\id/2\right) + \ln p_{i} \right] + O\left(\frac{1}{M}\right).
\end{align}
Taking $M\to \infty$, we obtain
\begin{align}
    \lim_{M\to \infty}\Ex[\Delta W] =  \beta^{-1}\left[\D\left(\phi_{i}\middle\|\id/2\right) + \ln p_{i} \right].
\end{align}
Now we demonstrate the convergence of $\Delta W$ towards its expectation value, recall from Eq.~\eqref{eq:B9}, we have that 
\begin{align}
    \Delta W =  - i  \nu(1,\epsilon) + \sum_{\ell=1}^{M-1} x_\ell (\nu(\ell,\epsilon) - \nu(\ell+1,\epsilon)) + x_M \nu(M,\epsilon).  
\end{align}
By the Lagrange mean value theorem,  
\begin{align}
     \nu(\ell,\epsilon) - \nu(\ell+1,\epsilon) = \ln \frac{p_{0} - \ell \delta p }{p_{1} + \ell \delta p} -\ln \frac{p_{0} - (\ell+1) \delta p }{p_{1} + (\ell+1) \delta p} = \frac{\delta p}{\xi_\ell(1-\xi_\ell)} , 
\end{align}
for some $\xi_\ell \in [p_{0}-(\ell+1)\delta p ,p_{0}-\ell\delta p ] $ and $\ell=1,\ldots, (M-1)$ satisfying 
\begin{align}
    |\nu(\ell,\epsilon) - \nu(\ell+1,\epsilon)| \leq \frac{2}{p_{1}} \delta p.
\end{align}
We thus obtain that $x_\ell(\nu(\ell,\epsilon) - \nu(\ell+1,\epsilon))\in [0,\frac{2}{p_{1}} \delta p]$ for $\ell=1,\ldots, (M-1)$. Besides, $\nu(M,\epsilon)=0$. The convergence rate to the expectation, by the Hoeffding inequality, is given by 
\begin{align}
    \label{eq:work_concentration}
    \Pr[|\Delta W - \Ex[\Delta W]|\geq \zeta ] \leq 2 e^{-\frac{\zeta^2}{\sum_{\ell=1}^{M-1} \left(\frac{2}{p_{1}} \delta p \right)^2}} \leq 2 e^{-\frac{p_{1}^2 \zeta^2 M}{(2p_{0}-1)^2}}. 
\end{align}
Taking $M\to \infty$, we obtain
\begin{align}
    \lim_{M\to\infty}\Pr[|\Delta W - \Ex[\Delta W]|\geq \zeta ] =0. 
\end{align}
\end{proof}

Theorem~\ref{thm:work_distribution} demonstrates that the extracted work is close to either $w_{0}$ or $w_{1}$,  
with probability close to $\bra{\phi_{0}}\rho\ket{\phi_{0}}$ and $\bra{\phi_{1}}\rho\ket{\phi_{1}}$ respectively. Therefore, measuring the extracted work $\Delta W$ from the state $\rho$ in Algorithm~\ref{alg:sc_work_extraction} is effectively measuring the state $\rho$ in the basis $\{\phi_{i}\}_{i=0,1}$ up to an error probability exponentially vanishing with respect to the number of repetitions $M$ in Algorithm~\ref{alg:sc_work_extraction}. When $\rho$ is a pure state i.e., $\rho=\ketbra{\psi}{\psi}$, the energy measurement of the battery is equivalent to the reward measurement in the quasi-static limit of Algorithm~\ref{alg:linucb_vn_var}, and their correspondence is (without loss of generality assuming $w_{0}\geq w_{1}$)
\begin{align}
    \label{eq:reward_finite}
    r = \begin{cases}
        1,\quad & \Delta W \geq \frac{w_{0}+w_{1}}{2}, \\
        0,\quad & \Delta W \leq \frac{w_{0}+w_{1}}{2}. 
    \end{cases}
\end{align}
The distribution of the reward is 
\begin{align}
    \Pr[R=r] = \begin{cases}
        |\langle \phi_0|\psi\rangle|^2+\epsilon_{\text{error}},\quad & r=1, \\
        1- |\langle\phi_{0}|\psi\rangle|^2 - \epsilon_{\text{error}},\quad & r=0, 
    \end{cases}
\end{align}
where 
\begin{align}
    |\epsilon_{\text{error}}| \leq 2e^{-\frac{p_1^2\zeta^2 M}{(2p_0-1)^2}}
\end{align}
In the limit of $M\to\infty$, the correspondence reduces to 
\begin{align}\label{eq:reward_thermal}
    r = \begin{cases}
        1,\quad & \Delta W = w_{0}, \\
        0,\quad & \Delta W = w_{1}, 
    \end{cases}
\end{align}
and the distribution of the reward reduces to 
\begin{align}
   \Pr[R=r] = \begin{cases}
        |\langle \phi_0|\psi\rangle|^2,\quad & r=1, \\
        1- |\langle\phi_{0}|\psi\rangle|^2 ,\quad & r=0, 
    \end{cases}
\end{align}

The above claims takes $M\to\infty$. In reality, this means $M$ to be sufficiently large. Now we explain how large $M$ should be using Algorithm~\ref{alg:linucb_vn_var}. Without loss of generality, we assume the input state is $\rho = \phi_0$. We are supposed to obtain the correct reward $r=1$. According to Eq.~\eqref{eq:reward_finite}, we obtain a wrong reward $r=0$ only if 
\begin{align}
    \label{eq:error_condition}
    |\Delta W - \Ex[\Delta W]| \geq \frac{1}{2}(w_0-w_1) - |\Ex[\Delta W] - w_0|. 
\end{align}
Note that as Algorithm~\ref{alg:sc_work_extraction} proceeds with increasing $k$, $\epsilon_k= \Theta(\ln(N)/k)$ and, by definition, $p_1 =\Theta(\ln(N)/k)$ as well. Substituting $\epsilon_k$ into Eq.~\eqref{eq:thm3work_values}, we obtain 
\begin{align}
    \Ex[\Delta W] & = w_0 + O\left(\frac{1}{M}\right),\\
    w_0 & = \beta^{-1} \left(\frac{1}{2}+ \ln (1-\Theta\left(\frac{\ln(N)}{k}\right))\right),\\
    w_1 & = \beta^{-1} \left(\frac{1}{2}+ \ln (\Theta\left(\frac{C\ln(N)}{k}\right))\right). 
\end{align}
Substituting above values into Eq.~\eqref{eq:error_condition} and noting $\Theta(\ln(N)/k)$ is small for large $k$, we obtain a wrong reward $r=0$ only if
\begin{align}
    |\Delta W - \Ex[\Delta W]| \geq \frac{\beta^{-1}}{2} \ln\Theta\left(\frac{k}{\ln(N)}\right) - O\left(\frac{1}{M}\right) = \Theta(\ln(k)).
\end{align}
In the concentration bound in Eq.~\ref{eq:work_concentration}, setting $\zeta= \Theta(\ln (k))$, the error probability is upper bounded by 
\begin{align}
\label{eq:wrong_prob}
    \Pr[\textnormal{Wrong Reward}] \leq 2 e^{-\frac{p_1^2\zeta^2 M}{(2p_0-1)^2}}\leq 2 e^{-\Theta\left(\frac{\ln(N)^2\ln(k)^2M}{k^2} \right)},
\end{align}
where we have substituted $p_1 = \Theta(\ln(N)/k)$, $\zeta = \Theta(\ln(k))$ and $2p_0-1=\Theta(1)$. At the same time, Algorithm~\ref{alg:linucb_vn_var} succeeds in all rounds with probability $1-\frac{1}{N}$ if the error probability scales as $\Pr[\textnormal{Wrong Reward}] \leq O(\frac{1}{N^2})$. Recalling that $k\in [N]$, this indicates that we have to choose
\begin{align}
    M = \Theta\left(\frac{N^2}{\ln(N)^3}\right). 
\end{align}
The number of iterations $M=\Theta(N^2/\ln(N)^3)$ in each round leads to the time $T = \Theta(N^3/\ln(N)^3)$ cost by the $N$-round work extraction algorithm.

\subsection{Extracted work for different inputs}
\begin{theorem}\label{th:work_differnt_input}
\label{thm:6} 
Let $\lbrace\phi_i\rbrace_{i=0,1}$ and $\lbrace p_i\rbrace_{i=0,1}$ be the eigenvectors and eigenvalues of $\rho^*$ and $w_i$ be the value of work extracted defined in Theorem.~\ref{thm:work_distribution}. It holds that,
\begin{enumerate}
    \item When applying the protocol to any state $\rho$, the probability of measuring $\Delta W = w_i$ 
        is given by 
        \begin{equation}
        \label{eq:thm3work_probs}
            \Pr(\Delta W=w_i)=\bra{\phi_{i}}\rho\ket{\phi_i}~.
        \end{equation}

    \item When the extraction protocol is operated on a state $\rho$ where $\rho\neq \rho^*$, the expected work extracted is given by
        \begin{equation}
            \Ex[\Delta W]=\beta^{-1}\left[\D(\rho\middle\|\id/2)-\D(\rho\|\rho^*)\right]~,
        \end{equation}
        where the second term can be defined as the the dissipation due to the agent's imperfect knowledge of $\rho$.
\end{enumerate}
\end{theorem}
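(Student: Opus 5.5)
The plan is to reduce both claims to the eigenstate analysis of Theorem~\ref{thm:work_distribution}, using linearity of the protocol in the input state. The first step is to write the input as $\rho=\sum_{i,j}\bra{\phi_i}\rho\ket{\phi_j}\ket{\phi_i}\!\bra{\phi_j}$. After the rotation $U=\sum_i\ket{i}\!\bra{\phi_i}$ the joint system--battery state is the one in Eq.~\eqref{eq:intial_joint_state}.

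\textbf{Part 1.} The first reservoir swap $V_{\rho^*,1}$ followed by tracing out $R$ destroys all off-diagonal terms $\ket{i}\!\bra{j}_A$ with $i\neq j$. The reason is that after the swap the original system index is carried by the discarded reservoir qubit, and the partial trace over $R$ gives $\delta_{ij}$. This is exactly the statement of Eqs.~\eqref{eq:joint_state_1}--\eqref{eq:battery_state_1}: the battery state after step 1 is the mixture $\sum_j\bra{\phi_j}\rho\ket{\phi_j}\,\varphi(x-(i-j)\nu)$, with only diagonal weights $\bra{\phi_j}\rho\ket{\phi_j}$ appearing. From then on the evolution in Eqs.~\eqref{eq:joint_state_tau}--\eqref{eq:battery_state_tau} is linear and classical. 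The final battery distribution is therefore the convex combination $\sum_j\bra{\phi_j}\rho\ket{\phi_j}\,P_j(\Delta W)$, where $P_j$ is the work distribution for input $\phi_j$. By Theorem~\ref{thm:work_distribution}, $P_j$ concentrates on $w_j$ as $M\to\infty$, with Hoeffding tails as in Eq.~\eqref{eq:work_concentration}. Hence $\Pr(\Delta W=w_i)=\bra{\phi_i}\rho\ket{\phi_i}$ in the quasi-static limit. This is the same conclusion as the general Theorem~\ref{thm2}, but derived here for this explicit protocol. If $w_0=w_1$, i.e.\ $p_0=p_1$, the probabilities are summed, as in Eq.~\eqref{eq:WorkExtractionProbs}.

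\textbf{Part 2.} I take the expectation of part 1 and insert the values $w_i=\beta^{-1}(\D(\phi_i\|\id/2)+\ln p_i)$. Since $\D(\phi_i\|\id/2)=\ln 2$ for each pure $\phi_i$,
\[
\beta\,\Ex[\Delta W]=\sum_i\bra{\phi_i}\rho\ket{\phi_i}\bigl(\ln 2+\ln p_i\bigr)=\ln 2+\tr(\rho\ln\rho^*).
\]
I then use $\D(\rho\|\id/2)=\ln2-S(\rho)$ and $\D(\rho\|\rho^*)=-S(\rho)-\tr(\rho\ln\rho^*)$. Subtracting gives $\D(\rho\|\id/2)-\D(\rho\|\rho^*)=\ln2+\tr(\rho\ln\rho^*)$, which proves the claim. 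No dephasing correction arises because $\mathcal{H}_Q$ is degenerate (cf.\ Sec.~\ref{sec:consequence}).

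\textbf{Main obstacle.} The hard step is the interchange of the limit $M\to\infty$ with the mixture over $j$, together with control of the $O(\delta p)$ errors in $\Ex[\Delta W]$. Each $P_j$ has expectation $w_j+O(1/M)$ and concentrates, and there are finitely many $j$, so a union of the two Hoeffding bounds should suffice. The subtle point is that for $i=1$ the first swap step contributes $-\nu(1,\epsilon)$, which must be tracked so that $\ln p_1$ rather than $\ln p_0$ appears. This bookkeeping is already handled by the case $\rho=\phi_i$ in Theorem~\ref{thm:work_distribution}.
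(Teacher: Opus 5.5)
Your proposal is correct and follows essentially the same route as the paper. Part 1 uses the observation that the off-diagonal terms of Eq.~\eqref{eq:intial_joint_state} do not survive the first swap, so the outcome is a mixture, weighted by $\bra{\phi_i}\rho\ket{\phi_i}$, of the eigenstate cases of Theorem~\ref{thm:work_distribution}; Part 2 substitutes $w_i$ and uses that $\ln\rho^*$ and $\ln(\id/2)$ are diagonal in $\{\phi_i\}$. Your explicit swap--partial-trace mechanism, the remark on $p_0=p_1$, and the comment on exchanging the $M\to\infty$ limit with the finite mixture are added detail consistent with the paper's argument.
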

\begin{proof}
We first observe that the off-diagonal term $\bra{\phi_i}\rho\ket{\phi_j} \ketbra{i}{j}_A\otimes \varphi(x)_B$ of the join state in Eq.~\eqref{eq:intial_joint_state} does not affect Eq.~\eqref{eq:joint_state_1}. Therefore, it is identical for the case where the input is $\rho$ and the case where the input is $\sum_i \bra{\phi_i}\rho\ket{\phi_i} \phi_i$. The latter case can be viewed as a probabilistic mixture of cases where the input is $\phi_{i}$ with probability $\bra{\phi_{i}}\rho\ket{\phi_{i}}$. Therefore, Statement 1 in Theorem~\ref{thm:6} holds, i.e.,
\begin{equation}
    \Pr(\Delta W=w_i)=\bra{\phi_{i}}\rho\ket{\phi_i}~.
\end{equation}

Next, to prove Statement 2, we consider the case where a protocol optimized for $\rho^*=\sum_ip_i\ketbra{\phi_i}{\phi_i}$ is applied onto an arbitrary state $\rho$ with possibly $\rho\neq\rho^*$. Using Eq.~\eqref{eq:thm3work_values} and~\eqref{eq:thm3work_probs}, the extracted work is given by 
\begin{align}
    \Ex[\Delta W] & = \bra{\phi_{0}} \rho \ket{\phi_{0}} \beta^{-1} \left[\D\left(\phi_{0}\|\id/2\right) + \ln p_{0}\right] + \bra{\phi_{1}} \rho \ket{\phi_{1}} \beta^{-1} \left[\D\left(\phi_{1}\|\id/2\right) + \ln p_{1}\right] \\
    & = \beta^{-1}\left[\bra{\phi_{0}} \rho \ket{\phi_{0}}  (-  \tr(\phi_{0}\ln\id/2)+  \ln p_{0})- \bra{\phi_{1}} \rho \ket{\phi_{1}}   (-\tr(\phi_{1}\ln \id/2)+  \ln p_{1} )\right] \\
    & = \beta^{-1}\left[\tr(\mathcal{P}(\rho)\ln\rho^*) -\tr(\mathcal{P}(\rho)\ln \id/2) \right] = \beta^{-1} \left[ \tr(\rho\ln\rho^*) - \tr(\rho\ln\id/2) \right],
\end{align}
where $\mathcal{P}(\rho) = \sum_i \phi_{i} \rho \phi_{i}$ the pinching map.
We can also express the expected work in term of relative entropy:
\begin{align}
\label{eq:b42}
    \Ex[\Delta W]  =  \beta^{-1}\left[ \tr(\rho\ln\rho^*) - \tr(\rho \ln\id/2)\right] = \beta^{-1}\left[\D\left(\rho\|\id/2\right)-\D(\rho\|\rho^*)\right]~.
\end{align}
\end{proof}

Note that in the event $\rho^*=\rho$, i.e., the agent is fully aware of the identity of the quantum state and is able to ensure the work extraction protocol to be entirely quasi-static. Then the extract work is given by  
\begin{align}
    \Ex[\Delta W] & = p_{0} \beta^{-1} \left[\D\left(\phi_{0}\middle\|\id/2\right) + \ln p_{0}\right] + p_{1} \beta^{-1} \left[\D\left(\phi_{1}\|\id/2\right) + \ln p_{1}\right] \\
    & = \beta^{-1}(- p_{0} \tr(\phi_{0}\ln\id/2)+ p_{0}\ln p_{0} -   p_{1}  \tr(\phi_{1}\ln\id/2)+ p_{1}\ln p_{1} ) \\
    & = \beta^{-1}\left[ \tr(\rho \ln \rho) - \tr(\rho\ln\id/2)\right] = \beta^{-1} \D\left(\rho\|\id/2\right). 
\end{align}
We retrieve the full non-equilibrium free energy.
Therefore, the dissipation due to agent's imperfect knowledge of the true state $\rho$ can be quantified as 
\begin{align}
    \dissipation = \max_{\rho^*}\Ex[ \Delta W] - \Ex[ \Delta W] = \beta^{-1} \D(\rho\|\rho^*). 
\end{align}

\subsection{Cumulative dissipation}

In this section, we consider the setting where we have oracle sequential access to an unknown pure qubit state $\psi$, and our goal is to extract the maximal amount of work into a battery system. To achieve this, we can use Algorithm~\ref{alg:linucb_vn_var} with the rewards~\eqref{eq:reward_thermal} to learn an approximate direction of the state, and then run Algorithm~\ref{alg:sc_work_extraction} to extract work based on the approximate input. In general, we can consider mixed-state estimator $\hat{\rho}_k$. Assuming sequential access to the unknown state over $N$ rounds, and using the expected extracted work from Theorem~\ref{th:work_differnt_input}, we define the dissipation at round $k \in [N]$ with respect to the optimal protocol as
\begin{align}
\dissipation^{k} := \beta^{-1} \D(\psi \| \hat{\rho}_k ),
\end{align}
and the cumulative dissipation over all $N$ rounds as
\begin{align}\label{eq:apendix_dissipation_sc}
\dissipation(N) := \beta^{-1} \sum_{k=1}^N \dissipation^{k} = \beta^{-1} \sum_{k=1}^N \D(\psi \| \hat{\rho}_k).
\end{align}

\begin{algorithm}[H]
	\caption{\textsf{Thermal work extraction}}
	\label{alg:cum_dissipation}
        Require: sequence $\lbrace \epsilon_k \rbrace_{k=1}^\infty$
          
	\For {$k=1,2,\ldots$}{
            Receive unknown $ | \psi \rangle$ and couple to battery state $\varphi(x-\mu_{k-1})$
           
            Compute direction $|\psi_k \rangle$ with Algorithm~\ref{alg:linucb_vn_var} using $\lbrace \psi_s , r_s \rbrace_{s=1}^{k-1}$

            Set $\hat{\rho}_k = \Delta_{2\epsilon_k} ( \psi_k )$
           
            Extract work using Algorithm~\ref{alg:sc_work_extraction} with input $\hat{\rho}_k$ and get extracted work $\Delta W_k$ and energy $\mu_k$
           
            Set reward $r_k = \lbrace 0 , 1 \rbrace$ according to~\eqref{eq:reward_thermal}
        }
\end{algorithm}

To minimize the cumulative dissipation, we use Algorithm~\ref{alg:cum_dissipation}, which takes as input a sequence of accuracies $\lbrace \epsilon_k \rbrace_{k=1}^\infty$. At each round, the estimator uses $\psi_k$, the direction output by Algorithm~\ref{alg:linucb_vn_var}, and sets $\hat{\rho}_k = \Delta_{2\epsilon_k}(\psi_k)$, where $\Delta_{2\epsilon_k}$ is the completely depolarizing channel. If $\epsilon_k$ is a good approximation of the infidelity between the true state $\psi$ and the estimate $\psi_k$, then the dissipation $\dissipation^k$ is controlled by $\epsilon_k$. This is formalized in the following theorem.
\begin{theorem}[Theorem 2.34 in~\cite{Flammia2024quantumchisquared}]\label{th:relative_entropy_fideity} 
Let $\rho$ and $\hat{\rho}$ be $d$-dimensional quantum states achieving infidelity $1 - F(\rho,\hat{\rho}) \leq \epsilon \leq \frac{1}{2}$. Then we have
\begin{align}
    \D ( \rho \| \Delta_{2\epsilon} ( \hat{\rho}) ) \leq 16\epsilon \left( 2 + \ln \left( \frac{d}{2\epsilon} \right) \right) . 
\end{align}
\end{theorem}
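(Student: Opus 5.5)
The plan is to reduce to the case where the first argument is pure, and then bound $-\bra{\psi}\ln\sigma\ket{\psi}$ directly using the spectrum of the depolarized estimate. Write $\sigma:=\Delta_{2\epsilon}(\hat\rho)=(1-2\epsilon)\hat\rho+2\epsilon\,\id/d$, so that $\sigma\geq(2\epsilon/d)\id$ and $\sigma\geq(1-2\epsilon)\hat\rho$. Since we only apply the result in Chapter~\ref{ch:learning} with $\rho=\psi$ pure, I would first prove that case completely and then extend.

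\emph{Pure $\rho=\ket{\psi}\!\bra{\psi}$.} Here $S(\rho)=0$, so $\D(\psi\|\sigma)=-\bra{\psi}\ln\sigma\ket{\psi}$. Diagonalise $\hat\rho=\sum_i q_i\ket{i}\!\bra{i}$; then $\sigma$ is diagonal in the same basis with eigenvalues $s_i=(1-2\epsilon)q_i+2\epsilon/d$. This gives
\begin{equation}
\D(\psi\|\sigma)=\sum_i |\braket{i}{\psi}|^2\ln\frac{1}{s_i}.
\end{equation}
\begin{enumerate}
\item Use $s_i\geq 2\epsilon/d$ for every $i$, and $s_i\geq(1-2\epsilon)q_i$ for indices with large $q_i$.
\item Split the sum according to whether $q_i$ is large. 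The fidelity hypothesis $F=\bra{\psi}\hat\rho\ket{\psi}=\sum_i q_i|\braket{i}{\psi}|^2\geq 1-\epsilon$ forces the weight $|\braket{i}{\psi}|^2$ to concentrate on indices with $q_i$ close to $1$.
\item On the concentrated part, bound the contribution by $-\ln(1-2\epsilon)\leq 4\epsilon$, valid since $\epsilon\leq 1/2$. On the remaining mass, which is $O(\epsilon)$, bound the contribution by $\ln(d/2\epsilon)$. Together these yield the shape $c\,\epsilon(2+\ln(d/2\epsilon))$.
\end{enumerate}
To make step 2 precise, I would use concavity: $-\ln s$ is convex in $s$. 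Combining Jensen's inequality with the linear interpolation of $-\ln$ between $s=2\epsilon/d$ and $s=1-2\epsilon+2\epsilon/d$ gives
\begin{equation}
-\sum_i w_i\ln s_i\leq \max_{\text{chord}}\,,
\end{equation}
where the chord is evaluated at $\sum_i w_i q_i\geq 1-\epsilon$. This should give exactly $\epsilon\ln(d/2\epsilon)+O(\epsilon)$, and the constant $16$ leaves ample slack.

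\emph{General $\rho$.} I would use Uhlmann's theorem to pick purifications $\ket{\psi}_{AR}$ of $\rho$ and $\ket{\varphi}_{AR}$ of $\hat\rho$ with $|\braket{\psi}{\varphi}|^2=F(\rho,\hat\rho)\geq 1-\epsilon$. Then I would define an extension $\sigma_{AR}=(1-2\epsilon)\varphi_{AR}+2\epsilon\,\frac{\id_A}{d}\otimes\tau_R$ with $\tr_R\sigma_{AR}=\sigma$, where $\tau_R$ is any state. By data processing under $\tr_R$, $\D(\rho\|\sigma)\leq\D(\psi_{AR}\|\sigma_{AR})$, and the pure-state argument applies on $AR$.

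The main obstacle is that the floor eigenvalue is no longer $2\epsilon/d$ but $2\epsilon\lambda_{\min}(\tau_R)/d$. This would spoil the $\ln(d/2\epsilon)$ factor unless $\tau_R$ is chosen maximally mixed on a $d$-dimensional reference. The reference can be taken $d$-dimensional, but the requirement $\tr_R\sigma_{AR}=\sigma$ still holds for any $\tau_R$. Choosing $\tau_R=\id_R/d$ then gives a floor of $2\epsilon/d^2$, hence $\ln(d^2/2\epsilon)\leq 2\ln(d/2\epsilon)+\ln 2$. This costs at most a factor that the constant $16$ absorbs. I would try this choice first. If the bookkeeping fails, I would fall back to the pinching decomposition $\D(\rho\|\sigma)=\D(\mathcal{P}(\rho)\|\sigma)+S(\mathcal{P}(\rho))-S(\rho)$ in the eigenbasis of $\hat\rho$, and bound the coherence term by a Fannes-type estimate in $\epsilon$.
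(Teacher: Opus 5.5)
Your argument is correct apart from one slip. It also takes a genuinely different route from the thesis, which does not prove this statement at all: it imports the bound from Flammia--O'Donnell and uses it as a black box in the learning chapter and its appendix.

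The slip is in step~3, where you claim $-\ln(1-2\epsilon)\le 4\epsilon$ for all $\epsilon\le\frac12$. This fails as $\epsilon\to\frac12$. The term that actually appears is $\ln\bigl(1/(1-2\epsilon+2\epsilon/d)\bigr)$, with $d^2$ in place of $d$ after purifying. It stays finite, but it can be as large as $\ln d$, so it is not $O(\epsilon)$ uniformly in $d$. Split at $\epsilon=\frac14$:
\begin{itemize}
\item for $\epsilon\le\frac14$, use $-\ln(1-x)\le x/(1-x)\le 2x$ with $x=2\epsilon$;
\item for $\epsilon>\frac14$, bound the term by $\ln(d^2/2\epsilon)\le 2\ln(d/2\epsilon)\le 8\epsilon\ln(d/2\epsilon)$.
\end{itemize}
The constant $16$ absorbs both cases.

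Two smaller remarks.
\begin{itemize}
\item The fact your chord step uses is that a convex function lies below its chord, followed by affinity of the chord. It is not Jensen, which points the other way.
\item After the Uhlmann step with $\tau_R=\id_R/d$, both $\psi_{AR}$ and $\varphi_{AR}$ are pure, so neither the chord argument nor the pinching fallback is needed. With $F=F(\rho,\hat\rho)$ you get exactly
\[
\D(\psi_{AR}\|\sigma_{AR})=-F\ln\Bigl(1-2\epsilon+\frac{2\epsilon}{d^2}\Bigr)-(1-F)\ln\frac{2\epsilon}{d^2}.
\]
This is increasing in $1-F\le\epsilon$. Monotonicity under $\tr_R$ then gives $\D(\rho\|\Delta_{2\epsilon}(\hat\rho))\le \ln\bigl(1/(1-2\epsilon+2\epsilon/d^2)\bigr)+2\epsilon\ln(d/2\epsilon)$.
\end{itemize}

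As for what each route buys: citing keeps the thesis short but gives the reader no argument. Your route is self-contained, using only Uhlmann's theorem and monotonicity of relative entropy under partial trace. It gives an explicit bound well below $16\epsilon(2+\ln(d/2\epsilon))$.

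More to the point, it shows the bound is an identity in the only case the thesis needs: $d=2$, with $\rho=\psi$ and $\hat\rho=\psi_k$ both pure. Since $\Delta_{2\epsilon}(\psi_k)=(1-\epsilon)\psi_k+\epsilon\,\psi_k^\perp$, for $1-F\le\epsilon\le\frac12$ you have
\[
\D(\psi\|\Delta_{2\epsilon}(\psi_k))=-F\ln(1-\epsilon)-(1-F)\ln\epsilon\le-(1-\epsilon)\ln(1-\epsilon)-\epsilon\ln\epsilon\le\epsilon(1-\ln\epsilon).
\]
So the per-round dissipation bound in the learning chapter holds with $\epsilon(1-\ln\epsilon)$ in place of $16\epsilon(2-\ln\epsilon)$, and needs no external reference.
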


Given the above bound we can use the fidelity guarantee of Algorithm~\ref{alg:linucb_vn_var} in Theorem~\ref{th:fidelity_gurantee_bandit} to prove a bound on the cumulative dissipation.

\begin{theorem} Given a finite time horizon $N\in\mathbb{N}$ and $\delta\in (0,1 )$ there exists an explicit sequence of accuracies $\lbrace \epsilon_k \rbrace_{k=1}^\infty $ such that Algorithm~\ref{alg:cum_dissipation} achieves
\begin{align} 
\dissipation (N) = O\left(\beta^{-1}\ln^2(N) \ln \left( \frac{N}{\delta} \right)  \right) .
\end{align} 
\end{theorem}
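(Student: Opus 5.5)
The plan is to choose $\epsilon_k := \min\bigl(C_2\ln(N/\delta)/k,\,1/2\bigr)$, where $C_2$ is the universal constant of Theorem~\ref{th:fidelity_gurantee_bandit}. We then chain that theorem with Theorem~\ref{th:relative_entropy_fideity} and sum over $k$.

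First, run Algorithm~\ref{alg:linucb_vn_var} with $t=\lceil 24\ln(L/\delta)\rceil$ and horizon $N=4tL$. Theorem~\ref{th:fidelity_gurantee_bandit} gives an event $\mathcal{G}$ of probability at least $1-\delta$ on which, for every $k\in[N]$,
\[
1-|\langle\psi_k|\psi\rangle|^2\le C_2\ln(N/\delta)/k.
\]
On $\mathcal{G}$ we therefore have $1-F(\psi,\psi_k)\le\epsilon_k\le 1/2$ whenever $\epsilon_k<1/2$. For the finitely many early rounds with $\epsilon_k=1/2$, we have $\hat\rho_k=\Delta_1(\psi_k)=\id/2$, so the dissipation there is $\D(\psi\|\id/2)=\ln 2$. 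There are at most $k_0:=\lceil 2C_2\ln(N/\delta)\rceil$ such rounds, contributing $O(\ln(N/\delta))$ in total.

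Second, for $k>k_0$, apply Theorem~\ref{th:relative_entropy_fideity} with $d=2$ and $\hat\rho=\psi_k$. This gives
\[
\D(\psi\|\hat\rho_k)\le 16\epsilon_k(2-\ln\epsilon_k).
\]
Substituting $\epsilon_k$ yields
\[
\sum_{k>k_0}16\,\frac{C_2\ln(N/\delta)}{k}\Bigl(2+\ln k-\ln\bigl(C_2\ln(N/\delta)\bigr)\Bigr).
\]
The term $-\ln(C_2\ln(N/\delta))$ is nonpositive once $C_2\ln(N/\delta)\ge1$, so we drop it, which only loosens the bound. This leaves
\[
16C_2\ln(N/\delta)\sum_{k=1}^N\frac{2+\ln k}{k}\le 16C_2\ln(N/\delta)\bigl(2(\ln N+1)+\tfrac12\ln^2N+O(1)\bigr),
\]
using $\sum 1/k\le\ln N+1$ and $\sum_{k\le N}\ln k/k\le \tfrac12\ln^2 N+O(1)$. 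The latter follows by comparing with the integral, since $\ln x/x$ is decreasing for $x\ge e$. Multiplying by $\beta^{-1}$ gives $\dissipation(N)=O(\beta^{-1}\ln^2 N\ln(N/\delta))$ on $\mathcal{G}$.

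Third, the proof must justify that the rewards fed into Algorithm~\ref{alg:linucb_vn_var} have exactly the Born-rule law \eqref{eq:prob_quantum_reward}. Theorem~\ref{th:work_differnt_input} gives $\Pr(\Delta W_k=w_{k,0})=|\langle\psi_k|\psi\rangle|^2$ in the quasi-static limit, and the correspondence \eqref{eq:reward_thermal} then matches the bandit reward exactly. So the bandit guarantee transfers verbatim.

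The main obstacle is the case of finite $M$. There the reward law is perturbed by $\epsilon_{\text{error}}$, and the event $\mathcal{G}$ must be intersected with the event that no reward is misassigned. I would first try a union bound using \eqref{eq:wrong_prob}: choosing $M=\Theta(N^2/\ln^3N)$ makes each misassignment probability $O(1/N^2)$, so the union over $N$ rounds adds only $O(1/N)$ to the failure probability. Absorbing this into $\delta$ (replacing $\delta$ by $\delta/2$ in $\mathcal{G}$) keeps the stated scaling. The $O(1/M)$ bias in $\Ex[\Delta W]$ adds only $O(N/M)=o(1)$ to the cumulative dissipation.
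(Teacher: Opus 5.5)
Your proof is correct and follows the paper's argument. You make the same choice $\epsilon_k=\min\{C\ln(N/\delta)/k,\,1/2\}$, chain Theorem~\ref{th:fidelity_gurantee_bandit} with Theorem~\ref{th:relative_entropy_fideity} on the probability-$(1-\delta)$ event, and sum a harmonic-type series; you bound $-\ln\epsilon_k$ by $\ln k$ where the paper uses $\ln N$. Your direct evaluation $\D(\psi\|\id/2)=\ln 2$ on the $O(\ln(N/\delta))$ saturated rounds is more careful than the paper's ``constant'' remark, because the hypothesis $1-F\le 1/2$ of Theorem~\ref{th:relative_entropy_fideity} need not hold in those rounds. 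The finite-$M$ discussion is not needed, since Algorithm~\ref{alg:cum_dissipation} uses the quasi-static reward rule and $\dissipation^{k}$ is defined as $\beta^{-1}\D(\psi\|\hat\rho_k)$. If you keep it, note that the paper's $O(1/M)$ bias bound carries a factor $1/p_1=1/\epsilon_k$, so the accumulated bias is $O\bigl(N^2/(M\ln(N/\delta))\bigr)$ rather than $O(N/M)$. For $M=\Theta(N^2/\ln^3N)$ this is $O(\ln^2 N)$, still within the claimed bound.
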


\begin{proof}
    We can choose
    \begin{align}
        \epsilon_k = \min \left\lbrace C \frac{\ln\left(\frac{N}{\delta} \right)}{k} ,\frac{1}{2} \right\rbrace ,
    \end{align}
    where $C$ is the constant in Theorem~\ref{th:fidelity_gurantee_bandit} for the fidelity bound of the direction $\psi_k$. Then we can use Theorem~\ref{th:relative_entropy_fideity} combined with the fidelity guarantee of Theorem~\ref{th:fidelity_gurantee_bandit} to get that with  probability at least $1-\delta$ we have
    \begin{align}
        \dissipation (N) \leq \beta^{-1} \sum_{k=1}^{N} 16 \epsilon_k (2-\ln \epsilon_k ).
    \end{align}
The result follows by noting that, for a sufficiently large constant $k^*$, we have $\epsilon_k = C \ln\left(\frac{N}{\delta} \right)/k$ for all $k \geq k^*$. The dissipation incurred during the first $k^*$ rounds contributes a constant term. For the remaining rounds $k \geq k^*$, using $-\ln\epsilon_k\leq \ln N$ for $k\leq N$ and summing the corresponding dissipation terms yields the claimed polylogarithmic scaling.
\end{proof}

\section{Jaynes-Cummings work extraction protocol}
\label{apd:jc_protocol}

\begin{algorithm}[H]
        \caption{Jaynes-Cummings work extraction} 
	\label{alg:jc_work_extraction}
        Require: A sequence of unknown states $\ket{\psi}$
       
        \For {$k=1,2,\ldots$}{
            Receive the unknown state $\ket{\psi}$
            
            Compute direction $\ket{\psi_k}$ using Algorithm~\ref{alg:linucb_vn_var}
            
            Expose it to a field that induces Hamiltonian $H_A=\omega \ketbra{\psi_k}{\psi_k}$. 
            
            Turn on the interaction between the system and the battery whose interaction Hamiltonian is $H_I = \frac{\Omega}{2}(a \otimes \ketbra{\psi_k}{\psi_k^\perp} + a^\dagger \otimes \ketbra{\psi_k^\perp}{\psi_k})$ for a time $t_k=\pi\Omega^{-1}(n_k+1)^{-\frac{1}{2}}$. 
            
            Measure the battery energy to obtain $n_{k+1}$

            Set reward $r_k=\lbrace0,1\rbrace$ according to Eq.~\eqref{eq:reward_jc}
	}
\end{algorithm}

We consider the Jaynes-Cummings work extraction protocol in Algorithm~\ref{alg:jc_work_extraction} which extracts the energy from a field into a battery.

The systems involved in this protocol includes a system in an unknown state $\ket{\psi}$ with a tunable Hamiltonian in the form of $H_A = \nu \ketbra{\phi}{\phi}$ where $\nu$ and $\phi$ can be controlled by the strength and the direction of a field, respectively, as well as a battery system with the Hamiltonian of the battery given by $H_B = \omega a^\dagger a$ where $a=\sum_{n=1}^{\infty} \sqrt{n} \ketbra{n-1}{n}$ and $a^\dagger =\sum_{n=0}^{\infty} \sqrt{n+1} \ketbra{n+1}{n}$. We are also allowed to turn on an interaction $H_I=\frac{\Omega}{2}(a \otimes \ketbra{\psi_k}{\psi_k^\perp} + a^\dagger \otimes \ketbra{\psi_k^\perp}{\psi_k})$ between the system and the battery. 

The protocol works in round $k=1,2,\ldots,N$. 

In each round $k$, we possess a battery state $\ket{n_k}$, receive an unknown state $\ket{\psi}$ and compute a direction $\ket{\psi_k}$ from previous records of $\lbrace n_s\rbrace_{s=1}^{k}$. 

We then expose the system to a field that induces Hamiltonian $H_A=\omega \ketbra{\psi_k}{\psi_k}$, which transfer energy from the field to the system. 

We turn on the interaction $H_I$ between the system and the battery for time $t_k = \pi\Omega^{-1}(n_k+1)^{-\frac{1}{2}}$. The time evolution of the system and the battery is under the total Hamiltonian
\begin{align}
    H= \omega (\ketbra{\psi_k}{\psi_k} + a^\dagger a) + \frac{\Omega}{2}(a \otimes \ketbra{\psi_k}{\psi_k^\perp} + a^\dagger \otimes \ketbra{\psi_k^\perp}{\psi_k}). 
\end{align}
This is exactly given by the famous Jaynes-Cummings model~\cite{Jaynes_1963}, which is nowadays a textbook model~\cite{Scully_1997}. One may verify that the eigenstates of the total Hamiltonian are $\ket{0}\ket{\psi_k^\perp}$ as well as 
\begin{align}
    \ket{n,+}  = \frac{1}{\sqrt{2}}(\ket{n-1}\ket{\psi_k} + \ket{n}\ket{\psi_k^\perp}), \quad 
    \ket{n,-}  = \frac{1}{\sqrt{2}}( \ket{n-1}\ket{\psi_k} - \ket{n}\ket{\psi_k^\perp}),
\end{align}
for $n=1,2,\ldots$ and the eigenvalues are respectively $E_0=0$ and 
\begin{align}
    E_{n+}  =  n\omega + \frac{\Omega}{2} \sqrt{n}, \quad 
    E_{n-}  =  n\omega - \frac{\Omega}{2} \sqrt{n}.
\end{align}
for $n=1,2,\ldots$. The state $\ket{n_k}\ket{\psi}$ is decomposed into $4$ eigenstates $\ket{n_k,\pm}$ and $\ket{n_k+1,\pm}$, i.e.
\begin{align}
    \ket{n_k}\ket{\psi} = \frac{1}{\sqrt{2}}\langle \psi_k^\perp |\psi \rangle(\ket{n_k,+}-\ket{n_k,-}) + \frac{1}{\sqrt{2}} \langle \psi_k |\psi \rangle (\ket{n_k+1,+}+\ket{n_k+1,-}). 
\end{align}
These eigenstates gains phase factors during the time evolution. After time $t_k=\pi\Omega^{-1}(n_k+1)^{-\frac{1}{2}}$, the state evolves to, up to an irrelevant global phase, 
\begin{align}
    e^{-i H t_k}\ket{n_k}\ket{\psi} & = \frac{1}{\sqrt{2}}\langle \psi_k^\perp |\psi \rangle(e^{i\theta_k}\ket{n_k,+}-e^{-i\theta_k}\ket{n_k,-}) \\
    &\quad+ \frac{1}{\sqrt{2}} e^{i\frac{\omega\pi}{\Omega\sqrt{n_k+1}} } \langle \psi_k |\psi \rangle(i\ket{n_k+1,+}-i\ket{n_k+1,-}) \\
    & = \langle \psi_k^\perp |\psi \rangle( i \sin\theta_k \ket{n_k-1}\ket{\psi_k}+ \cos\theta_k \ket{n_k}\ket{\psi_k^\perp}) \\
    &\quad+  \frac{1}{\sqrt{2}} i e^{i\frac{\omega\pi}{\Omega\sqrt{n_k+1}} } \langle \psi_k |\psi \rangle \ket{n_k+1}\ket{\psi_k^\perp}, 
\end{align}
where $\theta_k = \frac{\pi}{2}\sqrt{\frac{n_k}{n_k+1}}$. 

We finally measure the battery energy. The measurement outcome is $n_{k+1}$. The probability distribution of $n_{k+1}$ is given by 
\begin{align}
    \Pr[n_{k+1}] = \begin{cases}
        |\langle \psi_k |\psi \rangle |^2, \quad & n_{k+1} = n_k+1,\\
        |\langle \psi_k^\perp |\psi \rangle |^2 \cos^2\theta_k, \quad & n_{k+1} = n_k,\\
        |\langle \psi_k^\perp |\psi \rangle |^2 \sin^2\theta_k, \quad& n_{k+1} = n_k-1. 
    \end{cases}
\end{align}
The extracted work is defined as $\Delta W_k =\omega( n_{k+1}-n_k) $. The expected extracted work is given by 
\begin{align}
    \Ex[\Delta W_k] = \omega( |\langle \psi_k |\psi \rangle |^2 - |\langle \psi_k^\perp |\psi \rangle |^2 \sin^2\theta_k )= \omega( |\langle \psi_k |\psi \rangle |^2 (1+\sin^2\theta_k) - \sin^2\theta_k), 
\end{align}
where we have used $|\langle \psi_k^\perp |\psi \rangle |^2 = 1 - |\langle \psi_k |\psi \rangle |^2 $. It is obvious that $\Ex[\Delta W_k]$ increases as $|\langle \psi_k |\psi \rangle |^2$ increases, therefore, $\Ex[\Delta W_k]$ is maximized at $|\langle \psi_k |\psi\rangle|=1$, 
\begin{align}
    \max_{\ket{\psi_k}} \Ex[\Delta W_k] = \omega. 
\end{align}
The dissipation in this round is thus given by the difference between the maximal and the actual expected extracted work
\begin{align}
    \dissipation^{\text{jc},k} = \max_{\ket{\psi_k}} \Ex[\Delta W_k] -  \Ex[\Delta W_k] = \omega (1+\sin^2\theta_k^2)(1- |\langle \psi_k|\psi\rangle |^2 ) \leq 2 \omega (1-|\langle \psi_k|\psi\rangle |^2). 
\end{align}
The correspondence between the reward measurement and the battery energy measurement is given by
\begin{align}
    \label{eq:reward_jc}
    r_k = \begin{cases}
        1, \quad & n_{k+1} = n_k+1, \\
        0, \quad & \text{otherwise. }
    \end{cases}
\end{align}
The probability distribution of the reward is 
\begin{align}
    \Pr[R_k=r_k] = \begin{cases}
        |\langle \psi_k |\psi \rangle |^2, \quad & r_k=1,\\
        |\langle \psi_k^\perp |\psi \rangle |^2 , \quad & r_k=0.
    \end{cases}
\end{align}
The regret in this round is thus 
\begin{align}
    \regret^k = 1-|\langle \psi_k |\psi \rangle |^2. 
\end{align}
The dissipation and the regret in this round is thus related by 
\begin{align}
    \dissipation^{\text{jc},k} \leq 2\omega \regret^k .
\end{align}
The cumulative dissipation over $N$ rounds is thus 
\begin{align}
    \dissipation^{\text{jc}}(N)  = \sum_{k=1}^{N}  \dissipation^{\text{jc},k} \leq 2 \omega\sum_{k=1}^{N} (1-|\langle\psi_k|\psi\rangle|^2). 
\end{align}
Which is related to the cumulative dissipation in extracting work from knowledge via 
\begin{align}
    \dissipation^{\text{jc}}(N)\leq 2\omega \regret (N), 
\end{align}
\begin{theorem} 
There exists an explicit protocol for extracting work from knowledge that adaptively updates the estimate $\ket{\psi_k}$ based on the rewards $\lbrace r_s \rbrace_{s=1}^{k-1}$, achieving, with probability at least $1-\delta$, 
\begin{align} 
\dissipation^{\text{jc}} (N) = O\left( \omega \ln (N) \ln \left( \frac{N}{\delta} \right)  \right). 
\end{align} 
\end{theorem}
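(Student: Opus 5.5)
The plan is to reduce the Jaynes--Cummings dissipation to the regret of Algorithm~\ref{alg:linucb_vn_var}, and then invoke the high-probability regret bound of Theorem~\ref{th:fidelity_gurantee_bandit}. The computation just above the statement already supplies the per-round inequality $\dissipation^{\text{jc},k}\le 2\omega\,(1-|\langle\psi_k|\psi\rangle|^2)$. Summing over rounds gives $\dissipation^{\text{jc}}(N)\le 2\omega\,\regret(N)$, so the whole proof rests on showing that the bandit algorithm, driven by the battery readouts, is exactly the algorithm analysed in Theorem~\ref{th:fidelity_gurantee_bandit}.

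First I would verify that the reward map in Eq.~\eqref{eq:reward_jc} reproduces Born's rule in Eq.~\eqref{eq:prob_quantum_reward}. From the three-outcome distribution of $n_{k+1}$, the event $n_{k+1}=n_k+1$ has probability exactly $|\langle\psi_k|\psi\rangle|^2$. The other two outcomes together have probability $|\langle\psi_k^\perp|\psi\rangle|^2(\cos^2\theta_k+\sin^2\theta_k)=1-|\langle\psi_k|\psi\rangle|^2$. Two facts keep the reward sequence distributed exactly as the rank-1 two-outcome POVM $\{\psi_k,\psi_k^\perp\}$ used by the bandit:
\begin{itemize}
\item each round consumes a fresh copy of $\psi$;
\item the battery state at the start of each round is a number state $\ket{n_k}$ after the projective energy measurement.
\end{itemize}
These rewards are conditionally independent given the past directions. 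Moreover, the choice of $t_k$ depends only on the recorded $n_k$, which is classical side information. It therefore does not affect the reward law, so the filtration structure assumed in Theorem~\ref{th:fidelity_gurantee_bandit} holds.

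One technicality needs care. Although $\theta_k$ depends on $n_k$, the outcome $n_{k+1}=n_k-1$ is impossible when $n_k=0$, since then $\sin\theta_k=0$. This does not affect the reward probabilities, so I would simply note it.

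With the reward law matched, I would fix $L$, take $t=\lceil 24\ln(L/\delta)\rceil$ and $N=4tL$, and apply Theorem~\ref{th:fidelity_gurantee_bandit}. With probability at least $1-\delta$ this gives $\regret(N)\le C_1\ln(N/\delta)\ln N$. Combined with the reduction above, the claimed bound $\dissipation^{\text{jc}}(N)=O\bigl(\omega\ln N\ln(N/\delta)\bigr)$ follows. For $N$ not of the form $4tL$, I would pad to the next admissible horizon, which changes only constants.

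The main obstacle is rigour in the reduction step: confirming that the per-round dissipation bound holds pathwise for every $n_k$. Here the factor $1+\sin^2\theta_k\le 2$ is uniform, and that uniformity is all that is needed.
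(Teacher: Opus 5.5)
Your proposal is correct and follows the paper's route. The paper obtains the per-round bound $\dissipation^{\text{jc},k}\le 2\omega\,(1-|\langle\psi_k|\psi\rangle|^2)$ from $1+\sin^2\theta_k\le 2$, matches the battery-derived reward law to Born's rule to get $\dissipation^{\text{jc}}(N)\le 2\omega\,\regret(N)$, and then applies the high-probability regret bound of Theorem~\ref{th:fidelity_gurantee_bandit}. Your additional remarks are sound details the paper leaves implicit: that the rewards are conditionally Bernoulli given the history regardless of $n_k$, and that $N$ can be padded to the form $4tL$.
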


\section{Cost of measurement and erasure}
\label{sec:landauer}
The measurement as well as erasure of memory of the agent does not come for free. The cost of measurement along with the cost of memory erasure can be lower bounded by a quantity known as \emph{QC-mutual
information}, $I_{QC}$, which is a measure of how much information a measurement carries about the quantum system \cite{Sagawa_2009}. $I_{QC}$ is then upper bounded by the entropy of the classical memory register itself. Hence the total cost of measurement and erasure can be loosely lower bounded by just the entropy of the memory register itself.
By Landauer's principle~\cite{Sagawa_2009,Goold_2015,van2022finite}, the heat dissipation required to erase the register is lower bounded by the entropy change $\Delta S$ of the memory register via
\begin{align}
    \beta Q \geq \Delta S.
\end{align}
Furthermore, it is widely accepted that the lower bound can be achieved in the quasi-static limit when the state to be erased is known~\cite{riechers2021initial,Jun_2014,Miller_2020}. In our work extraction model, we repeatedly perform measurements in the energy eigenbasis, which requires memory erasure in order to store new measurement outcomes. With the assumption on the probability distribution of work values scaling linearly with fidelity, when $\ket{\psi}$ is known, the measurement outcome is deterministic and there is no energy dissipation. However, when $\ket{\psi}$ is unknown, the measurement outcome is stochastic and there is energy dissipation. This may also be taken into account in the dissipation, that is, 
\begin{align}
    \label{eq:regret_modified}
    \dissipation'(N) = \dissipation(N) + \beta^{-1} \sum_{k=1}^N  \Delta S_k, 
\end{align}
where $\Delta S_k$ is the entropy change of the memory register in round $k$. Let $\epsilon_k = 1 - |\langle \psi_k | \psi\rangle |^2$ be the infidelity in round $k$. $\Delta S_k$ is closely related to $\epsilon_k$ in both models we consider. In the semi-classical battery model, the entropy change is upper bounded by 
\begin{align}
    \Delta S_k = -(1-\epsilon_k) \ln (1-\epsilon_k) - \epsilon_k \ln \epsilon_k \leq  \epsilon_k - \epsilon_k \ln \epsilon_k, 
\end{align}
and the dissipation is upper bounded by 
\begin{align}
    \dissipation^{\text{sc},*}(N) = \dissipation^{\text{sc}}(N) + \beta^{-1} \sum_{k=1}^N (\epsilon_k- \epsilon_k\ln\epsilon_k). 
\end{align}
In the Jaynes-Cummings battery model, the entropy change in round $k$ is upper bounded by 
\begin{align}
    \Delta S_k &= -(1-\epsilon_k) \ln (1-\epsilon_k) - \epsilon_k \ln \epsilon_k - \epsilon_k (\cos^2\theta_k \ln \cos^2\theta_k + \sin^2\theta_k \ln \sin^2\theta_k )\\
    &\quad\quad\quad\quad\quad\quad\quad\quad\quad\quad\quad\quad\quad\quad\quad\quad\quad\quad\quad\quad\quad\quad\quad\leq 2\epsilon_k -\epsilon_k\ln \epsilon_k, 
\end{align}
and the dissipation is upper bounded by 
\begin{align}
    \dissipation^{\text{jc},*}(N) = \dissipation^{\text{jc}}(N) + \beta^{-1} \sum_{k=1}^N (2\epsilon_k - \epsilon_k\ln\epsilon_k). 
\end{align}

The quantum state tomography algorithm in~\cite{lumbreras24pure} ensures a polylogarithmic cumulative infidelity with a high probability. Now we focus on the case where the cumulative infidelity is upper bounded by $\ln\frac{N}{\delta}\ln N$ with high probability $1-\delta$, that is, for some constant $C$, 
\begin{align}
    \sum_{k=1}^N \epsilon_k \leq C \ln \frac{N}{\delta}\ln N.
\end{align}
By taking the derivative $(-\epsilon_k\ln\epsilon_k)' = -  \ln\epsilon_k - 1$, it is obvious that when $0<\epsilon_k \leq e^{-1}$, $-\epsilon_k\ln  \epsilon_k$ increases with $\epsilon_k$ while $e^{-1}\leq \epsilon_k\leq 1$, $-\epsilon_k\ln\epsilon_k$ decreases with $\epsilon_k$. Now we consider two cases: 

\textbf{Case 1: $C (\ln (N/\delta) \ln N)/N \leq e^{-1}$.} This happens when $N$ is large enough. As a result, $-\epsilon_k\ln\epsilon_k$ monotonically increase with respect to $\epsilon_k$ when $\epsilon_k\leq C  (\ln (N/\delta) \ln N)/N$. On the one hand, for $\epsilon_k\leq  C(\ln (N/\delta) \ln N)/N $, 
\begin{align}
    -\epsilon_k\ln\epsilon_k\leq \frac{C\ln\frac{N}{\delta} \ln N}{N} \ln \frac{N}{C\ln\frac{N}{\delta} \ln N},
\end{align}
On the other hand, for $\epsilon_k > C (\ln (N/\delta) \ln N)/N$, 
\begin{align}
    -\epsilon_k\ln\epsilon_k \leq  \epsilon_k \ln\frac{N}{C\ln \frac{N}{\delta} \ln N}, 
\end{align}
Then, 
\begin{align}
    -\sum_{k=1}^{N} \epsilon_k \ln\epsilon_k & = - \sum_{\epsilon_k:\epsilon_k\leq \frac{C \ln \frac{N}{\delta} \ln N}{N}} \epsilon_k \ln\epsilon_k - \sum_{\epsilon_k:\epsilon_k >\frac{C \ln \frac{N}{\delta} \ln N}{N}}\epsilon_k \ln\epsilon_k \\
    & \leq \sum_{\epsilon_k:\epsilon_k\leq \frac{C \ln \frac{N}{\delta} \ln N}{N}} \frac{C \ln \frac{N}{\delta} \ln N}{N} \ln \frac{N}{C \ln \frac{N}{\delta} \ln N}\\
    &\qquad\qquad\qquad+ \sum_{\epsilon_k:\epsilon_k>\frac{C \ln \frac{N}{\delta} \ln N}{N}} \epsilon_k  \ln\frac{N}{C \ln \frac{N}{\delta} \ln N}  \\
    & \leq  C \ln \frac{N}{\delta} \ln N \ln \frac{N}{C \ln \frac{N}{\delta} \ln N} + C \ln \frac{N}{\delta} \ln N \ln \frac{N}{C \ln \frac{N}{\delta} \ln N} \\
    &\quad \quad\quad\quad\quad\quad\quad\quad\quad\leq 2 C \ln \frac{N}{\delta} \ln N \ln \frac{N}{C \ln \frac{N}{\delta} \ln N}. 
\end{align}

\textbf{Case 2: $C (\ln (N/\delta) \ln N)/N > e^{-1}$.} This happens when $N$ is not very large. Therefore, $N/e\leq  C \ln (N/\delta) \ln N$ and certainly $N \geq 2$. Using the fact that $-\epsilon_k\ln\epsilon_k\leq e^{-1}$, we obtain
\begin{align}
    \sum_{t=1}^{N} \epsilon_k \ln \frac{1}{\epsilon_k} \leq \frac{N}{e} \leq C \ln \frac{N}{\delta} \ln N . 
\end{align}
Combining both cases, we conclude that 
\begin{align}
    - \sum_{t=1}^{N} \epsilon_k \ln\epsilon_k \leq O((\ln N)^3). 
\end{align}
Substituting into Eq.~\eqref{eq:regret_modified}, we obtain the dissipation for for the semi-classical model,  
\begin{align}
    \dissipation^{\text{sc},*}(N)  \leq  \dissipation^{\text{sc}}(N) + \beta^{-1} O((\ln N)^3) = O((\ln N)^3). 
\end{align}
and the Jaynes-Cummings battery model, 
\begin{align}
    \dissipation^{\text{jc},*}(N)  \leq  \dissipation^{\text{jc}}(N) + 2\beta^{-1} O((\ln N)^3) = O((\ln N)^3), 
\end{align}
Therefore, the regret still scales as $O(\polylog(N))$ even if we take the energy dissipation due to Landauer's principle into account. 

As mentioned above, in order to achieve Landauer's bound, the erasure process must necessarily be done quasi-statically. This is not something that a sequential adaptive agent can accomplish since it needs to measure and decide on its action in real time. To circumvent this, an array of empty memory registers can be first prepared. Suppose the agent will be extracting work for $N$ steps, we first prepare a memory register $M_0$, consisting of $N$ empty registers.
\begin{equation}
    M_0 = \{\underbrace{0,0,\ldots,0}_N\}
\end{equation}
At every time step, rather than erasing the old memory that has the previous outcome remembered, the agent simply records the outcome of the measurement on the battery into a new empty register. In order words, after time step $t$, the memory register takes the form
\begin{equation}
    M_t = \{r_1,r_2,\ldots,r_t,0,\ldots,0\} \quad \text{for}\quad t\in \lbrace{1,2,\ldots N \rbrace}~.
\end{equation}
At the end of all the extractions, the memory $M_N$ can then be quasi-statically reset. As previously calculated, as $t\to N$, the distribution of $r_t$ becomes more peaked and the total entropy of the memory registers scale with $O(\ln N)^3$, likewise for the cost of measurement. The resultant dissipation therefore still scales with $O(\polylog(N))$.